	\pgfplotsset{compat=newest}
	\newtheorem{theorem}{Theorem}[section]  % use thm for %Theorems to keep numbering consistent
	\newtheorem{cor}[theorem]{Corollary}
	\newtheorem{prop}[theorem]{Proposition}
	\newtheorem{lemma}[theorem]{Lemma}
	\numberwithin{equation}{section}
	\theoremstyle{definition}
	\newtheorem{defn}{Definition}[section] 
	\newtheorem{rem}[theorem]{Remark}
	\newcommand{\pol}{Pol}
	\DeclareMathOperator{\dist}{dist}
	\DeclareMathOperator{\supp}{supp}
	\DeclareMathOperator{\spa}{span}
	\newcommand{\loc}{\mathrm{loc}}
	\newcommand{\irr}{\small \textrm{irreducible}}
\begin{document}
		\title{Fractional Quantum Hall States:\\ Infinite Matrix Product Representation\\ and its Implications}
		
	\author{Severin Schraven$^1$, Simone Warzel$^{1,2,3}$ \\
	\small $^1$ Department of Mathematics, TU Munich, Germany \\[-1ex]
	\small  $^2$ Munich Center for Quantum Science and Technology, Munich, Germany \\[-1ex]
	\small  $^3$ Department of Physics, TU Munich, Germany}
		
		\date{\small\today}	
		\maketitle

		\begin{abstract}
		\noindent
			We present a novel matrix product representation of the Laughlin and related fractional quantum Hall wavefunctions based on a rigorous version of the correlators of a chiral quantum field theory. This representation enables the quantitative control of the coefficients of the Laughlin wavefunction times an arbitrary monomial symmetric polynomial when expanded in a Slater determinant or permanent basis. It renders the properties, such as factorization and the renewal structure, inherent in such fractional quantum Hall wavefunctions transparent. We prove bounds on the correlators of the chiral quantum field theory and utilize this representation to demonstrate the exponential decay of connected correlations and a gap in the entanglement spectrum on a thin cylinder. 	\end{abstract}

\bigskip
	\tableofcontents
\bigskip
%\bigskip
%	

\section{Introduction and main results}

Models of the fractional quantum Hall effect~\cite{QHEOxford2003,Simon21} start from Laughlin's proposal \cite{PhysRevLett.50.1395} of a highly correlated, quantum wavefunction for $ N $ indistinguishable spinless particles. 
In the planar geometry and up to a normalization, this function takes the Jastrow form 
\begin{equation}\label{eq:Laughlin}
 \prod_{1\leq j< k \leq N} (z_j-z_k)^q \ \times \ \exp\left( -\frac{1}{2}\sum_{j=1}^N |z_j|^2 \right) , 
\end{equation}
i.e., an integer power $ q \in \mathbb{N} $ of the Vandermonde determinant times a Gaussian weight stemming from the fact that the single-particle Hilbert space is the lowest Landau level. The latter is the Bargmann space of entire functions that are square integrable with respect to a Gaussian weight.
Here and in the following, we use physical units in which the magnetic length is set to $1/\sqrt{2} $. The parity of $ q $ determines whether the wavefunction~\eqref{eq:Laughlin} describes bosons
(even) or fermions (odd).

Revealing the intricate structure behind this class of wavefunctions has continued to fascinate many communities. The Coulomb gas perspective regards the wavefunction as the Gibbs-Boltzmann distribution of an interacting gas of charged particles with logarithmic interactions in a uniform background known as Jellium~\cite{di1994laughlin,Aizenman:2010aa,jansen2012fermionic,RSY:2014,Lieb:2019vl,Roug19}.  This point of view reveals that the number of particles per area (measured in terms of the square of the magnetic length)  in the above wavefunction is $ 1/q $.  Important for us will be the connection of
Laughlin's function and generalizations to the correlation functions of primary
fields in a conformal field theory (CFT), which goes back to~\cite{moore1991nonabelions,ReadMoore92}. 
The topic, which we will pursue here, is to investigate the expansion of such wavefunctions with respect to a canonical orthonormal basis, i.e., the Slater determinants or permanents (depending on whether $ q $ is odd or even) of canonical one-particle orbitals.  Such expansions have been studied early on~\cite{dunne1993slater, di1994laughlin}.  It has been pointed out that the expansion coefficients take the form of an 'infinite matrix product' (iMPS) of operators in a CFT  \cite{PhysRevB.86.245305,Bernevig:2009aa,Cirac:2010aa,thomale2011decomposition, ballantine2012powers,estienne2013fractional,schossler2022inner}. Our goal is to establish these results on a sound mathematical foundation and present the first analytical results within this framework. 
Products of the Vandermonde determinant and symmetric polynomials define the Jack polynomials, for which expansions of this form are also investigated in a purely algebraic context \cite{Cai:2014cz}.  

Our results have been inspired by previous works on the subject and in particular~\cite{jansen2009symmetry,estienne2013fractional,nachtergaele2021spectral}. The iMPS representation extends beyond that of \cite{estienne2013fractional} or \cite{Cai:2014cz}, as it enables analytical control of the expansion coefficients and facilitates easy proofs of structural properties, such as factorization and the associated renewal structure, which,  as far as quantitative analytical bounds are concerned, go far beyond previous results in~\cite{jansen2009symmetry,jansen2012fermionic}.

\subsection{Expansions of powers of the Vandermonde times a symmetric polynomial} 
To set the stage, we consider the backbone of the expansion, namely, that of the $q$th power of the Vandermonde determinant in terms of the standard monomial basis. We will prove that 
\begin{equation}\label{eq:VandermondeMPS}
 \prod_{1\leq j< k \leq N} (z_j-z_k)^q  =(-1)^{qN(N-1)/2}  \!\!  \sum_{\lambda \preceq \lambda^{(q)}_N}   \frac{ \langle 0 \vert \mathbb{W}(\lambda) \vert 0 \rangle }{ M(\lambda)!}  \sum_{\sigma\in S_N} \mathrm{sgn}(\sigma)^q   \prod_{j=1}^N z_{\sigma(j)}^{\lambda_j}
\end{equation}
where the first summation on the right side extends over all integer partitions $ \lambda = (\lambda_1, \dots , \lambda_N) \in \mathbb{N}_0^N $, i.e.\ 
$ \lambda_1 \leq \dots \leq \lambda_N $,
which are dominated in the natural order on partitions by the so-called root partition
$
\lambda^{(q)}_N \coloneqq (0, q , 2q, \dots , q (N-1) ) 
$, which, in particular, entails that
$$
 \sum_{j=1}^N \lambda_j = \frac{q}{2} N(N-1) ,
$$
cf.\ Definition~\ref{def:partitions} for these notions. The second summation in~\eqref{eq:VandermondeMPS} is over all permutations of $ N $ elements, and $\mathrm{sgn}(\sigma) $ stands for the signature of the permutation $ \sigma \in \mathcal{S}_N $. The prefactor 
 $$ M(\lambda)! \coloneqq \prod_{k=0}^\infty m(\lambda, k)! $$ involves the occupation numbers $ m(\lambda, k) $ of the $k$th monomial, cf.\ Definition~\ref{def:partitions}. 
The central player in~\eqref{eq:VandermondeMPS} is the product
$\mathbb{W}(\lambda) =  W_{\lambda_N-q(N-1)} W_{\lambda_{N-1}-q(N-2)} \dots W_{\lambda_2 - q} W_{\lambda_1}  $
of operators on the 'virtual' Hilbert space of a chiral CFT.  In bra-ket notation, which we use for the virtual Hilbert space,  the CFT's vacuum is denoted by $ | 0 \rangle $. This 'virtual' Hilbert space is introduced in Section~\ref{sec:iMPS} and the definition of the unbounded operators $ W_{\lambda_j}$, including a domain, can be found in Definition~\ref{def:DW}. \\

The Laughlin wavefunction accommodates the most particles per area. Less compressed fractional quantum Hall states with lower filling fraction are obtained by multiplying~\eqref{eq:VandermondeMPS} with an arbitrary symmetric polynomial~\cite{Simon21}. 
An (algebraic) basis of the ring $ \mathbb{C}[x_1, \dots, x_{N}]^{S_N} $ of symmetric polynomials with complex coefficients is given by products of the  power-sum symmetric polynomials 
$
			p_n(z_1, \dots, z_N) \coloneqq \sum_{j=1}^N z_j^n $, $n \in \mathbb{N}_0 $.
More convenient for us will be the monomial symmetric polynomial, 
		\begin{equation} \label{def monomial symmetric polynomial}
			m_{b}(z_1, \dots, z_N) \coloneqq  \frac{1}{ M(b)!} \sum_{\sigma\in S_N} \prod_{j=1}^N z_{\sigma(j)}^{b_j} ,
		\end{equation}
		which are again enumerated by integer partitions $b =(b_1,\dots, b_N) $. 
		The monomial symmetric polynomials also form a basis of the set of symmetric polynomials. Consequently, for a given partition $ b $ of the integer $ | b | \coloneqq \sum_{j=1}^N b_j $, there are transition polynomials, which allow to express the monomial polynomials in terms of a real polynomial $ \pol_b \in \mathbb{R}[x_1, \dots, x_{\vert b\vert}] $ of the power symmetric polynomials  \cite[Prop. 7.7.1]{stanley2001enumerative}
		\begin{equation} \label{def Fb}
				m_b(z_1, \dots, z_N) = \pol_b(p_1(z_1, \dots, z_N), \dots,   p_{\vert b \vert}(z_1, \dots, z_N)).
			\end{equation}

			The generalization of~\eqref{eq:VandermondeMPS} to fractional quantum Hall states,  which are less compressed than the Laughlin state and uniquely addressed by $ (N,b) $, involves the following notions.
			\begin{defn}\label{def:rootp}
			For $ N \in \mathbb{N} $ and a partition $ b  \in \mathbb{N}_0^N $, which encodes a monomial symmetric polynomial, we call the pair $(N,b) $ the \emph{root}, and 			
			\begin{align*}
				\lambda^{(q)}_N(b )\coloneqq (b_1, q +b_2,  \dots, q(N-1)+b_N) ,
			\end{align*}
			the \emph{root partition}.
			\end{defn}	
			If $ b = 0 $, then $ \lambda^{(q)}_N(0)= \lambda^{(q)}_N $. 
			Given a root $ (N,b) $ and another partition $ \lambda \in \mathbb{N}_0^N $, we define the operator products
			\begin{equation}\label{def:Wwithb}
			\mathbb{W}(\lambda,b) \coloneqq W_{\lambda_N-q(N-1)-b_N} W_{\lambda_{N-1}-q(N-2)-b_{N-1}} \dots W_{\lambda_2 - q-b_2} W_{\lambda_1-b_1}  ,
			\end{equation}
			for which $\mathbb{W}(\lambda,0)\equiv \mathbb{W}(\lambda) $, cf.\ Definition~\ref{def:DW}.  Our main theorem generalizing~\eqref{eq:VandermondeMPS} 
			then reads as follows. 
\begin{theorem} \label{thm:root states}
			Given a root $ (N,b) $, then
			\begin{align} \label{MPS with monomial symmetric}
					&m_{b}(z_1, \dots, z_N) \times \prod_{1\leq j<k\leq N} (z_j-z_k)^q \notag \\
					& = (-1)^{qN(N-1)/2}  \sum_{\lambda \preceq \lambda^{(q)}_N(b)} \frac{  w_b(\lambda) }{ M(\lambda)!}  \sum_{\sigma\in S_N} \mathrm{sgn}(\sigma)^q \prod_{j=1}^N z_{\sigma(j)}^{\lambda_j},  
			\end{align}
			where the first summation on the right side extends over all integer partitions $ \lambda $ dominated by the root partition $ \lambda^{(q)}_N(b ) $ (cf.~Definition~\ref{def:partitions}). The expansion coefficient is given by
				\begin{align} \label{MPS with monomial symmetric_c}
			w_b(\lambda) & = \frac{1}{ M(b)!}  \sum_{\tau \in \mathcal{S}_N}  \langle 0 \vert \mathbb{W}(\lambda,b_\tau)\vert 0 \rangle  \notag \\ 
			& = \langle 0 \vert \pol_{b}\left(\frac{a_1}{\sqrt{q}} , \dots, \frac{a_{\vert b \vert}}{\sqrt{q}} \right) \mathbb{W}(\lambda) \vert 0 \rangle .
				\end{align}
			Here $ b_\tau \coloneqq \big(b_{\tau(1)}, \dots , b_{\tau(N)}\big) $, and $\pol_{b}\in \mathbb{R}[x_1, \dots, x_{\vert b\vert}]$ is the unique polynomial with real coefficients such that~\eqref{def Fb}. 
			For the root partition, we have $ w_b( \lambda^{(q)}_N(b)) = 1 $. 
			\end{theorem}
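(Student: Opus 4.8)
My plan is to read both sides of \eqref{MPS with monomial symmetric} as matrix elements on the virtual Hilbert space and to reduce the statement to the case $ b=0 $, i.e.\ to \eqref{eq:VandermondeMPS}, which I take as already established together with the correlator picture behind it. Concretely, I use that $ \prod_{1\le j<k\le N}(z_j-z_k)^q=\langle 0\vert \mathbb{V}(z_1)\cdots\mathbb{V}(z_N)\vert 0\rangle $ is the vacuum correlator of a chiral vertex operator $ \mathbb{V}(z) $ of charge $ \sqrt q $, and that ordering the modes of $ \mathbb{V}(z_1)\cdots\mathbb{V}(z_N) $ as in \eqref{def:Wwithb} produces the operator products $ \mathbb{W}(\lambda) $. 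In these terms \eqref{eq:VandermondeMPS} states that the coefficient of the ordered monomial $ \prod_j z_j^{\lambda_j} $ in $ \prod_{j<k}(z_j-z_k)^q $ equals $ (-1)^{qN(N-1)/2}\langle 0\vert\mathbb{W}(\lambda)\vert 0\rangle $, the factor $ M(\lambda)! $ there merely collecting the permutations producing the same monomial. The theorem then says that inserting the symmetric polynomial $ m_b $ into the correlator amounts to inserting current modes, and the two lines of \eqref{MPS with monomial symmetric_c} are two ways of evaluating the resulting coefficient.

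The engine is the current-insertion identity. Writing $ a_n $, $ n\ge 1 $, for the positive modes of the $ U(1) $ current attached to $ \mathbb{V} $, one has $ [a_n,\mathbb{V}(z)]=\sqrt q\,z^{n}\,\mathbb{V}(z) $ together with $ a_n\vert 0\rangle=0 $ and $ \langle 0\vert a_{-n}=0 $. Commuting one $ a_n $ to the right through the product and dropping the vanishing boundary term gives
\begin{equation*}
 \Big\langle 0\Big\vert \tfrac{a_n}{\sqrt q}\,\mathbb{V}(z_1)\cdots\mathbb{V}(z_N)\Big\vert 0\Big\rangle=\Big(\sum_{j=1}^N z_j^{\,n}\Big)\prod_{1\le j<k\le N}(z_j-z_k)^q=p_n(z)\,\prod_{1\le j<k\le N}(z_j-z_k)^q.
\end{equation*}
Because the positive modes commute, $ [a_n,a_m]=0 $ for $ n,m\ge1 $, I may iterate this and, using that the $ p_n $ likewise commute, conclude that prepending $ \pol_b\big(\tfrac{a_1}{\sqrt q},\dots,\tfrac{a_{\vert b\vert}}{\sqrt q}\big) $ multiplies the correlator by $ \pol_b(p_1,\dots,p_{\vert b\vert})=m_b $, cf.\ \eqref{def Fb}. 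Expanding $ \mathbb{V}(z_1)\cdots\mathbb{V}(z_N) $ into modes exactly as in the derivation of \eqref{eq:VandermondeMPS}, the $ z $-independent factor $ \pol_b(a_\bullet/\sqrt q) $ is carried through the coefficient extraction unchanged and replaces $ \langle 0\vert\mathbb{W}(\lambda)\vert 0\rangle $ by $ \langle 0\vert\pol_b(a_\bullet/\sqrt q)\,\mathbb{W}(\lambda)\vert 0\rangle $. This yields \eqref{MPS with monomial symmetric} with the second line of \eqref{MPS with monomial symmetric_c}.

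For the first line I expand $ m_b $ instead through its definition \eqref{def monomial symmetric polynomial}, $ m_b(z)=\tfrac{1}{M(b)!}\sum_{\tau\in\mathcal{S}_N}\prod_j z_j^{\,b_{\tau(j)}} $. By the coefficient fact of the first paragraph, the coefficient of $ \prod_j z_j^{\lambda_j} $ in $ \big(\prod_j z_j^{\,b_{\tau(j)}}\big)\prod_{j<k}(z_j-z_k)^q $ equals that of $ \prod_j z_j^{\,\lambda_j-b_{\tau(j)}} $ in the bare Vandermonde power, namely $ (-1)^{qN(N-1)/2}\langle 0\vert\mathbb{W}(\lambda,b_\tau)\vert 0\rangle $, since lowering the $ j $th exponent by $ b_{\tau(j)} $ lowers the $ j $th mode index accordingly, cf.\ \eqref{def:Wwithb}. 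Averaging over $ \tau $ with weight $ 1/M(b)! $ produces the first line of \eqref{MPS with monomial symmetric_c}, and equating it with the second line gives the identity $ \tfrac{1}{M(b)!}\sum_\tau\langle 0\vert\mathbb{W}(\lambda,b_\tau)\vert 0\rangle=\langle 0\vert\pol_b(a_\bullet/\sqrt q)\,\mathbb{W}(\lambda)\vert 0\rangle $. The sum is restricted to $ \lambda\preceq\lambda^{(q)}_N(b) $ by the dominance triangularity inherited from \eqref{eq:VandermondeMPS} under multiplication by $ m_b $, and $ w_b(\lambda^{(q)}_N(b))=1 $ because this root is the unique dominance-maximal term, its coefficient being the product of the unit leading coefficients of $ m_b $ and of $ \prod_{j<k}(z_j-z_k)^q $.

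The step I expect to be genuinely hard is analytic, not combinatorial. The $ W_k $ and $ a_n $ are unbounded on the virtual Hilbert space (Definition~\ref{def:DW}), and $ \mathbb{V}(z) $ is only a formal operator series, so the commutation relation, the vanishing of the boundary terms, the interchange of the $ a_n $-insertion with the mode expansion, and above all the termwise extraction of coefficients from the double series in $ \lambda $ and the modes must be justified on a suitable dense, $ z $-analytic domain, with absolute convergence secured by quantitative bounds on the correlators $ \langle 0\vert\mathbb{W}(\lambda)\vert 0\rangle $. Once that functional-analytic bookkeeping is in place, both expressions for $ w_b(\lambda) $, their equality, and the dominance truncation follow from the charge- and degree-counting above.
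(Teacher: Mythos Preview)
Your approach is essentially the same as the paper's: derive the second line of \eqref{MPS with monomial symmetric_c} from the current-insertion identity (this is Corollary~\ref{prop:MPS power symmetric}), derive the first line by expanding $m_b$ via \eqref{def monomial symmetric polynomial} and shifting the mode indices, and then match the two. The paper does exactly this.

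However, two steps that you treat as afterthoughts are in fact where the work lies, and one step you flag as hard is not. First, the dominance restriction $\lambda\preceq\lambda^{(q)}_N(b)$ does not follow from a vague ``dominance triangularity inherited under multiplication by $m_b$''. The paper obtains it from the first-line formula: by Theorem~\ref{vanishing condition}(4), each summand $\langle 0\vert\mathbb{W}(\lambda,b_\tau)\vert 0\rangle$ vanishes unless $\lambda\preceq(\lambda^{(q)}_N+b)_\tau$, and one then needs the nontrivial partition inequality $(\lambda^{(q)}_N+b)_\tau\preceq\lambda^{(q)}_N+b=\lambda^{(q)}_N(b)$ for all $\tau\in\mathcal{S}_N$, i.e.\ \eqref{dominance order permutation}. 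Second, $w_b(\lambda^{(q)}_N(b))=1$ is not just ``unit leading coefficients'': one must show that for $\lambda=\lambda^{(q)}_N(b)$ every $\tau$ with $b_\tau\neq b$ gives a vanishing correlator (again via Theorem~\ref{vanishing condition}(4)), so that only the $M(b)!$ permutations fixing $b$ survive and cancel the prefactor.

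Conversely, the analytic bookkeeping you anticipate as ``genuinely hard'' is not: on $\mathcal{H}_0$ every $W_m$ restricts to a finite sum (since $D_\ell^+$ annihilates $\mathcal{H}(k)$ for $k<\ell$), all mode expansions are finite when tested against fixed vectors, and the commutation $[a_n,W_m]=\sqrt{q}\,W_{m-n}$ is a purely algebraic identity in $\mathrm{End}(\mathcal{H}_0)$ (Lemma~\ref{prop:DW}). No convergence or domain issues arise.
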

The proof of this theorem is found in Subsection~\ref{subsec:ProofofRoot}.

\begin{rem}
In the physics literature, the additional annihilation operators described by  $\pol_{b} $ in~\eqref{MPS with monomial symmetric_c} are understood as boundary charges in the CFT. In this language, the second line is a CFT-correlator with out-state $\pol_{b}(a_1^*, \dots, a_{\vert b \vert}^*) \vert 0\rangle$ (see \cite{estienne2013fractional,schossler2022inner} and references therein). In this context, it might be useful to note that the coefficients of the transition polynomial $\pol_b$ depend on the partition $b$, but not explicitly on $N$, i.e.\ if $\tilde{b}=(0, \dots, 0, b_1, \dots, b_N)$ then $\pol_{b}=\pol_{\tilde{b}}$.

In case of  fermions ($ q $ odd), one may restrict the summation in~\eqref{MPS with monomial symmetric} to partitions without double occupancy, i.e.\ $ \lambda_1 < \lambda_2 < \dots < \lambda_N $, for which $ M(\lambda)! = 1 $. 
\end{rem}

\subsection{Quantum Hall geometries, root partitions and iMPS representation}
It is well known that the core of the structure of fractional quantum Hall wavefunctions is geometry-independent \cite{estienne2013fractional,10.1063/1.5046122}.
However, to connect Theorem~\ref{thm:root states} to the wavefunctions used in specific models of the fractional quantum Hall effect, we need to fix a geometry and with it a Hilbert space of one-particle states. 
We describe this in the planar case, and,  in more detail, the case of a cylinder geometry, since much of the analysis, which is presented later, pertains to the latter case.
\subsubsection{Planar geometry}
In the planar geometry, in which the particles roam the complex plane $ \mathbb{C} $, the one-particle Hilbert space is the Bargmann space 
$$ \mathcal{P} \coloneqq \left\{ f:\mathbb{C} \to \mathbb{C} \ : \ \mbox{$f $ is entire and} \; \int |f(z)|^2 e^{- |z|^2} dz < \infty \right\} . $$
An orthonormal basis is provided by the functions
\begin{equation}\label{eq:orbitalsp}
\varphi_k(z) \coloneqq \frac{z^k}{\sqrt{\pi k!} } ,  \quad k \in \mathbb{N}_0 .
\end{equation}
Physically speaking, this is the eigenbasis of the angular momentum operator, and $ k \in \mathbb{N}_0 $ labels the angular-momentum orbitals on the plane. 
The Hilbert space for $ N $ fermions or bosons is the $ N $-fold antisymmetric ($q$ odd) or symmetric ($q$ even) tensor-product Hilbert space $ \mathcal{F}_N \coloneqq \bigotimes_{q}^N \mathcal{P} $. An orthonormal basis in the latter is labeled by partitions $ \lambda = (\lambda_1 , \dots , \lambda_N) \in \mathbb{N}_0^N$:
\begin{equation}\label{eq:Slater}
\Phi_\lambda(z_1,\dots, z_N) \coloneqq \frac{1}{\sqrt{M(\lambda)! \ N!}} \sum_{\sigma\in \mathcal{S}_N} \left(\textrm{sgn}\ \sigma\right)^q \ \prod_{j=1}^N \varphi_{\lambda_j}(z_{\sigma(j)})  . 
\end{equation} 
In other words, the $ \Phi_\lambda $'s are the normalized Slater determinants or permanents of the orthonormal single-particle basis~\eqref{eq:orbitalsp}.
Varying $ N $, the $ \Phi_\lambda $'s make up the occupation number basis corresponding to the angular momentum orbitals of the fermionic or bosonic Fock space $ \mathcal{F} \coloneqq \bigoplus_{N=0}^\infty  \mathcal{F}_N $.

In the planar setting, the fractional quantum Hall model wavefunctions, which include the Laughlin state~\eqref{eq:Laughlin}, are uniquely labeled by roots $ (N,b) $ and given by
\begin{align}\label{def:WF}
 \Psi_{b,N}(z_1,\dots , z_N)  \coloneqq \kappa_{b,N} \ m_{b}(z_1, \dots, z_N) \times \mkern-20mu \prod_{1\leq j<k\leq N} (z_j-z_k)^q
 \end{align} 
 where we choose the normalization
\begin{equation}\label{eq:kappanorm}
	\frac{1}{\kappa_{b,N}} \coloneqq(-1)^{qN(N-1)/2}  \sqrt{N!}  \ \prod_{j=1}^N \sqrt{(q(j-1)+b_j)! \ \pi } . 
\end{equation}
We note that $\Psi_{b,N}$ is not normalized in $\mathcal{F}_N$. The normalizing factor is chosen in a manner that makes the underlying structure more transparent.\\

An immediate corollary of Theorem~\ref{thm:root states} is the following result on the expansion coefficients of these wavefunctions in terms of the occupation basis.
\begin{cor}\label{cor:planar}
Given a root $(N,b) $, the wavefunction~\eqref{def:WF} has the representation
$$ \Psi_{b,N} =  \sum_{ \lambda \preceq \lambda^{(q)}_N(b)  } h_b (\lambda) \ \Phi_\lambda  $$ with expansion coefficients of the form
\begin{equation} \label{eq:hb}
h_b(\lambda) = \frac{g_b(\lambda) \ w_b(\lambda)}{\sqrt{M(\lambda)!}}   , \quad \mbox{with}\quad g_b(\lambda) \coloneqq  \prod_{j=1}^N\sqrt{ \frac{\lambda_j!}{(q(j-1)+b_j)!}} .
\end{equation}
The normalization of $  \Psi_{b,N} $ is such that the coefficient of the root partition $  \lambda^{(q)}_N(b) $ is unity, $ h_b\left(  \lambda^{(q)}_N(b) \right) = 1 $.
\end{cor}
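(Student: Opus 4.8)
The plan is to read off the expansion coefficients directly by inserting the occupation-number basis into the identity of Theorem~\ref{thm:root states} and matching the terms of the expansion one partition $\lambda$ at a time; nothing beyond Theorem~\ref{thm:root states} and the explicit normalization~\eqref{eq:kappanorm} should be needed. First I would invert the definition~\eqref{eq:Slater} of $\Phi_\lambda$. Plugging in the orbitals~\eqref{eq:orbitalsp} gives $\prod_{j=1}^N \varphi_{\lambda_j}(z_{\sigma(j)}) = \pi^{-N/2}\big(\prod_{j=1}^N \lambda_j!\big)^{-1/2}\prod_{j=1}^N z_{\sigma(j)}^{\lambda_j}$, so that the (anti)symmetrized monomial appearing on the right-hand side of~\eqref{MPS with monomial symmetric} can be written as
$$\sum_{\sigma\in\mathcal{S}_N} (\mathrm{sgn}\,\sigma)^q \prod_{j=1}^N z_{\sigma(j)}^{\lambda_j} = \pi^{N/2}\sqrt{N!\,M(\lambda)!}\ \Big(\prod_{j=1}^N \sqrt{\lambda_j!}\Big)\, \Phi_\lambda.$$
Since this rewriting is term-by-term, the summation range $\lambda\preceq\lambda^{(q)}_N(b)$ carries over unchanged.

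Next I would substitute this into~\eqref{MPS with monomial symmetric} and multiply by the prefactor $\kappa_{b,N}$ of~\eqref{def:WF}, collecting all the $\lambda$-independent and $\lambda$-dependent constants. The factor $1/M(\lambda)!$ from~\eqref{MPS with monomial symmetric} combines with the $\sqrt{M(\lambda)!}$ just produced to leave $1/\sqrt{M(\lambda)!}$; the two occurrences of the sign $(-1)^{qN(N-1)/2}$ (one in Theorem~\ref{thm:root states}, one in~\eqref{eq:kappanorm}) multiply to $+1$; and the global factors $\sqrt{N!}$ and $\pi^{N/2}$ cancel against the matching factors in~\eqref{eq:kappanorm}. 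What remains of the $\lambda$-dependent constants is exactly $\prod_{j=1}^N \sqrt{\lambda_j!/(q(j-1)+b_j)!} = g_b(\lambda)$. This yields $\Psi_{b,N} = \sum_{\lambda\preceq\lambda^{(q)}_N(b)} g_b(\lambda)\,w_b(\lambda)\,M(\lambda)!^{-1/2}\,\Phi_\lambda$, which is precisely the coefficient~\eqref{eq:hb}.

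For the normalization assertion I would evaluate $h_b$ at the root partition. Theorem~\ref{thm:root states} already supplies $w_b(\lambda^{(q)}_N(b))=1$, and substituting $\lambda_j = q(j-1)+b_j$ into $g_b$ makes every factor equal to one, so $g_b(\lambda^{(q)}_N(b))=1$. It then only remains to check that $M(\lambda^{(q)}_N(b))!=1$, i.e.\ that the root partition has no double occupancy. Since $b$ is a partition ($b_1\le\dots\le b_N$) and $q\ge 1$, the entries satisfy $\lambda_{j+1}-\lambda_j = q + (b_{j+1}-b_j)\ge q\ge 1$, hence are strictly increasing and in particular pairwise distinct; thus all occupation numbers $m(\lambda^{(q)}_N(b),k)$ lie in $\{0,1\}$ and $M(\lambda^{(q)}_N(b))!=1$. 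Combining the three evaluations gives $h_b(\lambda^{(q)}_N(b))=1$.

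The argument is essentially bookkeeping, so I do not expect a genuine obstacle. The one place to be careful is the cancellation of the $\lambda$-dependent normalization constants in the second step: one must make sure the factorials $\lambda_j!$ produced by inverting~\eqref{eq:Slater} pair with the $(q(j-1)+b_j)!$ coming from $\kappa_{b,N}$ (and not with the fixed root factorials) so that the quotient collapses to $g_b(\lambda)$. The verification of no double occupancy at the root is what guarantees the clean normalization $h_b(\lambda^{(q)}_N(b))=1$, and it is the only point requiring a short separate argument.
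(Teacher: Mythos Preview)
Your proof is correct and follows essentially the same approach as the paper: invoke Theorem~\ref{thm:root states}, rewrite the (anti)symmetrized monomials in terms of $\Phi_\lambda$, and track how the normalization constants from~\eqref{eq:orbitalsp} and~\eqref{eq:kappanorm} combine into $g_b(\lambda)/\sqrt{M(\lambda)!}$. Your write-up is in fact more explicit than the paper's --- in particular, you spell out the strict-increase argument $\lambda_{j+1}-\lambda_j\ge q\ge 1$ that guarantees $M(\lambda^{(q)}_N(b))!=1$, which the paper merely asserts.
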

The proof can be found in~Subsection~\ref{subsec:ProofofRoot}. 
The basic structure of the expansion coefficient~\eqref{eq:hb} is similar in other geometries:  $ g_b(\lambda) $ is geometry-dependent, and the rest remains invariant. 

\subsubsection{Cylinder geometry}

In this work, we primarily focus on the cylinder geometry, which is derived from the planar geometry through the conformal transformation $ z \mapsto e^{\gamma z} $ with parameter $ \gamma > 0 $. The single-particle space is then the Bargmann space
$$
 \left\{ f: \mathbb{R} + i \big[0,\tfrac{2\pi}{\gamma}\big)  \to \mathbb{C} \ : \ \mbox{$f $ is entire and} \; \int_{ \mathbb{R}\times\big[0,\tfrac{2\pi}{\gamma}\big) }  |f(x+iy)|^2 e^{- x^2} dx dy < \infty \right\} . 
$$
An orthonormal basis of this Hilbert space is given by
\begin{equation}\label{eq:cylinderONB}
	 \varphi_k(z) \coloneqq \sqrt{\frac{\gamma }{2\pi^{3/2}}} e^{- \gamma^2 k^2/2} \exp\left( \gamma z k \right) , \quad k \in \mathbb{Z} .
\end{equation}
These functions represent orbitals lined up on a cylinder of radius $ \gamma^{-1} $ roughly at $\gamma k $ as their $ x $-coordinate. 

In the following, we will restrict attention to the closed subspace 
$$ \mathcal{B} \coloneqq \overline{\spa\left\{  \varphi_k  \ | \ k \in \mathbb{N}_0\right\}} $$ 
spanned by orbitals with $ k \in \mathbb{N}_0 $. 
An orthonormal basis $ \Phi_{\lambda} $ for the Hilbert space $ \mathcal{F}_N \coloneqq \bigotimes_{q}^N \mathcal{B}$ of $ N $ fermions or bosons is again labeled by partitions $ \lambda = (\lambda_1, \dots , \lambda_N) \in \mathbb{N}_0^N $ and given by 
the Slater determinants or permanents~\eqref{eq:Slater} in which one substitutes the function~\eqref{eq:cylinderONB}. 
By varying $ N $, this basis is naturally identified with the occupation number basis of the associated Fock space $ \mathcal{F}= \bigoplus_{N=0}^\infty  \mathcal{F}_N $. 

Fractional quantum Hall model wavefunctions, which live on the non-negative $ x $-axis, take the form of a symmetric polynomial of $ e^{\gamma z} $ times the corresponding Laughlin function,
\begin{equation}\label{def:WFc}
 \Psi_{b,N}(z_1,\dots , z_N)  = \kappa_{b,N} \ m_{b}\left(e^{\gamma z_1}, \dots, e^{\gamma z_N}\right) \times \mkern-10mu \prod_{1\leq j<k\leq N} \left(e^{\gamma z_j}-e^{\gamma z_k} \right)^q .
 \end{equation}
 These functions are indexed by partitions $ b \in \mathbb{N}_0 $, and
we choose the normalization
\begin{equation}\label{eq:norm2cyl}
	\frac{1}{\kappa_{b,N} } \coloneqq (-1)^{qN(N-1)/2}  \sqrt{N!} \ \sqrt{ \frac{2 \pi^{3/2}}{ \gamma}}^{N} \exp\left( \frac{\gamma^2}{2} \sum_{j=1}^N \left[ q(j-1)+ b_j\right]^2\right) .
\end{equation}
Again, we do not choose $\kappa_{b,N}$ such that $\Psi_{b,N}$ is normalized in $\mathcal{F}_N$, but such that the structure is more transparent. An immediate consequence of Theorem~\ref{thm:root states} is the following representation of these states in terms of the occupation basis. 
\begin{cor}\label{cor:cyl}
Given a root $(N,b) $,  the wavefunction~\eqref{def:WFc} has the representation
$$ \Psi_{b,N} =  \sum_{ \lambda \preceq \lambda^{(q)}_N(b)  } h_b(\lambda) \ \Phi_\lambda  , \qquad h_b(\lambda) \coloneqq \frac{g_b(\lambda) \ w_b(\lambda)}{\sqrt{M(\lambda)!}} $$
 in terms of the orthonormal Slater determinants or permanents $  \Phi_\lambda $ given by~\eqref{eq:Slater} with $ \varphi$ substituted by~\eqref{eq:cylinderONB} and
\begin{equation}
	g_b(\lambda) \coloneqq 
	    \displaystyle \exp\left( - \frac{\gamma^2}{2} \Delta_b(\lambda) \right) , \quad \mbox{where $ \displaystyle \Delta_b(\lambda) \coloneqq \sum_{j=1}^{N} \left[\left( q(j-1)+ b_j\right)^2-\lambda_j^2\right]$.} 
\end{equation}
The normalization of $  \Psi_{b,N}  $ is such that the coefficient of the root partition $  \lambda^{(q)}_N(b) $ is unity, $ h_b\left(  \lambda^{(q)}_N(b) \right) = 1 $.
\end{cor}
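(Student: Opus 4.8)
The plan is to follow the same route as for Corollary~\ref{cor:planar}: substitute the exponential variables into Theorem~\ref{thm:root states}, rewrite the resulting symmetrized monomials in terms of the orthonormal basis $\Phi_\lambda$, and then check that all $\gamma$-dependent prefactors collapse into the claimed Gaussian factor $g_b(\lambda)$. First I would substitute $z_j\mapsto e^{\gamma z_j}$ into the identity~\eqref{MPS with monomial symmetric}. Since the left side of~\eqref{MPS with monomial symmetric} becomes precisely $\Psi_{b,N}/\kappa_{b,N}$ by definition~\eqref{def:WFc}, this yields
\[
\Psi_{b,N} = \kappa_{b,N}(-1)^{qN(N-1)/2}\!\!\sum_{\lambda\preceq\lambda^{(q)}_N(b)}\frac{w_b(\lambda)}{M(\lambda)!}\sum_{\sigma\in\mathcal{S}_N}\mathrm{sgn}(\sigma)^q\prod_{j=1}^N e^{\gamma z_{\sigma(j)}\lambda_j},
\]
so it remains to express the inner sum through $\Phi_\lambda$.

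Next, using the explicit cylinder orbitals~\eqref{eq:cylinderONB}, I would factor the $\sigma$- and $z$-independent constants out of $\varphi_{\lambda_j}(z_{\sigma(j)}) = \sqrt{\gamma/(2\pi^{3/2})}\,e^{-\gamma^2\lambda_j^2/2}\,e^{\gamma z_{\sigma(j)}\lambda_j}$. The product over $j$ then carries the symmetric prefactor $(\gamma/2\pi^{3/2})^{N/2}\exp(-\tfrac{\gamma^2}{2}\sum_j\lambda_j^2)$, and inserting this into the definition~\eqref{eq:Slater} of $\Phi_\lambda$ and solving for the monomial sum gives
\[
\sum_{\sigma\in\mathcal{S}_N}\mathrm{sgn}(\sigma)^q\prod_{j=1}^N e^{\gamma z_{\sigma(j)}\lambda_j} = \sqrt{M(\lambda)!\,N!}\left(\frac{2\pi^{3/2}}{\gamma}\right)^{\!N/2}\!\exp\!\left(\frac{\gamma^2}{2}\sum_{j=1}^N\lambda_j^2\right)\Phi_\lambda.
\]
Substituting this back and inserting the chosen normalization~\eqref{eq:norm2cyl}, the two factors $(-1)^{qN(N-1)/2}$ combine to $(-1)^{qN(N-1)}=1$ since $N(N-1)$ is even, the $\sqrt{N!}$ and $(2\pi^{3/2}/\gamma)^{N/2}$ factors cancel against $\kappa_{b,N}^{-1}$, and one $\sqrt{M(\lambda)!}$ survives in the denominator. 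The surviving exponential $\exp(\tfrac{\gamma^2}{2}\sum_j\lambda_j^2)$ from the orbital weights together with the $\exp(-\tfrac{\gamma^2}{2}\sum_j[q(j-1)+b_j]^2)$ supplied by $\kappa_{b,N}$ produce exactly $\exp(-\tfrac{\gamma^2}{2}\Delta_b(\lambda))=g_b(\lambda)$, yielding $\Psi_{b,N}=\sum_\lambda h_b(\lambda)\Phi_\lambda$ with $h_b(\lambda)=g_b(\lambda)w_b(\lambda)/\sqrt{M(\lambda)!}$.

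For the normalization statement I would evaluate at $\lambda=\lambda^{(q)}_N(b)$: there $\Delta_b=0$, so $g_b=1$; the root coefficient is $w_b(\lambda^{(q)}_N(b))=1$ by Theorem~\ref{thm:root states}; and since $\lambda^{(q)}_N(b)$ is strictly increasing (consecutive differences equal $q+(b_{j+1}-b_j)\ge q\ge 1$) it has no double occupancy, whence $M(\lambda^{(q)}_N(b))!=1$ and $h_b(\lambda^{(q)}_N(b))=1$.

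The computation is elementary and I expect no genuine obstacle; the only point requiring care is the bookkeeping of the four prefactors — the sign $(-1)^{qN(N-1)/2}$, the combinatorial $\sqrt{M(\lambda)!\,N!}$, the geometric $(2\pi^{3/2}/\gamma)^{N/2}$, and the two Gaussian exponentials. The conceptual content is the recognition that the orbital weight $e^{-\gamma^2 k^2/2}$ is exactly what converts the $\lambda$-independent normalization $\kappa_{b,N}$ into the $\lambda$-dependent Gaussian $g_b(\lambda)$ encoding the cylinder geometry; this parallels Corollary~\ref{cor:planar}, where the factorials $\sqrt{\lambda_j!}$ play the role taken here by the Gaussian weights.
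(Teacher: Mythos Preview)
Your proposal is correct and follows exactly the same route as the paper's proof: substitute $z_j\mapsto e^{\gamma z_j}$ in Theorem~\ref{thm:root states}, then match the prefactors in the cylinder orbitals~\eqref{eq:cylinderONB} against the normalization~\eqref{eq:norm2cyl} to isolate the Gaussian factor $g_b(\lambda)$. The paper's proof is a two-line sketch deferring to the planar case, whereas you have carried out the bookkeeping explicitly; your tracking of the sign, the $\sqrt{N!}$ and $(2\pi^{3/2}/\gamma)^{N/2}$ factors, and the two Gaussian exponentials is accurate.
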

The proof can be found in~Subsection~\ref{subsec:ProofofRoot}.

\subsection{Tilings, squeezings and factorization over irreducible segments} \label{sec:factor}
As pointed out in \cite{jansen2009symmetry} for the case $ b = 0 $ in the cylinder geometry, the expansion coefficients $ h_b(\lambda) $ exhibit a renewal structure, i.e.\ they factorize over irreducible segments of the partition $ \lambda  \preceq \lambda^{(q)}_N(b) $.  This is most easily explained by associating the root $ (N,b) $ and its partition $  \lambda^{(q)}_N(b) $ with an occupation configuration $ \mathbf{m}\big(  \lambda^{(q)}_N(b)\big) $ and segmenting this configuration into tiles and elementary blocks.  To do so, we note that for any partition~$ \lambda $, there is a unique occupation configuration $ \mathbf{m}(\lambda) = (m(\lambda,0), m(\lambda,1), \dots ) $ of the non-negative orbitals. In this way, the root partition $  \lambda^{(q)}_N(b) $ corresponds to the configuration which has exactly one particle in each of the orbitals $ \big( \lambda^{(q)}_N(b)\big)_{j } $ with $ j = 1 , \dots , N $. This occupation pattern uniquely corresponds to a root tiling of the orbital set $ \{ 0, 1, 2, \dots , q N + b_N -1 \} $  with
\begin{itemize}
\item 'monomers' of length $ q $ with particle configuration $ 1 0^{q-1} $, and 
\item 'voids' of length $ 1 $ with particle configuration $ 0 $, which are placed in between the monomers. 
\end{itemize}
The $ j$th monomer starts at $ \big( \lambda^{(q)}_N(b)\big)_{j } $, and the total number of voids ahead of the $ j $th monomer equals $ b_j $ with $ j  = 1 , \dots , N $, see Figure~\ref{fig:tiling}. 
\begin{figure}[h]\label{fig:tiling}
	\begin{center}
			\includegraphics[scale=0.4]{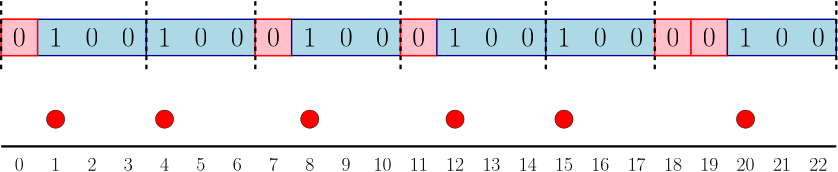}
	\end{center}
\caption{An example of the occupation configuration $ \mathbf{m} $ of a root partition $ \lambda_N^{(q)}(b) $ with $ q = 3 $ and $ N = 6 $ and $ b = (1, 1, 2 , 3, 3 , 5) $. The bottom line illustrates the orbitals and their occupation with red balls symbolizing particles. The top line is the equivalent tiling picture with monomer and void tiles on which the occupations are imprinted. 
The dashed lines mark separations of elementary blocks, which carry exactly one monomer and potentially frontal voids.}\label{fig:tiling}
\end{figure}

\begin{figure}[h]
	\begin{center}
		\includegraphics[scale=0.45]{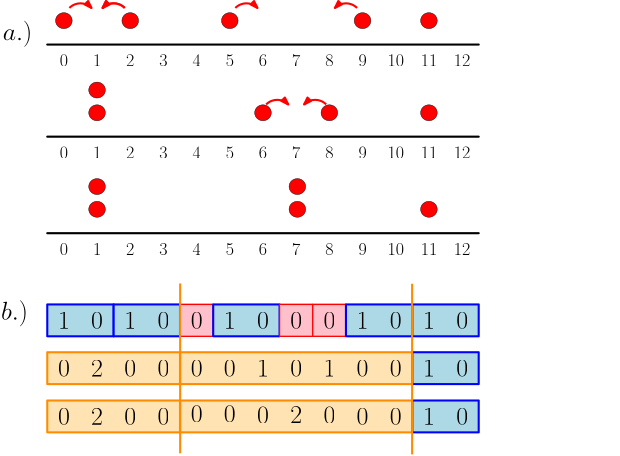}
	\end{center}
\caption{An example of twice two elementary squeezing moves on a root with $ q = 2 $ in $a.)$ the occupation picture with red balls symbolizing particles and $b.)$ in the tiling picture with occupation numbers printed on the tiles. The orange vertical lines mark segmentations, which are consistent with the squeezings. In each picture, the top line represents the root, and the bottom line represents the result after the two squeezing moves.}\label{fig:sqeezing}
\end{figure}
As will be explained in detail in Section~\ref{sec:partitions}, any partition dominated by a root partition, $\lambda \preceq \lambda^{(q)}_N(b) $, can be obtained from that root by a sequence of elementary squeezing operations. In the occupation picture, an elementary squeezing constitutes of two particles moving towards each other, thereby respecting their 'center of mass', see Figure~\ref{fig:sqeezing}. The fact that the expansions~\eqref{MPS with monomial symmetric} and those in Corollaries~\ref{cor:planar} and \ref{cor:cyl}, are supported on partitions dominated by a root partition is hence in accordance with what is known in the physics literature  \cite{Bernevig:2008aa,Bernevig:2009aa,thomale2011decomposition} for fractional quantum Hall wavefunctions.

Through the monomer-void tiling picture, any root partition $   \lambda^{(q)}_N(b) $ can be decomposed into $ N $ elementary blocks, each of which contains exactly one monomer and potentially frontal voids, see Figure~\ref{fig:tiling}. In terms of the composition operation on partitions, this amounts to writing the root partition as 
\begin{equation}\label{eq:segroot0}
\lambda_N^{(q)}(b) = \lambda_{1}^{(q)}\big(b_1\big) \cup  \lambda_{1}^{(q)}\big(b_2-b_1\big)  \cup  \dots  \cup \lambda_{1}^{(q)}\big(b_N- b_{N-1}\big) ,
\end{equation}
see  Definition~\ref{def:composition}. 

To harvest the factorization structure inherent in the expansion coefficients, we will consider arbitrary segmentations of these elementary blocks.
\begin{defn} 
Given a root $ (N,b ) $, 
we call $ (N_1, N_2 , \dots , N_r) \in \mathbb{N}^r $ with $ \sum_{j=1}^r N_j = N $ and 
$ r \in \{ 1, \dots , N \} $,  a \emph{segmentation of the root} and $ r $  the segmentation's length. For any segmentation of length $ r $, we denote by $ b^{(j)} = \big( b^{(j)}_1 , \dots , b^{(j)}_{N_1} \big)  $, $ j \in \{ 1, \dots , r \} $, the unique partitions such that 
\begin{align*} b = \big(b^{(1)} , \dots , b^{(r)} \big) ,
\end{align*}
where we define iteratively for two partitions $ b^{(1)} \in \mathbb{N}_0^{N_1} $, $ b^{(2)} \in \mathbb{N}_0^{N_2} $, their composition
\begin{equation}\label{eq:bconcat}
		\big(b^{(1)},b^{(2)}\big) \coloneqq \big( b^{(1)}_1, \dots ,  b^{(1)}_{N_1} , b^{(1)}_{N_1} +b^{(2)}_1 , \dots , b^{(1)}_{N_1} +b^{(2)}_{N_2} \big) . 
		\end{equation}
Finally, the segments corresponding to $ r = N $ and $ N_1 = \dots = N_N = 1 $ are referred to as \emph{elementary blocks}. 
\end{defn}
In the language of tilings, a segmentation of the root $ (N,b) $ of lengths $ r $ corresponds to marking $ r $ blocks on the tiling. Each segment consists of elementary blocks, which in turn are made up from the void tiles preceding a monomer (in case there are voids) and a monomer tile,  cf.\ Figures~\ref{fig:tiling} and~\ref{fig:irrseg}. In the language of partitions, a segmentation of length $ r $ corresponds to a (de)composition
\begin{equation}\label{eq:segroot}
\lambda_N^{(q)}(b) = \lambda_{N_1}^{(q)}\big(b^{(1)}\big) \cup  \lambda_{N_2}^{(q)}\big(b^{(2)}\big)  \cup  \dots  \cup \lambda_{N_r}^{(q)}\big(b^{(r)}\big) 
\end{equation}
cf.~Definition~\ref{def:composition}.
As is specified in Definition~\ref{def:irred}, any partition which is dominated by a given root partition,  $ \lambda  \preceq \lambda^{(q)}_N(b) $, can be uniquely decomposed into irreducible segments 
\begin{equation}\label{eq:decomppart}
 \lambda =   \lambda^{(1)} \cup  \dots \cup \lambda^{(r)}  , 
 \end{equation}
which are each dominated by the root partition corresponding to the respective segment, $ \lambda^{(j)}  \preceq \lambda_{N_j}^{(q)}\big(b^{(j)}\big) $, $ j \in \{1 , \dots , r \} $.  The segmentation instances $ (N_1, \dots, N_r) $ are then referred to as the renewal points of $ \lambda $, cf.~Definition~\ref{def:irred}. 
For any such partition, the occupation states are products by construction, 
\begin{align}\label{eq:productSlater}
\Phi^{(c)}_{ \lambda^{(1)} \cup  \dots \cup \lambda^{(r)} } 
 = \Phi^{(c)}_{ \lambda^{(1)}} \odot  \Phi^{(c)}_{ \lambda^{(2)}} \odot \cdots \odot  \Phi^{(c)}_{ \lambda^{(r)}} .
 \end{align}
In terms of the occupation numbers $ \mathbf{m}\left(  \lambda^{(1)} \cup  \dots \cup \lambda^{(r)}   \right) $ the operation $ \odot $ corresponds to a simple concatenation of the individual occupations $  \mathbf{m} \left(  \lambda^{(j)}\right) $ of the segments, cf.~Figure~\ref{fig:concatenation}. 
The (non-commutative) product corresponding to $ \lambda =   \lambda^{(1)} \cup  \lambda^{(2)}  $ with $  \lambda^{(j)}  \preceq \lambda_{N_j}^{(q)}\big(b^{(j)}\big) $ is defined by shifting the second segment to the end of the first, 
$$ T_{b^{(1)},N_1 }\lambda^{(2)}  \coloneqq \big(0, \dots , 0 , q (N_1-1) + |b^{(1)} | + \lambda_1^{(2)} , \dots , q (N_1-1) + |b^{(1)} |  +  \lambda_{N_2}^{(2)} \big) , 
$$
and taking the $ q $-symmetrized product, 
\begin{equation}\label{eq:Slaterfact}
  \Phi^{(c)}_{ \lambda^{(1)}} \odot  \Phi^{(c)}_{ \lambda^{(2)}}  \coloneqq \Phi^{(c)}_{ \lambda^{(1)}} \otimes_q \Phi^{(c)}_{ T_{b^{(1)},N_1 }\lambda^{(2)} } . 
\end{equation}
The definition of $ \odot $ on this basis extends by linearity. In this notation, our main factorization result for wavefunctions in the cylinder geometry reads as follows.
\begin{figure}
	\begin{center}
		\includegraphics[scale=0.4]{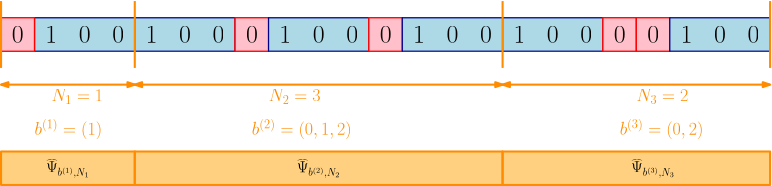}
	\end{center}
	\caption{ The root tiling from Figure~\ref{fig:tiling}  with a particular segmentation of length three with  $N_1=1, N_2=3, N_3=2$. The orange lines indicate the renewal points, and the orange blocks in the last line correspond to the term with $ r = 3 $ and $N_1=1, N_2=3, N_3=2$ in the representation~\eqref{eq:fact}  with irreducible factors indicated on the blocks.}\label{fig:irrseg}
\end{figure}
\begin{theorem}\label{thm:fact}
Given a root $ (N,b) $, the functions~\eqref{def:WFc} have the representation
\begin{equation}\label{eq:fact}
 \Psi_{b,N}  = \sum_{r=1}^N \sum_{ \substack{N_1, \dots , N_r \in \mathbb{N}\\  \sum_{j=1}^r N_j = N } }  \widehat \Psi_{b^{(1)},N_1} \odot  \widehat \Psi_{b^{(2)},N_2} \odot \cdots \odot  \widehat \Psi_{b^{(r)},N_r} ,
\end{equation}
Here $  \big(b^{(1)} , \dots , b^{(r)} \big)  $ is the decomposition of $ b $ corresponding to the segmentation $ (N_1, \dots , N_r )$, and the term for fixed $ N_j \in \mathbb{N} $ and $ b^{(j)} $ is
\begin{equation}\label{eq:irredexp}
	 \widehat \Psi_{b^{(j)},N_j} \coloneqq \mkern-10mu \sum_{\substack{ \lambda  \preceq \lambda^{(q)}_{N_j}(b^{(j)}) \\ \lambda \; \textrm{irreducible}} } h_{b^{(j)} }(\lambda) \ \Phi^{(c)}_\lambda ,
\end{equation}
where the sum is over the irreducible partitions dominated by $ \lambda^{(q)}_{N_j}(b^{(j)}) $  only  (cf.\ Definition~\ref{def:irred} for the notion of irreducibility). Moreover, the vectors in the right side of~\eqref{eq:fact} are orthogonal, and the norm-square of an irreducible contribution is estimated by 
\begin{equation}\label{eq:normseg}
\big\|  \widehat \Psi_{b,N} \big\|^2 \leq  \exp\left( - \left(q(N-1)+b_N-b_1\right) C_q(\gamma) \right)  \leq \exp\left( - q (N-1) C_q(\gamma) \right) ,
\end{equation}
where 
\begin{equation}\label{def:Cgamma}
C_q(\gamma)  \coloneqq  (\gamma^2 - 2(4q+1)) -2  q^{-1} \ln  (1- e^{-(4q+1)} )  -  \pi \sqrt{\frac{2}{3}}  
\end{equation}
is strictly positive for sufficiently large $ \gamma $.
\end{theorem}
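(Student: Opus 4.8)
The plan is to prove the three assertions of Theorem~\ref{thm:fact} in turn: the algebraic identity~\eqref{eq:fact}, the mutual orthogonality of its summands, and the norm estimate~\eqref{eq:normseg}. For the identity I would start from the expansion of Corollary~\ref{cor:cyl}, $\Psi_{b,N} = \sum_{\lambda \preceq \lambda_N^{(q)}(b)} h_b(\lambda)\,\Phi_\lambda^{(c)}$, and regroup the sum according to the unique decomposition of each $\lambda$ into irreducible segments $\lambda = \lambda^{(1)} \cup \dots \cup \lambda^{(r)}$ (Definition~\ref{def:irred}). This reindexes the single sum over $\lambda$ as a double sum, over segmentations $(N_1,\dots,N_r)$ and over irreducible $\lambda^{(j)} \preceq \lambda_{N_j}^{(q)}(b^{(j)})$. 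Using the factorization~\eqref{eq:productSlater} of the occupation states and the bilinearity of $\odot$, the identity~\eqref{eq:fact} then reduces to the multiplicativity of the coefficients, $h_b(\lambda) = \prod_{j=1}^r h_{b^{(j)}}(\lambda^{(j)})$.

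To prove this multiplicativity I would factor $h_b(\lambda) = g_b(\lambda)\,w_b(\lambda)/\sqrt{M(\lambda)!}$ and treat the three factors separately. The factorial $M(\lambda)!$ factorizes because the segments occupy disjoint orbital ranges, so their occupation numbers simply concatenate. For the Gaussian weight, writing $\mu \coloneqq \lambda_N^{(q)}(b)$ and $\Delta_b(\lambda) = \sum_j (\mu_j^2 - \lambda_j^2)$, I would note that on each segment $\mu$ and $\lambda$ are shifted by a common constant and carry the same partial charge (dominance preserves the segment total); expanding the squares, the cross terms cancel and $\Delta_b(\lambda) = \sum_{j=1}^r \Delta_{b^{(j)}}(\lambda^{(j)})$, so $g_b$ factorizes. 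The crucial input is the factorization $w_b(\lambda) = \prod_{j=1}^r w_{b^{(j)}}(\lambda^{(j)})$ of the chiral correlator~\eqref{MPS with monomial symmetric_c}: at each renewal point the balance condition defining irreducibility forces the virtual iMPS state to return to the vacuum sector, so the correlator splits as a product over segments. Orthogonality is then immediate: two summands with different segmentations are supported on partitions with different renewal-point sets --- since each $\lambda^{(j)}$ is irreducible, the renewal points of $\lambda = \cup_j \lambda^{(j)}$ are exactly the segment boundaries --- hence on disjoint subfamilies of the orthonormal basis $\{\Phi_\lambda^{(c)}\}$.

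The substantive part is the bound~\eqref{eq:normseg}. By orthonormality, $\|\widehat\Psi_{b,N}\|^2 = \sum_{\lambda\ \mathrm{irred}} e^{-\gamma^2 \Delta_b(\lambda)}\,|w_b(\lambda)|^2/M(\lambda)!$, and I would control the three factors with $L \coloneqq q(N-1)+b_N-b_1$. The geometric decay comes from a lower bound on $\Delta_b$: setting $D_k \coloneqq \sum_{j\le k}(\lambda_j - \mu_j)$, dominance gives $D_k \ge 0$ and $D_0 = D_N = 0$, while Abel summation yields $\Delta_b(\lambda) = \sum_{k=1}^{N-1} D_k\,(s_{k+1}-s_k)$ with $s_j \coloneqq \mu_j + \lambda_j$ increasing. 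Irreducibility means no balanced cut, i.e.\ $D_k \ge 1$ for $1 \le k \le N-1$ (a point with $D_k = 0$ is, by dominance, forced to be a renewal point), so $\Delta_b(\lambda) \ge s_N - s_1 = L + (\lambda_N - \lambda_1) \ge L$. This extracts the factor $e^{-\gamma^2 L}$. For the remaining factor I would invoke the bounds on the chiral correlators to estimate $|w_b(\lambda)|^2/M(\lambda)!$, which contributes the $O(1)$ constant $2(4q+1)$ and the geometric-series term $2q^{-1}\ln(1-e^{-(4q+1)})$ appearing in~\eqref{def:Cgamma}, and bound the number of admissible irreducible $\lambda$ via the Hardy--Ramanujan estimate $p(n) \le e^{\pi\sqrt{2n/3}}$ applied to the total $|\lambda| = \tfrac{q}{2}N(N-1) + |b| \le L^2$, which yields the term $\pi\sqrt{2/3}$. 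Multiplying the three estimates gives $\|\widehat\Psi_{b,N}\|^2 \le e^{-L\,C_q(\gamma)}$, and $L \ge q(N-1)$ gives the second inequality in~\eqref{eq:normseg}.

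I expect the principal obstacle to lie in the correlator estimate feeding the bound on $|w_b(\lambda)|^2/M(\lambda)!$ and in the bookkeeping that assembles the $O(1)$ corrections into precisely $C_q(\gamma)$: the Gaussian factor $e^{-\gamma^2\Delta_b(\lambda)}$ must not be swamped by the entropy of irreducible partitions, so the correlator bound and the partition count have to be matched carefully enough to preserve positivity of $C_q(\gamma)$ for large $\gamma$. The combinatorial lemma $D_k \ge 1$ for irreducible $\lambda$ is the other delicate point, though it should follow directly from the characterization of renewal points underlying Definition~\ref{def:irred}.
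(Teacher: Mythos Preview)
Your plan matches the paper's proof almost step for step: the identity~\eqref{eq:fact} via multiplicativity of $h_b = g_b\,w_b/\sqrt{M(\lambda)!}$ (Theorem~\ref{thm:factorizationw} for $w_b$, Lemma~\ref{lem:Deltafact} for $g_b$), orthogonality via the observation that irreducible decompositions have distinct renewal-point sets (Lemma~\ref{lm:splitting}), and the norm bound via the Abel-summation lower bound $\Delta_b(\lambda)\ge L$ for irreducible $\lambda$ (Lemma~\ref{lem:Deltafact}(2)) combined with the correlator estimate and the Hardy--Ramanujan count.

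There is, however, one genuine ordering issue in your norm argument. You write that $\Delta_b(\lambda)\ge L$ ``extracts the factor $e^{-\gamma^2 L}$'' and that the correlator bound then handles ``the remaining factor''. This does not work as stated: the paper's correlator estimate (Corollary~\ref{cor:keyest} and~\eqref{eq:wbest}) gives
\[
|w_b(\lambda)| \le \frac{e^{(4q+1)\Gamma_b(\lambda)}}{(1-e^{-(4q+1)})^{N-1}},
\]
and $\Gamma_b(\lambda)$ is \emph{not} bounded by $L$ --- it can be as large as $\Delta_b(\lambda)$, which is unbounded over the partitions you are summing. If you first replace $\Delta_b(\lambda)$ by $L$ in the Gaussian, the leftover sum $\sum_\lambda e^{2(4q+1)\Gamma_b(\lambda)}$ diverges. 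The paper's route is to first prove the comparison $\Gamma_b(\lambda)\le\Delta_b(\lambda)$ (Lemma~\ref{lem:GammaDelta}, itself a nontrivial Abel-summation argument), combine the two exponentials into $e^{-(\gamma^2-2(4q+1))\Delta_b(\lambda)}$, and \emph{only then} invoke $\Delta_b(\lambda)\ge L$. This is exactly the ``matching'' you flag as the principal obstacle, and the missing ingredient is the $\Gamma_b\le\Delta_b$ lemma. A second, smaller point: your entropy bound $|\lambda|\le L^2$ fails when $b_1$ is large (take $b$ constant and large); the paper fixes this by first shifting $\tilde\lambda_j=\lambda_j-b_1$, which is legitimate since dominance forces $\lambda_1\ge b_1$, and then bounds $|\tilde\lambda|\le (N-1)L\le L^2/q$.
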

The proof is found in Subsection~\ref{sec:factor}. 
\begin{rem}
The fractional quantum Hall wavefunctions also exhibit a renewal structure in other 2D geometries, since the geometry-independent contribution $w_b $ to the coefficients $ h_b = g_b w_b $ determines the irreducible segments in the factorization~\eqref{eq:fact}. However, the geometry-dependent contribution $ g_b $ is more complicated under compositions. E.g.\ in the plane, $ g_b(\lambda^{(1)} \cup \lambda^{(2)} ) $ for $ \lambda^{(j)} \preceq \lambda^{(q)}_N(b^{(j)}) $, $ j = 1,2 $, would factor into orbital-dependent terms.  From an analytical point of view, most important is that, e.g., in the planar situation, we do not have a small parameter $ \gamma^{-1} $, which enables the estimate~\eqref{eq:normseg}
\end{rem}

Repeatedly, we will take advantage of the fact that the wavefunctions $ 	 \widehat \Psi_{b^{(j)},N_j}  $ on each segment $ j $ have a non-fluctuating particle number $ N_j $ and orbital momentum associated with $ \lambda^{(q)}_{N_j}(b^{(j)})  $.

The decomposition~\eqref{eq:decomppart} of a partition $ \lambda $ into segments dominated by root segments, $  \lambda^{(j)}  \preceq \lambda^{(q)}_{N_j}(b^{(j)}) $, ensures that the squeezing operations, by which one obtains $ \lambda $ from the root partition $ \lambda_N^{(q)}(b) $, are restricted to the individual segments. In other words, no squeezing occurred in between different segments. The representation~\eqref{eq:fact} can hence be interpreted as an expansion along the partial order of squeezing operators restricted to all possible segmentations, cf.~Figure~\ref{fig:sqeezing} and~\ref{fig:irrseg}. It is non-trivial to realize that~\eqref{eq:fact}  is an orthogonal expansion. \\

Due to orthogonality, the contribution of the irreducible segments to the total norm-square is
\begin{equation}\label{eq:normPsi}
\left\|  \Psi_{b,N}  \right\|^2 = \sum_{r=1}^N \sum_{ \substack{N_1, \dots , N_r \in \mathbb{N}\\  \sum_{j=1}^r N_j = N } } \prod_{j=1}^r  \left\|   \widehat \Psi_{b^{(j)},N_j} \right\|^2 .
\end{equation}
For sufficiently large $\gamma $, i.e.\ on thin cylinders, the norm bound~\eqref {eq:normseg} is exponential in the length of the root partition $ \lambda_N^{(q)}(b) $ as defined by the number of orbitals its tiles cover.  
This should be compared with the fact that irreducible segments with only one occupancy carry unit weight,  $ \|  \widehat \Psi_{b,1}  \|^2 = |h_b(\lambda_1^{(q)}(b)) |^2 = 1 $ irrespective of the number $ b \in \mathbb{N}_0 $ of voids by Corollary~\ref{cor:cyl}. 
The renewal process governing the expansion~\eqref{eq:fact} hence exponentially favors short segments. This observation is the key behind our results on the entanglement gap and exponential clustering on thin cylinders.

An immediate consequence of~\eqref{eq:normPsi}  is the super-multiplicativity of the square of the norm, i.e.\ for any  $ N_1 , N_2 \in \mathbb{N} $ and $ b^{(1)} \in \mathbb{N}_0^{N_1} $, $  b^{(2)} \in \mathbb{N}_0^{N_2} $ with $ N = N_1 + N_2 $ and $ b= \big(b^{(1)}, b^{(2)} \big) $, we have:
\begin{equation}\label{eq:supermult}
\left\|  \Psi_{b,N}  \right\|^2 \geq \left\|  \Psi_{b^{(1)},N_1}  \right\|^2  \left\|  \Psi_{b^{(2)},N_2}  \right\|^2 \geq  \left\|  \Psi_{b^{(1)},N_1}  \right\|^2 .
\end{equation}
Here the second equality follows from the fact that $ \left\|  \Psi_{b,1}  \right\|^2 = 1 $ and hence $ \left\|  \Psi_{b,N}  \right\|^2  \geq 1 $. \\

More can be said for the norm-squared~\eqref{eq:normPsi}  of the Laughlin wavefunction $ b= 0 $. As is well known~\cite{di1994laughlin,jansen2009symmetry}, the quantity $ C_N \coloneqq \left\|  \Psi_{0,N}  \right\|^2 $ can be interpreted as the partition function of a Coulomb gas, and
$
\lim_{N\to \infty}N^{-1} \ln C_N
$ is the corresponding pressure, which exists thanks to super-multiplicativity and Fekete's lemma. The finite-$ N $ corrections to this infinite-volume pressure are expected to be suppressed. Since the cylinder is essentially a one-dimensional system, this goes along with the breaking of the translational symmetry, first established in~\cite{jansen2009symmetry} using renewal theory, and later in~\cite{Aizenman:2010aa} with the help of the Aizenman-Martin argument. In Appendix~\ref{app:new}, we show how our estimate \eqref{eq:normseg} can be used to establish quantitative versions of finite-$N$  corrections to the pressure. 
\subsection{Entanglement gap and exponential clustering on thin cylinder}

In the cylinder geometry, it is natural to investigate the entanglement properties of the wavefunctions $  \Psi_{b,N}  $ concerning a left-right bipartition of the orbitals. In view of Theorem~\ref{thm:fact}, 
it is not surprising that these wavefunctions approximately factorize with respect to bipartitions of their root partitions. This is expressed in the following 
\begin{theorem}\label{thm:EGap}
Let $ N_1 , N_2 \in \mathbb{N} $ and $ b^{(1)} \in \mathbb{N}_0^{N_1} $, $  b^{(2)} \in \mathbb{N}_0^{N_2} $ be partitions. If $ N = N_1 + N_2 $ and $ b= \big(b^{(1)}, b^{(2)} \big) $, then in case  $ \gamma $ is large enough such that the constant from~\eqref{def:Cgamma} is strictly positive, $ C_q(\gamma) > 0 $,   the functions~\eqref{def:WFc} approximately factorize: 
$$
\left\| \frac{\Psi_{b,N}}{\left\|\Psi_{b,N}\right\|}- \frac{ \Psi_{b^{(1)},N_1}  }{\left\| \Psi_{b^{(1)},N_1} \right\|}\odot  \frac{\Psi_{b^{(2)},N_2} }{\left\| \Psi_{b^{(2)},N_2} \right\|} \right\| \leq \frac{2 e^{-C_q(\gamma)q/2}}{1-e^{-C_q(\gamma)q}}  .
$$  
\end{theorem}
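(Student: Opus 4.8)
The plan is to split $\Psi_{b,N}$ into the product state $P\coloneqq \Psi_{b^{(1)},N_1}\odot \Psi_{b^{(2)},N_2}$ and an orthogonal remainder, and then show that the remainder is exponentially small relative to $P$. By Theorem~\ref{thm:fact}, $\Psi_{b,N}$ is the orthogonal sum~\eqref{eq:fact} over all segmentations of the root $(N,b)$. By associativity of $\odot$, concatenating an arbitrary segmentation of $(N_1,b^{(1)})$ with one of $(N_2,b^{(2)})$ shows that the terms whose renewal points include $N_1$ assemble precisely into $P$. Writing $\Psi_{b,N}=P+R$, where $R$ collects the remaining terms — those whose unique irreducible segment covering the interface between monomers $N_1$ and $N_1+1$ straddles it — the orthogonality built into~\eqref{eq:fact} yields $\langle P,R\rangle =0$, hence $\|\Psi_{b,N}\|^2=\|P\|^2+\|R\|^2$, and the equality instance of~\eqref{eq:supermult} (equivalently~\eqref{eq:normPsi} applied to the two sub-roots) gives $\|P\|^2=\|\Psi_{b^{(1)},N_1}\|^2\,\|\Psi_{b^{(2)},N_2}\|^2$.

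Next I reduce the normalized difference to the single scalar $\rho\coloneqq \|R\|^2/\|P\|^2$. Since $R\perp P$ we have $\langle \Psi_{b,N},P\rangle=\|P\|^2>0$, so a direct computation gives
\[
\left\|\frac{\Psi_{b,N}}{\|\Psi_{b,N}\|}-\frac{P}{\|P\|}\right\|^2=2\left(1-\frac{\|P\|}{\|\Psi_{b,N}\|}\right).
\]
Because $\|P\|/\|\Psi_{b,N}\|=(1+\rho)^{-1/2}$ and $(1+\rho)^{-1/2}\ge 1-\rho/2$, the right-hand side is at most $\rho$, so the quantity in the theorem is bounded by $\sqrt{\rho}=\|R\|/\|P\|$.

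It remains to estimate $\rho$. Here I group the terms of $R$ by the straddling segment, which covers monomers $a+1,\dots,a+\ell$ with $0\le a\le N_1-1$ and $\ell\ge N_1+1-a$ (and $a+\ell\le N$); by the orthogonal product structure its contribution factorizes into a renewal sum $Z^{\leftarrow}_a$ over the first $a$ monomers, the irreducible weight $\|\widehat\Psi^{\mathrm{seg}}_{a,\ell}\|^2$ of the straddling segment, and a renewal sum $Z^{\rightarrow}$ over the trailing $N-a-\ell$ monomers. Super-multiplicativity~\eqref{eq:supermult}, applied to a prefix of $b^{(1)}$ and a suffix of $b^{(2)}$, bounds $Z^{\leftarrow}_a\le \|\Psi_{b^{(1)},N_1}\|^2$ and (using that $a+\ell\ge N_1+1$ places the trailing block among the last $N_2$ monomers) $Z^{\rightarrow}\le \|\Psi_{b^{(2)},N_2}\|^2$, while~\eqref{eq:normseg} gives $\|\widehat\Psi^{\mathrm{seg}}_{a,\ell}\|^2\le x^{\ell-1}$ with $x\coloneqq e^{-qC_q(\gamma)}$. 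Dividing by $\|P\|^2=\|\Psi_{b^{(1)},N_1}\|^2\,\|\Psi_{b^{(2)},N_2}\|^2$ decouples the flanking factors, leaving $\rho\le \sum_{a=0}^{N_1-1}\sum_{\ell\ge N_1+1-a}x^{\ell-1}\le \sum_{a=0}^{N_1-1} x^{N_1-a}/(1-x)\le x/(1-x)^2$. Hence $\sqrt{\rho}\le \sqrt{x}/(1-x)=e^{-qC_q(\gamma)/2}/(1-e^{-qC_q(\gamma)})$, which is dominated by the claimed bound $2e^{-qC_q(\gamma)/2}/(1-e^{-qC_q(\gamma)})$; the spare factor of two leaves ample slack (for instance if one prefers a cruder triangle-inequality split of the normalized difference).

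The routine ingredient is the two-vector identity of the second step. The substantive work lies in the first and third steps: verifying that the renewal and orthogonality structure of~\eqref{eq:fact} simultaneously identifies $P$ with the renewal-at-$N_1$ terms and expresses $\|R\|^2$ as the factorized renewal sum above, and then pushing super-multiplicativity through to prefixes and suffixes of the root so that the flanking renewal sums are dominated by $\|\Psi_{b^{(1)},N_1}\|^2$ and $\|\Psi_{b^{(2)},N_2}\|^2$. I expect the main obstacle to be the void-shift bookkeeping of the composition~\eqref{eq:bconcat}, which must be tracked to ensure that the prefix and suffix appearing in the $Z^{\leftarrow}_a$, $Z^{\rightarrow}$ factors really are the correct sub-roots; once this is settled, the geometric summation and the concluding estimate are immediate.
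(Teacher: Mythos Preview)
Your proof is correct and follows essentially the same route as the paper. Both identify $P=\Psi_{b^{(1)},N_1}\odot\Psi_{b^{(2)},N_2}$ with the terms of~\eqref{eq:fact} having a renewal point at $N_1$, resum the orthogonal remainder $R$ according to the unique irreducible segment straddling $N_1$, and bound that segment by~\eqref{eq:normseg} while controlling the flanking renewal sums via super-multiplicativity, yielding $\|R\|^2\le \tfrac{x}{(1-x)^2}$ times a norm with $x=e^{-qC_q(\gamma)}$.

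The only noteworthy difference is the reduction step: the paper uses the triangle inequality to pass from $\|\Psi_{b,N}-P\|\le \tfrac{\sqrt{x}}{1-x}\|\Psi_{b,N}\|$ to the normalized statement (picking up the factor~$2$), whereas you exploit $P\perp R$ and the two-vector identity $\bigl\|\tfrac{\Psi}{\|\Psi\|}-\tfrac{P}{\|P\|}\bigr\|^2=2\bigl(1-\tfrac{\|P\|}{\|\Psi\|}\bigr)\le \rho$ to get the slightly sharper bound $\sqrt{\rho}\le \tfrac{\sqrt{x}}{1-x}$ directly. A second cosmetic difference is that you normalize by $\|P\|^2$ (bounding the flanking sums by $\|\Psi_{b^{(j)},N_j}\|^2$ separately) while the paper normalizes by $\|\Psi_{b,N}\|^2$; both are legitimate uses of~\eqref{eq:supermult} and lead to the same geometric series.
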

The proof is found in Subsection~\ref{sec:Egap}.

Theorem~\ref{thm:EGap} implies that the largest eigenvalue of the left and right reduced state of the (normalized) state associated with $ \Psi_{b,N}  $  is larger or equal to $ 1 -  4\sqrt{2}/ \sqrt{e^{C_q(\gamma)\ q } - 1}   $. This follows from standard estimates, e.g.\ using Uhlmann's bound~\cite{Uhlmann:1970aa}, see e.g.~\cite[Eq. (3.5)]{Aizenman:2025aa}.  In the thin cylinder limit, the largest eigenvalue is hence close to its maximal value of one. Since all eigenvalues sum to one, the other eigenvalues must be small.  This explains the common notion of the 'entanglement gap' in the so-called entanglement spectrum, i.e.\ the negative logarithm of the eigenvalues of the reduced state of such a bipartition.

In the physics literature, the entanglement spectrum of the Laughlin state is well studied numerically~\cite{li2008entanglement,Thomale:2010aa,Lauchli:2010aa}. Li and Haldane have proposed \cite{li2008entanglement} that the part of the entanglement spectrum, which consists of the small eigenvalues, exhibits universal features predicted by the CFT. Theorem~\ref{thm:EGap} does not address this conjecture. \\

Our last result concerns the exponential decay of correlations of local observables in each of the states $ \Psi_{b,N} $. Any local observable is composed of the sums of products of 
the canonical annihilation and creation operators, $ c_k $, $ c_k^* $, with $ k \in \mathbb{N}_0 $, associated with the one-particle orbitals~\eqref{eq:cylinderONB} on the non-negative half-cylinder. They obey the canonical anticommutation or commutation rules,
$$
[ c_k , c^*_l ]_q \coloneqq c_k c_l^* - (-1)^q c^*_l c_ k = \delta_{k,l} , \quad [c_k, c_l ]_q =  [c_k^*, c_l^* ]_q = 0 ,
$$
where $ [ \cdot, \cdot ]_q $ stands for the anti-commutator or commutator, depending on whether we deal with fermions ($q$ odd) or bosons ($q$ even). Subsequently, we denote for a partition $ L = (L_1, L_2 , \dots , L_r) \in \mathbb{N}_0^r $,
$$
c_L \coloneqq  c_{L_1} c_{L_2} \dots c_{L_r} 
$$
In this notation, the algebra of local observables is generated by such products and their adjoints, and we abbreviate their building blocks as
$$
\mathcal{O}_\loc \coloneqq \left\{ c^*_L c_{L'} \ : \  L \in \mathbb{N}_0^r , \,  L' \in \mathbb{N}_0^s  \; \mbox{for some $ r,s \in \mathbb{N}$} \right\} . 
$$
We will refer to the orbitals in the partitions $ L, L' $ as the support, $ \supp O $, of the observable  $O = c^*_L c_{L'}$.
In the case of fermions ($q $ odd), it is straightforward that the expectation values  
$$  \langle O \rangle_{N,b} \coloneqq \frac{\langle \Psi_{b,N} ,O \   \Psi_{b,N}\rangle}{\left\| \Psi_{b,N}\right\|^2}  $$
are well defined for any $ O \in \mathcal{O}_\loc $ and any root $ (N,b) $. Since the occupation number of bosons is a priori unbounded on any orbital, already the finiteness of these expectation values, which we will establish in Lemma~\ref{lm:norms observables}, is non-trivial for bosons ($q $ even).
Our next theorem shows that, on sufficiently thin cylinders, any such expectation value exhibits exponential decay of the correlations of local observables in the distance, $ \dist(\supp A,\supp B)  \coloneqq \min \{ |x-y| \ | \ x \in \supp A , \;  y \in \supp B \} $, of their supports.
\begin{theorem} \label{thm:clustering}
For all $\gamma > 0 $ large enough such that the constant from~\eqref{def:Cgamma} is strictly positive, $ C_q(\gamma) > 0 $, there is some $ d_q(\gamma) > 0 $ such that for all local observables $ A, B \in \mathcal{O}_\loc $ with $\max \supp A < \min \supp B$ and $ \dist(\supp A,\supp B) \geq d_q(\gamma) $, and   all roots $ (N,b) $:	
			\begin{equation} \label{exponential clustering}
				\vert \langle A B\rangle_{N,b} - \langle A  \rangle_{N,b} \langle B  \rangle_{N,b} \vert
				\leq C \exp\left(- \frac{C_q(\gamma)}{6} \ \dist(\supp A,\supp B) \right) . 
			\end{equation}
			with a constant $ C < \infty $, which only depends on the cardinalities of the support of $ A $ and $ B $.
		\end{theorem}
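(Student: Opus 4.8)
\emph{Strategy.} The plan is to reduce~\eqref{exponential clustering} to the statement that the left and right orbital regions are almost uncorrelated in the normalized state associated with $\Psi_{b,N}$, and to control the residual correlation by the exponentially small weight of configurations in which a single irreducible segment \emph{bridges} the gap between the two supports. Set $d=\dist(\supp A,\supp B)$ and let $G$ be the set of gap orbitals strictly between $\max\supp A$ and $\min\supp B$, so $|G|\ge d-1$. Since $A$ is supported on orbitals $\le\max\supp A$ and $B$ on orbitals $\ge\min\supp B$, both act as the identity on $G$; writing $\rho$ for the density matrix of the normalized state reduced to the orbitals outside $G$, we have $\langle AB\rangle_{N,b}=\mathrm{Tr}[\rho\,(A\otimes_q B)]$, $\langle A\rangle_{N,b}=\mathrm{Tr}[\rho_L A]$ and $\langle B\rangle_{N,b}=\mathrm{Tr}[\rho_R B]$, where $\rho_L,\rho_R$ are the further reductions to the left and right regions. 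The connected correlation is thus paired against $\rho-\rho_L\otimes\rho_R$.

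\emph{Reduction to bridging configurations.} First I would show that $\rho$ factorizes exactly into $\rho_L\otimes\rho_R$ once all configurations carrying a renewal point inside $G$ are retained: expanding $\Psi_{b,N}=\sum_\lambda h_b(\lambda)\Phi^{(c)}_\lambda$, using that a matrix element of $A\otimes_q B$ forces $\lambda,\lambda'$ to agree on $G$, and invoking the cylinder-geometry factorization of the coefficients $h_b=g_bw_b$ over irreducible segments (cf.\ the remark following Theorem~\ref{thm:fact} and~\eqref{eq:decomppart}), the contribution of configurations with a renewal point in $G$ assembles into a tensor product across that point. What is left is the bridged part $\Psi^{\mathrm{br}}$ consisting of configurations with no renewal point in $G$, for which one irreducible segment covers all of $G$. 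A Cauchy--Schwarz estimate then yields
\begin{equation*}
\bigl|\langle AB\rangle_{N,b}-\langle A\rangle_{N,b}\langle B\rangle_{N,b}\bigr|\le C_{A,B}\,\frac{\|\Psi^{\mathrm{br}}\|}{\|\Psi_{b,N}\|},
\end{equation*}
with $C_{A,B}$ an observable-dependent prefactor. The careful bookkeeping of this reduction---including the fermionic reordering signs and the separate treatment of number-non-conserving observables, for which $\langle A\rangle_{N,b}=\langle B\rangle_{N,b}=0$---is the first technical step.

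\emph{The renewal estimate.} The analytic core is the bound $\|\Psi^{\mathrm{br}}\|^2\le C\,e^{-(d-1)C_q(\gamma)}\,\|\Psi_{b,N}\|^2$. By the factorization of $|h_b|^2$ over the irreducible decomposition, $\|\Psi^{\mathrm{br}}\|^2$ is a renewal sum in which exactly one factor is the squared norm of the bridging segment; since that segment has orbital span at least $|G|\ge d-1$, the single-segment estimate~\eqref{eq:normseg} bounds this factor by $e^{-(d-1)C_q(\gamma)}$, while the remaining left and right renewal sums are dominated by $\|\Psi_{b,N}\|^2$ using the super-multiplicativity~\eqref{eq:supermult} and the orthogonal decomposition~\eqref{eq:normPsi}. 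Summing the resulting geometric series over admissible spans and placements of the bridge converges whenever $C_q(\gamma)>0$ and produces the finite constant $C$. This is the same mechanism as in Theorem~\ref{thm:EGap}, now applied across the whole gap rather than a single cut, which is what turns the $\gamma$-dependent constant there into genuine decay in~$d$.

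\emph{Observable bounds and the main obstacle.} For fermions ($q$ odd) the operators $c_k,c_k^*$ are bounded, so $C_{A,B}$ is a combinatorial constant depending only on $|\supp A|$ and $|\supp B|$, and one obtains~\eqref{exponential clustering}. The hard part is the bosonic case ($q$ even): the occupation numbers are unbounded, so the matrix elements of $A=c^*_Lc_{L'}$ and $B$ grow with the occupations and $C_{A,B}$ is not a priori finite. To handle this I would combine the finiteness established in Lemma~\ref{lm:norms observables} with moment bounds on the orbital occupations under $\langle\,\cdot\,\rangle_{N,b}$---available from the Coulomb-gas control behind~\eqref{eq:normseg}---so that the polynomial growth of the observable factor is absorbed by the exponential bridging suppression. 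It is exactly this trade-off between occupation moments and the renewal decay, together with the passage from squared norms to amplitudes, that accounts for the reduced rate $C_q(\gamma)/6$ while keeping $C$ dependent only on the cardinalities of the supports.
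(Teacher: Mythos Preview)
Your reduction step contains a genuine gap. The claim that the non-bridging part of $\Psi_{b,N}$ ``assembles into a tensor product across that point'' and hence contributes nothing to the connected correlation is incorrect. Different configurations carry renewal points at \emph{different} locations in $G$, so $\Psi_{b,N}-\Psi^{\mathrm{br}}$ is an orthogonal superposition $\sum_{g}\Psi_L^{(g)}\odot\Psi_R^{(g)}$ over many cuts, not a single product state. For such a superposition the connected correlation does not vanish: with two cuts one obtains a covariance-type expression governed by the mismatch between the norm products $\|\Psi_L^{(g)}\|^2\|\Psi_R^{(g')}\|^2$ at different $g,g'$. Your Cauchy--Schwarz inequality $|\langle AB\rangle-\langle A\rangle\langle B\rangle|\le C_{A,B}\|\Psi^{\mathrm{br}}\|/\|\Psi_{b,N}\|$ would only follow if this residual correlation were zero, and it is not. (Note also that fixing a \emph{single} cut $g$ and invoking Theorem~\ref{thm:EGap} gives a bound independent of $d$, since the segment bridging one point need only have length~$\ge 2$.)

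This missing control is exactly what the paper supplies, and it is the technical heart of the proof. After truncating to $\Psi_{b,N}^{(AB)}$ (so that a renewal point in the gap is guaranteed---this is the role of your $\|\Psi^{\mathrm{br}}\|$ estimate, which reappears as Lemma~\ref{lm:truncated Laughlin}), the paper first proves a structural fact (Lemma~\ref{lm:partitions PsiAB}): whenever $\langle\Phi_\lambda,AB\Phi_\mu\rangle\neq0$, the renewal points of $\lambda$ and $\mu$ in the gap must coincide, which forces $\Delta_B=p_B=0$ and allows the double sum to be organized over a single pair of cuts $(n,m)$. The connected correlation then reduces to a sum of terms $f_A(n)f_B(m)$ times the difference $\|\Psi_{b(n,m)}\|^2\|\Psi_{b,N}\|^2-\|\Psi_{b(1,n-1)}\|^2\|\Psi_{b(m+1,N)}\|^2$, and a separate ``switching'' lemma (\eqref{difference of products norms}) shows this difference is exponentially small in $d$ by pairing segmentations of the two factors and observing that the uncancelled terms have no common renewal point in the gap, hence interlace and together span length $\ge d_{AB}/3$. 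Neither the coincidence lemma nor the norm-product comparison has an analogue in your sketch, and without them the argument does not close.
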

The proof and explicit values for the constants $ c_q(\gamma)  , C $ are found in Subsection~\ref{sec:cluster}. 

Exponential clustering for the Laughlin state $ b = 0 $ has been shown in~\cite{jansen2009symmetry}. However, their proof method, which is based on the implicit function theorem and renewal theory, yields no explicit threshold for the inverse radius $\gamma$, at which exponential clustering applies. The follow-up~\cite{jansen2012fermionic} to \cite{jansen2009symmetry} uses Coulomb-gas methods for an attempt to improve estimates, however, without reaching a proof of quantitative clustering.

\subsection{Outlook}  
The set of fractional quantum Hall wavefunctions $ \Psi_{b,N} $ with arbitrary root $(N,b) $ is known to span a dense subspace of the kernel of the Haldane pseudopotential corresponding to filling fraction $ 1/q $~\cite{Rezayi:1994aa,Mazaheri:2015aa}. Haldane pseudopotentials are many-body operators modelling quantum Hall physics in rather general 2D geometries~\cite{PhysRevLett.51.605,QHEOxford2003,Ortiz:2013aa,Bandyopadhyay:2020aa,schossler2022inner}.  They emerge from a generic interacting Landau Hamiltonian in the scaling limit of short-range interactions~\cite{seiringer:2020}.  
For filling fraction $ 1/q $,  the Haldane pair potentials are frustration-free and non-negative, and hence 'parent Hamiltonians' for $  \Psi_{b,N} $.  

Pseudo-potentials are conjectured to have a uniform spectral gap above their ground-state -- a feature, which is responsible for the incompressibility of the quantum fluid \cite{Roug19, nachtergaele2021spectral} as well as the quantization of the Hall conductance~\cite{A7_hastings:2015, A7_Bachmann:2018lb, A7_Bachmann:2021dp}. Proving this so-called Haldane-gap conjecture has remained elusive. Rigorous results 
so far address only truncated versions of pseudopotentials~\cite{nachtergaele2021spectral,warze2022spectral,warzel2023bulk,A7_Y24}. 
Notably, the identification of the root $ (N,b) $ with a void-monomer tiling (cf.~Figure~\ref{fig:tiling}) proves that the set of  fractional quantum Hall wavefunctions $ \Psi_{b,N} $ 
 is in one-to-one correspondence with the set of VMD states introduced in~\cite{nachtergaele2021spectral,nachtergaele2020low}. The latter are an orthonormal basis of the kernel of the truncated Haldane Hamiltonian~\cite{Bergholtz:2005pl,nachtergaele2021spectral}, and the analogue of Theorem~\ref{thm:clustering} for the VMD states has also been established in~\cite{nachtergaele2021spectral}.   In contrast to VMD states, however, the wavefunctions $ \Psi_{b,N} $ corresponding to the same particle number $ N $ and total momentum $ |  \lambda^{(q)}_N(b)  | = \sum_{j=1}^N \big( \lambda^{(q)}_N(b)\big)_{j }  $, but yet different $ b $, are not necessarily orthogonal. 
 The correspondence raises the question and hope of an adiabatic connection of truncated and untruncated pseudopotentials. If this could be achieved adiabatically without closing the gap via a stability analysis \cite{BHM10,A7_NSY20}, Haldane's gap-conjecture would be proven. Other known methods for many-body spectral gaps~\cite{Young25}, such as the martingale method, Knabe's method, or the induction method~\cite{LNWY25}, also rely on a good understanding of the structure of the ground state.  With this work and the techniques developed in its body, we hope to provide a first non-trivial step -- at least in the cylinder geometry.

\section{Infinite MPS representation}\label{sec:iMPS}

	A virtual Hilbert space and a set of operators characterize MPS representations. This section sets the stage by defining these fundamental quantities and by exploring their basic properties. 
	Our iMPS representation adopts the framework of the free chiral conformal field theory (CFT) for quantum Hall states, as described in the physics literature. It is inspired, yet slightly different from predecessors~\cite{estienne2013fractional}. The relation is explained in Appendix~\ref{app:B}.

	\subsection{Virtual Hilbert space and momentum eigenspaces in a chiral CFT}
	  For the virtual Hilbert space, we take  the Fock space $\mathcal{H}$ generated from the vacuum~$ \vert 0 \rangle$ and a set of the bosonic annihilation and creation operators $(a_j)_{j\in \mathbb{N}}, (a_j^*)_{j\in \mathbb{N}}$, which satisfy the commutation relations
		\begin{equation} \label{commutation relation}
				[a_n, a_m] =0 = [a_n^*, a_m^*], \qquad [a_n, a_m^*] = n \delta_{n,m}, \qquad \text{for all } n,m \in \mathbb{N}.
		\end{equation}
		We will denote by $\mathcal{H}_0$ the subspace
		\begin{equation}
			\mathcal{H}_0 = \text{span} \left\{ \left(\prod_{j=1}^M (a_j^*)^{n_j} \right) \vert 0 \rangle \ : \ M\in \mathbb{N}_0, n_1, \dots, n_M\in \mathbb{N}_0 \right\}, 
		\end{equation}
		which is dense in $\mathcal{H}$ by construction.  
		The scalar product on $\mathcal{H}$ is chosen in such a way that $a_j^*$ is the adjoint of $a_j$ and such that $\vert 0\rangle$ is normalized. In accordance with the interpretation of conformal field theory, we will subsequently refer to  $a_j^*$ as the creation of moment $ j \in \mathbb{N} $, and to the operator 
		\begin{equation}\label{def:L0}
			L_0 = \sum_{n=1}^\infty a_n^* a_n , 
		\end{equation}
		which is  non-negative and symmetric on $\mathcal{H}_0$, 
		as the operator of total momentum. By Friedrich's extension theorem, it hence uniquely extends to a self-adjoint operator on $ \mathcal{H} $. 
		A straightforward computation using~\eqref{commutation relation} shows that 
		\begin{equation} \label{commutation relation L0 aj}
				L_0 a_j = a_j (L_0-j), \qquad a_j^* L_0 = (L_0-j) a_j^*
			\end{equation}
		for all $ j \in \mathbb{N} $, and hence
		\begin{equation}\label{eq:L_0evs}
		 L_0\left(\prod_{j=1}^K (a_j^*)^{n_j} \right) \vert 0 \rangle = \sum_{j=1}^K j n_j  \left(\prod_{j=1}^K (a_j^*)^{n_j} \right) \vert 0 \rangle .
		\end{equation}
		The finite-dimensional subspaces 
		 \begin{equation}\label{def:momentumH}
				\mathcal{H}(M) \coloneqq \mathrm{span} \left\{ \left(\prod_{j=1}^K (a_j^*)^{n_j}\right) \vert 0 \rangle \ : \ K\in \mathbb{N}, n_1, \dots, n_K\in \mathbb{N}_0 \text{ such that } \sum_{j=1}^K j n_j =M \right\}
			\end{equation}
			are the eigenspaces of $ L_0 $ corresponding to 
			 a fixed total momentum $ M \in \mathbb{N}_0 $. They are hence mutually orthogonal, i.e.,  $\mathcal{H}(M)\perp \mathcal{H}(\tilde{M}) $ for $   M\neq \tilde{M}  $, and span the full space:
			\begin{equation}\label{eq:orthsum}
			  \mathcal{H} = \bigoplus_{M\in \mathbb{N}_0}  \mathcal{H}(M) . 
			\end{equation}

		\subsection{Motivation and definitions of operators}	
		To introduce the set of operators used in the iMPS representation on the virtual Hilbert space  $\mathcal{H}$, we first recall the construction used in~\cite{estienne2013fractional}, which 
		 considers formal power series with coefficients in $\mathrm{End}(\mathcal{H}_0)$, i.e., the coefficients are linear operators from $\mathcal{H}_0$ into itself -- most importantly		
		\begin{equation} \label{def S+ S-}
				S_+(z) = \sum_{n=1}^\infty \frac{1}{n} a_n z^{-n}, \qquad S_-(z) = \sum_{n=1}^\infty \frac{-1}{n} a_n^* z^{n}.
		\end{equation}
		Informally, one would like to define
		\begin{equation}\label{eq:informalW}
			W(z) = \sum_{n\in \mathbb{Z}} W_n z^n 
			=\exp\left(-\sqrt{q} S_-(z)\right) \exp\left(-\sqrt{q} S_+(z)\right).
		\end{equation}
		For this, one would consider Laurent series with coefficients in $\mathrm{End}(\mathcal{H}_0)$. Unfortunately, these do not form a ring, as multiplication is, in general, not defined. Indeed, formally, the coefficients of a product of two Laurent series will be a series of terms in $\mathrm{End}(\mathcal{H}_0)$, which has no meaning in an algebraic setting. These objects are also problematic from an analytic perspective, since it is unclear in what sense the series of unbounded operators should converge. This is a standard problem in algebraic quantum field theory, and a solution involves taking the formal product and realizing that the infinite series becomes a finite sum when tested against an element in a suitable domain. In our case, $\mathcal{H}_0$ will do the trick.
		
			Before we spell out a rigorous version, let us briefly recall the relevance of the above operators in relation to the Laughlin wavefunction if $ N = 2$. 
		Informally, we compute the quantity
		$$
			\langle 0\vert W(z_1) W(z_2) \vert 0 \rangle =  \langle 0 \vert \exp\left(-\sqrt{q} S_-(z_1)\right) \exp\left(-\sqrt{q} S_+(z_1)\right) \exp\left(-\sqrt{q} S_-(z_2)\right) \exp\left(-\sqrt{q} S_+(z_2)\right)\vert 0\rangle.
		$$
		As $a_j \vert 0\rangle =0$ for all $j\in \mathbb{N}$, one concludes  
		\begin{align*}
				\exp(-\sqrt{q} S_+(z)) \vert 0\rangle 
				= \vert 0 \rangle, \qquad \exp(-\sqrt{q} S_+(z))^*  \vert 0 \rangle = \vert 0\rangle. 
		\end{align*}
		Thus, it remains to evaluate
		\begin{align*}
			\langle 0 \vert  \exp\left(-\sqrt{q} S_+(z_1)\right) \exp\left(-\sqrt{q} S_-(z_2)\right) \vert 0\rangle. 
		\end{align*}
		To do so, we will commute $\exp\left(-\sqrt{q} S_+(z_1)\right)$ to the right by computing the commutator of $\exp(-\sqrt{q} S_+(z_1))$ and $\exp(-\sqrt{q} S_-(z_2))$. This is a setting where we can use the Baker-Campbell-Hausdorff theorem (by embedding the problem into $\mathrm{End}(\mathcal{H}_0)[[z_1^{-1}, z_2]]$, which is an associative $\mathbb{Q}$-algebra, see~\cite[Corollary 4.5]{bonfiglioli2011topics}). Since all higher commutators vanish as is seen from
		\begin{align*}
			[S_+(z_1),S_-(z_2)] &= \sum_{n=1}^\infty \sum_{m=1}^{\infty} \frac{-1}{nm} [a_n, a_m^*] z_1^{-n} z_2^{m}
			= \sum_{n=1}^\infty \frac{-1}{n} z_1^{-n} z_2^n , 
		\end{align*}
		the Baker-Campbell-Hausdorff  formula takes the form
		\begin{align*}
			\exp &(-\sqrt{q} S_+(z_1)) \exp(-\sqrt{q} S_-(z_2)) \\
			&=\exp(-\sqrt{q}S_-(z_2)) \exp(-\sqrt{q} S_+(z_1)) \exp(q[S_+(z_1), S_-(z_2)]).
		\end{align*}
		The last factor can be simplified by using the Taylor series
		$	\log(1-z) = - \sum_{n=1}^\infty \frac{z^n}{n}  $, which yields  
		\begin{align*}
			\exp\left(- q \sum_{n=1}^\infty \frac{z^n}{n}  \right) = \exp(q \log(1-z)) = (1-z)^q.
		\end{align*}
		Comparing coefficients (respectively, formally applying this with $z=z_2/z_1$), we obtain
		\begin{align*}
			\exp(q [S_+(z_1), S_-(z_2)]) = (1-z_1^{-1} z_2)^q = z_1^{-q} (z_1-z_2)^q,
		\end{align*}
		and therefore
		\begin{equation} \label{BCH}
			\exp(-\sqrt{q} S_+(z_1)) \exp(-\sqrt{q} S_-(z_2)) =  \left( 1-\frac{z_2}{z_1} \right)^q  \exp(-\sqrt{q}S_-(z_2)) \exp(-\sqrt{q} S_+(z_1)).
		\end{equation}
		Formally, one arrives at 
		\begin{align*}
			\sum_{k_1, k_2\in \mathbb{Z}} \langle 0\vert W_{k_1} W_{k_2} \vert 0 \rangle \ z_1^{k_1+q} z_2^{k_2} =  (z_1-z_2)^q ,
		\end{align*}
		which is the $ q $th power of the Vandermonde for $ N = 2 $. \\

		The only non-rigorous step in the above computation was the separation of $\exp(-\sqrt{q}S_-(z))$ and $\exp(-\sqrt{q} S_+(z))$ to simplify. We cannot do this, as they do not belong to any well-behaved ring where both exponentials are individually defined. The way to avoid this issue is to study in more detail how the creation operators of $\exp(-\sqrt{q} S_-(z))$ get coupled to the annihilation operators in $\exp(-\sqrt{q}S_+(z))$ when the two power series get multiplied. To do so, we expand both as power series, 
	\begin{align}\label{eq:relSD}
			 \exp(-\sqrt{q} S_-(z))  =  
			\sum_{k=0}^\infty  \frac{(-\sqrt{q})^k}{k!} \left( -\sum_{n=1}^\infty \frac{a_{n}^*}{n} z^n \right)^k 
			= \sum_{\ell=0}^\infty z^\ell D_\ell^- , & \\
			 \mbox{and similarly}\quad \exp(-\sqrt{q} S_+(z))  =  \sum_{\ell=0}^\infty z^{-\ell} D_\ell^+  & \notag
		\end{align}
		with coefficients in $\mathrm{End}(\mathcal{H}_0)$, which are defined the following.
		
		\begin{defn}\label{def:DW}
		For $ \ell \in \mathbb{Z} $, we define the operators $ D_\ell^\pm : \mathcal{H}_0 \to \mathcal{H}_0  $ by
			\begin{align}\label{def:Dell}
			\begin{split}
				D_\ell^+ & \coloneqq \begin{cases}
					\mathbbm{1} ,& \ell=0, \\ 
					\displaystyle\sum_{k=1}^\ell \frac{\sqrt{q}^k}{k!} \!\! \sum_{\substack{j_1, \dots, j_k >0 : \\ j_1+\dots+j_k=\ell}} (-1)^k \left( \prod_{s=1}^k \frac{1}{j_s} \right) \prod_{t=1}^k a_{j_t},& \ell>0, \\
					0,& \ell<0
				\end{cases}  \\
				D_\ell^- & \coloneqq \begin{cases}
					\mathbbm{1} ,& \ell=0, \\ 
					\displaystyle\sum_{k=1}^\ell \frac{\sqrt{q}^k}{k!} \sum_{\substack{j_1, \dots, j_k >0 : \\ j_1+\dots+j_k=\ell}} \left( \prod_{s=1}^k \frac{1}{j_s} \right) \ \prod_{t=1}^k a_{j_t}^*,\qquad & \ell>0, \\
					0,& \ell<0.
				\end{cases}
					\end{split}
			\end{align}	
			Moreover, for any  $m\in \mathbb{Z}$  we set $  W_m :  \mathcal{H}_0 \to \mathcal{H}_0  $ as
			\begin{equation} \label{definition Wm}
				W_m \coloneqq \sum_{\ell\in \mathbb{Z}} D_{m+\ell}^- D_\ell^+ . 
			\end{equation}	
			\end{defn}
			The operator $ D_\ell^+  $ annihilates a total momentum $ \ell $ and $ D_\ell^-  $ creates it as long as $ \ell \geq 0 $. They hence act as ladders among the family of subspaces~\eqref{def:momentumH}. In particular, the above operators are indeed well defined on the dense subspace $ \mathcal{H}_0 $, which they leave invariant.  As $D_\ell^+$ vanishes on $\mathcal{H}(k)$ for $k<\ell$, we conclude that
	\begin{align*}
		W_m\vert_{\mathcal{H}(k)} : \mathcal{H}(k) \rightarrow \mathcal{H}(k+m), v \mapsto \sum_{\ell=1}^k D_{m+\ell}^- D_\ell^+ v.
	\end{align*}
	Thus, when tested against an element in $\mathcal{H}_0$, each operator $W_m$ reduces to a finite sum, which also makes sense in an algebraic setting. As we are effectively dealing with finite sums,  we subsequently never have to worry about convergence issues when taking commutators. 
			\begin{lemma} \label{prop:DW} With  the convention $\mathcal{H}(M) = \{ 0 \} $ in case $  M < 0 $:
			\begin{enumerate}
			\item 
			$ D_\ell^\pm \mathcal{H}(M)\subseteq \mathcal{H}(M\mp \ell)$ for any $ \ell  \in \mathbb{Z} $,  $ M \in \mathbb{N}_0 $. Moreover:
			\begin{enumerate}
			\item \label{CR D} 	for all $ m,k \in \mathbb{Z} $:	\quad $ \displaystyle
				D_m^+ D_k^-  = \sum_{\ell=0}^q (-1)^\ell \binom{q}{\ell} D_{k-\ell}^- D_{m-\ell}^+ $,
			\item \label{CR1b} for all $ n \in \mathbb{N} $, $ \ell \in \mathbb{Z} $: \quad $ \displaystyle a_n D_\ell^-  = D_\ell^- a_n + \sqrt{q}\  D_{\ell-n}^- $. 
			\end{enumerate}
			 \item $W_k \mathcal{H}(M)\subseteq \mathcal{H}(M+k)$ for any $ k \in \mathbb{Z} $, $ M \in \mathbb{N}_0 $. Moreover: 
			\begin{enumerate}
			\item \label{CR W} 	for all $ m,k \in \mathbb{Z} $: \quad $ \displaystyle
			 W_m W_k   = (-1)^q \ W_{k-q} W_{m+q} $,
			 \item for all $ n \in \mathbb{N} $, $ m \in \mathbb{Z} $:
			\quad $ \displaystyle  a_n W_m  = W_m a_n + \sqrt{q} \ W_{m-n} $. 
			 	 \end{enumerate}
			 \end{enumerate}
			\end{lemma}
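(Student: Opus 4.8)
Let me think about how to prove each of these statements.The plan is to establish all parts by direct computation on the dense invariant subspace $\mathcal{H}_0$, exploiting the fact that on momentum eigenspaces everything reduces to finite sums, so no convergence issues arise. Since the operators $D_\ell^\pm$ are explicitly built from products of $k$ creation or annihilation operators whose momentum indices sum to $\ell$, the mapping property $D_\ell^\pm\mathcal{H}(M)\subseteq\mathcal{H}(M\mp\ell)$ is immediate from \eqref{eq:L_0evs}: each $a_{j}^*$ raises total momentum by $j$ and each $a_j$ lowers it by $j$, so a product summing to $\ell$ shifts $M$ by $\pm\ell$. The corresponding claim $W_k\mathcal{H}(M)\subseteq\mathcal{H}(M+k)$ then follows from \eqref{definition Wm}, since $D^+_\ell$ lowers by $\ell$ and $D^-_{m+\ell}$ raises by $m+\ell$, for a net shift of $m$.

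For the commutation relation \eqref{CR1b}, I would compute $[a_n,D_\ell^-]$ directly. Writing $D_\ell^-$ as the sum over $k$ of $k$-fold products of creation operators $a_{j_t}^*$ with $\sum j_t=\ell$, I use the canonical relation $[a_n,a_m^*]=n\delta_{n,m}$ from \eqref{commutation relation}. Commuting $a_n$ through a product $\prod_t a_{j_t}^*$ produces, by the Leibniz rule for commutators, one term for each factor equal to $a_n^*$, each contributing a factor $n$ and deleting that creation operator, thereby lowering the total index sum from $\ell$ to $\ell-n$. Carefully tracking the combinatorial prefactors $\tfrac{\sqrt q^{\,k}}{k!}\prod 1/j_s$, the deleted factor $a_n^*$ carries $1/n$, which cancels against the factor $n$ from the commutator; the resulting reindexed sum is exactly $\sqrt q\,D^-_{\ell-n}$. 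Part (b) of statement 2, namely $a_nW_m=W_ma_n+\sqrt q\,W_{m-n}$, then follows by applying \eqref{CR1b} termwise inside the definition \eqref{definition Wm}, using that $a_n$ commutes with the annihilation-only operators $D_\ell^+$.

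For the relation \eqref{CR D}, $D_m^+D_k^-=\sum_{\ell=0}^q(-1)^\ell\binom{q}{\ell}D^-_{k-\ell}D^+_{m-\ell}$, the natural route is to return to the generating-function identity \eqref{BCH}, which is precisely the rigorous content of the Baker–Campbell–Hausdorff computation: $\exp(-\sqrt q S_+(z_1))\exp(-\sqrt q S_-(z_2))=(1-z_2/z_1)^q\exp(-\sqrt q S_-(z_2))\exp(-\sqrt q S_+(z_1))$. Using the power-series expansions \eqref{eq:relSD}, the left side has $D^+_aD^-_b$ as the coefficient of $z_1^{-a}z_2^{b}$, while on the right side I expand $(1-z_2/z_1)^q=\sum_{\ell=0}^q(-1)^\ell\binom{q}{\ell}z_1^{-\ell}z_2^{\ell}$ as a finite sum (crucially, $q\in\mathbb{N}$) and multiply by $\sum_{a',b'}z_1^{-a'}z_2^{b'}D^-_{b'}D^+_{a'}$. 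Matching the coefficient of a fixed monomial $z_1^{-m}z_2^{k}$ yields the stated finite-sum identity. I must justify this coefficient comparison rigorously when tested against any vector in $\mathcal{H}_0$: since each side acts within a fixed pair of momentum eigenspaces and reduces to a finite sum there, the formal Laurent-series identity in $\mathrm{End}(\mathcal{H}_0)[[z_1^{-1},z_2]]$ transfers to a genuine operator identity on $\mathcal{H}_0$. Finally, \eqref{CR W} follows by substituting \eqref{definition Wm} into $W_mW_k$, inserting the relation \eqref{CR D} to swap the adjacent $D^+D^-$ pair, and reindexing; the binomial sum should collapse so that the product of two $W$'s reorders with the shift $m\mapsto m+q$, $k\mapsto k-q$ and a sign $(-1)^q$.

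The main obstacle I anticipate is \eqref{CR W}: verifying that after substituting \eqref{CR D} into the double sum defining $W_mW_k$ and reindexing the summation variables, the binomial coefficients telescope or resum to reproduce a single clean product $(-1)^q W_{k-q}W_{m+q}$. This requires a careful change of summation order over the internal indices $\ell$ in \eqref{definition Wm} together with the auxiliary index from \eqref{CR D}, and it is the step where bookkeeping errors are most likely; the generating-function viewpoint \eqref{BCH} can again serve as a consistency check, since \eqref{CR W} is morally the statement that $W(z_1)W(z_2)$ picks up the factor $(z_1-z_2)^q=(-1)^q(z_2-z_1)^q$ under interchange of $z_1$ and $z_2$.
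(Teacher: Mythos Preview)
Your proposal is correct and follows essentially the same route as the paper: mapping properties from the explicit form of $D_\ell^\pm$, \eqref{CR D} by matching coefficients of $z_1^{-m}z_2^k$ in the BCH identity \eqref{BCH}, \eqref{CR1b} by a Leibniz-type commutator computation (the paper packages this via the generating function $S_-(z)$, but the content is identical), and 2(b) by combining \eqref{CR1b} with $[a_n,D_\ell^+]=0$. For your flagged concern about \eqref{CR W}: the paper's bookkeeping is to apply \eqref{CR D} once to the inner $D^+_\ell D^-_{k+\ell'}$, commute the resulting $D^-$'s and $D^+$'s among themselves, reindex via $\tilde n=q-n$, $\tilde\ell=\ell-q+n$, $\tilde\ell'=\ell'-n$, and then apply \eqref{CR D} a \emph{second} time (in reverse) to recombine the middle pair into $D^+_{\ell'+q}D^-_{\ell+m+q}$, after which the double sum is manifestly $(-1)^q W_{k-q}W_{m+q}$.
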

			\begin{proof}
			The asserted inclusions of the images of $ \mathcal{H}(M) $ under the various operators are immediate from the definition. 
			It remains to prove the items. 
			
			The proof of~\ref{CR D} is based on~\eqref{eq:relSD} and  the comparison of the coefficient of  $z_1^{-m} z_2^k$ in the 
			expansion of the  Baker\--Campbell\--Hausdorff formula~\eqref{BCH}. It is applicable \cite[Corollary 4.5]{bonfiglioli2011topics} as an identity in the associative, unital $\mathbb{Q}$-algebra $\mathrm{End}(H_0)[[z_1^{-1}, z_2]]$. 
			In the same spirit,  $a_n D_{\ell}^{-}$ is the coefficient of $z^{\ell}$ of 
			\begin{align*}
				a_n \sum_{s=0}^{\ell+1} \frac{1}{s!} \left( -\sqrt{q} S_-(z) \right)^s
				= \sum_{s=0}^{\ell+1} \frac{1}{s!} \left( -\sqrt{q} S_-(z) \right)^s a_n + \sum_{s=0}^{\ell+1} \frac{1}{s!} [a_n,\left( -\sqrt{q} S_-(z) \right)^s].
			\end{align*}
			Since
			$
				[a_n,-\sqrt{q} S_-(z)] = -\sqrt{q}\sum_{t=1}^\infty \frac{-1}{t} [a_n, a_t^*] z^{t}
				= \sqrt{q} z^n $, 
			is in the center of the algebra $\mathrm{End}(\mathcal{H}_0)[[z]]$, we conclude for all $s\in \mathbb{N}$
			\begin{align*}
				[a_n, (-\sqrt{q} S_-(z))^s] = s [a_n, -\sqrt{q} S_-(z)] (-\sqrt{q} S_-(z))^{s-1} = s \sqrt{q} z^n (-\sqrt{q} S_-(z))^{s-1}.
			\end{align*}
			Hence, we arrive at
			\begin{align*}
				a_n \sum_{s=0}^{\ell+1} \frac{1}{s!} \left( -\sqrt{q} S_-(z) \right)^s
				&= \sum_{s=0}^{\ell+1} \frac{1}{s!} \left( -\sqrt{q} S_-(z) \right)^s a_n + \sum_{s=1}^{\ell+1} \frac{1}{s!} s \sqrt{q} z^n (-\sqrt{q} S_-(z))^{s-1} \\
				&= \sum_{s=0}^{\ell+1} \frac{1}{s!} \left( -\sqrt{q} S_-(z) \right)^s a_n + \sqrt{q} z^n \sum_{s=0}^{\ell} \frac{1}{s!}  (-\sqrt{q} S_-(z))^{s}.
			\end{align*}
			Comparing the coefficient of $z^\ell$ in the last equation, we obtain the second item,~\ref{CR1b}.

			For a proof of~\eqref{CR W}, we use the definition~\eqref{definition Wm}, the commutation rule~\ref{CR D} and the fact that the $D^+ $s commute and so do the $ D^-  $s:			\begin{align*}
			W_m W_k  & = \sum_{\ell , \ell' \in \mathbb{Z} }  \sum_{n=0}^q (-1)^n \binom{q}{n}   D_{\ell'+k-n}^- D_{\ell+m}^-  D_{\ell'}^+ D_{\ell-n}^+  \\
			& =\sum_{\ell , \ell' \in \mathbb{Z} }   \sum_{n=0}^q (-1)^{n+q} \binom{q}{n}   D_{\ell'+k}^- D_{\ell+m -n+q }^-  D_{\ell'-n+q}^+ D_{\ell}^+ \\
			& = \sum_{\ell , \ell' \in \mathbb{Z} }  D_{\ell'+k}^- (-1)^q D_{\ell'+q}^+D_{\ell+m +q }^- D_{\ell}^+ = (-1)^q \ W_{k-q} W_{m+q} . 
			\end{align*}
			The second line resulted from the substitutions $\tilde{n}=q-n, \tilde{\ell}=\ell-q+n, \widetilde{\ell'}=\ell'-n$. The third line is based on another index shift and~\ref{CR D} again.
			To prove the last item, we note that 
			$D_\ell^+$ consists of annihilation operators,  such that $a_n$ commutes with $D_\ell^+$. We may thus compute
			\begin{equation*} 
					a_n W_m = a_n \sum_{k\in \mathbb{Z}} D_{m+k}^- D_k^+ 
					= \sum_{k\in \mathbb{Z}} (D_{m+k}^- a_n + \sqrt{q} D_{m-n+k}^-) D_k^+ \\
					= W_m a_n + \sqrt{q} W_{m-n}.
			\end{equation*}
			This completes the proof.
			\end{proof}

\subsection{Polynomial expansion of powers of the Vandermonde}

		The following is the core for the MPS representation of the coefficients of any integer power of the Vandermonde determinant as a complex multinomial. 
		\begin{lemma} \label{thm:MPS Laughlin}
			For any $ q, N  \in \mathbb{N} $ one has
			\begin{equation} \label{Laughlin with W}
				\prod_{1\leq i < j \leq N} (z_i-z_j)^q = \sum_{k_1, \dots, k_N \in \mathbb{Z}} \langle 0 \vert W_{k_1} \dots W_{k_N} \vert 0 \rangle \ z_1^{k_1+q(N-1)}\cdot \dots z_{N-1}^{k_{N-1} +q} z_N^{k_N}.
			\end{equation}

					\end{lemma}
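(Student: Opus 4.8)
The plan is to prove the equivalent generating-function identity
\[
\langle 0 \vert W(z_1) \cdots W(z_N) \vert 0 \rangle = \prod_{1 \leq i < j \leq N} \left(1 - \frac{z_j}{z_i}\right)^q ,
\]
where $W(z) = \sum_{m} W_m z^m$, and then to convert it into~\eqref{Laughlin with W} by clearing the negative powers of the $z_i$. For a fixed monomial $z_1^{k_1}\cdots z_N^{k_N}$ the coefficient on the left is exactly $\langle 0 \vert W_{k_1}\cdots W_{k_N}\vert 0\rangle$, which by the grading property in Lemma~\ref{prop:DW} is a genuine finite sum and hence a well-defined scalar; the identity is therefore to be read as an equality of scalar formal Laurent series, proved coefficientwise.

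To evaluate the left side I would normal order. Writing $W(z) = E_-(z)\, E_+(z)$ with $E_\pm(z) = \exp(-\sqrt q\, S_\pm(z))$ (this is precisely~\eqref{eq:relSD} together with~\eqref{definition Wm}), the product $W(z_1)\cdots W(z_N)$ has the alternating form $E_-(z_1)E_+(z_1)E_-(z_2)E_+(z_2)\cdots E_-(z_N)E_+(z_N)$. I would then push every $E_+(z_i)$ to the right past every $E_-(z_j)$ with $j > i$. Each such move is an instance of the Baker--Campbell--Hausdorff identity~\eqref{BCH}, namely $E_+(z_i)E_-(z_j) = (1-z_j/z_i)^q E_-(z_j)E_+(z_i)$, whose scalar prefactor is central and --- crucially, because $q \in \mathbb{N}$ --- an honest Laurent polynomial, so the repeated reordering is a finite composition of the rigorous binary relation already established for $N=2$. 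Collecting these central factors produces $\prod_{i<j}(1-z_j/z_i)^q$ times $E_-(z_1)\cdots E_-(z_N)E_+(z_1)\cdots E_+(z_N)$ (the $E_-$'s mutually commute, as do the $E_+$'s, since $[S_\pm(z),S_\pm(w)]=0$). Finally $E_+(z_1)\cdots E_+(z_N)\vert 0\rangle = \vert 0\rangle$ because each $S_+$ is built from annihilation operators, and dually $\langle 0\vert E_-(z_1)\cdots E_-(z_N) = \langle 0\vert$, so the operator part collapses to $\langle 0 \vert 0\rangle = 1$, leaving the claimed product.

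To finish, I would rewrite each factor as $(1-z_j/z_i)^q = z_i^{-q}(z_i-z_j)^q$. For fixed $i$ this factor occurs once for every $j > i$, i.e.\ $N-i$ times, so
\[
\prod_{1\leq i<j\leq N}\left(1-\frac{z_j}{z_i}\right)^q = \left(\prod_{i=1}^N z_i^{-q(N-i)}\right) \prod_{1\leq i<j\leq N} (z_i-z_j)^q .
\]
Multiplying the generating-function identity by $\prod_{i=1}^N z_i^{q(N-i)}$ shifts the exponent of $z_i$ from $k_i$ to $k_i + q(N-i)$, which is exactly the exponent pattern $z_1^{k_1+q(N-1)} z_2^{k_2+q(N-2)}\cdots z_N^{k_N}$ appearing in~\eqref{Laughlin with W}; comparing coefficients then gives the lemma.

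The only genuine work is ensuring the rigor of the formal-series manipulations, i.e.\ that all reorderings take place inside a well-defined associative algebra and that the central scalar factors may be pulled out. This is guaranteed by two facts: $q\in\mathbb{N}$ makes every Baker--Campbell--Hausdorff prefactor a finite Laurent polynomial, so no convergence questions arise (this matches the finite sum in relation~\ref{CR D} of Lemma~\ref{prop:DW}); and testing against $\vert 0\rangle$ and $\langle 0\vert$ confines each computation to finitely many momentum sectors $\mathcal{H}(M)$, so every operator series truncates. I would make this precise exactly as in the $N=2$ computation in the excerpt, embedding the pairwise reorderings in $\mathrm{End}(\mathcal{H}_0)[[z_i^{-1}, z_j]]$. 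A completely exponential-free alternative, which I would keep in reserve, is to prove the coefficient identity by induction on $N$, normal ordering $\langle 0\vert W_{k_1}\cdots W_{k_N}\vert 0\rangle$ directly with the finite commutation relation~\ref{CR D} of Lemma~\ref{prop:DW}; this bypasses the formal exponentials entirely at the cost of heavier index bookkeeping.
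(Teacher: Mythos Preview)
Your proposal is correct. The approach differs from the paper's mainly in presentation: you work at the level of generating functions, normal ordering $W(z_1)\cdots W(z_N)$ via repeated application of the BCH identity~\eqref{BCH} and then collapsing against the vacuum, whereas the paper works directly with coefficients --- it expands $\prod_{i<j}(z_i-z_j)^q$ binomially into a sum over integers $0\leq \ell_{i,j}\leq q$, separately expands $\langle 0\vert W_{m_1}\cdots W_{m_N}\vert 0\rangle$ by inserting the definition~\eqref{definition Wm} and commuting all $D^-$ to the left via relation~\ref{CR D}, and then observes that both expansions are indexed by the same combinatorial data and match term by term. Your ``exponential-free alternative'' is essentially the paper's actual proof. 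Your primary approach is slightly more conceptual and avoids the explicit index bookkeeping; the paper's buys rigor for free, since every sum is manifestly finite and no formal-series framework beyond $\mathrm{End}(\mathcal{H}_0)[[z_i^{-1},z_j]]$ (already used to prove~\ref{CR D}) need be invoked.
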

\begin{proof}
We will derive a formula for the coefficients of the multinomial and check that $\langle 0\vert W_{k_1} \dots W_{k_N} \vert 0\rangle$ satisfy those too.		
The left side of~\eqref{Laughlin with W} is expanded as:
			\begin{align*}
				&\prod_{1\leq i < j \leq N} (z_i-z_j)^q 
				= z_1^{q(N-1)} \cdot \dots z_{N-1}^q \prod_{1\leq i < j \leq N} \left( 1-\frac{z_j}{z_i} \right)^q \\
				&= z_1^{q(N-1)} \cdot \dots z_{N-1}^q \ \widetilde{\sum_{\ell_{i,j}}} (-1)^{\sum_{1\leq i<j \leq N} \ell_{i,j}} \left( \prod_{1\leq i < j \leq N} \binom{q}{\ell_{i,j}} \right) \prod_{s=1}^N z_s^{\sum_{t=1}^{s-1}\ell_{t,s} -\sum_{u=s+1}^{N} \ell_{s,u}},
			\end{align*}
			where $\widetilde{\sum}_{\ell_{i,j}}$ denotes the sum over a collection of integers $0\leq \ell_{i,j}\leq q$ with $1\leq i<j \leq N$. 
			To compare the coefficients with the right side of~\eqref{Laughlin with W}, 
			we use the definition~\eqref{definition Wm} together with Lemma \ref{prop:DW}~\eqref{CR D} to bring all the $D^-$ to the left:
			\begin{align*}
				 W_{m_1} \dots W_{m_N}   = & \sum_{k_1, \dots, k_N \in \mathbb{Z}} D_{m_1+k_1}^- D_{k_1}^+ \dots D_{m_N+k_N}^- D_{k_N}^+ \\
				= & \sum_{k_1, \dots, k_N \in \mathbb{Z}} \ \widetilde{\sum_{\ell_{i,j}}} \ (-1)^{\sum_{1\leq i<j \leq N} \ell_{i,j}}  \left( \prod_{1\leq i < j \leq N} \binom{q}{\ell_{i,j}} \right)  \\
				& \mkern50mu  \times  \left( \prod_{s=1}^N D^-_{m_s+k_s-\sum_{t=1}^{s-1} \ell_{t,s}} \right) \left( \prod_{s=1}^N D^+_{k_s-\sum_{t=s+1}^N \ell_{s,t}} \right) .
			\end{align*}
			Since $	D_\ell^+ \vert 0 \rangle = \delta_{\ell, 0} \vert 0 \rangle =(D_\ell^-)^* \vert 0\rangle $, 
			 we hence obtain
			\begin{align*}
			 \langle 0 \vert W_{m_1} \dots W_{m_N} \vert 0\rangle  =&\  \widetilde{\sum_{\ell_{i,j}}} \ (-1)^{\sum_{1\leq i<j \leq N} \ell_{i,j}} \left( \prod_{1\leq i < j \leq N} \binom{q}{\ell_{i,j}} \right)   \\
								&\mkern10mu  \times  \sum_{k_1, \dots, k_N \in \mathbb{Z}} \left(\prod_{s=1}^N \delta_{k_s+m_s,\sum_{t=1}^{s-1} \ell_{t,s}} \right) \left( \prod_{s=1}^N \delta_{k_s, \sum_{t=s+1}^N \ell_{s,t}} \right) ,
			\end{align*}
			which implies the result.
\end{proof}

As is also well known \cite[Section 2.2]{estienne2013fractional}, the above may be generalized to include factors of power sum symmetric polynomials 
		\begin{align*}
			p_n(z_1, \dots, z_N) = \sum_{j=1}^N z_j^n.
		\end{align*}
In the iMPS, they are accommodated in terms of boundary charges. 
				\begin{cor} \label{prop:MPS power symmetric}
			For any $n_1, \dots, n_K \in \mathbb{N}$:				
			\begin{align}\label{eq:MPS power symmetric}
					& \left( \prod_{j=1}^K \sqrt{q} \ p_{n_j}(z_1, \dots, z_N) \right) \prod_{1\leq j<k\leq N} (z_j-z_k)^q  \notag \\
					 & \qquad =  \sum_{k_1, \dots, k_N\in \mathbb{Z}} \langle 0 \vert \left(\prod_{j=1}^K a_{n_j} \right) W_{k_1-q(N-1)} \dots W_{k_{N-1}-q} W_{k_N}\vert 0 \rangle \  z_1^{k_1}\cdots z_N^{k_N} .	\qquad 
					 \end{align}
		\end{cor}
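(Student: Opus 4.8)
The plan is induction on the number $K$ of power-sum factors. I would first restate Lemma~\ref{thm:MPS Laughlin} in the form $\prod_{i<j}(z_i-z_j)^q=\sum_{m_1,\dots,m_N}\langle 0\vert W_{m_1}\cdots W_{m_N}\vert 0\rangle\,\prod_{s=1}^N z_s^{m_s+q(N-s)}$ (merely renaming its summation variable), which is the $K=0$ case, and prove the intermediate identity in which the left-hand side of~\eqref{eq:MPS power symmetric} is expanded in the same monomials $\prod_s z_s^{m_s+q(N-s)}$ with coefficients $\langle 0\vert\big(\prod_{j=1}^K a_{n_j}\big)W_{m_1}\cdots W_{m_N}\vert 0\rangle$. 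Substituting $k_s=m_s+q(N-s)$ at the very end reproduces the operator string $W_{k_1-q(N-1)}\cdots W_{k_N}$ and the monomials $z_1^{k_1}\cdots z_N^{k_N}$ exactly as stated. The conceptual content of the inductive step is that multiplying the polynomial side by $\sqrt q\,p_{n}(z)$ corresponds precisely to inserting one annihilation operator $a_n$ immediately to the right of $\langle 0\vert$.

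For the inductive step, assume the intermediate identity for $K-1$ factors, and abbreviate $A_{K-1}\coloneqq\prod_{j=1}^{K-1}a_{n_j}$. Writing $p_{n_K}(z)=\sum_{s=1}^N z_s^{n_K}$, I would multiply the induction hypothesis by $\sqrt q\,p_{n_K}(z)$ and, term by term, absorb each extra factor $z_s^{n_K}$ into the monomial by the substitution $m_s\mapsto m_s-n_K$ in the $s$-th summation; this rewrites the product as $\sqrt q\sum_{s=1}^N\sum_{m_1,\dots,m_N}\langle 0\vert A_{K-1}W_{m_1}\cdots W_{m_s-n_K}\cdots W_{m_N}\vert 0\rangle\prod_{t=1}^N z_t^{m_t+q(N-t)}$. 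On the operator side, the matching identity comes from the commutation rule $a_nW_m=W_ma_n+\sqrt q\,W_{m-n}$ of Lemma~\ref{prop:DW}: because $a_{n_K}$ commutes with every factor of $A_{K-1}$ by~\eqref{commutation relation}, I may write $\langle 0\vert a_{n_K}A_{K-1}W_{m_1}\cdots W_{m_N}\vert 0\rangle=\langle 0\vert A_{K-1}\,a_{n_K}W_{m_1}\cdots W_{m_N}\vert 0\rangle$ and then commute $a_{n_K}$ to the right through the string $W_{m_1},\dots,W_{m_N}$. This generates exactly $\sqrt q\sum_{s=1}^N\langle 0\vert A_{K-1}W_{m_1}\cdots W_{m_s-n_K}\cdots W_{m_N}\vert 0\rangle$ together with a boundary term $\langle 0\vert A_{K-1}W_{m_1}\cdots W_{m_N}\,a_{n_K}\vert 0\rangle$ which vanishes since $a_{n_K}\vert 0\rangle=0$. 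Comparing with the polynomial expression closes the induction.

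All the sums are in fact finite, so no convergence issue arises: by Lemma~\ref{prop:DW} the vector $W_{m_1}\cdots W_{m_N}\vert 0\rangle$ lies in the momentum eigenspace $\mathcal{H}\big(\sum_s m_s\big)$, while $\langle 0\vert\prod_{j}a_{n_j}$ is supported on $\mathcal{H}\big(\sum_j n_j\big)$, so only tuples with $\sum_s m_s=\sum_j n_j$ contribute and that sector is finite-dimensional. I therefore expect no genuine analytic obstacle. The part that requires the most care — and which I regard as the heart of the proof — is the combinatorial matching between the $N$ monomial shifts $z_s^{n_K}$ on the polynomial side and the $N$ terms produced by commuting $a_{n_K}$ through $W_{m_1}\cdots W_{m_N}$, together with the verification that the single vacuum boundary term drops out; once these line up index-by-index, the induction is immediate.
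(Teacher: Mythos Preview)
Your proof is correct and follows essentially the same approach as the paper: start from the $K=0$ case (Lemma~\ref{thm:MPS Laughlin}), then use the commutation rule $a_nW_m=W_ma_n+\sqrt{q}\,W_{m-n}$ from Lemma~\ref{prop:DW} to pull each annihilation operator through the string of $W$'s, with the boundary term killed by $a_n\vert 0\rangle=0$. Your write-up is in fact more detailed than the paper's, which treats $K=1$ explicitly and then simply remarks that the general case is identical; your added observation about finiteness via momentum conservation is a nice touch but not needed for the argument.
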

		\begin{proof}
		We shift the momenta, $ k_j \to k_j + q(N-j) $ in the right side and pull through all annihilation operators one by one to the right employing the commutation rule in the second item of Proposition~\eqref{prop:DW}.  In case $ K = 1 $ we hence get using $a_n \vert 0\rangle =0$ and~\eqref{Laughlin with W}:
			$$
				 \sum_{k_1, \dots, k_N\in \mathbb{Z}}\!\! \langle 0 \vert a_n W_{k_1} \dots W_{k_N} \vert 0 \rangle \ z_1^{k_1+q(N-1)}\cdot \dots z_{N-1}^{k_{N-1} +q} z_N^{k_N} 
				= \sqrt{q} \left(\sum_{j=1}^N z_j^n\right)\!\! \prod_{1\leq i < j \leq N} (z_i-z_j)^q 
			$$
			The proof for general $K$ is exactly the same.
			\end{proof}

\subsection{Fundamental properties of operator products}

A repeated application of  the commutation rule from Lemma~\ref{prop:DW} in the coefficients on the right side of~\eqref{eq:MPS power symmetric} results in the appearance of operator products of the form: 
		\begin{equation}\label{eq:Wproductchange}
		W_{k_1-q(N-1) } \dots W_{k_{N-1}-q} W_{k_N} = (-1)^{qN(N-1)/2} \ W_{k_N-q(N-1)} W_{k_{N-1}-q(N-2)} \dots W_{k_1} 	
		\end{equation}
		which are well-defined on $ \mathcal{H}_0 $ for any collection of momenta $ \mathbf{k} = (k_1,k_2, \dots, k_N) \in \mathbb{Z}^N $.
The following theorem captures all the fundamental properties of these operator products. 
\begin{theorem}  \label{vanishing condition}
For any $ N \in \mathbb{N} $ the operator product 
\begin{equation} \label{def mathcal W}
			 \mathbb{W}(\mathbf{k}) \coloneqq W_{k_N-q(N-1)} W_{k_{N-1}-q(N-2)} \dots W_{k_2 - q} W_{k_1} 
			 \end{equation}
with $ \mathbf{k} = (k_1,k_2, \dots, k_N) \in \mathbb{Z}^N $ defined on $ \mathcal{H}_0 $ has the following properties: 
\begin{enumerate}
\item permutation symmetry, i.e., for any permutation $ \sigma \in \mathcal{S}_N $
\begin{equation}\label{eq:permutations} 
 \mathbb{W}(\mathbf{k}) = \mathrm{sgn}(\sigma)^q  \  \mathbb{W}(k_{\sigma(1)},k_{\sigma(2)}, \dots, k_{\sigma(N)})  .
\end{equation}
\item factorization property, $  \displaystyle \mathbb{W}(\mathbf{k}) \vert 0 \rangle \in \mathcal{H}\big(\sum_{j=1}^N (k_j - q(j-1))\big) $. 
For any $  \mathbf{k}  \in  \mathbb{N}_0^N  $ with $\displaystyle \sum_{j=1}^N k_j \leq  q \frac{N(N-1)}{2} $:
\begin{equation}\label{eq:partition}
 \mathbb{W}(\mathbf{k}) \vert 0 \rangle = 	\langle 0 \vert \mathbb{W}(\mathbf{k}) \vert 0 \rangle \ \vert 0 \rangle   .
\end{equation}
\item normalization, $  \langle 0 \vert \mathbb{W}\left( \mathbf{k}^{(q)}_N\right) \vert 0 \rangle = 1 $,  for 
\begin{equation}\label{eq:rootp}
  \mathbf{k}^{(q)}_N \coloneqq \left( 0, q , \dots , q (N-2) , q (N-1) \right) . 
\end{equation}
\item\label{item:partitions} $  \displaystyle  \langle 0 \vert \mathbb{W}(\mathbf{k}) \vert 0 \rangle =0 $ unless  $  \mathbf{k}  \in  \mathbb{N}_0^N  $, and for any $ s \in \{1 , \dots , N\} $:
\begin{equation} \label{eq:domination}
	\sum_{j=1}^s k_j \geq  \sum_{j=1}^s q(j-1) , \quad\mbox{and}\quad  \sum_{j=s}^N k_j \leq  \sum_{j=s}^N q(j-1) ,
\end{equation}
and, consequently, $  \displaystyle  \sum_{j=1}^N k_j =  \sum_{j=1}^N q(j-1) = q \frac{N(N-1)}{2} $.

\end{enumerate}
\end{theorem}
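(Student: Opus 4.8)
The plan is to establish the four items separately, drawing only on the algebra recorded in Lemma~\ref{prop:DW} and on the momentum grading $\mathcal{H} = \bigoplus_{M}\mathcal{H}(M)$, noting in particular that $\mathcal{H}(M) = \{0\}$ for $M < 0$ and $\mathcal{H}(0) = \mathbb{C}\,\vert 0\rangle$ (the only way $\sum_j j n_j = 0$ is all $n_j = 0$).

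For the permutation symmetry I would reduce to adjacent transpositions. The two neighbouring factors of $\mathbb{W}(\mathbf{k})$ carrying the labels $j$ and $j+1$ are $W_{k_{j+1}-qj}\,W_{k_j-q(j-1)}$, and the braiding relation $W_mW_k = (-1)^q\,W_{k-q}\,W_{m+q}$ of Lemma~\ref{prop:DW} (with $m = k_{j+1}-qj$, $k = k_j-q(j-1)$) turns this into $(-1)^q\,W_{k_j-qj}\,W_{k_{j+1}-q(j-1)}$, which is exactly $(-1)^q$ times the pair of factors obtained by interchanging $k_j$ and $k_{j+1}$. Hence each adjacent swap contributes a factor $(-1)^q = (\mathrm{sgn}\,\tau)^q$; since adjacent transpositions generate $\mathcal{S}_N$ and the relation is symmetric (a double swap gives $(-1)^{2q}=1$, so no well-definedness issue arises), composition yields $\mathbb{W}(\mathbf{k}) = \mathrm{sgn}(\sigma)^q\,\mathbb{W}(\mathbf{k}_\sigma)$ for every $\sigma$.

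The factorization property rests on $W_k\,\mathcal{H}(M)\subseteq\mathcal{H}(M+k)$. Reading the product $\mathbb{W}(\mathbf{k}) = W_{k_N-q(N-1)}\cdots W_{k_1}$ from the right, the partial vectors $v_s \coloneqq W_{k_s-q(s-1)}\cdots W_{k_1}\vert 0\rangle$ obey $v_s\in\mathcal{H}(M_s)$ with $M_s = \sum_{j=1}^s\big(k_j-q(j-1)\big)$, giving the stated momentum of $\mathbb{W}(\mathbf{k})\vert0\rangle$ after all $N$ steps. When $\sum_j k_j\le qN(N-1)/2$ the total momentum $M_N = \sum_j k_j - qN(N-1)/2$ is $\le 0$: if $M_N<0$ then $\mathbb{W}(\mathbf{k})\vert0\rangle = 0$ and also $\langle0\vert\mathbb{W}(\mathbf{k})\vert0\rangle = 0$, so~\eqref{eq:partition} is trivial, while if $M_N=0$ then $\mathbb{W}(\mathbf{k})\vert0\rangle\in\mathcal{H}(0)=\mathbb{C}\vert0\rangle$ equals its own vacuum component $\langle0\vert\mathbb{W}(\mathbf{k})\vert0\rangle\,\vert0\rangle$. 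For the normalization I would note that for $\mathbf{k} = \mathbf{k}_N^{(q)}$ from~\eqref{eq:rootp} every shifted index $k_j-q(j-1)$ vanishes, so $\mathbb{W}(\mathbf{k}_N^{(q)}) = W_0^N$; since $D_\ell^+\vert0\rangle = \delta_{\ell,0}\vert0\rangle$ one has $W_0\vert0\rangle = D_0^-D_0^+\vert0\rangle = \vert0\rangle$, hence $\langle0\vert W_0^N\vert0\rangle = 1$.

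The heart is the last item, which I would prove under the standing assumption $\langle0\vert\mathbb{W}(\mathbf{k})\vert0\rangle\ne0$. A nonzero vacuum component forces total momentum $M_N=0$, i.e.\ $\sum_j k_j = qN(N-1)/2$; and, by the grading argument above, every $v_s$ must be nonzero, forcing $M_s\ge0$, which is the first family $\sum_{j=1}^s k_j\ge\sum_{j=1}^s q(j-1)$ in~\eqref{eq:domination}. Setting $M'_s\coloneqq\sum_{j=s}^N\big(q(j-1)-k_j\big)$ and using $M_N=0$ gives $M'_s = M_{s-1}\ge0$, which is precisely the second family; the two inequalities in~\eqref{eq:domination} are thus the bookkeeping of the same nonnegativity read from the right and from the left. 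Finally, to obtain $\mathbf{k}\in\mathbb{N}_0^N$ I would invoke the permutation symmetry: for each $i$ pick $\sigma$ with $\sigma(1)=i$; since $\vert\langle0\vert\mathbb{W}(\mathbf{k}_\sigma)\vert0\rangle\vert = \vert\langle0\vert\mathbb{W}(\mathbf{k})\vert0\rangle\vert\ne0$, the $s=1$ inequality applied to $\mathbf{k}_\sigma$ reads $k_i = k_{\sigma(1)}\ge0$. I expect this last item to be the only place demanding care: the difficulty is not algebraic but bookkeeping --- correctly identifying the momentum sector of each partial product, recognizing the two inequality families as a single nonnegativity condition (rigorous on one side directly and on the other via $M_N=0$), and supplying the permutation trick to upgrade $M_s\ge0$ to positivity of each individual $k_j$.
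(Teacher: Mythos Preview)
Your proof is correct and, for items~1--3, essentially identical to the paper's argument. For item~4 you diverge in two places, both valid and arguably more self-contained than the paper's route. First, the paper obtains the second family of inequalities in~\eqref{eq:domination} by considering the adjoint $\mathbb{W}(\mathbf{k})^*\vert0\rangle$ and reading off its momentum sector; you instead observe that once $M_N=0$ is known, the second family is just $M_{s-1}\ge0$, so nothing beyond the first family is needed. Second, for $k_j\ge0$ the paper appeals back to Lemma~\ref{thm:MPS Laughlin}, arguing that since the $q$th power of the Vandermonde is a polynomial only nonnegative exponents can occur; your permutation trick (rotate $k_i$ into the first slot and apply the $s=1$ inequality) accomplishes the same without invoking the polynomial identity. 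The paper's adjoint argument has the minor advantage of yielding the second family independently of $M_N=0$, but in the present statement $M_N=0$ is forced anyway, so your bookkeeping shortcut loses nothing.
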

\begin{proof}
	1.~The permutation symmetry~\eqref{eq:permutations}  is a straightforward generalization of~\eqref{eq:Wproductchange} using the commutation rule from Proposition~\eqref{prop:DW}. \\
\noindent		
2.~The  inclusion 
	$$  \mathbb{W}(\mathbf{k}) \vert 0 \rangle \in \mathcal{H}\left( \sum_{j=1}^N ( k_j  - q(N-j) ) \right) $$ 
	expresses the momentum conservation derived in Lemma~\ref{prop:DW}. Note that $ \mathcal{H}\left(0\right) = \text{span}\{ \vert 0 \rangle \} $ and $ \mathcal{H}\left(M\right) = \{ 0 \} $ for any $ M < 0 $. This  proves~\eqref{eq:partition}. \\
	\noindent	
	3.~The claimed normalization in case $ \mathbb{W}\left( \mathbf{k}^{(q)}_N\right) = W_0^N $ follows from $ \langle 0 | W_0 | 0 \rangle = 1 $ and the previous item. \\
		\noindent	
	4.~The fact that the momenta  $  \mathbf{k} $ need to be non-negative integers follows from the 
	representation~\eqref{eq:MPS power symmetric} as a polynomial, which by construction only involves non-negative powers, together with \eqref{eq:Wproductchange}. 
	The inequalities~\eqref{eq:domination} result from the momentum conservation established in~2. More precisely, the first inequality follows from 2. with $ N = s $. The second inequality follows by considering the adjoint operators, for which momentum conservation implies 
	 $$ \mathbb{W}(\mathbf{k})^* \vert 0 \rangle \in \mathcal{H}\big(\sum_{j=1}^N (q(j-1)-k_j)\big)  . $$
	Setting $ N = s $ again completes the proof. 
\end{proof}

\noindent		
The above theorem is the starting point for most implications discussed in this work:
\begin{enumerate}
\item Lemma~\ref{thm:MPS Laughlin} together with~\eqref{eq:Wproductchange} when combined with the permutation symmetry~\eqref{eq:permutations}   allows to express the $q$th power of the 
Vandermonde determinant in terms of Slater determinants ($q$ odd), respectively, permanents ($q$ even).
This is the key identity behind the the expansion of the Laughlin wavefunction in the occupation basis. 
\item The permutation symmetry~\eqref{eq:permutations} together with item~\ref{item:partitions} can be used to restrict attention to momenta $  \mathbf{k} =(k_1, \dots, k_N)  \in  \mathbb{N}_0^N  $ which 
\begin{itemize}
\item are ordered, $ 0 \leq k_1 \leq \dots \leq k_N $, and
\item sum up to the total momentum $  \sum_{j=1}^N k_j= q \frac{N(N-1)}{2} $. 
\end{itemize}
In other words, the momenta are partitions of the total momentum of the root partition $ \lambda_N^{(q)} $, cf.~Definition~\ref{def:partitions}. 
The second relation in~\eqref{eq:domination} expresses the domination of $ \lambda $ by $ \lambda_N^{(q)} $, cf.~Definition~\ref{def:dom}. 
\item The property~\eqref{eq:partition} is the key to the factorization of the expansion coefficients
of the Laughlin wavefunction as detailed in Theorem~\ref{thm:fact}.
Pictorially, factorization occurs if and only if  $\sum_{j=1}^s k_j$ intersects $q\sum_{j=1}^s (j-1)=\frac{q}{2}s(s-1)$, cf.~Figure~\ref{fig:factor}. This is related to the notion of renewal points, cf.~Lemma~\ref{lm:splitting}.

	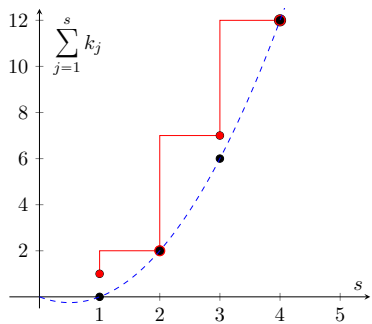
\begin{figure}[h]
	\begin{center}
	\begin{tikzpicture}[scale=0.7]
		
		\begin{axis}[
			axis lines = middle,
			xlabel = {$s$},
			ylabel = {$\displaystyle \ \sum_{j=1}^{s} k_j$},
			xmin=-0.5, xmax=5.5,
			ymin=-0.5, ymax=12.5		]	
			%			grid=both  ]
			%			grid style={line width=.1pt, draw=gray!10},
			%			major grid style={line width=.2pt,draw=gray!50}]
			
			\draw[red] (1,1) -- (1,2) -- (2,2) -- (2,7) -- (3,7) -- (3,12) -- (4,12);
			
			\draw (1,1) circle[radius=2pt];
			\fill[red] (1,1) circle[radius=2pt];
			\draw (2,2) circle[radius=2.5pt];
			\fill[red] (2,2) circle[radius=3pt];
			\draw (3,7) circle[radius=2pt];
			\fill[red] (3,7) circle[radius=2pt];
			\draw (4,12) circle[radius=3pt];
			\fill[red] (4,12) circle[radius=3pt];
			
			\draw (1,0) circle[radius=2pt];
			\fill (1,0) circle[radius=2pt];
			\draw (2,2) circle[radius=2pt];
			\fill (2,2) circle[radius=2pt];
			\draw (3,6) circle[radius=2pt];
			\fill (3,6) circle[radius=2pt];
			\draw (4,12) circle[radius=2pt];
			\fill (4,12) circle[radius=2pt];
			
			% Plot 1
			\addplot [
			name path = A1,
			-,
			domain = 0:5, dashed, blue,
			samples = 100] {x*x-x} ;
			%node [very near end, right] {$y=x^2-x$};

		\end{axis}	
		
	\end{tikzpicture}
\end{center}
\caption{Graphical depiction of factorization for the example $ \mathbf{k} =(1, 1, 5 , 5 ) $. Since $ k_j \geq 0 $ for all $ j $, the sums $\sum_{j=1}^{s} k_j$ are increasing. On top of this, the constraint~\eqref{eq:domination} requires the sum to lie above the dotted parabola $ \frac{q}{2}s(s-1)$. Instances of factorization occur when the sum touches the parabola.}\label{fig:factor}
\end{figure}
\end{enumerate}

Before following up on the above points, we present bounds on the expansion coefficients $ 	\langle 0 \vert \mathbb{W}(\mathbf{k}) \vert 0 \rangle  $, which will be essential for our results on the thin cylinder. 
	
\subsection{Key estimate}
While the operators $ W_m $ entering the product~\eqref{def mathcal W} in our iMPS representation are unbounded, they are bounded once one weights them by an exponential of the momentum operator $ L_0 $ defined in \eqref{def:L0}. This idea is inspired by findings in~\cite{konig2017matrix}. 
		\begin{theorem} \label{thm:key estimate}
			For every $m\in \mathbb{Z}$ we have the operator-norm bound
			\begin{equation} \label{operatornorm bound Wm}
				\Vert e^{-(4q+1) L_0/2 } W_m e^{-(4q+1) L_0/2 } \Vert \leq 1
			\end{equation}
					\end{theorem}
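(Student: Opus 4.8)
The plan is to pass to the (formal) generating function $W(z)=\exp(-\sqrt q\,S_-(z))\exp(-\sqrt q\,S_+(z))=\sum_m W_m z^m$ on $\mathcal H_0$ and to exploit that conjugation by $e^{-\beta L_0}$, with $\beta:=(4q+1)/2$, acts as a scaling. From $[L_0,a_n]=-n a_n$ (equivalently~\eqref{commutation relation L0 aj}) one gets $e^{-\beta L_0}a_n^*=e^{-n\beta}a_n^* e^{-\beta L_0}$ and $a_n e^{-\beta L_0}=e^{-n\beta}e^{-\beta L_0}a_n$, whence moving the left weight through the creation exponential and the right weight through the annihilation exponential yields the normal-ordered identity
\[ e^{-\beta L_0}W(z)e^{-\beta L_0}=\exp\!\big(-\sqrt q\,S_-(e^{-\beta}z)\big)\,e^{-2\beta L_0}\,\exp\!\big(-\sqrt q\,S_+(e^{\beta}z)\big). \]
Here the two exponentials now carry the convergence-enforcing factors $e^{\mp\beta}$, and the central $e^{-2\beta L_0}$ is the genuine regularizer. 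Extracting the coefficient of $z^m$ gives
\[ e^{-\beta L_0}W_m e^{-\beta L_0}=\sum_{\ell\ge 0} X_{m+\ell}\,Y_\ell,\qquad X_j:=e^{-\beta j}D_j^- e^{-\beta L_0},\quad Y_j:=e^{-\beta j}e^{-\beta L_0}D_j^+ . \]

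\textbf{Reduction via operator Cauchy--Schwarz.} Next I would apply $\big\|\sum_\ell A_\ell B_\ell\big\|\le\big\|\sum_\ell A_\ell A_\ell^*\big\|^{1/2}\big\|\sum_\ell B_\ell^* B_\ell\big\|^{1/2}$ with $A_\ell=X_{m+\ell}$ and $B_\ell=Y_\ell$. Since $D_j^\pm=0$ for $j<0$, the index shift only drops non-negative terms, so $\sum_\ell X_{m+\ell}X_{m+\ell}^*\le\sum_{\ell\ge0}X_\ell X_\ell^*$, and the whole theorem reduces to the two $m$-independent \emph{contraction estimates}
\[ \sum_{\ell\ge0}X_\ell X_\ell^*\le \mathbbm{1},\qquad \sum_{\ell\ge0}Y_\ell^* Y_\ell\le \mathbbm{1}. \]
These are sharp rather than slack: because $D_\ell^\pm|0\rangle=\delta_{\ell,0}|0\rangle$ and $e^{-\beta L_0}|0\rangle=|0\rangle$, the vacuum is an eigenvector of both sides with eigenvalue one, which is precisely why the final bound saturates at $1$ (consistent with $\langle 0|e^{-\beta L_0}W_0 e^{-\beta L_0}|0\rangle=1$).

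\textbf{Proving the contraction estimates.} I would first recast them as circle averages via Parseval: writing $X(z)=\exp(-\sqrt q\,S_-(e^{-\beta}z))e^{-\beta L_0}=\sum_\ell z^\ell X_\ell$,
\[ \sum_{\ell\ge0}\|X_\ell^*\psi\|^2=\frac{1}{2\pi}\int_0^{2\pi}\big\|e^{-\beta L_0}\exp\!\big(-\sqrt q\,S_-(e^{-\beta}e^{i\theta})\big)^*\psi\big\|^2\,d\theta, \]
and analogously for $Y$. The crucial subtlety, and the reason one cannot simply bound by the pointwise-in-$\theta$ norm, is that this $\theta$-wise norm exceeds one, so a $\max_\theta$ estimate loses the cancellations and produces a constant strictly larger than $1$; equivalently, normal-ordering the creation part past the annihilation part makes the creation--annihilation contraction factor $(1-e^{-\beta}e^{\beta})^q$ degenerate at the coincidence $|z|=1$. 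One must therefore keep the central weight $e^{-2\beta L_0}$ intact. I expect the honest route to be an induction on the particle number, peeling off one excitation at a time through the commutation rule $a_n D_\ell^-=D_\ell^- a_n+\sqrt q\,D_{\ell-n}^-$ of Lemma~\ref{prop:DW}\,(\ref{CR1b}) and its $D^+$ counterpart. Each step trades a damping factor $e^{-n\beta}$ against a combinatorial gain controlled by $[a_n,a_n^*]=n$ and the weight $\sqrt q$, and the bookkeeping closes exactly when $e^{-2n\beta}\,(1+q/n)\le 1$ for every $n\ge1$; this is most stringent for the lowest mode $n=1$, namely $e^{-(4q+1)}(1+q)\le 1$. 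That is the point at which the precise value $4q+1$ enters, and it is the same exponent appearing in~\eqref{def:Cgamma}.

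\textbf{Main obstacle.} The reduction to the two sharp contraction estimates is routine once the scaling identity and operator Cauchy--Schwarz are in place. The real work, and the step I expect to be hardest, is establishing those two operator inequalities on all of $\mathcal H$: one must control the particle-number-changing, genuinely multi-mode off-diagonal contributions uniformly, since the mode-factorized and $\max_\theta$ estimates are strictly lossy and do not recognize the vacuum as the exact maximizer. Carrying out the particle-number induction so that the single binding constraint $e^{-(4q+1)}(1+q)\le1$ indeed dominates all higher sectors is where the care must be concentrated.
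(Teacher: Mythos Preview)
Your scaling identity and the reduction via operator Cauchy--Schwarz to the two contraction estimates
\[
\sum_{\ell\ge0}X_\ell X_\ell^*\le \mathbbm{1},\qquad \sum_{\ell\ge0}Y_\ell^* Y_\ell\le \mathbbm{1}
\]
are correct, and this is a cleaner structural decomposition than the paper's. Restricted to $\mathcal H(M)$, the second estimate unwinds to $\sum_{\ell}\|D_\ell^+\psi\|^2\le e^{(4q+1)M}\|\psi\|^2$, and the first is the analogous statement for $(D_\ell^-)^*$. These are true, and in fact follow in two lines from the paper's own Lemma~\ref{lem:simplebound}: the Cauchy--Schwarz chain used there gives $\sum_\ell\|D_\ell^+\psi\|\le e^q\|e^{L_0/2}\psi\|=e^{q+M/2}\|\psi\|$, whence $\sum_\ell\|D_\ell^+\psi\|^2\le e^{2q+M}\|\psi\|^2\le e^{(4q+1)M}\|\psi\|^2$ for $M\ge1$ (and $M=0$ is trivial).

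The gap in your proposal is that you do not prove the contraction estimates; you sketch an induction on ``particle number'' governed by the constraint $e^{-(4q+1)}(1+q)\le1$, but the operator $\sum_\ell(D_\ell^+)^*D_\ell^+$ is \emph{not} diagonal in the Fock basis (e.g.\ $(D_2^+)^*D_2^+$ contains $a_2^*a_1^2$), so what ``peeling off one excitation'' means, and why a single one-mode constraint would dominate all sectors, is left unclear. The paper sidesteps this entirely: it works sector by sector, bounding $|\langle x|W_m y\rangle|$ for $x\in\mathcal H(M+m)$, $y\in\mathcal H(M)$ directly by the triangle inequality and Lemma~\ref{lem:simplebound} to get $|\langle x|W_m y\rangle|\le e^{2q}\|e^{L_0/2}x\|\|e^{L_0/2}y\|=e^{(4q+2M+m)/2}\|x\|\|y\|$, and then observes that $4q+(2M+m)\le(4q+1)(2M+m)$ whenever $2M+m\ge1$. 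The role of the constant $4q+1$ in the paper is thus purely arithmetic (making this last inequality hold at $2M+m=1$), not the mode-by-mode balance you describe. Your route is more conceptual; the paper's is shorter and complete.
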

			\begin{proof}
					Let $ P_M $ with $ M \in \mathbb{N}_0 $ stand for the orthogonal projection onto the subspace  $ \mathcal{H}(M) $ of total momentum $ M $, again with the convention that $ P_M = 0 $ in case $ M < 0 $. 
					Since $ W_m \mathcal{H}(M)  \subseteq  \mathcal{H}(M+m) $ and $ L_0 $ is diagonal with respect to the orthogonal decomposition~\eqref{eq:orthsum}, we have
					\begin{equation}\label{eq:normestkey}
						\Vert e^{-(4q+1)L_0/2} W_m e^{-(4q+1)L_0/2} \Vert  \leq \sup_{M\in \mathbb{N}_0}  \left\| P_{M+m} e^{-(2q+1)L_0/2} W_m e^{-(4q+1)L_0/2} P_M \right\| .
					\end{equation}
					To estimate the norm, we may assume without loss of generality that $ m \geq -M $, since otherwise the norm is zero. 
					We now distinguish two cases. In case $ M \in \mathbb{N} $, we estimate for $ x \in  \mathcal{H}(M+m) $ and $ y \in  \mathcal{H}(M)$ :
					\begin{align*}
			 \vert \langle x\vert  W_m y\rangle \vert   & \leq   \sum_{\ell = \min\{0, -m\} }^\infty \vert \langle \left( D_{\ell+m}^-\right)^*   x\vert  D_\ell^+  y\rangle \vert \\
			&\leq  \left(\sum_{k\in \mathbb{N}_0} \frac{\sqrt{q}^k}{k!} \sum_{j_1, \dots, j_k\in \mathbb{N}} \frac{1}{j_1 \dots j_k} \Vert a_{j_1} \dots a_{j_k} x\Vert\right) \\
			&\mkern30mu \times \left(\sum_{n\in \mathbb{N}_0} \frac{\sqrt{q}^n}{n!} \sum_{j_1, \dots, j_n\in \mathbb{N}} \frac{1}{j_1 \dots j_n} \Vert a_{j_1} \dots a_{j_n} y\Vert\right) \\
			&\leq   \left(\sum_{k\in \mathbb{N}_0} \frac{\sqrt{2q}^k}{k!} \Vert L_0^{k/2} x \Vert \right) \left(\sum_{n\in \mathbb{N}_0} \frac{\sqrt{2q}^n}{n!}  \Vert L_0^{n/2} x \Vert \right) \\
			&\leq e^{2q} \ \Vert e^{L_0/2} x \Vert \ \Vert e^{L_0/2} y \Vert  = e^{( 4q + 2M+ m)/2}  \ \Vert x \Vert \Vert y \Vert 
		\end{align*}
		The second inequality used the triangle and the Cauchy-Schwarz inequality. The third inequality follows from a Cauchy-Schwarz estimate applied to the $ j $-sums together with the crude bound $\sum_{n=1} n^{-2}=\pi^2/6< 2$ and Lemma~\ref{lem:simplebound} below. The fourth inequality is by Cauchy-Schwarz again and the self-adjointness of $ L_0 $.
		Since $ 2M+ m \geq M \geq 1 $ by assumption, the exponential on the right side is bounded by $ \exp( (4q+1) (2M+m)/2 ) $. Passing this exponential to the left side, and using \eqref{eq:L_0evs} yields
		$$
		 \vert \langle x\vert  e^{-(4q+1)L_0/2} W_m e^{-(4q+1)L_0/2} y\rangle \vert \leq \Vert x \Vert \Vert y \Vert ,
		$$
		and hence the claim in this case. 
		
		In case $ M = 0 $,  the norm on the right side of~\eqref{eq:normestkey} is zero if $ m < 0 $ and one if $ m = 0 $. We then estimate for $ x \in  \mathcal{H}(m) $ and $ m \geq 1 $ similarly as above:
				\begin{align*}
					 \vert \langle x |  W_m  | 0 \rangle \vert \leq  \| (D_m^-)^*  x \|  & \leq \sum_{k\in \mathbb{N}_0} \frac{\sqrt{q}^k}{k!} \sum_{j_1, \dots, j_k\in \mathbb{N}} \frac{1}{j_1 \dots j_k} \Vert a_{j_1} \dots a_{j_k} x\Vert \\
					 & \leq e^{q} \  \Vert e^{L_0/2} x \Vert = e^{(2q +m)/2} \  \Vert x \Vert \leq e^{(4q +1) m /2}  \ \Vert x \Vert ,
				\end{align*}
				which again yields the desired bound. 
			\end{proof}
		The proof was based on the following
	\begin{lemma}\label{lem:simplebound}
			For all $m\in \mathbb{N}$ and $ x \in \mathcal{H}_0 $:
			\begin{equation} \label{norm of L0}
				\sum_{j_1, \dots, j_m \in \mathbb{N}} \Vert a_{j_1} \dots a_{j_m} x \Vert^2 
				\leq \langle x\vert L_0^m \vert x\rangle = \Vert L_0^{m/2} x \Vert^2.
			\end{equation}
		\end{lemma}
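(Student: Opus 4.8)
The plan is to prove the bound by induction on $ m $, after first reducing to the case where $ x $ lies in a single momentum eigenspace. Since $ a_{j_1}\cdots a_{j_m} $ maps $ \mathcal{H}(M) $ into $ \mathcal{H}(M - j_1 - \dots - j_m) $ by Lemma~\ref{prop:DW}, and distinct eigenspaces of $ L_0 $ are orthogonal, for $ x = \sum_{M} x_M $ with $ x_M \in \mathcal{H}(M) $ (a finite sum, because $ x \in \mathcal{H}_0 $) both sides of~\eqref{norm of L0} split over $ M $: one has $ \|a_{j_1}\cdots a_{j_m} x\|^2 = \sum_M \|a_{j_1}\cdots a_{j_m} x_M\|^2 $ and $ \langle x | L_0^m | x\rangle = \sum_M M^m \|x_M\|^2 $. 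Hence it suffices to establish, for every $ M \in \mathbb{N}_0 $ and every $ x \in \mathcal{H}(M) $, the scalar inequality $ \sum_{j_1,\dots,j_m \in \mathbb{N}} \|a_{j_1}\cdots a_{j_m} x\|^2 \leq M^m \|x\|^2 $. All sums in sight are finite, since $ a_j x = 0 $ whenever $ j > M $ (the image would lie in $ \mathcal{H}(M-j)=\{0\} $), so there are no convergence or interchange-of-summation issues.

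The base case $ m = 1 $ is in fact an equality: for $ x \in \mathcal{H}(M) $ one has $ \sum_{j} \|a_j x\|^2 = \sum_j \langle x | a_j^* a_j | x\rangle = \langle x | L_0 | x\rangle = M\|x\|^2 $, directly from the definition~\eqref{def:L0} of $ L_0 $. For the inductive step I would assume the claim at level $ m-1 $ on all momentum eigenspaces and peel off the innermost annihilation operator. Writing $ z_{j_m} \coloneqq a_{j_m} x \in \mathcal{H}(M - j_m) $ and summing the inductive hypothesis applied to each $ z_{j_m} $ yields
\begin{equation*}
\sum_{j_1, \dots, j_m \in \mathbb{N}} \|a_{j_1}\cdots a_{j_m} x\|^2 = \sum_{j_m} \Big( \sum_{j_1,\dots,j_{m-1}} \|a_{j_1}\cdots a_{j_{m-1}} z_{j_m}\|^2 \Big) \leq \sum_{j_m} (M-j_m)^{m-1} \|a_{j_m} x\|^2 .
\end{equation*}

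The decisive and entirely elementary observation is then that the only nonzero terms have $ 1 \leq j_m \leq M $, for which $ 0 \leq M - j_m < M $ and hence $ (M-j_m)^{m-1} \leq M^{m-1} $. Consequently the right side is bounded by $ M^{m-1} \sum_{j_m} \|a_{j_m} x\|^2 = M^{m-1} \cdot M \|x\|^2 = M^m \|x\|^2 $, where the middle equality is again the $ m=1 $ identity. This closes the induction and proves~\eqref{norm of L0}. I do not anticipate a genuine obstacle: the content is purely the momentum grading of the $ a_j $ together with the monotonicity $ (M-j)^{m-1}\le M^{m-1} $, and the only point requiring minor care is the bookkeeping of the grading, so that the reduction to a single $ \mathcal{H}(M) $ and the application of the induction hypothesis to $ a_{j_m}x\in\mathcal{H}(M-j_m) $ line up correctly.
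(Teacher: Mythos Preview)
Your proof is correct, but the route differs from the paper's. The paper does not decompose $x$ into momentum eigenvectors; instead it works with operator identities throughout. It first treats $m=2$ separately via the commutator computation $a_j^* a_k^* a_k a_j = a_j^* a_j a_k^* a_k - j\delta_{j,k} a_j^* a_j$, and then for $m\geq 3$ reduces $m$ to $m-2$ by inserting the $m=2$ bound on the two outermost annihilation operators, using the operator inequality $L_0^2 \leq (L_0+c)^2$ for $c\geq 0$, and commuting $L_0$ past the remaining $a_j$'s via $a_j L_0 = (L_0+j)a_j$ to land on $\sum_{j_1,\dots,j_{m-2}}\|a_{j_1}\cdots a_{j_{m-2}}(L_0 x)\|^2$.

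Your argument is more elementary: by first passing to a fixed eigenspace $\mathcal{H}(M)$ you replace all operator manipulations with the single scalar inequality $(M-j)^{m-1}\leq M^{m-1}$, and the induction proceeds in steps of one rather than two, with no need for a separate $m=2$ computation. The paper's approach, in turn, has the minor advantage of never invoking the eigenspace decomposition and of being phrased purely in terms of $L_0$, which aligns with how the bound is used downstream in the proof of Theorem~\ref{thm:key estimate}. Both are short; yours is cleaner.
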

		\begin{proof}
			 We prove	~\eqref{norm of L0} by induction. The case $m=1$ is trivial. For $m=2$ we observe that the commutation relations~\eqref{commutation relation} yield
			$a_j^* a_k^* a_k a_j = a_j^* a_k^* a_j a_k
				= a_j^* a_j a_k^* a_k - j \delta_{j,k} a_j^* a_j $, such that 
			 for any $x\in \mathcal{H}_0$
			\begin{align*}
				0 &
				\leq \sum_{j, k \in \mathbb{N}} \langle x\vert  a_j^* a_k^* a_k a_j \vert x \rangle
				= \sum_{j, k \in \mathbb{N}} \langle x \vert a_j^* a_j a_k^* a_k \vert x \rangle - \sum_{j \in \mathbb{N}} j \langle x\vert a_j^* a_j \vert x \rangle \\
				&\leq  \sum_{j, k \in \mathbb{N}} \langle x\vert a_j^* a_j a_k^* a_k \vert x \rangle 
				= \langle x\vert L_0^2 \vert x \rangle. 
			\end{align*}
			For $m\geq 3$, a similar computation using~\eqref{commutation relation L0 aj} yields
			\begin{align*}
				0& \leq \sum_{j_1, \dots, j_m \in \mathbb{N}} \Vert a_{j_1} \dots a_{j_m} x \Vert^2 
				\leq \sum_{j_1, \dots, j_m \in \mathbb{N}} \langle x\vert  a_{j_1}^* a_{j_2}^* \dots, a_{j_{m-2}}^* L_0^2 a_{j_{m-2}} \dots a_{j_1} \vert x \rangle \\
				&\leq \sum_{j_1, \dots, j_{m-2} \in \mathbb{N}} \langle x\vert  a_{j_1}^* a_{j_2}^* \dots, a_{j_{m-2}}^* (L_0+j_1+\dots + j_{m-2})^2 a_{j_{m-2}} \dots a_{j_1} \vert x \rangle \\
				&= \sum_{j_1, \dots, j_{m-2} \in \mathbb{N}} \langle  x\vert L_0 a_{j_1}^* a_{j_2}^* \dots a_{j_{m-2}}^* a_{j_{m-2}} \dots a_{j_1}  L_0  \vert x \rangle,
			\end{align*}
			which, by iteration, implies the claim.
		\end{proof}

		\begin{cor}\label{cor:keyest}
			For all $ \mathbf{k} \in \mathbb{N}_0^N $:
			\begin{equation} \label{key estimate}
					\Vert   \mathbb{W}(\mathbf{k})  \vert 0 \rangle \Vert \leq \exp\left((4q+1)  \sum_{j=1}^N (N+\frac{1}{2}-j)(k_j -q(j-1))  \right)  . 
			\end{equation}
		\end{cor}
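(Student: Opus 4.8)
The plan is to reduce the bound to a repeated application of the single-operator estimate from Theorem~\ref{thm:key estimate} by conjugating each $W_m$ with the exponential weight $e^{\pm(4q+1)L_0/2}$. Writing $m_j \coloneqq k_j - q(j-1)$ so that $\mathbb{W}(\mathbf{k}) = W_{m_N} W_{m_{N-1}} \cdots W_{m_1}$, I would first record the factorization $W_m = e^{(4q+1)L_0/2} B_m e^{(4q+1)L_0/2}$ on $\mathcal{H}_0$, where $B_m \coloneqq e^{-(4q+1)L_0/2} W_m e^{-(4q+1)L_0/2}$ satisfies $\Vert B_m\Vert \le 1$ by Theorem~\ref{thm:key estimate}. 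This identity is immediate on each momentum eigenspace, where the weights act as scalars.

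Substituting this into the product and using $e^{(4q+1)L_0/2}\vert 0 \rangle = \vert 0\rangle$ (since $L_0 \vert 0\rangle = 0$), adjacent half-weights combine into full weights $e^{(4q+1)L_0}$ sitting between consecutive bounded factors $B_{m_s}$, while a single half-weight $e^{(4q+1)L_0/2}$ remains on the far left. The crucial observation, supplied by the momentum conservation of Lemma~\ref{prop:DW}, is that after applying $B_{m_s}\cdots B_{m_1}$ to the vacuum the resulting vector lies in the single eigenspace $\mathcal{H}(M_s)$ with $M_s \coloneqq \sum_{j=1}^s m_j$. Hence every interior unbounded exponential $e^{(4q+1)L_0}$ acts on a vector of definite momentum and reduces to the scalar $e^{(4q+1)M_s}$; if some $M_s < 0$ the intermediate subspace is trivial, the whole vector vanishes, and the asserted bound holds trivially.

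Collecting these scalars and estimating each $\Vert B_{m_s}\Vert \le 1$ gives $\Vert\mathbb{W}(\mathbf{k})\vert 0\rangle\Vert \le \exp\big((4q+1)\big[\tfrac{1}{2}M_N + \sum_{s=1}^{N-1}M_s\big]\big)$, where the $\tfrac{1}{2}M_N$ comes from the leftmost half-weight acting on $\mathcal{H}(M_N)$. It then remains to simplify the exponent. Interchanging the order of summation yields $\sum_{s=1}^{N-1}M_s = \sum_{j=1}^{N}(N-j)m_j$, so that $\tfrac{1}{2}M_N + \sum_{s=1}^{N-1}M_s = \sum_{j=1}^N (N+\tfrac{1}{2} - j)m_j = \sum_{j=1}^N (N+\tfrac{1}{2} - j)(k_j - q(j-1))$, which is precisely the claimed exponent.

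The only genuinely delicate point is the appearance of the unbounded interior factors $e^{(4q+1)L_0}$, which are not bounded operators and could a priori spoil the telescoping; this is exactly where momentum conservation is essential, turning each such factor into an explicit scalar and keeping every intermediate vector inside a finite-dimensional eigenspace of $L_0$, so that no convergence or domain issue arises. Apart from this, the argument is pure bookkeeping of the exponential weights.
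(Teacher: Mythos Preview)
Your proof is correct and follows essentially the same approach as the paper's own proof. Both arguments insert the bounded sandwich $B_m = e^{-(4q+1)L_0/2} W_m e^{-(4q+1)L_0/2}$ from Theorem~\ref{thm:key estimate} and extract scalar factors from the exponential weights using the momentum-shift property of $W_m$; the paper phrases this via the commutation relation $W_m L_0 = (L_0-m)W_m$ to push a damping factor $e^{-(4q+1)NL_0}$ inserted at the vacuum through to the left, while you phrase it spectrally by evaluating each interior $e^{(4q+1)L_0}$ as the scalar $e^{(4q+1)M_s}$ on $\mathcal{H}(M_s)$, but these are two equivalent bookkeeping schemes yielding the same exponent $\sum_{j=1}^N (N+\tfrac{1}{2}-j)m_j$.
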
 
		\begin{proof}

		Using the commutation relations \eqref{commutation relation L0 aj} and the definition \eqref{definition Wm}, we readily obtain
		\begin{align*}
			W_m L_0 = (L_0-m) W_m .
		\end{align*}
		This is used to move an exponential of the momentum operators through:
		\begin{align*}
			&W_{k_j-q(j-1)} e^{-(4q+1) (N+1-j) L_0} \\
			&\quad= W_{k_j-q(j-1)} e^{-(4q+1) (N+1/2-j) L_0}  e^{-(4q+1) L_0/2} \\
			&\quad= e^{-(4q+1) (N+1/2-j) (L_0-(k_j-q(j-1)))} W_{k_j-q(j-1)} e^{-(4q+1) L_0/2} \\
			&\quad= e^{(4q+1) (N+1/2-j)(k_j-q(j-1))} e^{-(4q+1) (N+1-(j+1)) L_0} \left(e^{-(4q+1) L_0/2} W_{k_j-q(j-1)} e^{-(4q+1) L_0/2} \right).
		\end{align*}
		Applying this relation $N$ times, and using $e^{-(4q+1)N L_0} \vert 0 \rangle = \vert 0 \rangle$ and the bound~\eqref{operatornorm bound Wm}, we arrive at the claim. 
		\end{proof}
		
\section{Partitions, dominance order and renewal structure}\label{sec:partitions}
	Theorem~\ref{vanishing condition} laid out the fundamental structure of the expansion coefficients of power symmetric polynomials times the $q$th power of the Vandermonde determinant. It was proven that the coefficients of the Laughlin function are zero unless they form a partition of the total momentum of the root partition~\eqref{eq:rootp}. In this section, we explore this structure of the non-zero coefficients using the language of partitions and the concept of dominance order and squeezing. 
We also collect some auxiliary combinatorial results that will help us control the entropy of this expansion. Much of the mathematical background for this chapter can be found in~\cite{macdonald1995symmetric, stanley1989some}, which we reformulate for our purposes. In the context of the Laughlin function, many of them have already been used in~\cite{jansen2009symmetry} (see also~\cite{DiGioacchinoMasterthesis} for a pedagogical introduction).
	
\subsection{Partitions, their addition and dominance order} 
Our definition of partition is slightly different from the one typically used in the mathematics literature (e.g.\ the standard definition of a partition considers positive integers and monotone decreasing families). 
		\begin{defn}\label{def:partitions}
		1. A \emph{partition} $\lambda=(\lambda_1, \dots, \lambda_k) \in \mathbb{N}_0^k$ of an integer $|\lambda| \in \mathbb{N}_0 $ is a monotone increasing family of nonnegative integers summing to $|\lambda| = \sum_{j=1}^k \lambda_k $. 		The integer $k$ is called the \emph{length of the partition} $\lambda=(\lambda_1, \dots, \lambda_k)$, and we write $\ell(\lambda):=k$.\\
		We call
			\begin{equation}\label{def:occupation}
				m(\lambda, j) \coloneqq \vert \{ k\in \{1, \dots, N \} \ : \ \lambda_k = j\} \vert 
			\end{equation}
			the \emph{occupation number} of $j\in \mathbb{N}_0$, and set $ M(\lambda)! \coloneqq \prod_{j=0}^\infty m(\lambda, j)! $. \\
\noindent	 2. For partitions $\lambda, \alpha$ of length $k$,  we define their \emph{sum}
		\begin{equation} \label{def addition partitions}
			\lambda+\alpha :=(\lambda_1+\alpha_1, \dots, \lambda_k + \alpha_k).
		\end{equation}
		If $\sigma\in S_k$ is a permutation, then we define
		\begin{equation} \label{def addition partition with permutation}
			(\lambda+\alpha)_\sigma := S(\lambda_1+\alpha_{\sigma(1)}, \dots, \lambda_k+\alpha_{\sigma(k)}),
		\end{equation}
		where $S$ reorders the sequence into increasing order.
\end{defn}
\begin{rem}
The root partition from Defintion~\ref{def:rootp} is an example of a sum of partitions,  $   \lambda^{(q)}_N(b) =  \lambda^{(q)}_N + b $ for any partition $ b \in \mathbb{N}_0^N $.\\
In Theorem~\ref{thm:root states}, we have also introduced the abbreviation  $\alpha_\sigma \coloneqq  (\alpha_{\sigma(1)}, \dots, \alpha_{\sigma(k)}) $ for a permuted partition, which generally does not yield a partition. We note that, in contrast to~\eqref{def addition partition with permutation}, there is no reordering of the sequence. 
\end{rem}
		The number of partitions of a given integer $ n \in \mathbb{N} $ of length at most $ n $ is
		\begin{equation} \label{def p(n)}
			p(n) \coloneq \left\vert \left\{ (\lambda_1, \dots, \lambda_n) \in \mathbb{N}_0^n \ : \ \lambda_1\leq \lambda_2 \leq \dots \leq \lambda_n, \sum_{j=1}^n \lambda_j =n\right\} \right\vert.
		\end{equation}
		We recall a well-known upper bound.
		\begin{prop}[\cite{erdos1942elementary}]
			For every $n\in \mathbb{N}$ we have
			\begin{equation} \label{partition function pointwise bound}
				p(n)\leq \exp\left(\pi \sqrt{\frac{2n}{3}}\right) . 
			\end{equation}
		\end{prop}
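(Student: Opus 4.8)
The plan is to exploit the classical generating-function identity $\sum_{m\geq 0} p(m)\,x^m = \prod_{k\geq 1}(1-x^k)^{-1}$, valid for $0<x<1$. Here the quantity $p(m)$ defined in~\eqref{def p(n)} is recognized to be the ordinary partition function, since every partition of $n$ has at most $n$ positive parts, so padding with leading zeros to form an increasing $n$-tuple changes nothing. Because all coefficients $p(m)$ are nonnegative, a single term is dominated by the full series, giving for every $0<x<1$ the elementary bound $p(n)\,x^n \leq \prod_{k\geq 1}(1-x^k)^{-1}$, i.e.\ $p(n) \leq x^{-n}\prod_{k\geq 1}(1-x^k)^{-1}$. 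Taking logarithms converts the task into bounding $-n\ln x - \sum_{k\geq 1}\ln(1-x^k)$ and then optimizing the free parameter $x$.

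First I would control the infinite product. Writing $x = e^{-t}$ with $t>0$ and expanding $-\ln(1-y)=\sum_{j\geq 1} y^j/j$, I obtain
\[
-\sum_{k=1}^\infty \ln\!\left(1-e^{-kt}\right) = \sum_{j=1}^\infty \frac1j \sum_{k=1}^\infty e^{-jkt} = \sum_{j=1}^\infty \frac{1}{j\,(e^{jt}-1)},
\]
where the interchange of the two sums is justified by absolute convergence for $t>0$. Using the elementary inequality $e^{x}-1\geq x$ with $x=jt$ bounds each summand by $1/(j^2 t)$, so that $-\sum_k \ln(1-e^{-kt}) \leq \frac1t \sum_{j\geq 1} j^{-2} = \frac{\pi^2}{6t}$. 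Combined with $-n\ln x = nt$, this yields the clean estimate $\ln p(n) \leq nt + \frac{\pi^2}{6t}$ for every $t>0$.

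Finally I would optimize the right-hand side over $t$. Minimizing $t\mapsto nt + \frac{\pi^2}{6t}$ gives the stationary point $t_\ast = \pi/\sqrt{6n}$, at which the two terms coincide and their sum equals $2\sqrt{n\cdot \pi^2/6} = \pi\sqrt{2n/3}$; substituting back produces exactly $\ln p(n) \leq \pi\sqrt{2n/3}$, and exponentiating gives~\eqref{partition function pointwise bound}. I do not anticipate a genuine obstacle: the only delicate points are the (routine) justification of the double-sum interchange and the observation that the optimization returns \emph{precisely} the constant $\pi\sqrt{2/3}$ rather than merely a bound of the same order. That the saddle-point value is attained exactly is what makes this the sharp leading exponential, in agreement with the Hardy--Ramanujan asymptotics. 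A purely combinatorial alternative \`a la Erd\H{o}s, inductively comparing $p(n)$ with $e^{c\sqrt n}$, is available as well, but it is more laborious and yields nothing extra for our purposes.
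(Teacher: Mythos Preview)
Your proof is correct and complete. The paper does not supply its own proof of this proposition; it simply states the bound with a citation to Erd\H{o}s~\cite{erdos1942elementary} and moves on, so there is nothing to compare against beyond noting that your generating-function argument is the standard route to this inequality (and is in fact slicker than the inductive combinatorial argument you allude to at the end).
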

\begin{rem} The precise asympotics
$ p(n) \sim \frac{1}{4\sqrt{3}n} \exp\left(\pi \sqrt{\frac{2n}{3}}\right) $ as $ n \to \infty $ was proven originally in \cite{hardy1918asymptotic}. For an elementary proof, see \cite{erdos1942elementary}.
\end{rem}

The notion of dominance order (also called natural order \cite{macdonald1995symmetric, stanley1989some}) is an order relation on the set of partitions of a given integer of fixed length, and will help to sort the non-zero coefficients appearing in the expansions of the powers of the Vandermonde determinant. 	
		\begin{defn}\label{def:dom}
		Let $\lambda=(\lambda_1, \dots, \lambda_k), \mu=(\mu_1, \dots, \mu_k)$ be partitions of some integer $n$ with the same length $k$. We say that $\lambda$ dominates $\mu$, in symbols $\mu \preceq \lambda$, if for all $i\in \{1, \dots, k \}$ we have
		\begin{equation} \label{dominance order def}
			\mu_i + \dots + \mu_k \leq \lambda_i + \dots  + \lambda_k.
		\end{equation}
		We write $\mu \precneqq \lambda$ if $\mu \preceq \lambda$ and $\mu \neq \lambda$. 
				\end{defn}
\begin{rem}				
1.~One should distinguish dominance order from the lexicographic order. Two partitions $\lambda=(\lambda_1, \dots, \lambda_k), \mu=(\mu_1, \dots, \mu_k)$ of length $ k $ are lexicographically ordered, in symbols $\mu \preceq_\mathrm{lex} \lambda$,  if $\mu=\lambda$ or $\mu_j <\lambda_j$ for the largest $j\in \{1, \dots, k\}$ for which $\lambda_j \neq \mu_j$. If $\mu \preceq \lambda$, then $\mu \preceq_\mathrm{lex} \lambda$. 
This is established by the following short argument. Using $i=k$ in \eqref{dominance order def}, one has $\mu_k\leq \lambda_k$. If $\mu_k< \lambda_k$, then $\lambda$ is bigger than $\mu$ in lexicographic order. Otherwise, $\mu_k = \lambda_k$. Then one uses \eqref{dominance order def} with $i=k-1$ to obtain $\mu_{k-1}+\mu_k \leq \lambda_{k-1}+\lambda_k$, which implies $\mu_{k-1}\leq \lambda_{k-1}$. Proceeding like this, we either arrive at $\mu = \lambda$ or $\mu$ is smaller than $\lambda$ in lexicographic order.\\
		The reverse implication does not hold in general. For example, $(1,2,7)$ is bigger in lexicographic order than $(0,5,5)$, but not in dominance order. Indeed, we have $7>5$, but $2+7<5+5$. \\
2.~In the definition of dominance order we could have replaced \eqref{dominance order def} by the condition 
	\begin{equation} \label{dominance order def 2}
		\mu_1 + \dots + \mu_i \geq \lambda_1 + \dots  + \lambda_i.
	\end{equation}
as we require $\lambda$ and $\mu$ to partition the same integer.
\end{rem}

The next lemma recalls, mostly from \cite[Ch. 1]{macdonald1995symmetric}, on how the dominance order behaves under the addition of partitions.
		\begin{lemma}
			Let $\lambda, \alpha$ be partitions of length $k$ and $\sigma\in S_k$ a permutation. 
			\begin{enumerate}
				\item Both $\lambda+\alpha$ and $(\lambda+\alpha)_\sigma$ are partitions of length $k$.
				\item If $\mu$ is yet another partition of length $k$, then we have
				\begin{equation} \label{dominance order addition}
					\mu \preceq \lambda \Rightarrow \mu + \alpha \preceq \lambda + \alpha
				\end{equation}
				and
				\begin{equation} \label{dominance order permutation}
					(\lambda+\alpha)_\sigma \preceq \lambda+\alpha.
				\end{equation}
			\end{enumerate}
		\end{lemma}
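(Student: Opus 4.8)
The plan is to treat the two closure statements in item~1 quickly, dispatch the first relation of item~2 by a one-line manipulation of Definition~\ref{def:dom}, and reserve the real work for the relation $(\lambda+\alpha)_\sigma \preceq \lambda+\alpha$, which I would prove by a rearrangement (majorization) argument.

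For item~1 I would observe that if $\lambda$ and $\alpha$ are both monotone increasing of length $k$, then $\lambda_j+\alpha_j \le \lambda_{j+1}+\alpha_{j+1}$ follows simply by adding $\lambda_j\le\lambda_{j+1}$ and $\alpha_j\le\alpha_{j+1}$; hence $\lambda+\alpha$ is again a monotone increasing family, i.e.\ a partition of $|\lambda|+|\alpha|$ of length $k$. The sequence $(\lambda+\alpha)_\sigma$ is a partition essentially by construction: the operator $S$ of Definition~\ref{def:partitions} sorts the entries $\lambda_j+\alpha_{\sigma(j)}$ into increasing order, so the output is monotone increasing, has the same length $k$, and has the same total $|\lambda|+|\alpha|$ since reordering preserves sums.

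For the first relation of item~2 I would argue directly from the tail-sum form \eqref{dominance order def}. Fixing $i\in\{1,\dots,k\}$, the hypothesis $\mu\preceq\lambda$ gives $\sum_{j=i}^{k}\mu_j\le\sum_{j=i}^{k}\lambda_j$; adding the common quantity $\sum_{j=i}^{k}\alpha_j$ to both sides yields $\sum_{j=i}^{k}(\mu_j+\alpha_j)\le\sum_{j=i}^{k}(\lambda_j+\alpha_j)$, which is exactly the defining inequality for $\mu+\alpha\preceq\lambda+\alpha$. As both $\mu+\alpha$ and $\lambda+\alpha$ partition the same integer $|\lambda|+|\alpha|$, nothing more is needed.

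The substance of the lemma is $(\lambda+\alpha)_\sigma\preceq\lambda+\alpha$, and here I would pass to the equivalent ``sum of the $i$ largest parts'' reading of dominance. Since all sequences are increasing of length $k$, the tail sum $\sum_{j=s}^{k}\nu_j$ is precisely the sum of the $i=k-s+1$ largest parts of $\nu$, and for the sorted sequence this equals $\max_{|I|=i}\sum_{j\in I}(\lambda_j+\alpha_{\sigma(j)})$, the maximum over all index sets $I$ of cardinality $i$. For a maximizing $I$ I would split $\sum_{j\in I}(\lambda_j+\alpha_{\sigma(j)})=\sum_{j\in I}\lambda_j+\sum_{j\in I}\alpha_{\sigma(j)}$ and bound each summand by the sum of the $i$ largest parts of $\lambda$ and of $\alpha$ respectively; because $\lambda$ and $\alpha$ are increasing, these top-$i$ sums are exactly $\sum_{j=s}^{k}\lambda_j$ and $\sum_{j=s}^{k}\alpha_j$. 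Adding gives $\sum_{j=s}^{k}[(\lambda+\alpha)_\sigma]_j\le\sum_{j=s}^{k}(\lambda_j+\alpha_j)$, which is the assertion once one checks equality of totals at $s=1$. The only genuinely non-routine point, and the step I would write out with care, is this identification of the tail sum of the reordered sequence with the subset-maximum $\max_{|I|=i}\sum_{j\in I}(\lambda_j+\alpha_{\sigma(j)})$, together with the subadditivity fact that $\sum_{j\in I}\lambda_j$ and $\sum_{j\in I}\alpha_{\sigma(j)}$ are each at most the corresponding top-$i$ sum; this is precisely where the increasing-ordering convention and the permutation invariance of the multiset of parts of $\alpha$ enter. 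Everything else is bookkeeping.
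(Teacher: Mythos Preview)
Your proposal is correct and follows essentially the same route as the paper: items~1 and the first half of item~2 are handled identically, and for $(\lambda+\alpha)_\sigma\preceq\lambda+\alpha$ both arguments reduce to the rearrangement fact that any $(k-s+1)$-element subset sum of an increasing sequence is bounded by its tail sum $\sum_{j=s}^k$. The only cosmetic difference is that the paper introduces the sorting permutation $\tau$ explicitly and then bounds $\sum_{j=s}^k \lambda_{\tau(j)}$ and $\sum_{j=s}^k \alpha_{\sigma(\tau(j))}$ separately, whereas you phrase the same step via the subset-maximum characterization $\max_{|I|=i}\sum_{j\in I}(\cdot)$; your formulation is arguably cleaner and makes the role of the permutation invariance of the multiset of $\alpha$-parts more transparent.
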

		\begin{proof}
1.~As $\alpha$ and $\lambda$ are both increasing, so is $\lambda+\alpha$. By construction $(\lambda+\alpha)_\sigma$ is increasing.\\
\noindent
2.~The implication~\eqref{dominance order addition} is trivial. To prove \eqref{dominance order permutation}, we pick a permutation $\tau\in S_k$ such that $(\lambda+\alpha)_\sigma$ is given by $
					(\lambda_{\tau(1)} + \alpha_{\sigma(\tau(1))}, \dots, \lambda_{\tau(k)} + \alpha_{\sigma(\tau(k))}) $. 
				As $\alpha$ is an increasing sequence, we have for all $s\in \{1, \dots, k \}$
				\begin{align*}
					\alpha_s + \dots + \alpha_k \geq \alpha_{\sigma(s)} + \dots + \alpha_{\sigma(k)},
				\end{align*}
				and similarly for $ \lambda $, which readily implies \eqref{dominance order permutation}.
		\end{proof}

\subsection{Proof of Theorem~\ref{thm:root states} and  Corollaries~\ref{cor:planar} and~\ref{cor:cyl}}\label{subsec:ProofofRoot}

 \begin{proof}[Proof of Theorem~\ref{thm:root states}]
 We first show that the left side in~\eqref{MPS with monomial symmetric} can be expanded using the second line in~\eqref{MPS with monomial symmetric_c}. 
 To do so, we start from the representation~\eqref{def Fb} of the monomial symmetric polynomial as a polynomial $ \pol_{b} $ of power symmetric polynomials. Without loss of generality, we may hence focus on the representation 
 of 
 $$
 F_J(z_1,\dots , z_N ) = \left( \prod_{j\in J} p_j(z_1,\dots,z_N)  \right) \times \prod_{1\leq j<k\leq N} (z_j-z_k)^q ,
 $$
 where $ J \subseteq \{ 1, 2, \dots , |b| \} $ with $ |b| = \sum_{k=1}^N b_k $ is a multi-subset (accommodating multiplicities) with the property $ \sum_{j\in J} j = | b| $ reflecting the fact that $ m_{b} $ is homogeneous of degree degree~$  | b | $.  
Corollary~\ref{prop:MPS power symmetric} and~\eqref{eq:Wproductchange} implies that
\begin{align}\label{eq:Laughint}
 F_J(z_1,\dots , z_N ) = & \sum_{k_1, \dots, k_N\in \mathbb{Z}} (-1)^{qN(N-1)/2} \ \widehat w_J(k_1, \dots k_N) \  \prod_{j=1}^N z_j^{k_j} , \\
 & \quad\mbox{with}\quad  \widehat w_J(\mathbf{k}) = \langle 0 \vert \left(\prod_{j\in J } \frac{a_{j}}{\sqrt{q}} \right) \mathbb{W}(\mathbf{k}) \vert 0 \rangle . \notag
\end{align}
From Theorem~\ref{vanishing condition}  $(4)$, we learn that we may restrict the summation to $ \mathbf{k} \in \mathbb{N}_0^N $ such that 
$$ \sum_{j=1}^N k_j = q \frac{N(N-1)}{2} +   \sum_{j\in J} j  = q \frac{N(N-1)}{2} + |b| . $$
Moreover, the coefficient  $  \widehat  w_J(\mathbf{k}) $ is permutation symmetric, i.e., $$ \widehat  w_J(k_{1} , \dots ,k_{N} ) = (\textrm{sgn} \sigma)^q \  \widehat  w_J(k_{\sigma(1)} , \dots ,k_{\sigma(N)} ) . $$
For any such function, we may rewrite the summation
\begin{equation}\label{eq:correctcounting}
 \sum_{k_1, \dots, k_N\in \mathbb{N}_0} \widehat w_J(k_1, \dots k_N) \  \prod_{j=1}^N z_j^{k_j} = \sum_{\lambda}  \frac{ \widehat w_J(\lambda)}{M(\lambda)!}  \sum_{\sigma\in S_N} \mathrm{sgn}(\sigma)^q \prod_{j=1}^N z_{j}^{\lambda_{\sigma(j)}} 
\end{equation}
in terms of partitions $ \lambda=(\lambda_1, \dots, \lambda_N ) $ of the integer $ q N(N-1)/2 + |b| $. 
This proves~\eqref{MPS with monomial symmetric} with the second line in~\eqref{MPS with monomial symmetric_c}, albeit without restricting the summation to those partitions $ \lambda $, which are dominated by $ \lambda^{(q)}_N(b ) $. 

For proof of this, we first consider the case $ b = 0 $. In this case, the claim follows from Theorem~\ref{vanishing condition}: its third item ensures that $ w_\emptyset(\lambda) = 0 $ unless $ \sum_{j=s}^N \lambda_j \leq \sum_{j=s}^N q(j-1) $  for any $ s \in \{1, \dots , N\} $. i.e.,  $\lambda \preceq  \lambda^{(q)}_{N} $. 

In case of a non-zero partition $ b $, we show that the left side in~\eqref{MPS with monomial symmetric} can be expanded in terms of partitions $  \lambda  \preceq  \lambda^{(q)}_{N}(b) $ with expansion coefficients from the first line in~\eqref{MPS with monomial symmetric_c}. By the uniqueness of expansion, this also proves that one may restrict to those partitions using the coefficients from the second line in~\eqref{MPS with monomial symmetric_c}. 

We start from the representation~\eqref{eq:Laughint} with $ J = \emptyset $ and the definition~\eqref{def monomial symmetric polynomial} of monomial symmetric polynomials to rewrite
\begin{align*}
& (-1)^{qN(N-1)/2}  M(b)! \ m_{b}(z_1, \dots, z_N)  \prod_{1\leq j<k\leq N} (z_j-z_k)^q \notag \\
& = \sum_{\tau \in \mathcal{S}_N}   \sum_{k_1, \dots, k_N\in \mathbb{Z}}   \langle 0 \vert \mathbb{W}(\mathbf{k}) \vert 0 \rangle  \prod_{j=1}^N z_j^{k_j+b_{\tau(j)}}  \notag \\
& =   \sum_{k_1, \dots, k_N\in \mathbb{Z}}\ \sum_{\tau \in \mathcal{S}_N}  \langle 0 \vert \mathbb{W}(\mathbf{k},b_{\tau}) \vert 0 \rangle \   \prod_{j=1}^N z_j^{k_j} .
\end{align*}
The last line results from the change of variables $ k_j \to k_j - b_{\tau(j)} $ and definition~\eqref{def:Wwithb}. Since we are expanding a polynomial, we may restrict the summation to $ k_1, \dots, k_N \in \mathbb{N}_0 $. Since the expansion coefficients are permutation symmetric,
we may again use~\eqref{eq:correctcounting} to express the summation in terms of partitions $ \lambda $. The expansion coefficients agree with the first line in~\eqref{MPS with monomial symmetric_c}, i.e., a sum over permutations $ \tau \in \mathcal{S}_N $ over $   \langle 0 \vert \mathbb{W}(\lambda_1 -b_{\tau(1)}, \dots , \lambda_N -b_{\tau(N)} ) \vert 0 \rangle $.  By Theorem~\ref{vanishing condition}  $(4)$ these terms are zero unless
for all $ s \in \{1,\dots, N\} $:
$$
\sum_{j=s}^N \lambda_j \leq \sum_{j=s}^N b_{\tau(j)} +  \sum_{j=s}^N q (j-1) ,
$$
i.e., $ \lambda  \preceq  (\lambda^{(q)}_{N} + b)_\tau $ using~\eqref{def addition partition with permutation}.  By \eqref{dominance order permutation}  for any $ \tau \in \mathcal{S}_N $ we have $(\lambda+b)_{\tau} \preceq  \lambda^{(q)}_N +b^{(N)}=  \lambda^{(q)}_N(b)$, which finishes the proof of the dominance order.

In the special case $ \lambda =  \lambda^{(q)}_N(b) $, the only relevant permutation is the identity. For any other permutation~$\tau$, there exists a smallest $j_0\in \{ 1, \dots, N\}$ such that $b_{\tau(j)}=b_j$ for all $j\in \{1, \dots, j_0-1\}$ and $b_{\tau(j_0)}> b_{j_0}$. By Theorem \ref{vanishing condition} $(4)$ this implies
	$$ \langle 0 \vert \mathbb{W}(0, q + (b_1-b_{\tau(1)}), \dots, q(N-1)+b_N-b_{\tau(N)}) \vert 0\rangle =0. $$  One thus arrives at $ w_b( \lambda^{(q)}_N(b) ) = w_0( \lambda^{(q)}_N ) = 1 $. 
 \end{proof}

 \begin{proof}[Proof of Corollary~\ref{cor:planar}]
 This is a consequence of  Theorem~\ref{thm:root states}. 
 The normalization~\eqref{eq:kappanorm} absorbes the factor $(-1)^{qN(N-1)/2}$ from the right side of~\eqref{MPS with monomial symmetric}, the missing $ \sqrt{N!} $ as well as $ \sqrt{\pi}^N  $ in the Slater determinant or permanent~\eqref{eq:Slater} caused by the factors in the basis~\eqref{eq:orbitalsp} compared to the plain monomials featuring in the right side of~\eqref{MPS with monomial symmetric}. The remaining factorials from~\eqref{eq:orbitalsp} are put into the geometric factor $ g_b(\lambda) $. The latter also absorbs the factorials which were included in~\eqref{eq:kappanorm}. 
 
 The claimed normalization, $ h_b(\lambda^{(q)}_N(b) ) = 1  $, follows from the respective normalization of $ w_b $ established in Theorem~\ref{thm:root states}, the one for $ g_b $ and the fact that $ m((\lambda^{(q)}_N(b) , k ) != 1 $ for any $ k\in \mathbb{N}_0 $. 
  \end{proof}
  
 \begin{proof}[Proof of Corollary~\ref{cor:cyl}] To apply Theorem~\ref{thm:root states}, we replace the monomials $ z_j $ by $ e^{\gamma z_j} $. The remainder of the proof proceeds analogously to  Corollary~\ref{cor:planar}. The normalization~\eqref{eq:norm2cyl} absorbes $(-1)^{qN(N-1)/2}$ from the right side of~\eqref{MPS with monomial symmetric} as well as $ \sqrt{N!} $ and the missing factors in the Slater determinant or permanent~\eqref{eq:Slater} caused by the prefactors aside from the exponential ($  e^{- \gamma^2 k^2/2}  $) in the basis~\eqref{eq:cylinderONB}. The geometric factor $ g_b $ takes care of these exponentials as well as the exponential in the normalization~\eqref{eq:norm2cyl}.
 
The normalization  $ h_b(\lambda^{(q)}_N(b) ) = 1  $ follows, similarly as above, from Theorem~\ref{thm:root states}. 
 \end{proof}
\subsection{Squeezing and behavior of the cylindrical weight function}
The squeezing operation is a method for sorting through the tree of partitions in dominance order.
\begin{defn}
Let $\lambda$ be a partition, $0\leq i <j \leq \ell(\lambda)$ and $s\in \mathbb{N}_0$ with $ s \leq (\lambda_j-\lambda_i)/2$, $\lambda_i<\lambda_{i+1}, \lambda_j >\lambda_{j-1}$. Then a squeezing operator $R_{ij}^s$ is called \emph{$\lambda$-admissible}, and it acts in the following way 
		\begin{align*}
			R_{ij}^s (\lambda_1, \dots, \lambda_{\ell(\lambda)})
			= S(\lambda_1, \dots, \lambda_{i-1}, \lambda_i+s, \lambda_{i+1}, \dots, \lambda_{j-1}, \lambda_j -s, \lambda_{j+1}, \dots, \lambda_{\ell(\lambda)}),
		\end{align*}
		where $S$ reorders the sequence into increasing order. 
\end{defn}
Squeezing operators applied to a partition yield a partition of the same integer and with the same length. As is well known~\cite[Ch. 1]{macdonald1995symmetric}, squeezings induce the dominance order. We summarize several facts from~\cite[Ch. 1]{macdonald1995symmetric} adapted to our setting in the following proposition. It is mainly included to be able to connect with the physicist's notion of squeezing. 
\begin{prop}\label{lm:characterization dominance order}
			Let $\lambda, \mu$ be partitions of the same integer and the same length. 		
			\begin{enumerate}
				\item For all  $ \lambda $-admissible squeezing operators $R_{i,j}^s$, we have 
				$	R_{i,j}^s \lambda \preceq \lambda $. 
				\item The following statements are equivalent:
				\begin{enumerate}
					\item $\lambda$ dominates $\mu$.
					\item There exists a finite sequence of  $ \lambda $-admissible squeezing operators
					$R_{i_1, j_1}^{s_1}, \dots, R_{i_m, j_m}^{s_m}$ such that
					$
						\mu = R_{i_m, j_m}^{s_m} \dots R_{i_1, j_1}^{s_1} \lambda $.
					\item There exists a finite sequence of  $ \lambda $-admissible squeezing operators
					$R_{i_1, j_1}^{1}, \dots, R_{i_m, j_m}^{1}$ such that
					$	\mu = R_{i_m, j_m}^{1} \dots R_{i_1, j_1}^{1} \lambda $.
				\end{enumerate}	
			\end{enumerate}	
		\end{prop}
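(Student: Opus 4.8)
My plan is to recognize the squeezing operators $R_{ij}^s$ as the classical Robin--Hood (T-transform) moves from majorization theory, and to reprove, in the increasing-partition convention of Definition~\ref{def:dom}, that dominance order coincides with the order they generate. Throughout I would use the equivalent formulation $\mu \preceq \lambda \iff \sum_{l=1}^t \mu_l \ge \sum_{l=1}^t \lambda_l$ for every $t$, which follows from~\eqref{dominance order def 2} since $\lambda$ and $\mu$ partition the same integer and share the common length $k$.

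For Part~1 I would first note that $R_{ij}^s$ transfers mass $s$ from the larger entry $\lambda_j$ to the smaller entry $\lambda_i$ without letting them cross, because $s \le (\lambda_j-\lambda_i)/2$ gives $\lambda_i \le \lambda_i+s \le \lambda_j-s \le \lambda_j$. Since dominance order only sees the multiset of entries, I would compare the sorted tail sums, i.e.\ the sums of the $t$ largest entries, which are Schur-convex. Writing such a top-$t$ sum as a maximum over $t$-element index sets $A$, I would run the short case analysis on whether $A$ contains $i$ and/or $j$; the only dangerous case, $i\in A$ and $j\notin A$, is settled by swapping $i$ for $j$ and using $\lambda_j-\lambda_i-s\ge 0$. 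This bounds every top-$t$ sum of $R_{ij}^s\lambda$ by that of $\lambda$, giving $R_{ij}^s\lambda \preceq \lambda$, and the argument is insensitive to the reordering $S$ because it operates at the level of multisets.

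For Part~2 the implication (c)$\Rightarrow$(b) is immediate (unit squeezings are squeezings), and (b)$\Rightarrow$(a) follows from Part~1 and transitivity of $\preceq$, each admissible squeeze producing a partition dominated by its predecessor. The substance is (a)$\Rightarrow$(c), which I would prove by induction on the nonnegative integer monovariant $\Phi(\lambda):=\sum_{t=1}^{k-1}\big(\sum_{l=1}^t \mu_l-\sum_{l=1}^t \lambda_l\big)\ge 0$, which vanishes exactly when $\lambda=\mu$. Assuming $\mu\precneqq\lambda$, I set $D_t:=\sum_{l=1}^t(\lambda_l-\mu_l)\le 0$ with $D_0=D_k=0$, choose $r$ to be the smallest index with $\lambda_r>\mu_r$ and $p$ the largest index below $r$ with $\lambda_p<\mu_p$. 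A short check using minimality/maximality shows both exist, that $\lambda_t=\mu_t$ strictly between them, hence $D_t<0$ for all $p\le t\le r-1$, while $\lambda_p<\mu_p\le\mu_r<\lambda_r$ forces $\lambda_r-\lambda_p\ge 2$.

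To meet the admissibility conditions $\lambda_i<\lambda_{i+1}$ and $\lambda_j>\lambda_{j-1}$, I would replace $p$ by the largest index $p'$ with $\lambda_{p'}=\lambda_p$ and $r$ by the smallest index $r'$ with $\lambda_{r'}=\lambda_r$; monotonicity of $\mu$ preserves the signs $\lambda_{p'}<\mu_{p'}$, $\lambda_{r'}>\mu_{r'}$, keeps $p'<r'$ and $\lambda_{r'}-\lambda_{p'}\ge 2$, and the boundary conditions make the unit squeeze $R_{p',r'}^1$ require no reordering. One then checks that the partial sums of $\lambda':=R^1_{p',r'}\lambda$ increase by one exactly on $[p',r'-1]$, where $D_t<0$ already leaves integer room, so $\mu\preceq\lambda'$; since $\Phi$ drops by $r'-p'\ge 1$, the induction terminates at $\mu$ and yields the required sequence of admissible unit squeezings. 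I expect the main obstacle to be precisely this (a)$\Rightarrow$(c) construction: selecting $p,r$ so that a single unit squeeze both stays above $\mu$ in dominance order (which needs the strict inequality $D_t<0$ on the affected range) and respects the admissibility boundary conditions, together with exhibiting a monovariant that strictly decreases and forces termination exactly at $\mu$.
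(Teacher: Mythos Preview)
Your argument is correct. Both your proof and the paper's reduce (a)$\Rightarrow$(c) to the classical Hardy--Littlewood--P\'olya step of exhibiting a single admissible unit squeeze $R^1_{i,j}\lambda$ with $\mu\preceq R^1_{i,j}\lambda\precneqq\lambda$ and then iterating, but the implementations differ. The paper inducts on the length $\ell$: it always squeezes at the top, taking $j=\ell$ and $i=s:=\max\{k:\mu_k>\lambda_k\}$, repeating until the last entry matches $\mu_\ell$ and then dropping to length $\ell-1$; this is effectively a nested loop, and the reordering $S$ must be tracked via the auxiliary index $\tilde\ell=\min\{k:\lambda_k=\lambda_\ell\}$. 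You instead induct on the single monovariant $\Phi(\lambda)=-\sum_t D_t$ and locate the squeeze near the \emph{first} index $r$ with $\lambda_r>\mu_r$; your adjustment $p\mapsto p'$, $r\mapsto r'$ to the ends of the constant runs in $\lambda$ makes admissibility explicit and guarantees that no reordering is needed. What your approach buys is a cleaner termination argument (each step drops $\Phi$ by $r'-p'\ge1$) and a self-contained verification of the boundary conditions $\lambda_{p'}<\lambda_{p'+1}$, $\lambda_{r'}>\lambda_{r'-1}$; what the paper's buys is a slightly simpler index choice at the cost of handling the reorder. Your Part~1 via top-$t$ sums is also a valid alternative to the bare ``straightforward computation'' the paper invokes.
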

		\begin{proof}
		1.~This is a straightforward computation using the definition of the dominance order.\\
		\noindent 
		2.~ Clearly, $c)$ implies $b)$ Using $1.$ item $b)$ implies $a)$  Finally, we show that $a)$ implies $c)$ by induction over the length $\ell=\ell(\lambda)$. In case $\ell=1$ there is nothing to show and the statement is trivially true. We provide an algorithm for the induction step. If $\lambda_\ell=\mu_\ell$, then we are back in the case $\ell-1$. Thus, we can assume that $\lambda_\ell>\mu_\ell$. We then define
$ s \coloneqq \max\{k\in \{1, \dots, \ell\} \ : \ \mu_k > \lambda_k\} $  and 	$\tilde{\ell} \coloneqq \min \{ k\in \{1, \dots, \ell\} \ : \ \lambda_k = \lambda_\ell\} $. 
As $\lambda$ and $\mu$ partition the same integer, such an $s$ must exist. The sqeezing algorithm sets $\tilde{\lambda} \coloneqq R_{s,\ell}^1 \lambda$. We need to show that we still have $\mu \preceq \tilde{\lambda}$. 	
Indeed, for $1\leq t \leq s$ we have
				\begin{align*}
					\tilde{\lambda}_t + \dots + \tilde{\lambda}_\ell &= \lambda_t + \dots + \lambda_{s-1}+ (\lambda_s+1) + \lambda_{s+1} + \dots + \lambda_{\ell-1}+ (\lambda_\ell-1) \\
					&= \lambda_t + \dots + \lambda_\ell \geq \mu_t + \dots + \mu_\ell.  
				\end{align*}
				On the other hand, for $j\in \{s+1, \dots, \ell\}$ we have, by definition of $s$, $\lambda_j\geq \mu_j$ and $\lambda_\ell\geq \mu_\ell+1$. Thus, if $s+1\leq t \leq \tilde{\ell}$, this implies
				\begin{align*}
	\tilde{\lambda}_t + \dots + \tilde{\lambda}_\ell
	&= \lambda_t + \dots + \lambda_{\tilde{\ell}-1}+ (\lambda_{\tilde{\ell}}-1)+\lambda_{\tilde{\ell}+1}+\dots +\lambda_\ell \\
	&\geq\mu_t + \dots + \mu_\ell.
\end{align*}
The case $t>\tilde{\ell}$ is trivial.
We can now repeat this algorithm until the leading coefficient is equal to $\mu_\ell$. Then we are back in the case $\ell-1$.
		\end{proof}
		
A one-step squeezing operation may be thought of as a descent from a root partition on the directed graph of the partial order on partitions dominated by that root. 
The quantity
	\begin{equation} \label{definition Delta}
			\Delta_b(\mu) = \sum_{j=1}^{N} \left[ \lambda_N^{(q)}(b)_j^2-\mu_j^2\right]
		\end{equation}
which, in the cylinder geometry, will help us control the geometric part of the fractional quantum Hall wavefunctions characterized by a root $(N,b) $, is easily seen to be strictly monotone under squeezing, 
$				\Delta_b(R_{i,j}^s \mu) = \Delta_b(\mu) +2s (\mu_j-\mu_i -s)> \Delta_b(\mu) $ for any $\mu $-admissible squeezing $R_{i,j}^{s}$. Moreover, $ \Delta_b( \lambda_N^{(q)}(b)) = 0 $. 
Hence, it is a measure of the distance to the root $  \lambda_N^{(q)}(b) $ in the graph of the partial order, i.e., the Hasse diagram. 

The function $ \Delta_b $ can be used to estimate the right side of the key estimate in Corollary~\ref{cor:keyest}. Using the language of partitions $\mu=(\mu_1, \dots, \mu_N)$, we rewrite the quantity in the right side of~\eqref{key estimate} in terms of	
	\begin{equation}\label{def:Gamma}
			\Gamma_b(\mu) \coloneqq \sum_{j=1}^N (N+\frac{1}{2}-j)(\mu_j -\lambda_N^{(q)}(b)_j) .
	\end{equation}
	This quantity is easily seen to be also strictly monotone under squeezing. It is a non-trivial observation that
		for partitions dominated by the root partition characterized by $ b $, we can bound $\Gamma_b$ in terms of $\Delta_b$.

		 \begin{lemma}\label{lem:GammaDelta}
		For any root $ (N,b)$ and $ \mu\preceq \lambda^{(q)}_N(b) $,
		we have
			\begin{equation} \label{Gamma bounded by Delta}
				\Gamma_b(\mu) \leq \Delta_b(\mu).
			\end{equation}  
		\end{lemma}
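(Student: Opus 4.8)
The plan is to reduce the inequality to a one-dimensional summation-by-parts estimate, exploiting that the only two facts available about $\mu$ are (i) domination, $\mu \preceq \lambda^{(q)}_N(b)$, and (ii) that $\mu$ is a genuine (monotone increasing) partition. I would write $\rho_j := \lambda^{(q)}_N(b)_j = q(j-1)+b_j$ and $e_j := \mu_j - \rho_j$, and introduce the partial sums $E_s := \sum_{i=1}^s e_i$. Since $\mu$ and $\rho$ partition the same integer, $E_N = 0$, while the domination $\mu \preceq \lambda^{(q)}_N(b)$ is, by Definition~\ref{def:dom} together with $|\mu| = |\lambda^{(q)}_N(b)|$, exactly the statement $E_s \geq 0$ for all $s \in \{1,\dots,N\}$. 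These two inputs are all I will use.

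First I would rewrite the difference $\Delta_b(\mu) - \Gamma_b(\mu)$ purely in terms of the $e_j$. Factoring $\rho_j^2 - \mu_j^2 = -e_j(2\rho_j + e_j)$ gives
\begin{equation*}
\Delta_b(\mu) - \Gamma_b(\mu) = -\sum_{j=1}^N e_j\, c_j - \sum_{j=1}^N e_j^2, \qquad c_j := 2\rho_j + N + \tfrac12 - j .
\end{equation*}
A summation by parts applied to $\sum_j e_j c_j = \sum_j (E_j - E_{j-1}) c_j$, using $E_0 = E_N = 0$, turns the linear term into $-\sum_{j=1}^{N-1} E_j (c_{j+1}-c_j)$, so that
\begin{equation*}
\Delta_b(\mu) - \Gamma_b(\mu) = \sum_{j=1}^{N-1} E_j\,(c_{j+1}-c_j) - \sum_{j=1}^N e_j^2 .
\end{equation*}
Here $c_{j+1} - c_j = 2\delta_j - 1$ with $\delta_j := \rho_{j+1} - \rho_j = q + (b_{j+1}-b_j) \geq q \geq 1$, and every coefficient $E_j$ is non-negative; hence the linear term is already manifestly non-negative, and the whole problem is reduced to showing that it dominates the quadratic defect $\sum_j e_j^2$.

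The key step --- and the one I expect to be the main obstacle --- is controlling $\sum_j e_j^2$, because domination alone (i.e.\ $E_j \geq 0$) does not bound it: a priori the $e_j$ could fluctuate so wildly that $\sum_j e_j^2 \gg \sum_j E_j$. This is precisely where property (ii) must enter, and it is also the reason why a naive induction asserting that "$\Delta_b - \Gamma_b$ grows along every elementary squeeze" fails --- a single long-range squeeze $R^1_{ij}$ can raise $\Gamma_b$ by $j-i$ while raising $\Delta_b$ by only $2(\mu_j-\mu_i-1)$. The clean way in is a \emph{second} summation by parts: $\sum_j e_j^2 = \sum_j e_j(E_j - E_{j-1}) = \sum_{j=1}^{N-1} E_j\,(e_j - e_{j+1})$. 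Now the monotonicity $\mu_{j+1} \geq \mu_j$ reads $e_{j+1} - e_j \geq \rho_j - \rho_{j+1} = -\delta_j$, i.e.\ $e_j - e_{j+1} \leq \delta_j$; combined with $E_j \geq 0$ this yields
\begin{equation*}
\sum_{j=1}^N e_j^2 = \sum_{j=1}^{N-1} E_j\,(e_j - e_{j+1}) \leq \sum_{j=1}^{N-1} E_j\,\delta_j \leq \sum_{j=1}^{N-1} E_j\,(2\delta_j - 1),
\end{equation*}
where the last inequality uses $\delta_j \geq 1$ and $E_j \geq 0$. Comparing with the displayed formula for $\Delta_b(\mu) - \Gamma_b(\mu)$ establishes the claim. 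I would expect the only points needing care to be the bookkeeping in the two Abel summations (the boundary terms vanish exactly because $E_0 = E_N = 0$) and the verification that the increasing constraint on $\mu$ translates into $e_j - e_{j+1} \leq \delta_j$ with the correct $\delta_j = \rho_{j+1}-\rho_j$; everything else is routine.
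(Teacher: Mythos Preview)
Your proof is correct and uses the same core idea as the paper: write $\Delta_b(\mu)-\Gamma_b(\mu)$ as $\sum_j(\lambda_j-\mu_j)\,d_j$, apply Abel summation using the nonnegative partial sums coming from dominance, and bound the increments $d_k-d_{k-1}$ from below via the monotonicity of $\mu$ and $\rho_{j+1}-\rho_j\geq q$. The only difference is cosmetic: the paper chooses the weight $d_j=N+\tfrac12-j+\lambda_j+\mu_j$ (with $\mu_j$, not $2\rho_j$), which absorbs your quadratic defect $\sum_j e_j^2$ into the weight and makes a \emph{single} summation by parts suffice, since then $d_k-d_{k-1}=(\lambda_k-\lambda_{k-1})+(\mu_k-\mu_{k-1})-1\geq q-1\geq 0$ directly; your route isolates $\sum_j e_j^2$ and handles it by a second Abel summation, which works but is the longer way around.
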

		\begin{proof}
			Let $\mu \preceq \lambda_N^{(q)}(b)$. In the following, we abbreviate $\lambda_j \coloneqq \lambda_N^{(q)}(b)_j$, which allows us to write
			$$
				\Delta_b(\mu) - \Gamma_b(\mu) = \sum_{j=1}^N \left[\lambda_j^2-\mu_j^2 - (N+\frac{1}{2}-j)(\mu_j-\lambda_j)\right]
				= \sum_{j=1}^N (\lambda_j-\mu_j)(N+\frac{1}{2}-1+\lambda_j+\mu_j).
			$$
			The partial sums $s_k=\sum_{j=k}^N (\lambda_j-\mu_j)$ for $k\in \{1, \dots, N\}$ are non-negative, $s_k\geq 0$, and $s_1=0$  by definition of the dominance order as  $\mu \preceq \lambda_N^{(q)}(b)$. 
			Using~\eqref{integration by parts} below with $c_j =\lambda_j-\mu_j$, $d_0=0$ and  $d_j=N+1/2-j+\lambda_j+\mu_j$ for $j\in \{1, \dots, N\}$, we arrive at 
			$$
				\sum_{j=1}^N (\lambda_j-\mu_j)(N+\frac{1}{2}-1+\lambda_j+\mu_j)
				=\sum_{j=1}^N c_j (d_j-d_0)
				= \sum_{k=1}^N s_k (d_k-d_{k-1})
				= \sum_{k=2}^N s_k (d_k-d_{k-1}).
			$$
			The last equality is due to $s_1=0$. For $k\in \{2, \dots, N\}$, we calculate 
			\begin{align*}
				d_k-d_{k-1} 
				= \lambda_k-\lambda_{k-1}+\mu_k -\mu_{k-1} -1
				\geq q -1 \geq 0.
			\end{align*}
			The last estimate is based on the explicit form of the root partition $ \lambda =\lambda_N^{(q)}(b) $ and the fact that $\mu_k-\mu_{k-1}\geq 0$ and $ b_k - b_{k-1} \geq 0 $ as $\mu, b$ are partitions. We thus conclude 	$ \Delta_b(\mu) - \Gamma_b(\mu) =\sum_{k=2}^N s_k (d_k-d_{k-1}) \geq 0 $. 
		\end{proof}
		The last proof used the following elementary and widely known summation by parts formula, which we include for the reader's convenience.
		\begin{prop}
			Let $(c_j)_{j=1}^N, (d_j)_{j=0}^N$ be families of real numbers, then we have
			\begin{equation} \label{integration by parts}
				\sum_{k=1}^N c_k (d_k-d_0) = \sum_{k=1}^N s_k (d_k-d_{k-1})
			\end{equation}
			where $s_k=\sum_{j=k}^N c_j$.
		\end{prop}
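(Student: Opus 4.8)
The plan is to establish the identity by substituting the definition of $s_k$ into the right-hand side, interchanging the order of the resulting finite double sum, and collapsing a telescoping inner sum. Since every sum here is finite, no convergence issues arise and the rearrangement is unconditionally valid.

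Concretely, I would start from the right side and insert $s_k = \sum_{j=k}^N c_j$:
\begin{align*}
\sum_{k=1}^N s_k (d_k - d_{k-1}) = \sum_{k=1}^N \left(\sum_{j=k}^N c_j\right)(d_k - d_{k-1}).
\end{align*}
The double sum ranges over pairs $(k,j)$ with $1 \le k \le j \le N$. Interchanging the order of summation so that $j$ runs first, and for each fixed $j$ letting $k$ range from $1$ to $j$, gives
\begin{align*}
\sum_{k=1}^N s_k (d_k - d_{k-1}) = \sum_{j=1}^N c_j \sum_{k=1}^j (d_k - d_{k-1}).
\end{align*}
The inner sum telescopes to $d_j - d_0$, which leaves $\sum_{j=1}^N c_j (d_j - d_0)$, precisely the left side.

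Alternatively, one could proceed from the left side using the backward difference $c_k = s_k - s_{k+1}$ (with the convention $s_{N+1} = 0$) and carry out Abel's summation directly; this amounts to the same computation organized in the opposite direction. I would favor the swap-and-telescope route, since it avoids introducing the boundary convention $s_{N+1}=0$ and makes the telescoping transparent.

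There is no genuine obstacle here: the one point requiring care is the bookkeeping of index ranges when interchanging the order of summation, namely recognizing that the constraint $k \le j$ built into the definition $s_k = \sum_{j=k}^N c_j$ is exactly what produces the telescoping range $1 \le k \le j$ for the inner sum.
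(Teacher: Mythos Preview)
Your proof is correct and follows exactly the approach indicated in the paper, which simply states that the identity follows by exchanging two summations. Your write-up makes the telescoping step explicit, but the method is the same.
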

		\begin{proof}
			The proof follows from a straightforward computation by exchanging two summations.

		\end{proof}

\subsection{Concatenating partitions and renewal points}

To fully harvest the factorization structure~\eqref{eq:partition}, we need to decompose a partition into its irreducible parts. 
To define these notations, we first define the opposite surgery, namely, the concatenation of partitions that are dominated by root partitions of different lengths. 
\begin{defn}\label{def:composition}
Let $ N_1, N_2 \in \mathbb{N} $ and $ b^{(1)} \in \mathbb{N}_0^{N_1} $, $ b^{(2)} \in \mathbb{N}_0^{N_2} $ be two partitions.
		If $\mu^{(1)} \preceq \lambda^{(q)}_{N_1}\big(b^{(1)}\big)$ and $\mu^{(2)}\preceq \lambda^{(q)}_{N_2}\big(b^{(2)}\big)$, then we define the \emph{concatenation} by
		\begin{equation} \label{definition union partition}
			\mu^{(1)} \cup \mu^{(2)} :=(\mu^{(1)}_1, \dots, \mu^{(1)}_{N_1}, qN_1 + b_{N_1}^{(1)} +\mu^{(2)}_1, \dots, qN_1+ b_{N_1}^{(1)}+  \mu^{(2)}_{N_2}).
		\end{equation}
\end{defn}
\begin{figure}[h]
	\begin{center}
		\includegraphics[scale=0.4]{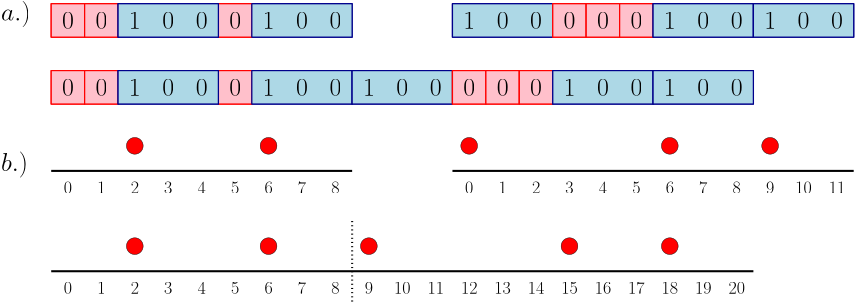}
	\end{center}
	\caption{The concatenation of the root partitions $\lambda^{(3)}_2( b^{(1)}) = ( 2,6)$ corresponding to $ b^{(1)} = (2,3) $ and $\lambda^{(3)}_3( b^{(2)})  (0,6,9)$  and $ b^{(2)} = (0,3,3) $ in $a.)$ the tiling picture, $b.)$ the occupation picture. The concatenated root is $ \lambda^{(3)}_5( b) = (2,6,9,15,18) $ with $ b= ( b^{(1)}, b^{(2)}) = (2,3,3,6,6)$. }\label{fig:concatenation}
\end{figure}
		For example, we have $ \lambda^{(q)}_{N_1}\big(b^{(1)}\big) \cup \lambda^{(q)}_{N_2}\big(b^{(2)}\big)= \lambda^{(q)}_{N_1+N_2}\big(b^{(1)},b^{(2)}\big)$ where the concatenated partition $ \big(b^{(1)},b^{(2)}\big)  $ was defined in~\eqref{eq:bconcat}, cf.~Figure~\ref{fig:concatenation}. 
		Note that concatenation is generally non-commutative.  In the language of partitions, the renewal structure is captured in the following
\begin{defn}\label{def:irred}
			Let $ N \geq 2 $ and $b\in \mathbb{N}_0^N $ be a partition. We say that a partition $\mu\preceq \lambda^{(q)}_N(b)$ has a \emph{renewal point} at $ N_1 \in [1, N-1] $ if
			there exists 
			$\mu^{(1)} \preceq \lambda^{(q)}_{N_1}\big(b^{(1)}\big) $, $ \mu^{(2)}\preceq \lambda^{(q)}_{N-N_1}\big(b^{(2)}\big) $ with partitions $ b^{(1)} \in  \mathbb{N}_0^{N_1} $, $ b^{(2)} \in  \mathbb{N}_0^{N-N_1} $  such that 
			$$ 
			\mu= \mu^{(1)} \cup \mu^{(2)} \quad \mbox{ and }\quad  b = \big(b^{(1)},b^{(2)}\big) . 
			$$
			We say that $\mu \preceq \lambda^{(q)}_N(b)$ is irreducible if it has no renewal point in $[1, N-1]$ and reducible otherwise. 
\end{defn}

There is a simple characterization of when a renewal point occurs. 
\begin{lemma} \label{lm:splitting}
	Let $(N,b)$ be a root and $\mu\preceq \lambda_N^{(q)}(b)$. Then $\mu$ has a renewal point at $s\in \{1, \dots, N-1\}$ if and only if $\sum_{j=1}^s \mu_j = \sum_{j=1}^s (\lambda_N^{(q)}(b))_j$.
	\end{lemma}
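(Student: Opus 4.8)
The plan is to prove both implications directly: the forward one by a one-line partial-sum computation, and the reverse one by writing down the decomposition explicitly and checking it is admissible. Throughout I abbreviate $\lambda \coloneqq \lambda_N^{(q)}(b)$, so that $\lambda_j = q(j-1)+b_j$, and I repeatedly use the two equivalent forms of dominance $\mu\preceq\lambda$, namely~\eqref{dominance order def} and~\eqref{dominance order def 2} (i.e.\ $\sum_{j=1}^i \mu_j \geq \sum_{j=1}^i \lambda_j$ for all $i$), together with the total equality $\sum_{j=1}^N\mu_j=\sum_{j=1}^N\lambda_j$.

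For the forward direction, suppose $\mu$ has a renewal point at $s$, so that $\mu=\mu^{(1)}\cup\mu^{(2)}$ and $b=(b^{(1)},b^{(2)})$ with $\mu^{(1)}\preceq\lambda_s^{(q)}(b^{(1)})$. By the concatenation formula~\eqref{definition union partition} the first $s$ entries of $\mu$ are precisely those of $\mu^{(1)}$, so $\sum_{j=1}^s\mu_j=|\mu^{(1)}|$. Since $\mu^{(1)}$ and $\lambda_s^{(q)}(b^{(1)})$ partition the same integer and $b^{(1)}=(b_1,\dots,b_s)$ by~\eqref{eq:bconcat}, this common value equals $|\lambda_s^{(q)}(b^{(1)})|=q\,s(s-1)/2+\sum_{j=1}^s b_j=\sum_{j=1}^s\lambda_j$, which is the asserted equality.

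For the reverse direction I assume $\sum_{j=1}^s\mu_j=\sum_{j=1}^s\lambda_j$ and build the decomposition. I set $b^{(1)}\coloneqq(b_1,\dots,b_s)$ and $b^{(2)}\coloneqq(b_{s+1}-b_s,\dots,b_N-b_s)$; both are partitions since $b$ is increasing, and comparison with~\eqref{eq:bconcat} yields $b=(b^{(1)},b^{(2)})$. I then put $\mu^{(1)}\coloneqq(\mu_1,\dots,\mu_s)$ and $\mu^{(2)}_j\coloneqq\mu_{s+j}-qs-b_s$, which is the unique choice forced by~\eqref{definition union partition} so that $\mu=\mu^{(1)}\cup\mu^{(2)}$. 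Noting that $\lambda_s^{(q)}(b^{(1)})=(\lambda_1,\dots,\lambda_s)$, the dominance $\mu^{(1)}\preceq\lambda_s^{(q)}(b^{(1)})$ follows since for $i\le s$ one has $\sum_{j=i}^s\mu_j=\sum_{j=1}^s\lambda_j-\sum_{j=1}^{i-1}\mu_j\le\sum_{j=1}^s\lambda_j-\sum_{j=1}^{i-1}\lambda_j=\sum_{j=i}^s\lambda_j$, using the equality at $s$ and~\eqref{dominance order def 2}. Observing further that $\lambda_{N-s}^{(q)}(b^{(2)})_j=\lambda_{s+j}-qs-b_s$, the dominance $\mu^{(2)}\preceq\lambda_{N-s}^{(q)}(b^{(2)})$ reduces, after cancelling the common shift $qs+b_s$ on both sides, to $\sum_{j=k}^N\mu_j\le\sum_{j=k}^N\lambda_j$ for $k\ge s+1$, which is exactly $\mu\preceq\lambda$ restricted to the tail.

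The one step that is not purely formal, and which I expect to be the main (albeit modest) obstacle, is verifying that $\mu^{(2)}$ is a genuine partition, i.e.\ that its smallest entry $\mu^{(2)}_1=\mu_{s+1}-qs-b_s$ is nonnegative. I handle this by combining the dominance inequality at index $s+1$, $\sum_{j=1}^{s+1}\mu_j\ge\sum_{j=1}^{s+1}\lambda_j$, with the assumed equality at $s$ to obtain $\mu_{s+1}\ge\lambda_{s+1}=qs+b_{s+1}\ge qs+b_s$, the last step being monotonicity of $b$. Monotonicity of $\mu^{(1)}$ and $\mu^{(2)}$ is inherited from $\mu$, and the equality of total sums needed for the dominance relations follows from the hypothesis at $s$ and the global equality; hence all four objects are valid partitions and the construction exhibits a renewal point at $s$, completing the proof.
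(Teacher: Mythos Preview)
Your proof is correct and follows essentially the same approach as the paper's: both directions are handled by direct computation with the explicit decomposition $\mu^{(1)}=(\mu_1,\dots,\mu_s)$, $\mu^{(2)}_j=\mu_{s+j}-qs-b_s$. If anything, your write-up is slightly more thorough, since you explicitly verify $\mu^{(2)}_1\ge 0$ via $\mu_{s+1}\ge\lambda_{s+1}\ge qs+b_s$, a step the paper leaves implicit.
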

\begin{proof}
	If $s$ is a renewal point, then we can write $\mu=\mu^{(1)}\cup \mu^{(2)}$ with  $\mu^{(1)} \preceq \lambda^{(q)}_{s}\big(b^{(1)}\big)$, $\mu^{(2)} \preceq \lambda^{(q)}_{N-s}\big(b^{(2)}\big)$ and $b=(b^{(1)}, b^{(2)})$. Thus, $\sum_{j=1}^s \mu_j = \sum_{j=1}^s \mu^{(1)}_1 = \sum_{j=1}^s (\lambda_s^{(q)}(b^{(1)}))_j = \sum_{j=1}^s (\lambda_N^{(q)}(b))_j$. 
	To establish the reverse implication, we define $\mu^{(1)} = (\mu_1, \dots, \mu_s)$ and $\mu^{(2)} = (\mu_{s+1}-qs-b_s, \mu_{s+2}-qs-b_s, \dots, \mu_N -qs-b_s)$. We then have for any $k\in \{1, \dots, N-s\}$ thanks to the domination order  $\mu\preceq \lambda_N^{(q)}(b)$:
	\begin{align*}
		\sum_{j=k}^{N-s} \mu^{(2)}_j = \sum_{j=k}^{N-s} (\mu_{s+j}-qs-b_s)
		= \sum_{j=s+k}^N (\mu_j -qs-b_s) \leq \sum_{j=s+k}^N ((\lambda_N^{(q)}(b))_j -qs-b_s) = \sum_{j=k}^{N-s} (\lambda_{N-s}^{(q)}(b^{(2)}))_j.
	\end{align*}
	If $\sum_{j=1}^s \mu_j = \sum_{j=1}^s \lambda_N^{(q)}(b)_j$, then $\mu\preceq \lambda_N^{(q)}(b)$ implies that $\sum_{j=s+1}^N \mu_j = \sum_{j=s+1}^N \lambda_N^{(q)}(b)_j$. Hence, for any $k\in \{1, \dots, s\}$ we arrive at:
	\begin{align*}
		\sum_{j=k}^s \mu^{(1)}_j = \sum_{j=k}^N \mu_j - \sum_{j=s+1}^N \lambda_N^{(q)}(b)_j \leq \sum_{j=k}^s \lambda_N^{(q)}(b)_j =  \sum_{j=k}^s \lambda_s^{(q)}(b^{(1)})_j.
	\end{align*}
	It is hence straightforward to check that $\mu^{(1)}\preceq \lambda_s^{(q)}(b^{(1)})$.
	\end{proof}

The concatenation of partitions is tailored such that the operator product  $\mathbb{W}(\mu,b) $ involving a partition $\mu =  \mu^{(1)} \cup \mu^{(2)} \preceq \lambda^{(q)}_N(b) $ with a renewal point at $ N_1 < N $ is multiplicative. This is proven in the following theorem, which also establishes the crucial factorization property of the geometry-independent part $ w_b(\mu ) $ of the coefficient in the expansion of fractional quantum Hall wavefunctions, cf.~\eqref{MPS with monomial symmetric_c}. 
\begin{theorem}\label{thm:factorizationw}
Let $ \mu= \mu^{(1)} \cup \mu^{(2)}  $ with $\mu^{(1)} \preceq \lambda^{(q)}_{N_1}\big(b^{(1)}\big)$ and $\mu^{(2)}\preceq \lambda^{(q)}_{N_2}\big(b^{(2)}\big)$ and partitions $ b^{(1)} \in \mathbb{N}_0^{N_1} $, $ b^{(2)} \in \mathbb{N}_0^{N_2} $. 
\begin{enumerate}
\item  For any permutation $ \tau \in \mathcal{S}_N $:
\begin{equation}\label{eq:nullity}
\langle 0 \vert \mathbb{W}(\mu,b_\tau) \vert 0 \rangle = 0  ,\quad 
\mbox{unless} \quad \sum_{j=1}^{N_1} b_{\tau(j)} =  \sum_{j=1}^{N_1} b_{j} = \vert b^{(1)} \vert .
\end{equation}
In the latter case, if $ \tau = \big(\tau^{(1)}, \tau^{(2)} \big) $ with $ \tau^{(1)} \in \mathcal{S}_{N_1} $ and $  \tau^{(2)} \in \mathcal{S}_{N_2} $, then 
\begin{equation}\label{eq:frstprodrule}
\langle 0 \vert \mathbb{W}(\mu,b_\tau) \vert 0 \rangle = \langle 0 \vert \mathbb{W}\big(\mu^{(2)} ,b^{(2)}_{\tau^{(2)}}\big) \vert 0 \rangle  \langle 0 \vert \mathbb{W}\big(\mu^{(1)} ,b^{(1)}_{\tau^{(1)}}\big) \vert 0 \rangle
\end{equation}
\item $ \displaystyle  w_b(\mu) = w_{b^{(2)}}\big(\mu^{(2)} \big) w_{b^{(1)}}\big(\mu^{(1)} \big) $, where $b = \big(b^{(1)},b^{(2)}\big) $.
\end{enumerate}
\end{theorem}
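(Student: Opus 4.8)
The plan is to split the operator product at the renewal point $N_1$ and read everything off from the momentum grading $\mathcal{H} = \bigoplus_M \mathcal{H}(M)$. Writing $\mathbb{W}(\mu, b_\tau) = A\,B$, where $B$ collects the rightmost $N_1$ factors (indices $j \le N_1$) and $A$ the remaining $N_2$ factors, the first observation is that $B$ is itself an $N_1$-fold product of the type~\eqref{def mathcal W}: since $\mu_j = \mu^{(1)}_j$ for $j \le N_1$, one has $B = \mathbb{W}(\mathbf{k}^{\mathrm{low}})$ with $k^{\mathrm{low}}_j = \mu^{(1)}_j - b_{\tau(j)}$, $j = 1, \dots, N_1$. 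I would then invoke the momentum grading from Theorem~\ref{vanishing condition}(2), which (via Lemma~\ref{prop:DW} and the convention $\mathcal{H}(M) = \{0\}$ for $M<0$) holds for arbitrary integer momenta: $B\vert 0 \rangle \in \mathcal{H}\big(\sum_{j=1}^{N_1}(k^{\mathrm{low}}_j - q(j-1))\big)$. Using the domination $\mu^{(1)} \preceq \lambda^{(q)}_{N_1}(b^{(1)})$, which forces $\sum_{j=1}^{N_1}\mu^{(1)}_j = qN_1(N_1-1)/2 + \vert b^{(1)}\vert$, this target momentum simplifies to $\vert b^{(1)}\vert - \sum_{j=1}^{N_1} b_{\tau(j)}$.

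For Part 1, the point is that this momentum is always $\le 0$: since $b$ is increasing, $\sum_{j=1}^{N_1} b_{\tau(j)}$ is minimized, and equals $\sum_{j=1}^{N_1} b_j = \vert b^{(1)}\vert$, precisely when $\{b_{\tau(1)}, \dots, b_{\tau(N_1)}\}$ is the multiset of the $N_1$ smallest values. If the momentum is strictly negative, then $B\vert 0 \rangle \in \mathcal{H}(M)$ with $M < 0$, i.e.\ $B\vert 0\rangle = 0$, whence $\langle 0 \vert \mathbb{W}(\mu, b_\tau) \vert 0 \rangle = \langle 0 \vert A\,B \vert 0 \rangle = 0$; this is the nullity~\eqref{eq:nullity}. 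If instead the momentum vanishes, then $B\vert 0 \rangle \in \mathcal{H}(0) = \operatorname{span}\{\vert 0 \rangle\}$, so $B\vert 0 \rangle = \langle 0 \vert B \vert 0 \rangle\,\vert 0 \rangle$ and the correlator cuts at the vacuum: $\langle 0 \vert \mathbb{W}(\mu, b_\tau) \vert 0 \rangle = \langle 0 \vert A \vert 0 \rangle\,\langle 0 \vert B \vert 0 \rangle$. It remains to identify the two factors. The factor $B = \mathbb{W}\big(\mu^{(1)}, (b_{\tau(1)}, \dots, b_{\tau(N_1)})\big)$ is immediate, and for block-diagonal $\tau$ this is $\mathbb{W}(\mu^{(1)}, b^{(1)}_{\tau^{(1)}})$. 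For $A$ I would substitute the concatenation formula~\eqref{definition union partition}, $\mu_{N_1+i} = qN_1 + b^{(1)}_{N_1} + \mu^{(2)}_i$, together with $b_{N_1+i} = b^{(1)}_{N_1} + b^{(2)}_i$ from~\eqref{eq:bconcat}; the index of the $(N_1+i)$-th factor then collapses to $\mu^{(2)}_i - q(i-1) - b^{(2)}_{\tau^{(2)}(i)}$, so that $A = \mathbb{W}(\mu^{(2)}, b^{(2)}_{\tau^{(2)}})$, yielding~\eqref{eq:frstprodrule}.

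For Part 2, I would first convert the permutation average in~\eqref{MPS with monomial symmetric_c} into a sum over distinct ordered rearrangements: writing $\mathcal{R}(b)$ for the set of distinct tuples obtained by rearranging the multiset $b$, each element of $\mathcal{R}(b)$ has exactly $M(b)!$ preimages under $\tau \mapsto b_\tau$, so $w_b(\mu) = \sum_{\beta \in \mathcal{R}(b)} \langle 0 \vert \mathbb{W}(\mu, \beta) \vert 0 \rangle$. The computation of Part 1 applies verbatim to any such tuple (not only to those coming from a block-diagonal $\tau$): a rearrangement contributes only if its first $N_1$ entries sum to $\vert b^{(1)}\vert$, in which case $\langle 0 \vert \mathbb{W}(\mu, \beta) \vert 0 \rangle = \langle 0 \vert \mathbb{W}(\mu^{(2)}, \tilde\beta^{(2)}) \vert 0 \rangle\,\langle 0 \vert \mathbb{W}(\mu^{(1)}, \beta^{(1)}) \vert 0 \rangle$, with $\beta^{(1)} = (\beta_1, \dots, \beta_{N_1})$ and $\tilde\beta^{(2)}_i = \beta_{N_1+i} - b^{(1)}_{N_1}$. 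The key combinatorial step is that $\beta \mapsto (\beta^{(1)}, \tilde\beta^{(2)})$ is a bijection from the rearrangements of $b$ obeying this sum constraint onto $\mathcal{R}(b^{(1)}) \times \mathcal{R}(b^{(2)})$ (the minimal-sum $N_1$-sub-multiset of $b$ is exactly the multiset of $b^{(1)}$, and the complementary block is $b^{(1)}_{N_1} + b^{(2)}$). The double sum then factors, and recognizing each factor via the same rearrangement identity as $w_{b^{(1)}}(\mu^{(1)})$ and $w_{b^{(2)}}(\mu^{(2)})$ completes the proof.

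The main obstacle is twofold. The conceptual heart of Part 1 is recognizing that both the vanishing and the factorization are encoded in a single non-negativity: the target momentum $\vert b^{(1)}\vert - \sum_{j\le N_1} b_{\tau(j)}$ of the truncated product is $\le 0$, with the vacuum cut occurring exactly at equality --- this is precisely where the renewal point enters. The technical obstacle sits in Part 2, namely the bookkeeping of the permutation sum when $b$ has repeated entries (in particular ties straddling the cut, $b_{N_1} = b_{N_1+1}$); passing to the distinct rearrangements $\mathcal{R}(b)$ and establishing the product bijection onto $\mathcal{R}(b^{(1)}) \times \mathcal{R}(b^{(2)})$ is what makes the counting unambiguous and prevents over- or under-counting the contributing permutations.
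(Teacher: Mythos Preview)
Your proof is correct and, for Part~1, follows essentially the same line as the paper: split the product at the renewal point, use the momentum grading to see that the low block lands in $\mathcal{H}(M)$ with $M = \vert b^{(1)}\vert - \sum_{j\le N_1} b_{\tau(j)} \le 0$, and conclude either vanishing or a vacuum cut. The identification of the two factors via the concatenation formulas is the same computation the paper carries out.

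For Part~2 you take a somewhat different route. The paper works directly with the sum over $\tau \in \mathcal{S}_N$ and splits into two cases according to whether $b_{N_1} < b_{N_1+1}$ or $b_{N_1} = b_{N_1+1}$; in the first case $\tau$ must be block-diagonal and $M(b)! = M(b^{(1)})!\,M(b^{(2)})!$ trivially, while in the tied case the paper tracks the extra binomial factor $\binom{m}{m^{(1)}}$ coming from the $m$ repetitions of $b_{N_1}$ and matches it against the overcounting of permutations. Your passage to the set $\mathcal{R}(b)$ of distinct ordered rearrangements absorbs the $M(b)!$ normalization from the outset and unifies both cases: the factorization formula from Part~1 holds for any rearrangement $\beta$ with the sum constraint, and the bijection $\beta \mapsto (\beta^{(1)}, \tilde\beta^{(2)})$ onto $\mathcal{R}(b^{(1)}) \times \mathcal{R}(b^{(2)})$ does the counting cleanly without a case split. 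This is arguably tidier than the paper's argument, though the underlying content is the same.
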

\begin{proof}
1.~If $ \mu^{(1)}\preceq \lambda^{(q)}_{N_1}\big(b^{(1)}\big)$ and $b = \big(b^{(1)},b^{(2)}\big) $, then 
$$
0 = \sum_{j=1}^{N_1} \left(  \mu_j^{(1)} - q(j-1) -b_j\right) \geq \sum_{j=1}^{N_1} \left(  \mu_j^{(1)} - q(j-1) -b_{\tau(j)} \right) .
$$ 
The inequality holds for all $   \tau \in \mathcal{S}_N  $, because $ j \mapsto b_j $ is non-decreasing.  In case that inequality is strict, then $ W_{\mu_{N_1}-q(N_1-1)-b_{\tau(N_1)}} \dots W_{\lambda_2 - q-b_{\tau(2)} }W_{\mu_1-b_{\tau(1)}} \vert 0\rangle= 0 $ by Theorem~\ref{vanishing condition} $(4)$. This finishes the proof of~\eqref{eq:nullity}. 

For a proof of the remaining assertion, we again use Theorem~\ref{vanishing condition} $(2)$, which also guarantees that $ \sum_{j=1}^{N_1} b_{\tau(j)} =  \vert b^{(1)} \vert $ implies
$$
 W_{\mu_{N_1}-q(N_1-1)-b_{\tau(N_1)}} \dots W_{\lambda_2 - q-b_{\tau(2)}} W_{\mu_1-b_{\tau(1)}} \vert 0 \rangle     =  \vert 0 \rangle \  \langle 0 \vert W_{\mu_{N_1}-q(N_1-1)-b_{\tau(N_1)}} \dots W_{\lambda_2 - q-b_{\tau(2)}} W_{\mu_1-b_{\tau(1)}} \vert 0 \rangle .
$$
This finishes the proof of~\eqref{eq:frstprodrule}. \\
\noindent
2.~According to the first assertion, the non-zero terms in the sum
\begin{equation}\label{eq:prooffact}
 w_b(\mu) = \frac{1}{M(b)!} \sum_{\tau \in  \mathcal{S}_{N}} \langle 0 \vert \mathbb{W}(\mu,b_\tau) \vert 0 \rangle
\end{equation}
satisfy $ \sum_{j=1}^{N_1} b_{\tau(j)} =  \vert b^{(1)} \vert $. In case $ b_{N_1} < b_{N_1+1} $ this implies that $ \tau\{1,\dots, N_1\} = \{1,\dots, N_1\}  $, i.e.\ $ \tau = \big(\tau^{(1)}, \tau^{(2)} \big) $ with $ \tau^{(1)} \in \mathcal{S}_{N_1} $ and $  \tau^{(2)} \in \mathcal{S}_{N_2} $. In this case, the factorization $ M(b)! = M(b^{(1)})! \ M(b^{(2)})! $ of the occupation numbers hence implies the claim. 
In case $ b_{N_1} = b_{N_1+1} $, the number $ m(b,b_{N_1}) $ of repetitions of $ b_{N_1} $ is larger than one. It is additive among the two parts $  \big(b^{(1)},b^{(2)}\big) $, which compose $ b $, i,e,
$$
m(b,b_{N_1}) =:  m  = m^{(1)} + m^{(2)} \quad \mbox{with} \quad m^{(1)} \coloneqq m(b^{(1)},b_{N_1}) , \;  m^{(2)} \coloneqq m(b^{(2)}, b_{N_1+1}  - b_{N_1}) . 
$$
Since there are exactly $ \binom{m}{m^{(1)}} $ ways of choosing $ m^{(1)} $ elements out of a set of $ m $ elements, the binomial factor in 
$$ m! =   \binom{m}{m^{(1)}}\ m^{(1)} ! \  m^{(2)} ! $$ 
included in the prefactor in~\eqref{eq:prooffact} exactly matches the overcounting of permutations in that sum over non-zero terms compared to the case that $ \tau =  \big(\tau^{(1)}, \tau^{(2)} \big) $ with $ \tau^{(1)} \in \mathcal{S}_{N_1} $ and $  \tau^{(2)} \in \mathcal{S}_{N_2} $. In the latter case, one may use the factorization~\eqref{eq:frstprodrule}, which completes the proof. 
\end{proof}

For the decomposition of the fractional quantum Hall wavefunctions into sums of products of irreducible parts, in addition to the above factorization result of the geometry-independent part $ w_b $ of the expansion coefficients,  we also need to ensure that their geometric part factorizes. 
In the cylinder geometry, this is equivalent to the additivity of the quantity $ \Delta_b $ from~\eqref{definition Delta} under concatenation.  Moreover, as has already been
observed in~\cite{jansen2009symmetry}, the combinatorial weight of irreducible partitions is extensive. 
		\begin{lemma}\label{lem:Deltafact}
		\begin{enumerate}
		\item 
			If $\mu^{(1)} \preceq \lambda^{(q)}_{N_1}(b^{(1)}) $ and $\mu^{(2)}\preceq \lambda^{(q)}_{N_2}(b^{(2)}) $ with partitions $ b^{(1)} \in \mathbb{N}_0^{N_1} $, $ b^{(2)} \in \mathbb{N}_0^{N_2} $, then $\mu^{(1)} \cup \mu^{(2)}\preceq \lambda^{(q)}_{N_1+N_2}(b)$ with $ b = \big(b^{(1)},b^{(2)}\big) $ from~\eqref{eq:bconcat}, and we have
			\begin{equation} \label{Delta and factorization}
				\Delta_b(\mu^{(1)}\cup \mu^{(2)}) = \Delta_{b^{(1)}}(\mu^{(1)}) + \Delta_{b^{(2)}}(\mu^{(2)}).
			\end{equation}
		\item
			Let $\mu \preceq \lambda^{(q)}_N(b)$ be irreducible, then 
			\begin{equation} \label{lower bound Delta irreducible}
				\Delta_b(\mu) \geq q(N-1) + b_N - b_1 =: D_{b,N} .
			\end{equation}
		\end{enumerate}
		\end{lemma}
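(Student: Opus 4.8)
The plan is to treat the two parts separately, with Part~1 being essentially bookkeeping and Part~2 carrying the genuine content.

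For Part~1, I would first invoke the identity $\lambda^{(q)}_{N_1}(b^{(1)}) \cup \lambda^{(q)}_{N_2}(b^{(2)}) = \lambda^{(q)}_{N_1+N_2}(b)$ recorded after Definition~\ref{def:composition}, so that $\nu := \mu^{(1)} \cup \mu^{(2)}$ and $\lambda := \lambda^{(q)}_N(b)$ (with $N = N_1+N_2$) have equal total, $|\nu| = |\lambda|$, since $|\mu^{(i)}| = |\lambda^{(q)}_{N_i}(b^{(i)})|$ by dominance. To get $\nu \preceq \lambda$ I would check the tail-sum inequalities $\sum_{j=k}^N \nu_j \leq \sum_{j=k}^N \lambda_j$ by splitting at the junction $N_1$: writing $c := qN_1 + b^{(1)}_{N_1}$, the second-block entries of both are $c$ plus the corresponding entries of $\mu^{(2)}$ resp.\ $\lambda^{(q)}_{N_2}(b^{(2)})$ (using $b_{N_1+m} = b^{(1)}_{N_1}+b^{(2)}_m$), so for $k>N_1$ the inequality reduces directly to $\mu^{(2)} \preceq \lambda^{(q)}_{N_2}(b^{(2)})$, and for $k\leq N_1$ one simply adds the block-two inequality to $\sum_{j=k}^{N_1}\mu^{(1)}_j \leq \sum_{j=k}^{N_1}\lambda^{(q)}_{N_1}(b^{(1)})_j$. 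For the additivity~\eqref{Delta and factorization} I would split $\Delta_b(\nu)=\sum_{j=1}^N(\lambda_j^2-\nu_j^2)$ at $N_1$: the first block is exactly $\Delta_{b^{(1)}}(\mu^{(1)})$, while in the second block the shift by $c$ gives, via $(c+a)^2-(c+b)^2 = (a^2-b^2)+2c(a-b)$, the term $\Delta_{b^{(2)}}(\mu^{(2)})$ plus a cross term $2c\sum_m(\lambda^{(q)}_{N_2}(b^{(2)})_m-\mu^{(2)}_m)$, which vanishes because $\mu^{(2)}$ and $\lambda^{(q)}_{N_2}(b^{(2)})$ partition the same integer.

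For Part~2 the key is to convert irreducibility into a quantitative gap. By Lemma~\ref{lm:splitting}, $\mu$ irreducible means $\sum_{j=1}^s\mu_j \neq \sum_{j=1}^s\lambda_j$ for every $s\in\{1,\dots,N-1\}$; since $\mu\preceq\lambda$ forces the head sums of $\mu$ to dominate those of $\lambda$, these are strict, and by integrality $\sum_{j=1}^s(\lambda_j-\mu_j)\leq -1$ for all such $s$. I would then reuse the summation-by-parts device of Lemma~\ref{lem:GammaDelta}: setting $c_j := \lambda_j-\mu_j$ (so $\sum_j c_j = 0$) and $e_j := \lambda_j+\mu_j$, formula~\eqref{integration by parts} with $d_j = e_j$, $d_0 = 0$ gives $\Delta_b(\mu) = \sum_j c_j e_j = \sum_{k=2}^N s_k(e_k-e_{k-1})$, where $s_k = \sum_{j=k}^N c_j = -\sum_{j=1}^{k-1}c_j \geq 1$ for $k\in\{2,\dots,N\}$ and $s_1 = 0$ drops out. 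Because $e_k-e_{k-1} = (\lambda_k-\lambda_{k-1})+(\mu_k-\mu_{k-1}) \geq q > 0$ (the consecutive gaps of $\lambda = \lambda^{(q)}_N(b)$ are $q+(b_k-b_{k-1})\geq q$ and $\mu$ is increasing), I can bound each $s_k \geq 1$ and telescope: $\Delta_b(\mu) \geq \sum_{k=2}^N(e_k-e_{k-1}) = e_N-e_1 = (\lambda_N-\lambda_1)+(\mu_N-\mu_1) \geq \lambda_N-\lambda_1 = q(N-1)+b_N-b_1$, using $\mu_N\geq\mu_1$.

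The main obstacle is the insight in Part~2: recognizing that irreducibility — a purely combinatorial renewal-point statement — upgrades, through integrality of the partial sums, to the \emph{uniform} lower bound $s_k\geq 1$ on the tail sums, which is precisely what makes the summation-by-parts sum collapse by telescoping to the clean quantity $e_N - e_1$. Part~1 is routine once the concatenation/root identity is in hand, with the cross-term cancellation depending only on the equal-total property of dominated partitions.
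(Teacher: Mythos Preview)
Your proposal is correct and follows essentially the same route as the paper. In Part~1 both you and the paper split at the junction $N_1$ and observe the cross term $2c\sum_m(\lambda^{(q)}_{N_2}(b^{(2)})_m-\mu^{(2)}_m)$ vanishes by the equal-total property; in Part~2 the paper introduces $\delta_s=\sum_{j=1}^s(\mu_j-\lambda_j)$ and Abel-sums directly, which is exactly your $s_{k}=\delta_{k-1}$ summation-by-parts computation in different indexing, with the same telescoping to $e_N-e_1$ after using $\delta_j\geq 1$ from irreducibility.
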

		\begin{proof}
			1.~The first claim follows from a straightforward computation using the definition of dominance order.		
			For \eqref{Delta and factorization} we compute
			\begin{align*}
				\Delta_b(\mu^{(1)}\cup \mu^{(2)})
				=  & \sum_{j=1}^{N_1+N_2} \left((q (j-1) + b_j )^2 - (\mu^{(1)}\cup \mu^{(2)})_j^2\right) \\
				=  &\sum_{j=1}^{N_1} \left((q (j-1) + b_j )^2 - (\mu^{(1)}_j)^2\right) \\
				& + \sum_{j=N_1+1}^{N_2} \left(q (j-1) + b_{N_1}^{(1)} + b_{j-N_1}^{(2)} )^2-(qN_1+b_{N_1}^{(1)} +\mu^{(2)}_{j-N_1} )^2\right) \\
				= \ &  \Delta_{b^{(1)}}(\mu^{(1)}) + \sum_{j=1}^{N_2} \left((q(N_1+j-1 )+b_{N_1}^{(1)}+ b_{j}^{(2)} )^2 -(qN_1+b_{N_1}^{(1)}+\mu^{(2)}_{j} )^2\right) \\
			= \ &  \Delta_{b^{(1)}}(\mu^{(1)}) +  \Delta_{b^{(2)}}(\mu^{(2)}) +  2( q N_1 + b_{N_1}^{(1)} ) \sum_{j=1}^{N_2} \left( q(j-1) +b_{j}^{(2)}  -\mu_j^{(2)}\right) \\
				= &  \Delta_{b^{(1)}}(\mu^{(1)}) +  \Delta_{b^{(2)}}(\mu^{(2)}) ,
			\end{align*}
		which completes the proof of the first item.\\
		\noindent
		2. In case $ b = 0 $, this has already been established within the proof of~\cite[Lemma 5]{jansen2009symmetry} (based on an argument from \cite{di1994laughlin}), which we repeat here for the general case.
		If $\mu \preceq \lambda^{(q)}_N(b)$, then $ \delta_s \coloneqq \sum_{j=1}^s \left( \mu_j - q (j-1) - b_j\right) \geq 0 $ for all $ s \in \{1, \dots , N \} $. Moreover, since $ \mu $ is irreducible, only $ \delta_N = 0 $ and $ \delta_s \geq 1 $ for all $ 1 \leq s \leq N-1 $. Setting $ \delta_0 \coloneqq 0 $, we may write $ \mu_j = q (j-1) + b_j + \delta_j - \delta_{j-1} $. Inserting this expression into the definition of $ \Delta_b(\mu)  $, we conclude
		\begin{align*}
		 \Delta_b(\mu) & = \sum_{j=1}^N \left( q(j-1) + b_j +\mu_j \right) \left( \delta_{j-1} - \delta_j \right) \\
		 & = \sum_{j=1}^{N-1}  \left( q + b_{j+1} - b_j  +\mu_{j+1}-\mu_j \right) \delta_j + \left( q (N-1) + b_N + \mu_N \right) \delta_N \\
		 & \geq q (N-1) + b_N - b_1 ,
		\end{align*} 
		where the last step used $\delta_N = 0 $ and $ \delta_j \geq 1 $ for all $ 1 \leq j \leq N-1 $. 
		\end{proof}

\subsection{Estimating irreducible contributions}

The following is the key estimate of the geometry-independent coefficient of fractional quantum Hall wavefunctions. 
\begin{lemma}
	For any root $ (N,b) $ and all partitions $ \lambda\preceq \lambda^{(q)}_N(b)$:
	\begin{equation}\label{eq:wbest}
		\left| w_b(\lambda) \right| \leq \frac{ \exp\left( (4q+1) \Gamma_b( \lambda) \right) }{\left(1- \exp\left[-(4q+1)\right] \right)^{N-1}} . 
	\end{equation}
\end{lemma}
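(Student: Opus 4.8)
The plan is to start from the first representation of $w_b(\lambda)$ in~\eqref{MPS with monomial symmetric_c}, $w_b(\lambda)=\frac{1}{M(b)!}\sum_{\tau\in\mathcal{S}_N}\langle 0|\mathbb{W}(\lambda,b_\tau)|0\rangle$, and to control each matrix element by the key estimate of Corollary~\ref{cor:keyest}. Observe that $\mathbb{W}(\lambda,b_\tau)$ is exactly the product $\mathbb{W}(\mathbf{k})$ of~\eqref{def mathcal W} with momenta $k_j=\lambda_j-b_{\tau(j)}$. For any sequence $\nu\in\mathbb{Z}^N$ write $\Gamma^{(\nu)}(\lambda)\coloneqq\sum_{j=1}^N(N+\tfrac12-j)(\lambda_j-q(j-1)-\nu_j)$, so that $\Gamma^{(b)}(\lambda)=\Gamma_b(\lambda)$. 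Then Cauchy--Schwarz (using $\||0\rangle\|=1$) together with Corollary~\ref{cor:keyest} yields $|\langle 0|\mathbb{W}(\lambda,b_\tau)|0\rangle|\le\exp\!\big((4q+1)\,\Gamma^{(b_\tau)}(\lambda)\big)$ whenever $\mathbf{k}\in\mathbb{N}_0^N$; for the remaining $\tau$ the matrix element vanishes by Theorem~\ref{vanishing condition}~(4), so the bound holds trivially in all cases.

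The crucial bookkeeping step is to avoid the crude factor $N!$ from the number of permutations. First I would collapse the prefactor $1/M(b)!$: since the summand depends on $\tau$ only through the sequence $b_\tau$, and exactly $M(b)!$ permutations produce each distinct rearrangement $\beta$ of the multiset $\{b_1,\dots,b_N\}$, one obtains $w_b(\lambda)=\sum_{\beta}\langle 0|\mathbb{W}(\lambda,\beta)|0\rangle$, the sum now running over distinct rearrangements $\beta$ only. Hence $|w_b(\lambda)|\le\sum_{\beta}\exp\!\big((4q+1)\Gamma^{(\beta)}(\lambda)\big)$.

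Next I would rewrite the exponent. Setting $c_j=N+\tfrac12-j$ (so $c_j-c_{j+1}=1$) and $a_j=\beta_j-b_j$, summation by parts converts $\Gamma^{(\beta)}(\lambda)-\Gamma_b(\lambda)=-\sum_{j=1}^N c_j a_j$ into $-\sum_{i=1}^{N-1}\big(B_\beta(i)-\underline B(i)\big)$, where $B_\beta(i)=\sum_{j\le i}\beta_j$ and $\underline B(i)=\sum_{j\le i}b_j$; the boundary terms drop since $B_\beta(N)=\underline B(N)=|b|$. Because $b$ is increasing, $\underline B(i)$ is the minimal sum of any $i$ entries of the multiset, so each $n_i\coloneqq B_\beta(i)-\underline B(i)$ is a non-negative integer. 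Collecting everything, with $x\coloneqq e^{-(4q+1)}\in(0,1)$,
\[
|w_b(\lambda)|\le\exp\!\big((4q+1)\Gamma_b(\lambda)\big)\sum_{\beta}\prod_{i=1}^{N-1}x^{\,n_i}.
\]

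The final and only non-routine step is the entropy bound $\sum_{\beta}\prod_{i=1}^{N-1}x^{n_i}\le(1-x)^{-(N-1)}$. The key observation is that the map $\beta\mapsto\big(B_\beta(1),\dots,B_\beta(N-1)\big)$, equivalently $\beta\mapsto(n_1,\dots,n_{N-1})$, is injective on rearrangements, since the partial sums recover $\beta_i=B_\beta(i)-B_\beta(i-1)$ and $\beta_N=|b|-B_\beta(N-1)$. Thus the distinct rearrangements index a subset of $\mathbb{N}_0^{N-1}$ through $(n_i)_i$, and $\sum_{\beta}\prod_i x^{n_i}\le\sum_{(n_1,\dots,n_{N-1})\in\mathbb{N}_0^{N-1}}\prod_i x^{n_i}=\prod_{i=1}^{N-1}\sum_{n\ge0}x^n=(1-x)^{-(N-1)}$, which is precisely~\eqref{eq:wbest}. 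I expect this injectivity/entropy estimate to be the main obstacle: the naive summation over $\mathcal{S}_N$ produces an uncontrolled $N!$, and it is exactly the one-to-one correspondence between rearrangements and their partial-sum increments that decouples the sum into a product of independent geometric series.
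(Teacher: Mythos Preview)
Your proof is correct and reaches the same bound, but by a genuinely different route from the paper. The paper keeps the full sum over $\tau\in\mathcal{S}_N$ together with the prefactor $1/M(b)!$ and proves the recursive inequality
\[
\frac{1}{M(b_1,\dots,b_{N+1})!}\sum_{\tau\in\mathcal{S}_{N+1}}e^{\mathcal{S}_{N+1}(\tau)}
\;\le\;\frac{1}{1-e^{-(4q+1)}}\,\frac{1}{M(b_1,\dots,b_N)!}\sum_{\sigma\in\mathcal{S}_N}e^{\mathcal{S}_N(\sigma)},
\]
by peeling off the position $k$ with $\tau(k)=N+1$, bounding the resulting one-variable sum by a geometric series, and letting the multiplicity $|\{k:b_k=b_{N+1}\}|$ cancel against the ratio of the $M!$ factors. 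You instead collapse the $M(b)!$ at the outset by passing to distinct rearrangements~$\beta$, then use Abel summation to express the exponent as $-\sum_{i=1}^{N-1}n_i$ with $n_i=B_\beta(i)-\underline{B}(i)\ge0$, and finish via the injection $\beta\mapsto(n_1,\dots,n_{N-1})\in\mathbb{N}_0^{N-1}$. Your argument is more direct and makes the combinatorial mechanism (rearrangements embedded into their partial-sum increments) explicit; the paper's inductive scheme is more computational but would adapt more easily if one wanted to track how the constant depends on the fine structure of~$b$. Both ultimately rely on the same two ingredients: Corollary~\ref{cor:keyest} for the single-term bound, and the rearrangement inequality $\sum_{j\le i}\beta_j\ge\sum_{j\le i}b_j$ for the entropy control.
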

\begin{proof}
	We combine the key estimate from Corollary~\ref{cor:keyest} with the representation~\eqref{eq:prooffact} to obtain
	\begin{equation}
		\left| w_b(\lambda) \right| \leq  \frac{1}{M(b)!} \sum_{\tau \in  \mathcal{S}_{N}} 
		\exp\left( (4q+1) \Gamma_{b_\tau}(\lambda) \right) ,
	\end{equation}
	where $b_\tau  = (b_{\tau(1)} , \cdots , b_{\tau(N)}) $. To estimate the sum, we isolate the contribution involving the permutation $ \tau $:
	$$
	\Gamma_{b_\tau}(\lambda ) =  \Gamma_b(\lambda) + \sum_{j=1}^N j \left( b_{\tau(j)} - b_j \right) .
	$$
	In the following, we abbreviate
	\begin{align*}
		\mathcal{S}_M(\tau) \coloneqq (4q+1)\sum_{j=1}^{M} j (b_{\tau(j)}-b_j)
	\end{align*}
	and define $M_j = M(b_1, \dots, b_j)$. Our goal is to prove for $ N \in \mathbb{N} $ the recursive bound 
	\begin{equation} \label{recursion}
		\frac{1}{M_{N+1}!} \sum_{\tau\in S_{N+1}} \exp\left(\mathcal{S}_{N+1}(\tau)\right) 
		\leq \frac{1}{1-e^{-(4q+1)}} \frac{1}{M_{N}!} \sum_{\sigma\in S_N} \exp\left(\mathcal{S}_N(\sigma)\right).
	\end{equation}
	For a proof, we first note that we can rewrite the left side:
	\begin{align*}
		\frac{1}{M_{N+1}!} \sum_{\tau\in S_{N+1}} \exp\left(\mathcal{S}_{N+1}(\tau)\right) 
		= 	\frac{1}{M_{N+1}!} \sum_{k=1}^{N+1} \sum_{\substack{\tau\in S_{N+1} \\ \tau(k)=N+1}} \exp\left(\mathcal{S}_{N+1}(\tau)\right).
	\end{align*}
	In order to reduce to $S_N$ we want to use the change of variables $Q_k: S_{N+1} \rightarrow S_{N+1}, \tau \mapsto \tau \circ (N+1,k)$, where $(N+1,k)$ denotes the transposition which swaps $N+1$ and $k$ and fixes everything else. If  $\tau\in S_{N+1}$ has the property $\tau(k)=N+1$, then $Q_k(\tau)(N+1)=N+1$, which is effectively a permutation in $S_N$. To make use of the change of variable $Q_k$, we need to express everything in terms of $Q_k(\tau)$. 

	We write
	\begin{align*}
		\sum_{j=1}^{N+1} j (b_{\tau(j)}-b_j)
		= \sum_{j=1}^N j (b_{Q_k(\tau)(j)}-b_j) -(N+1-k) (b_{N+1}-b_{Q_k(\tau)(k)})
	\end{align*}
	and change variables in
	\begin{align*}
		&\frac{1}{M_{N+1}!} \sum_{\tau\in S_{N+1}} \exp\left(\mathcal{S}_{N+1}(\tau)\right) \\
		&=\frac{1}{M_{N+1}!} \sum_{k=1}^{N+1} \sum_{\substack{\tau\in S_{N+1} \\ \tau(N+1)=N+1}} \exp\left(\mathcal{S}_{N+1}(\tau)\right) \exp\left(-(4q+1)(N+1-k) (b_{N+1}-b_{\tau(k)})\right) \\
		&=\frac{1}{M_{N+1}!} \sum_{\sigma\in S_N} \exp\left(\mathcal{S}_N(\sigma)\right)\left( \sum_{k=1}^{N+1} \exp\left(-(4q+1)(N+1-k)(b_{N+1}-b_k)\right)\right).
	\end{align*}
	Abbreviating the set $[N+1]\coloneqq \{k\in \{1, \dots, N+1\} \ : \ b_k=b_{N+1}\}$, 
	 there are exactly $\vert [N+1]\vert$ elements $k\in \{1, \dots, N+1\}$ for which $b_k= b_{N+1}$ and for the remaining $k$ we have $b_{N+1}-b_k\geq 1$. Thus, we may bound 
	\begin{align*}
		\sum_{k=1}^{N+1} \exp\left(-(4q+1)(N+1-k)(b_{N+1}-b_k)\right)
		\leq \vert [N+1]\vert + \sum_{k=1}^{N+1} e^{-(4q+1)k}
		\leq  \frac{\vert [N+1]\vert}{1-e^{-(4q+1)}}.
	\end{align*}
	Using $\vert [N+1]\vert/(M_{N+1}!)=1/(M_N!)$, we arrive at \eqref{recursion}, which finishes the proof.
\end{proof} 

In the cylinder geometry, the above estimate allows us to show that the contribution of the irreducible partitions to the norm of the fractional quantum Hall wavefunctions decays exponentially for sufficiently small cylinders. The decay rate 
$
C_q(\gamma)  = (\gamma^2 - 2(4q+1)) -2   q^{-1} \ln  (1- e^{-(4q+1)} )  -  \pi \sqrt{\frac{2}{3}}
$ from \eqref{def:Cgamma}
 is strictly positive for sufficiently large $ \gamma $. 
\begin{cor}
	For any root $ (N,b) $:
	\begin{equation}\label{eq:normirrep}
		\big\| \widehat \Psi_{b,N} \big\|^2   = \sum_{\substack{ \lambda \preceq \lambda_N^{(q)}(b)  \\ \lambda \; \irr }} \frac{\big| g_b^{(c)}(\lambda) \ w_b(\lambda) \big|^2}{M(\lambda)!} \leq \exp\left( - C_q(\gamma) D_{b,N}  \right) .
	\end{equation}
	Moreover,  $ \alpha_1(b) = 1 $ for any $ b \in \mathbb{N}_0 $. 
\end{cor}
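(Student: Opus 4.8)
The plan is to read off the first equality from orthonormality: the Slater determinants/permanents $\Phi^{(c)}_\lambda$ form an orthonormal family, so expanding $\widehat\Psi_{b,N}$ as in \eqref{eq:irredexp} and inserting $h_b(\lambda)=g_b^{(c)}(\lambda)\,w_b(\lambda)/\sqrt{M(\lambda)!}$ from Corollary~\ref{cor:cyl} yields at once $\big\|\widehat\Psi_{b,N}\big\|^2=\sum_{\lambda}|g_b^{(c)}(\lambda)\,w_b(\lambda)|^2/M(\lambda)!$ over the irreducible $\lambda\preceq\lambda_N^{(q)}(b)$. The whole content is then to bound this sum. First I would insert the two available factor estimates: $|g_b^{(c)}(\lambda)|^2=\exp(-\gamma^2\Delta_b(\lambda))$ from Corollary~\ref{cor:cyl}, and, from \eqref{eq:wbest}, $|w_b(\lambda)|^2\le\exp(2(4q+1)\Gamma_b(\lambda))\,(1-e^{-(4q+1)})^{-2(N-1)}$. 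Combining these with the comparison $\Gamma_b(\lambda)\le\Delta_b(\lambda)$ of Lemma~\ref{lem:GammaDelta} collapses the two exponents into a single decaying one,
\[
\frac{|g_b^{(c)}(\lambda)\, w_b(\lambda)|^2}{M(\lambda)!} \le \frac{\exp\!\left(-(\gamma^2-2(4q+1))\,\Delta_b(\lambda)\right)}{\left(1-e^{-(4q+1)}\right)^{2(N-1)}},
\]
where I also drop $M(\lambda)!\ge 1$.

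Next, because every $\lambda$ in the sum is irreducible, Lemma~\ref{lem:Deltafact}(2) supplies the extensive lower bound $\Delta_b(\lambda)\ge D_{b,N}=q(N-1)+b_N-b_1$. Writing $\beta=\gamma^2-2(4q+1)$, I would organize the residual sum by the value $n=\Delta_b(\lambda)\ge D_{b,N}$ and control the number of irreducible partitions at each level. Since $\Delta_b$ is strictly monotone under squeezing and every partition dominated by the root is obtained from it by admissible squeezings (Proposition~\ref{lm:characterization dominance order}), while such partitions occupy only the orbital window of width $D_{b,N}$, the level count can be bounded by the partition function, to which the Erd\H{o}s bound $p(n)\le\exp(\pi\sqrt{2n/3})$ from \eqref{partition function pointwise bound} applies. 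Using $\sqrt n\le n$ for $n\ge 1$, the level sum then becomes geometric, $\sum_{n\ge D_{b,N}}\exp(\pi\sqrt{2n/3}-\beta n)\le\sum_{n\ge D_{b,N}}\exp(-(\beta-\pi\sqrt{2/3})\,n)$, which converges once $\gamma$ is large.

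It then remains to assemble the exponential rates into exactly $C_q(\gamma)$ of \eqref{def:Cgamma}. The bare decay contributes $\beta=\gamma^2-2(4q+1)$, the partition entropy contributes $-\pi\sqrt{2/3}$, and the prefactor $(1-e^{-(4q+1)})^{-2(N-1)}$ is meant to be folded into the rate through $N-1\le q^{-1}D_{b,N}$, turning it into a $\pm 2q^{-1}\ln(1-e^{-(4q+1)})$ contribution per unit of $D_{b,N}$; this is the origin of the logarithmic term in $C_q(\gamma)$. For the final assertion, the case $N=1$ is degenerate: the only partition dominated by $\lambda_1^{(q)}(b)=(b)$ is $(b)$ itself, which is irreducible since the renewal interval $[1,0]$ is empty, and $\Delta_b((b))=0$, $M((b))!=1$, $w_b((b))=1$ by the normalization in Theorem~\ref{thm:root states}; hence $\alpha_1(b)=\big\|\widehat\Psi_{b,1}\big\|^2=1$, in agreement with $D_{b,1}=0$ and $\exp(-C_q(\gamma)\cdot 0)=1$.

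The hard part will be the entropy step together with the precise constant bookkeeping. A naive level count by $p(n)$ alone is not obviously sharp enough: it leaves behind a geometric-series factor $(1-e^{-(\beta-\pi\sqrt{2/3})})^{-1}$ and the growing factor $(1-e^{-(4q+1)})^{-2(N-1)}$, and one must verify that folding the latter into the rate via $N-1\le q^{-1}D_{b,N}$, combined with the slack in $\sqrt n\le n$ and in $\Gamma_b\le\Delta_b$ (which is genuine only for $q\ge 2$), indeed reproduces the stated $C_q(\gamma)$ with the correct sign and coefficient of the logarithmic term rather than merely a constant multiple of the clean exponential. Getting a count of irreducible partitions at a fixed $\Delta_b$-level that is tight enough to remove these residual constants, and then matching all rates to \eqref{def:Cgamma} without a spurious prefactor, is where the real work lies.
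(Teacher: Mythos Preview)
Your ingredients are all correct --- the orthonormality for the first equality, the bound \eqref{eq:wbest} on $|w_b|$, the comparison $\Gamma_b\le\Delta_b$ from Lemma~\ref{lem:GammaDelta}, the extensive lower bound $\Delta_b(\lambda)\ge D_{b,N}$ for irreducible $\lambda$ from Lemma~\ref{lem:Deltafact}(2), and the Erd\H{o}s partition bound. The $N=1$ case is also handled correctly.

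The difference to the paper is in how the entropy is controlled, and here your route is harder than necessary. You propose to stratify by $n=\Delta_b(\lambda)$, bound the number of irreducible $\lambda$ at each level by $p(n)$, and then sum a geometric series. Two issues arise: first, the claim that the number of partitions with a prescribed value of $\Delta_b(\lambda)=n$ is bounded by $p(n)$ is not obvious --- $\Delta_b$ is a quadratic functional, not the size of a partition --- and you do not justify it; second, as you yourself note, the geometric tail leaves a residual factor $(1-e^{-(\beta-\pi\sqrt{2/3})})^{-1}$ that has to be absorbed somewhere, and reconciling this with the exact form of $C_q(\gamma)$ is awkward.

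The paper bypasses both problems by \emph{not} stratifying. After the step you already have,
\[
\big\|\widehat\Psi_{b,N}\big\|^2 \le \sum_{\substack{\lambda\preceq\lambda_N^{(q)}(b)\\ \lambda\ \text{irr}}}\exp\!\big(-(\gamma^2-2(4q+1))\,\Delta_b(\lambda)\big)\,(1-e^{-(4q+1)})^{-2(N-1)},
\]
it simply replaces $\Delta_b(\lambda)$ by its uniform minimum $D_{b,N}$ over irreducible $\lambda$ and pulls the exponential out of the sum. What remains is the \emph{total} count $\#\{\lambda\preceq\lambda_N^{(q)}(b)\}$, bounded once by $p(|\lambda_N^{(q)}(b)|-Nb_1)$ via a single application of \eqref{partition function pointwise bound}. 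Some elementary inequalities ($\sqrt{a+b}\le\sqrt a+\sqrt b$, $2ab\le a^2+b^2$, $N(N-1)/2\le (N-1)^2$) then reduce this count to $\exp(\pi\sqrt{2/3}\,D_{b,N})$. With the fold-in $N-1\le q^{-1}D_{b,N}$ that you already identified for the $(1-e^{-(4q+1)})^{-2(N-1)}$ factor, the three rates assemble directly into $C_q(\gamma)$ with no leftover constants. So the fix is simply to drop the level decomposition and count all dominated partitions at once.
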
 
\begin{proof}
	We use \eqref{eq:wbest} with the abbreviation $ c_q \coloneqq - 2\ln  (1- e^{-(4q+1)} ) > 0 $ together with~\eqref{Gamma bounded by Delta} and the trival bound $ M(\lambda)! \geq 1 $ to estimate
	\begin{align*}
		\big\| \widehat \Psi_{b,N} \big\|^2  & \leq  \sum_{\substack{ \lambda \preceq \lambda_N^{(q)}(b)  \\ \lambda \; \irr } }  \exp\left( - \left( \gamma^2 - 2(4q+1) \right) \Delta_b(\lambda)  + (N-1) c_q  \right) \\
		& \leq   \sum_{\lambda \preceq \lambda_N^{(q)}(b) }  \exp\left( - \left( \gamma^2 - 2(4q+1) \right)\left( q(N-1) + b_N - b_1\right)) + (N-1) c_q  \right)  . 
	\end{align*}
	The last line is by~\eqref{lower bound Delta irreducible}. 
	To estimate the number of partitions  $\lambda\preceq \lambda_N^{(q)}(N)$, we first note that the number $ b_1 $ can be ignored. This is most easily see by noting that $\tilde{\lambda}=\lambda-(b_1, \dots, b_1)$ is a partition as well. Indeed, using the alternative definition of dominance order, \eqref{dominance order def 2}, we get that $\lambda_1\geq (\lambda_N^{(q)}(b))_1=b_1$. Thus, we may hence use the number of partitions of $ |\tilde{\lambda} | = \sum_{j=1}^N ( \lambda_N^{(q)}(b)_j -b_1) = qN(N-1)/2+ \sum_{j=1}^N (b_N-b_1) \leq qN(N-1)/2+(N-1)(b_N-b_1) $ as an upper bound on
	\begin{align*}
		\sum_{\lambda\preceq \lambda_N^{(q)}(N)} 1
		\leq p\left(qN(N-1)/2+(N-1)(b_N-b_1)\right).
		\end{align*}
	Using the bound~\eqref{partition function pointwise bound} on the number of partitions $ p $, as well as the elementary estimates $N(N-2)/2\leq (N-1)^2$ for $N\in \mathbb{N}$ and $\sqrt{a+b}\leq \sqrt{a}+\sqrt{b}$ and $2ab \leq a^2+b^2$, we arrive at
	\begin{align*}
		\sum_{\lambda\preceq \lambda_N^{(q)}(N)} 1 
		&\leq \exp\left(\pi \sqrt{\frac{2}{3}}  \left[\left(\sqrt{q}+\frac{1}{2}\right)(N-1)+\frac{b_N-b_1}{2}\right]\right) \leq \exp\left( \pi \sqrt{\frac{2}{3}} \left(q(N-1)+b_N-b_1\right)\right), 
		\end{align*}
	 which completes the proof of~\eqref{eq:normirrep}.
\end{proof}

\section{Implications for cylinder geometry}

\subsection{Proof of the factorization}\label{sec:factor}
\begin{proof}[Proof of Theorem~\ref{thm:fact}]
We start from Corollary~\ref{cor:planar} in which  $  \Psi_{b,N} $ is represented as a sum over partitions $\lambda \preceq \lambda^{(q)}_{N}(b) $. Conditioning on the first renewal point $ N_1 \in \{1, \dots , N\} $ in the partition $ \lambda $, we use 
\begin{itemize}
\item the factorization of the expansion coefficients for $ \lambda = \lambda^{(1)} \cup \lambda^{(2)}  $ with $ \lambda^{(j)} \preceq   \lambda_{N_j}^{(q)}\big(b^{(j)}\big) $, $ j = 1,2 $, and $ N_2 = N - N_1 $:
$$
h_b\big( \lambda^{(1)} \cup \lambda^{(2)} \big) = h_b\big( \lambda^{(1)}\big) \ h_b\big( \lambda^{(2)} \big) .
$$
This follows from the factorization of $ g_b $ established in Lemma~\ref{lem:Deltafact}, the factorization of $ w_b $ from Theorem~\ref{thm:factorizationw}, as well as the factorization of $ M(\lambda)! $. 
\item the factorization~\eqref{eq:Slaterfact} of the Slater determinants/permanents. 
\end{itemize}
Separating the contribution without a renewal point and summing over the location $ N_1 \in \{1, \dots , N\} $ of the first renewal point in the remainder, we hence obtain
\begin{equation}\label{eq:renewal}
	 \Psi_{b,N}  =  \widehat \Psi_{b,N} + \sum_{N_1=1}^{N-1}  \widehat \Psi_{b^{(1)},N_1} \odot  \Psi_{b^{(2)} ,N-N_1} 
\end{equation}
where $  (b^{(1)}, b^{(2)} ) $ is the decomposition of $ b $ corresponding to $ N_1$, $ N-N_1 $ with $ b^{(1)} \in \mathbb{N}_0^{N_1} $. Repeating this procedure until the last renewal point establishes~\eqref{eq:fact}. 

To prove the claimed orthogonality of the expansion, we take two functions with different parameters corresponding to 
$$ N = \sum_{j=1}^r N_j = \sum_{k=1}^s \widetilde N_k \quad \mbox{and} \quad b= \big(b^{(1)} , \dots , b^{(r)} \big)  =  \big(\widetilde b^{(1)} , \dots , \widetilde b^{(s)} \big) , $$ 
and compute their scalar product
\begin{align*}
& \left\langle  \widehat \Psi_{b^{(1)},N_1} \odot  \widehat \Psi_{b^{(2)},N_2} \odot \cdots \odot  \widehat \Psi_{b^{(r)},N_r}  ,   \widehat \Psi_{\widetilde b^{(1)},\widetilde N_1} \odot  \widehat \Psi_{\widetilde b^{(2)},\widetilde N_2} \odot \cdots \odot  \widehat \Psi_{\widetilde b^{(s)},\widetilde N_s} \right\rangle \\
& =\mkern-10mu  \sum_{\substack{ \lambda^{(j)} \preceq  \lambda_{N_j}^{(q)}\big(b^{(j)}\big) \\  j \in \{ 1, \dots , r \}  \\ \lambda^{(j)} \irr}}   \sum_{\substack{ \mu^{(j)} \preceq  \lambda_{\widetilde N_k}^{(q)}\big(\widetilde b^{(k)}\big) \\  k \in \{ 1, \dots , s \}  \\ \mu^{(j)} \irr}}  \mkern-20mu \overline{h_b\big( \lambda^{(1)} \cup \dots \cup  \lambda^{(r)} \big)} h_b\big( \mu^{(1)} \cup \dots \cup  \mu^{(s)} \big) \times  \delta_{ \lambda^{(1)} \cup \dots \cup  \lambda^{(r)} , \mu^{(1)} \cup \dots \cup  \mu^{(s)}}  .
\end{align*}
The Kronecker delta resulted from the orthonormality of the $ \odot $ product~\eqref{eq:productSlater} of the occupation basis. The claim will follow from the fact that none of the Kronecker deltas is one. 

As the $\lambda^{(j)}$ are irreducible, Lemma~\ref{lm:splitting} ensures that the renewal points of $\lambda^{(1)}\cup \dots \cup \lambda^{(r)}$ are $N_1, N_1+N_2, \dots, N-N_r$. Similarly, the renewal points of $\mu^{(1)}\cup \dots \cup \mu^{(s)}$ are given by $\tilde{N}_1, \tilde{N}_1+\tilde{N}_2, \dots, N-\tilde{N}_s$. Thus, if the two partitions agree, we have that the renewal points agree and therefore $(N_1, \dots, N_r)=(\tilde{N}_1, \dots, \tilde{N}_s)$. However, we have assumed that the parameters are different, which finishes the proof of the claim.

Finally, the estimate~\eqref{eq:normseg} is just~\eqref{eq:normirrep}.
\end{proof}

\subsection{Proof of the entanglement gap}\label{sec:Egap}

\begin{proof}[Proof of Theorem~\ref{thm:EGap}] 
The triangle inequality shows that the claim follows from the bound 
\begin{equation}\label{eq:eq:EGap2}
\left\| \Psi_{b,N} -  \Psi_{b^{(1)},N_1}  \odot \Psi_{b^{(2)},N_2} \right\|^2 \leq  \frac{e^{-C_q(\gamma)q}}{\left(1-e^{-C_q(\gamma)q}\right)^2}\Vert \Psi_{b,N}\Vert^2. 
\end{equation}
For its proof, we use Theorem~\ref{thm:fact} to represent $  \Psi_{b,N} $ as a sum over products of irreducible segments separated by renewal points. The term $ \Psi_{b^{(1)},N_1}  \odot \Psi_{b^{(2)},N_2}  $ gathers all contributions which have a renewal point exactly at $ N_1 $. By the orthogonality of the expansion~\eqref{eq:fact}, the left side in~\eqref{eq:eq:EGap2} hence equals 
\begin{equation}\label{eq:entproof}
\sum_{r=1}^N \sum_{ \substack{M_1, \dots , M_r \in \mathbb{N}\\  \sum_{j=1}^r M_j = N } } 1[ \mbox{For all $ j \in \{ 1, \dots , r\}$:} \; \sum_{k=1}^j M_k \neq N_1 ] \ \prod_{j=1}^r  \left\|   \widehat \Psi_{b^{(j)},N_j} \right\|^2.
\end{equation}
The constraint of the indicator function can be satisfied in several ways. One option would be $M_1=N$, the only term present in \eqref{eq:entproof} in this case would be $\Vert \widehat{\Psi}_{b,N}\Vert^2$. If we fix $M_1$ with $N_1<M_1<N$, then we can resum all of those contributions and obtain the term $\Vert \widehat{\Psi}_{b^{(1)}, M_1} \odot \Psi_{b^{(2)}, N-M_1}\Vert^2$.   
Resuming the configuration, which have the last renewal point before $N_1$ at $L_1<N_1$ and the next renewal point after $L_1$ at $L_2>N_1$, yields $\Vert \Psi_{b^{(1)}, L_1} \odot \widehat{\Psi}_{b^{(2)}, L_2-L_1} \odot \Psi_{b^{(3)}, N-L_2} \Vert^2$. Thus, we can rewrite \eqref{eq:entproof} as
\begin{align*}
	\sum_{M_2=2}^N \mkern-5mu \sum_{\substack{M_1, M_3\in \mathbb{N}_0 \\ M_1+M_2+M_3=N}} \mkern-15mu 1\big[N_1\in [M_1+1, M_1+M_2-1]\big] \;  \Vert \Psi_{b^{(1)},M_1} \odot \widehat{\Psi}_{b^{(2)}, M_2} \odot \Psi_{b^{(3)},M_3} \Vert^2,
\end{align*}
with the convention that $ \Psi_{b^{(1)},0} \odot \widehat{\Psi}_{b^{(2)}, M_2} \odot \Psi_{b^{(3)},M_3} =  \widehat{\Psi}_{b^{(2)}, M_2} \odot \Psi_{b^{(3)},M_3}$ and similarly, when $M_3=0$. Using the estimate  \eqref{eq:normirrep} on the norm of the irreducible parts and the fact \eqref{eq:supermult} that the norms of the wavefunctions are supermultiplicative, we arrive at
\begin{align*}
	\Vert \Psi_{b^{(1)},M_1} \odot \widehat{\Psi}_{b^{(2)}, M_2} \odot \Psi_{b^{(3)},M_3} \Vert^2 & = \Vert \Psi_{b^{(1)},M_1} \Vert^2 \ \Vert  \widehat{\Psi}_{b^{(2)}, M_2} \Vert^2 \ \Vert \Psi_{b^{(3)},M_3} \Vert^2 \\
	& \leq \Vert \Psi_{b,N}\Vert^2 \ e^{-C_q(\gamma)D_{b^{(2)}, M_2}} .
\end{align*}
For fixed $M_2$ there are at most $M_2-1$ choice for $(M_1, M_2, M_3)$ such that $N_1\in [M_1+1, M_1+M_2-1]$. Furthermore, $D_{b^{(2)}, M_2}\geq q(M_2-1)$. Thus, we arrive at 
\begin{align*}
	&\Vert \Psi_{b,N}-\Psi_{b^{(1)},N_1}  \odot \Psi_{b^{(2)},N_2} \Vert^2
	\leq \left( \sum_{M_2=1}^{\infty} M_2 e^{-C_q(\gamma)q M_2}\right) \Vert \Psi_{b,N}\Vert^2
	= \frac{e^{-C_q(\gamma)q}}{\left(1-e^{-C_q(\gamma)q}\right)^2} \Vert \Psi_{b,N}\Vert^2,
\end{align*}
which concludes the proof of~\eqref{eq:eq:EGap2}.
\end{proof}

\subsection{Proof of the exponential clustering}\label{sec:cluster}
The proof of exponential clustering, Theorem~\ref{thm:clustering}, proceeds in three steps. We first approximate the given wavefunction $ \Psi_{b,N} $ by a truncation, whose factor representation of Theorem~\eqref{thm:fact} contains only irreducible segments intersecting the interval covering the support of the two observables. We then show exponential clustering for the truncated expectation value. The proof is concluded by a series of auxiliary results, in particular, the bounds on the expectation values in the bosonic case.

\subsubsection{Truncation}	We introduce the truncated versions of the fractional quantum Hall wavefunctions. The main idea is to allow only short segments in the representation~\eqref{eq:fact}. More precisely, 
for a given interval $I = \{a, \dots, b\}=[a,b]\cap \{1, \dots, N\}$ on the line of $ N $ monomers,  we discard from~\eqref{eq:fact} all segments, which extend over more than exactly one monomer and which contain particles indexed by an element in $I$, in case they have length as measured by~\eqref{lower bound Delta irreducible} greater or equal to $l $. %
\begin{defn}
For a root $ (N,b) $ and  $ l \in \mathbb{N} $ and  an interval $I \subseteq \{1, \dots, N\}$, we abbreviate by $\mathcal{M}_{ b, N}(I,l )$ the set of segmentations, $(N_1, \dots, N_r) \in \mathbb{N}^r $,  $\sum_{j=1}^r N_j=N$, which have the property that for all $ j \geq 1 $:
\begin{align*}
	\left[ \sum_{k=1}^{j-1} N_k, \sum_{k=1}^j N_k\right] \cap I \neq \emptyset \quad \Rightarrow \quad D_{b^{(j)}, N_j} = q(N_j-1)+ b^{(j)}_{\ell(b^{(j)})}-b^{(j)}_1 < l .
\end{align*}
(Recall that $ \ell(b^{(j)}) $ is the length of the partition $ b^{(j)} $ which is the $j $th segment of $ b $ associated to the segmentation $(N_1, \dots, N_r)$.)
The corresponding truncated wavefunction  is
\begin{equation}\label{eq:truncfact}
	\Psi_{b,N}^{(I, l)} \coloneqq \sum_{r=1}^{N} \sum_{ \substack{N_1, \dots , N_r \in \mathbb{N}\\  \sum_{j=1}^r N_j = N } } \mkern-10mu 1[(N_1, \dots, N_r)\in \mathcal{M}_{ b,N}(I, l )] \; \widehat \Psi_{b^{(1)},N_1} \odot   \cdots \odot  \widehat \Psi_{b^{(r)},N_r}.
\end{equation}
\end{defn}
Any fractional quantum Hall wavefunction is close to its truncated counterpart. 
\begin{lemma} \label{lm:truncated Laughlin}
	Suppose that $ \gamma $ is large enough such that $ C_q(\gamma) > 0 $. 
	There exist a constant $C>0$, which  depends only on $\gamma$ and $q$, such that for any root $ (N,b) $ and interval $I \subseteq \{1, \dots, N\}$ and all $l>0$ 
	we have
	\begin{equation}
		\Vert \Psi_{b,N} - \Psi_{b,N}^{(I, l)} \Vert \leq C\sqrt{|I|} \ \sqrt{l^2+1} e^{-C_q(\gamma) l/2} \Vert \Psi_{b,N}\Vert.
	\end{equation}
\end{lemma}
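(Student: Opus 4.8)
The plan is to combine the orthogonal factorization of Theorem~\ref{thm:fact} with the exponential decay~\eqref{eq:normseg} of irreducible contributions. By construction $\Psi_{b,N}-\Psi_{b,N}^{(I,l)}$ collects exactly those terms of~\eqref{eq:fact} whose segmentation is \emph{not} in $\mathcal{M}_{b,N}(I,l)$, so the claimed orthogonality of the expansion gives
\begin{equation*}
\big\| \Psi_{b,N} - \Psi_{b,N}^{(I,l)} \big\|^2 = \sum_{r=1}^N \sum_{\substack{N_1+\dots+N_r=N\\ (N_1,\dots,N_r)\notin \mathcal{M}_{b,N}(I,l)}} \ \prod_{j=1}^r \big\| \widehat\Psi_{b^{(j)},N_j}\big\|^2 .
\end{equation*}
A segmentation lies outside $\mathcal{M}_{b,N}(I,l)$ precisely when it contains at least one ``long'' segment $j$ whose monomer block $\{s,\dots,e\}$, with $s=\sum_{k<j}N_k+1$ and $e=\sum_{k\le j}N_k$, meets $I$ and has $D_{b^{(j)},N_j}=q(e-s)+b_e-b_s\ge l$; note such a segment automatically has $N_j\ge2$, since a single monomer has $D=0$.

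Next I would reduce to a sum over a \emph{single} long segment. Each bad segmentation contains at least one long segment meeting $I$; selecting one of them and summing over all segmentations of the monomers to its left and to its right, the product of irreducible norms factors into (left)$\times\|\widehat\Psi_{\mathrm{mid}}\|^2\times$(right), and the left/right sums resum through~\eqref{eq:normPsi} into full norms. Since every bad segmentation is produced at least once this way and all summands are non-negative, a union bound yields
\begin{equation*}
\big\| \Psi_{b,N} - \Psi_{b,N}^{(I,l)} \big\|^2 \le \sum_{\substack{1\le s\le e\le N,\ \{s,\dots,e\}\cap I\neq\emptyset\\ D\ge l}} \big\|\Psi_{b^{\mathrm{L}},\,s-1}\big\|^2\, \big\|\widehat\Psi_{b^{\mathrm{M}},\,e-s+1}\big\|^2\, \big\|\Psi_{b^{\mathrm{R}},\,N-e}\big\|^2 ,
\end{equation*}
where $(b^{\mathrm{L}},b^{\mathrm{M}},b^{\mathrm{R}})$ is the composition~\eqref{eq:bconcat} of $b$ induced by the block $\{s,\dots,e\}$ (with the conventions $\Psi_{\cdot,0}$ of norm one when $s=1$ or $e=N$). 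Applying super-multiplicativity~\eqref{eq:supermult} twice gives $\|\Psi_{b^{\mathrm{L}}}\|^2\|\Psi_{b^{\mathrm{R}}}\|^2\le\|\Psi_{b,N}\|^2$, using $\|\Psi_{b^{\mathrm{M}}}\|^2\ge1$, while~\eqref{eq:normseg} bounds $\|\widehat\Psi_{b^{\mathrm{M}}}\|^2\le e^{-C_q(\gamma)D}$. Hence the right side is at most $\|\Psi_{b,N}\|^2\sum_{(s,e)} e^{-C_q(\gamma)D}$, the sum over long segments meeting $I$.

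The remaining task is the combinatorial estimate of $\sum_{(s,e)} e^{-C_q(\gamma)D}$. I would bound ``meets $I$'' by $\sum_{m\in I}\sum_{s\le m\le e}$, reducing matters to a uniform bound on $S_m=\sum_{s\le m\le e,\,D\ge l} e^{-C_q(\gamma)D}$. For a fixed length $n=e-s$ at most $n+1$ blocks contain $m$, and $D\ge qn$. Splitting at $qn=l$: the pairs with $qn\ge l$ contribute $\sum_{n\ge l/q}(n+1)e^{-C_q(\gamma)qn}\le C_1(l+1)e^{-C_q(\gamma)l}$ by a geometric tail estimate, whereas the pairs with $qn<l$ force $b_e-b_s\ge l-qn$, hence $e^{-C_q(\gamma)D}\le e^{-C_q(\gamma)l}$, and there are at most $(l/q+1)^2$ of them. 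Together $S_m\le C_2(l^2+1)e^{-C_q(\gamma)l}$ with $C_2$ depending only on $q$ and $C_q(\gamma)$, so $\sum_{(s,e)}e^{-C_q(\gamma)D}\le C_2|I|(l^2+1)e^{-C_q(\gamma)l}$; taking square roots delivers the claim with $C=\sqrt{C_2}$.

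The main obstacle I anticipate is the bookkeeping of the second step: one must check that resumming the left and right segmentations through~\eqref{eq:normPsi} is legitimate (i.e.\ the induced $b$-decomposition is exactly~\eqref{eq:bconcat}) and that the union bound over long segments \emph{over}counts the bad segmentations, so the inequality points the right way. The polynomial prefactor $l^2+1$, rather than a cruder loss of $e^{-C_q(\gamma)l/2}$, hinges on retaining the full exponent $e^{-C_q(\gamma)D}$ throughout and on the length-dependent count of blocks meeting a fixed $m\in I$; the delicate case is a segment rendered long by a jump of $b$ (small $n$, large $b_e-b_s$), handled precisely by the split at $qn=l$.
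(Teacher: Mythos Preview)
Your proposal is correct and follows essentially the same approach as the paper: both use the orthogonality of the expansion~\eqref{eq:fact}, resum the left and right portions via~\eqref{eq:normPsi} and super-multiplicativity~\eqref{eq:supermult}, apply the irreducible norm bound~\eqref{eq:normseg}, and then perform the same combinatorial split between segments that are long by virtue of large particle number and segments that are long due to a void jump. The only cosmetic difference is that the paper first treats the single-point case $I=\{x\}$ and then applies a union bound over $x\in I$, whereas you fold the union over $m\in I$ directly into the combinatorial estimate; your bookkeeping of $D=q(e-s)+b_e-b_s$ and the split at $qn=l$ match the paper's split at $N_2\le l$ versus $N_2>l$.
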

\begin{proof}
	We start with the case, where $I =\{x\}\subseteq \{1, \dots, N\}$, i.e., in the tiling decomposition of the root $(N,b) $,  there is a single block, which contains $x$. 
	We proceed as in the proof of Theorem~\ref{thm:EGap}. Comparing terms in~\eqref{eq:fact} and~\eqref{eq:truncfact} and using the orthogonality of the expansion~\eqref{eq:fact}, we conclude
	\begin{align*}
		&\Vert \Psi_{b,N}-\Psi_{b,N}^{(\{x\},l)} \Vert^2 \\
		&\leq \sum_{N_2=1}^N \mkern-5mu \sum_{\substack{N_1, N_3\in \mathbb{N}_0 \\ N_1+N_2+N_3=N}} \mkern-15mu 1\big[x\in [N_1+1, N_1+N_2], D_{b^{(2)}, N_2}\geq l \big] \;  \Vert \Psi_{b^{(1)},N_1} \odot \widehat{\Psi}_{b^{(2)}, N_2} \odot \Psi_{b^{(3)},N_3} \Vert^2 
	\end{align*}
	were he have again adopted the convention that $ \Psi_{b^{(1)},0} \odot \widehat{\Psi}_{b^{(2)}, N_2} \odot \Psi_{b^{(3)},N_3} =  \widehat{\Psi}_{b^{(2)}, N_2} \odot \Psi_{b^{(3)},N_3}$ and similarly, when $N_3=0$. 
	Using the estimate \eqref{eq:normirrep}  on the norm of the irreducible parts and the supermultiplicativity of the norms~\eqref{eq:supermult}, we arrive at
	\begin{align*}
		\Vert \Psi_{b^{(1)},N_1} \odot \widehat{\Psi}_{b^{(2)}, N_2} \odot \Psi_{b^{(3)},N_3} \Vert^2 & = \Vert \Psi_{b^{(1)},N_1} \Vert^2 \ \Vert  \widehat{\Psi}_{b^{(2)}, N_2} \Vert^2 \ \Vert \Psi_{b^{(3)},N_3} \Vert^2 \\
		& \leq \Vert \Psi_{b,N}\Vert^2 \ e^{-C_q(\gamma)D_{b^{(2)}, N_2}} .
	\end{align*}
	For fixed $N_2$ there are at most $N_2$ choice for $(N_1, N_2, N_3)$ where $x$ is contained in the second block. Thus, we arrive at 
	\begin{align*}
		&\Vert \Psi_{b,N}-\Psi_{b,N}^{(\{x\},l)} \Vert^2
		\leq \left( \sum_{N_2=1}^{l} N_2 e^{-C_q(\gamma) \, l} + \sum_{N_2=l+1}^{\infty} N_2 e^{-C_q(\gamma) q (N_2-1)}\right) \Vert \Psi_{b,N}\Vert^2,
	\end{align*}
	which yields the claim in the case $|I| =1$. The case of a general interval $I$, follows from the case $I =\{x\} $ by a union-type bound.
\end{proof}

\subsubsection{Bounds on the bosonic expectation values}
In the fermionic case, local observables $ \mathcal{O}_\loc  $ are bounded in norm. In the bosonic case, this ceases to be the case. The following lemma ensures that expectation values of observables in $\mathcal{O}_\mathrm{loc}$ on (truncated) fractional quantum Hall states are bounded. For the Laughlin state ($b=0$), sharper estimates have been obtained via Coulomb gas methods in \cite[Prop.~5.5]{jansen2012fermionic}.
\begin{lemma} \label{lm:norms observables}
	For a root $ (N,b) $ and an interval  $I\subseteq \{1, \dots, N\}$ and $l>0$, 
	and all finite multisets $X,Y\subseteq \mathbb{N}_0$:
	\begin{equation} \label{norms local observables}
		\begin{split}
			\Vert c_X^* c_Y \Psi_{b,N} \Vert &\leq C \Vert \Psi_{b,N}\Vert \\
			\Vert c_X^* c_Y \Psi_{b,N}^{(I,l)} \Vert &\leq C \Vert \Psi_{b,N}\Vert,
		\end{split}
	\end{equation}
	with $C$ depending only on $C_q(\gamma), \vert X\vert$ and $\vert Y\vert$.
\end{lemma}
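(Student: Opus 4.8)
The fermionic case ($q$ odd) is immediate and I would dispose of it first: there $c_k,c_k^*$ have operator norm one, so $c_X^*c_Y$ is a product of $|X|+|Y|$ contractions and $\|c_X^*c_Y\,\Psi_{b,N}\|\le\|\Psi_{b,N}\|$, with the identical bound for the truncation. The substance is the bosonic case ($q$ even), where $m(\lambda,k)$ is unbounded. My starting point is the expansion $\Psi_{b,N}=\sum_{\lambda\preceq\lambda^{(q)}_N(b)}h_b(\lambda)\,\Phi_\lambda$ of Corollary~\ref{cor:cyl} in the orthonormal occupation basis. For fixed multisets $X,Y$ the operator $c_X^*c_Y$ sends each $\Phi_\lambda$ to a scalar multiple $\theta_\lambda\,\Phi_{\lambda'}$ of the single basis vector indexed by $\lambda'=(\lambda\setminus Y)\cup X$, and $\lambda\mapsto\lambda'$ is injective where defined. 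The canonical commutation relations give, with $n_k(\cdot)$ denoting multiplicities and $m''$ the intermediate occupation after $c_Y$,
\[
|\theta_\lambda|^2=\prod_{y}\frac{m(\lambda,y)!}{\bigl(m(\lambda,y)-n_y(Y)\bigr)!}\ \prod_{x}\frac{\bigl(m''(\lambda,x)+n_x(X)\bigr)!}{m''(\lambda,x)!}.
\]
Since the images are orthogonal normalized vectors, Pythagoras yields the exact identity $\|c_X^*c_Y\,\Psi_{b,N}\|^2=\sum_\lambda|h_b(\lambda)|^2|\theta_\lambda|^2$. The truncated estimate is then free: $\Psi^{(I,l)}_{b,N}=\sum_{\lambda\in\mathcal L}h_b(\lambda)\,\Phi_\lambda$ is this same expansion restricted to the partitions $\mathcal L$ whose renewal decomposition lies in $\mathcal M_{b,N}(I,l)$, so $\|c_X^*c_Y\,\Psi^{(I,l)}_{b,N}\|^2=\sum_{\lambda\in\mathcal L}|h_b(\lambda)|^2|\theta_\lambda|^2\le\|c_X^*c_Y\,\Psi_{b,N}\|^2$, and it suffices to bound the untruncated quantity.

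Next I would use $m''\le m$ to bound $|\theta_\lambda|^2\le\prod_{k\in S}\bigl(m(\lambda,k)+|X|+|Y|\bigr)^{|X|+|Y|}$, where $S=\operatorname{supp}X\cup\operatorname{supp}Y$ is finite. By the power-mean inequality $\prod_{k\in S}a_k\le|S|^{-1}\sum_{k\in S}a_k^{|S|}$ the whole problem collapses to a single-orbital moment estimate: it is enough to prove that for every orbital $k$ and every fixed power $P$ (depending only on $|X|,|Y|$),
\[
\sum_{\lambda\preceq\lambda^{(q)}_N(b)}|h_b(\lambda)|^2\,m(\lambda,k)^P\ \le\ C\bigl(C_q(\gamma),P\bigr)\,\|\Psi_{b,N}\|^2,
\]
with a constant independent of $N$, $b$, and $k$.

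For this moment bound I would exploit the renewal structure. Using that $h_b$ factorizes over irreducible segments (as in the proof of Theorem~\ref{thm:fact}), I reorganize both $\|\Psi_{b,N}\|^2$ and the moment sum according to the unique irreducible segment whose tiling range contains orbital $k$: writing $p$ for the number of particles in segments strictly to its left and $N'$ for its particle number, \eqref{eq:normPsi} gives $\|\Psi_{b,N}\|^2=\sum_{p,N'}\|\Psi_{b^{L},p}\|^2\,\|\widehat\Psi_{b^{M},N'}\|^2\,\|\Psi_{b^{R}}\|^2$, and the moment sum is the same with the middle factor replaced by $\sum_{\mu\ \mathrm{irreducible}}|h_{b^M}(\mu)|^2\,m(\mu,\cdot)^P\le (N')^P\,\|\widehat\Psi_{b^M,N'}\|^2$, since any occupation number in a segment is at most its particle number $N'$. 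I then bound $\|\Psi_{b^L,p}\|^2\|\Psi_{b^R}\|^2\le\|\Psi_{b,N}\|^2$ via supermultiplicativity \eqref{eq:supermult} and insert $\|\widehat\Psi_{b^M,N'}\|^2\le\exp(-C_q(\gamma)\,q(N'-1))$ from \eqref{eq:normirrep}, reducing everything to $\sum_{p,N'}(N')^P\exp(-C_q(\gamma)q(N'-1))$. The decisive point is that orbital $k$ lies in one fixed elementary block $j^\ast$, so for each segment size $N'$ there are at most $N'$ admissible offsets $p$ (those with $j^\ast-N'\le p\le j^\ast-1$); hence the sum collapses to $\sum_{N'\ge1}(N')^{P+1}\exp(-C_q(\gamma)q(N'-1))$, finite and depending only on $C_q(\gamma)$ and $P$.

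The main obstacle is exactly this uniform counting step: one must verify that the number of segment placements covering a prescribed orbital is controlled by the segment length $N'$ alone (not by $k$ or $N$), so that the geometric decay of $\|\widehat\Psi_{b^M,N'}\|^2$ dominates the polynomial occupation factor. A secondary subtlety worth flagging is that, unlike the annihilation factorials, the bosonic \emph{creation} factorials in $|\theta_\lambda|^2$ do not telescope against the $1/M(\lambda)!$ hidden in $h_b$; this is precisely why a genuine moment estimate is unavoidable and why the finiteness in the bosonic case is nontrivial.
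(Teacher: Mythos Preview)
Your proposal is correct and follows essentially the same approach as the paper: both reduce to a single-orbital moment bound $\sum_\lambda |h_b(\lambda)|^2 m(\lambda,k)^P$ and control it via the renewal decomposition, bounding the occupation by the particle count $N'$ of the containing irreducible segment, using supermultiplicativity and \eqref{eq:normirrep}, and observing that at most $N'$ placements cover a fixed elementary block. The only cosmetic difference is in the reduction step---you expand explicitly in the occupation basis and use Pythagoras plus AM--GM, whereas the paper writes $\|c_X^*c_Y\Psi\|^2=\langle\Psi,c_Y^*c_Xc_X^*c_Y\Psi\rangle$, normal-orders via the CCR, and applies Cauchy--Schwarz in Hilbert space to reach $\|n_x^s\Psi\|$; your observation that the truncated bound is immediate from $\|c_X^*c_Y\Psi^{(I,l)}\|\le\|c_X^*c_Y\Psi_{b,N}\|$ by restriction of the basis sum matches the paper's use of orthogonality for the same purpose.
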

\begin{proof}
	Since $ \Vert c_X^* c_Y \Psi_{b,N} \Vert^2 = \langle    \Psi_{b,N}  , c_Y^* c_X c_X^* c_Y  \Psi_{b,N}  \rangle $, the canonical commutation rules allow to deduce the claim from a bound on terms of the form $  \langle    \Psi_{b,N}  , n_{X'} n_{Y'}  \Psi_{b,N}  \rangle $ with suitable subsets $ X' \subseteq X $ and $ Y' \subseteq Y $ and $ n_X = \prod_{x \in X } n_x $, $ n_x = c_x^* c_x $.
	By the Cauchy-Schwarz inequality,  this case is in turn traced back to the case  $X=\{x\}=Y$.  Indeed, using Cauchy-Schwarz repeatedly, we have 
	\begin{align*}
		 \langle    \Psi_{b,N}  , n_X n_Y  \Psi_{b,N}  \rangle  \leq \Vert n_X \Psi_{b,N}\Vert \Vert n_Y \Psi_{b,N} \Vert
		\leq \left(\prod_{x\in X} \Vert n_x^{\vert X\vert} \Psi_{b,N} \Vert\right)^{\frac{1}{\vert X\vert}} \left(\prod_{y\in Y} \Vert n_y^{\vert Y\vert} \Psi_{b,N} \Vert\right)^{\frac{1}{\vert Y\vert}}. 
	\end{align*}
	It remains to consider a single factor $n_x^s$ with $ 1\leq s \leq \max\{ | X | , |Y| \} $ arbitrary. We pick $L\in \{1, \dots, N\}$ such that $x\in [b_{L-1}+q(L-1), b_L+qL-1]$ (with the convention $b_0=0$). The orbital $x$ is then contained in the $L$th elementary segment of the monomer segmentation~\eqref{eq:segroot0} of the orbitals. Using the decomposition~\eqref{eq:fact}, we sort the irreducible parts  according to the unique segment on which the orbital $x$  lies and the rest. Resumming the rest (which is potentially to the right and left of that unique segment),  we obtain
	\begin{equation} \label{norm nx}
		\Vert n_x^s \Psi_{b,N} \Vert^2
		= \mkern-5mu \sum_{\substack{N_1, N_2, N_3\in \mathbb{N}_0\\ N_1+N_2+N_3=N, \\ N_2\geq 1}} \mkern-5mu 1[N_1<L, N_1+N_2\geq L] \Vert n_x^s (\Psi_{b^{(1)},N_1} \odot \widehat{\Psi}_{b^{(2)}, N_2} \odot \Psi_{b^{(3)},N_3}) \Vert^2,
	\end{equation}
	with again the convention that $ \Psi_{b^{(1)},0} \odot \widehat{\Psi}_{b^{(2)}, N_2} \odot \Psi_{b^{(3)},N_3} =  \widehat{\Psi}_{b^{(2)}, N_2} \odot \Psi_{b^{(3)},N_3}$ and similarly, when $N_3=0$. 
	Using the fact that the second segment has $ N_2 $ particles, the norm estimate~\eqref{eq:normirrep} and the supermultiplicativity of the norms~\eqref{eq:supermult}, we estimate
	\begin{align*}
		\Vert n_x^s (\Psi_{b^{(1)},N_1} \odot \widehat{\Psi}_{b^{(2)}, N_2} \odot \Psi_{b^{(3)},N_3}) \Vert^2
		&\leq N_2^{2s} \Vert \Psi_{b^{(1)},N_1} \Vert^2  \Vert  \widehat{\Psi}_{b^{(2)}, N_2} \Vert^2  \Vert  \Psi_{b^{(3)},N_3} \Vert^2 \\
		&\leq N_2^{2s} e^{-C_q(\gamma)q (N_2-1)} \Vert \Psi_{b,N}\Vert^2 . 
	\end{align*}
	For fixed $N_2\in \mathbb{N}$ there are at most $N_2$ choices of $(N_1, N_2, N_3)$ satisfying the constraints in \eqref{norm nx}. Thus, we obtain
	\begin{align*}
		\Vert n_x^s \Psi_{b,N} \Vert^2 \leq \left(\sum_{k=1}^\infty k^{2s+1} e^{-C_q(\gamma) q (k-1)}\right) \Vert \Psi_{b,N} \Vert^2.
	\end{align*}
	
	The second inequality in \eqref{norms local observables} follows from the previous argument combined with the estimate
	\begin{align*}
		\Vert n_x^s \Psi_{b,N}^{(I, l)} \Vert \leq \Vert n_x^s \Psi_{b,N}\Vert ,
	\end{align*}
	which follows from the orthogonality of the expansion~\eqref{eq:fact}. 
\end{proof}

\subsubsection{Preparations and auxiliary results}
As already done in Lemma~\ref{lm:norms observables}, it will become convenient to identify partitions with multisets on $\mathbb{N}_0$ via its occupation numbers.  
Multisets are given by functions $A:\mathbb{N}_0\rightarrow \mathbb{N}_0$. We define the union of two multisets $A,B$ as $A\cup B: \mathbb{N}_0\rightarrow \mathbb{N}_0, (A\cup B)(n)=A(n)+B(n)$, the intersection as $A\cap B: \mathbb{N}_0\rightarrow \mathbb{N}_0, (A\cap B)(n)=\min\{A(n), B(n)\}$, and the difference as $(A\setminus B) : \mathbb{N}_0\rightarrow \mathbb{N}_0, (A\setminus B)(n)=\min\{0, A(n)-B(n)\}$. Finally, the cardinality of a multiset $A$ is $\vert A \vert = \sum_{n\in \mathbb{N}_0} A(n)$. Furthermore, we define $\max (A)=\max_{n\in \mathbb{N}_0} A(n), \min (A)=\min_{n\in \mathbb{N}_0} A(n)$ and $\dist(A,B)=\dist(A^{-1}(\mathbb{N}), B^{-1}(\mathbb{N}))$.\\

From now on, we fix the observables 
$$A=c_{K'}^* c_K \quad \mbox{and}\quad  B=c_{L'}^* c_L$$ 
corresponding to finite multisets $K,K', L,L'\subseteq \mathbb{N}_0$. We define $\mathrm{supp}(A)=K'\cup K$ and $\mathrm{supp}(B)= L'\cup L$. Our standing assumption will be $\max \mathrm{supp}(A)  < \min \mathrm{supp}(B)$. To prove exponential clustering, we will define a truncated Laughlin wavefunction tailored to $A,B$. Through
\begin{equation}
	\begin{split}
		\textstyle \min_q(A) &= \min \{n\in \{1, \dots, N\} \ : \ [b_{n-1}+q(n-1), b_{n}+qn ) \cap \mathrm{supp}(A) \neq \emptyset \}, \\
		\textstyle \max_q(A) &= \max \{n\in \{1, \dots, N\} \ : \ [b_{n-1}+q(n-1), b_n+qn )\cap \mathrm{supp}(A) \neq \emptyset \},
	\end{split}
\end{equation}
with the convention $ b_0 \coloneqq 0 $, we identify the first and last elementary block ('monomer plus potentially preceding voids') in the root tiling, which has a non-empty intersection with $ A $. Similarly, $\min_q(B)$ and $\max_q(B)$ identify the first and last elementary segment that intersects the support of $ B $. We further abbreviate
\begin{equation}
	\begin{split}
		d_{AB} &= \mathrm{dist}(\mathrm{supp}(A), \mathrm{supp}(B)),	\quad	I_{AB} = [ \textstyle \max_q(A)-d_{AB}, \textstyle \min_q(B)+d_{AB}] \cap \{1, \dots, N\}, \\ 
		l_{AB} &= d_{AB}/3, \hspace{3cm} \Psi_{b,N}^{(AB)} = \Psi_{b,N}^{(I_{AB}, l_{AB})}.
	\end{split}
\end{equation}
This truncation yields a good approximation for our purpose. 
\begin{cor} \label{cor:difference norm truncation}
	Suppose that $\gamma$ is large enough such that $C_q(\gamma)>0$. There exist a constant $C>0$, which depends only on $\gamma$ and $q$, such that for any root  $(N,b)$, we have 	
	\begin{equation}
		\Vert \Psi_{b,N} - \Psi_{b,N}^{(AB)}\Vert \leq C d_{AB}^{3/2} \, e^{-C_q(\gamma) d_{AB}/6} \Vert \Psi_{b,N} \Vert.
	\end{equation}
\end{cor}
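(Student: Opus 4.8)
The plan is to read off the corollary as the specialization of Lemma~\ref{lm:truncated Laughlin} to the interval $I=I_{AB}$ and the truncation length $l=l_{AB}=d_{AB}/3$. Since $\Psi_{b,N}^{(AB)}=\Psi_{b,N}^{(I_{AB},l_{AB})}$ by definition, the lemma yields directly
\begin{equation*}
\Vert \Psi_{b,N}-\Psi_{b,N}^{(AB)}\Vert \leq C\sqrt{|I_{AB}|}\,\sqrt{l_{AB}^2+1}\;e^{-C_q(\gamma)l_{AB}/2}\,\Vert \Psi_{b,N}\Vert ,
\end{equation*}
so that the whole task reduces to checking that the algebraic prefactor $\sqrt{|I_{AB}|}\,\sqrt{l_{AB}^2+1}$ is bounded by a constant times $d_{AB}^{3/2}$, while the exponential already carries the advertised rate $e^{-C_q(\gamma)l_{AB}/2}=e^{-C_q(\gamma)d_{AB}/6}$.

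The only genuine step is to convert the orbital-space separation $d_{AB}$ into a bound on the block-index width $|I_{AB}|$. Here I would use the monomer tiling of the root $(N,b)$: the $n$th elementary block occupies the orbital interval $[q(n-1)+b_{n-1},\,qn+b_n)$, of length $q+(b_n-b_{n-1})\geq q\geq 1$. Because $\max\supp A<\min\supp B$ forces $d_{AB}=\min\supp B-\max\supp A\geq 1$, and because the blocks strictly between $\max_q(A)$ and $\min_q(B)$ already cover at least $q\big(\min_q(B)-\max_q(A)-1\big)$ orbitals, one obtains $\min_q(B)-\max_q(A)\leq d_{AB}/q+1\leq d_{AB}$, using $q\geq 1$ together with integrality of these indices. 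Recalling $I_{AB}=[\max_q(A)-d_{AB},\,\min_q(B)+d_{AB}]\cap\{1,\dots,N\}$, this gives
\begin{equation*}
|I_{AB}|\leq \big(\min_q(B)-\max_q(A)\big)+2d_{AB}+1\leq 3d_{AB}+1 .
\end{equation*}

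With this estimate in hand I would finish by elementary majorizations using $d_{AB}\geq 1$: namely $\sqrt{|I_{AB}|}\leq\sqrt{3d_{AB}+1}\leq 2\sqrt{d_{AB}}$ and $\sqrt{l_{AB}^2+1}\leq l_{AB}+1=d_{AB}/3+1\leq \tfrac{4}{3}d_{AB}$, so that the product of the two polynomial factors is at most a constant times $d_{AB}^{3/2}$. Absorbing all numerical constants, together with the constant from Lemma~\ref{lm:truncated Laughlin}, into a single $C=C(\gamma,q)$ then produces the claimed bound. I do not expect a real obstacle here: Lemma~\ref{lm:truncated Laughlin} does the analytic heavy lifting of estimating the discarded long segments, and the one point that requires care is purely combinatorial bookkeeping — translating the orbital distance $d_{AB}$ into the block-index width $|I_{AB}|$ via the fact that each tile spans at least $q$ orbitals.
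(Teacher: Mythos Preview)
Your proposal is correct and follows exactly the paper's approach: the paper's proof consists of the single sentence ``The claim follows from Lemma~\ref{lm:truncated Laughlin} and the crude bound $\min_q(B)-\max_q(A)\leq d_{AB}$,'' and you have simply spelled out the details behind that crude bound and the absorption of the polynomial prefactors. One minor remark: your intermediate inequality $d_{AB}/q+1\leq d_{AB}$ fails when $q=1$, but the slightly sharper count of orbitals strictly between the two blocks yields $\min_q(B)-\max_q(A)\leq (d_{AB}-1)/q+1\leq d_{AB}$, which is valid for all $q\geq 1$.
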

\begin{proof}
	The claim follows from Lemma \ref{lm:truncated Laughlin} and the crude bound $\min_q(B)-\max_q(A)\leq d_{AB}$.
\end{proof}
Our next aim is to show that $\Psi_{b,N}^{(AB)}$ exhibits exponential clustering. To do so, we need some auxiliary combinatorial lemmata. We abbreviate by $\mathcal{M}_{b,N}^{(A,B)}$ the set of partitions which have a nonzero expansion coefficient in $\Psi_{b,N}^{(AB)}$, such that 
\begin{equation}
	\Psi_{b,N}^{(AB)} = \sum_{\lambda\in \mathcal{M}_{b,N}^{(A,B)}} \mkern-10mu h_b(\lambda) \ \Phi_\lambda , 
\end{equation}
and identify through
\begin{equation} \label{def L1 L2}
	\begin{split}
		L_1 &\coloneqq \max \{ n\in \{1, \dots, N\} \ : \ b_{n-1}+q(n-1)\leq \max \mathrm{supp}(A)+d_{AB}/3\}, \\
		L_2 &\coloneqq \  \min \{ n\in \{1, \dots, N\} \ : \ b_{n}+qn\geq \min \mathrm{supp}(B)-d_{AB}/3\}, 
	\end{split}
\end{equation}
the last elementary block to the right of the support of $ A $ at a distance at most $ d_{AB}/3 $ from it, respectively the first elementary block which starts at a distance at most $ d_{AB}/3 $ to the left of the support of $ B $.

Because of the truncation, the elements in $\mathcal{M}_{b,N}^{(A,B)}$ have a special renewal structure. Either the elementary blocks corresponding to $ \max_q(A) $ and $ \min_q(B) $ are shorter than $ d_{AB}/3 $, or there is a long elementary block, which is separated by renewal points. 
	\begin{figure}[h]
		\begin{center}
			\includegraphics[scale=0.35]{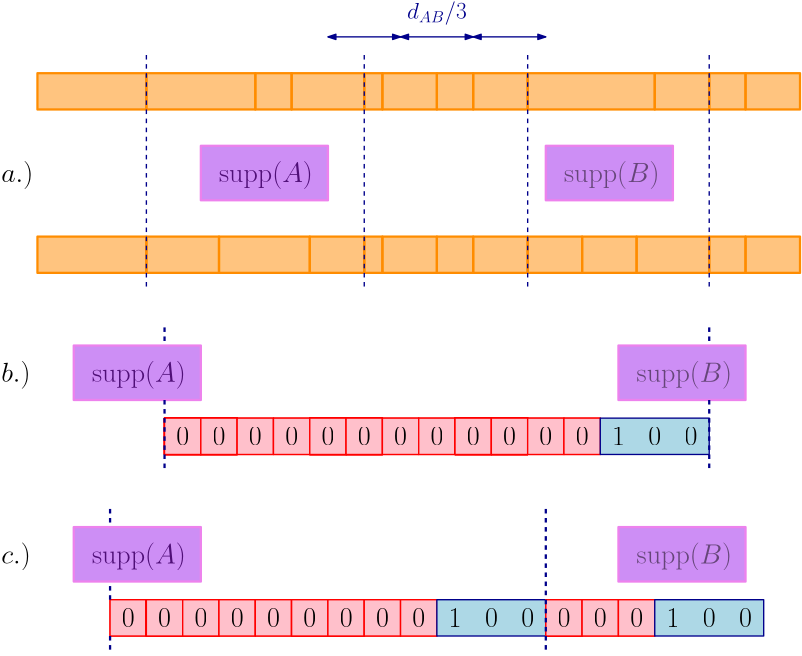}
		\end{center}
		\caption{The three cases in Lemma~\ref{lm:partitions PsiAB}: $a.)$ when $L_1<L_2$. The orange blocks correspond to segmentations of two partitions $\lambda, \mu\in \mathcal{M}_{b,N}^{(AB)}$ with $\langle \Phi_\lambda, AB \Phi_\mu \rangle\neq 0$. The renewal points of blocks that do not intersect the supports of $A$ or $B$ coincide. The dashed line indicates the transition when the segments intersect the supports. Illustrated in $b.)$ and $c.)$ are cases with an elementary block which is longer than or equal to $d_{AB}/3$. The endpoints of those blocks, indicated again by dashed lines, correspond to renewal points common to all partitions in $\mathcal{M}_{b,N}^{(AB)}$.}\label{fig:blocks}
	\end{figure}
\begin{lemma} \label{lm:partitions PsiAB}
	Let $(N,b) $ be a root and $\lambda, \mu\in \mathcal{M}_{b,N}^{(A,B)}$ such that 
	\begin{align*}
		\langle \Phi_\lambda, AB \Phi_\mu \rangle \neq 0.
	\end{align*}
	Then the renewal points of $\mu, \lambda$ strictly between the elementary blocks  $\max_q(A) $ and $ min_q(B) $ coincide.
	Furthermore, either:
	\begin{enumerate}
	\item $ \max_q(A)<  L_1 < L_2 <  \min_q(B) $ and every partition in $\mathcal{M}_{b,N}^{(A,B)}$ has a renewal point between the elementary blocks $\max_q(A)$ (excluding) and $ L_1$ (including) as well as  between $L_2$ (including) and $ \min_q(B)$ (excluding), or, if not, then
	\item there exists an elementary block longer than or equal to $ d_{AB} /3 $ with non-empty intersection with the orbitals in $ [ \max \supp A + d_{AB}/3 , \min \supp B - d_{AB}/3]$ which is separated from the rest by renewal points that are common to all partitions in $\mathcal{M}_{b,N}^{(A,B)}$.
	\end{enumerate}
\end{lemma}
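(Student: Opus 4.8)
The plan is to argue entirely in the occupation/tiling picture and to use the criterion of Lemma~\ref{lm:splitting}: a partition $\nu\preceq\lambda^{(q)}_N(b)$ has a renewal point at $s$ exactly when its occupation cuts cleanly at the block boundary orbital $p_s\coloneqq qs+b_s$, leaving precisely the first $s$ parts, carrying the root momentum $R_s\coloneqq\sum_{j=1}^s\big(\lambda^{(q)}_N(b)\big)_j$, to its left. First I would record the action of the observables: $A=c^*_{K'}c_K$ is supported in orbitals $\le\max\supp A$ and $B=c^*_{L'}c_L$ in orbitals $\ge\min\supp B$, and these windows are disjoint and ordered. Hence, if $\langle\Phi_\lambda,AB\,\Phi_\mu\rangle\neq0$, the configurations $\lambda$ and $\mu$ coincide on every orbital strictly between the supports (the gap $G$), and across any boundary $p$ lying in $G$ the particle number and the momentum of the two configurations \emph{to the left of $p$} differ by the fixed amounts $\alpha\coloneqq|K'|-|K|$ and $\beta\coloneqq\sum_{k\in K'}k-\sum_{k\in K}k$ contributed by $A$, independently of $p$ (the complementary change being carried by $B$ on the right). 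Since $\lambda,\mu\preceq\lambda^{(q)}_N(b)$ have the same total particle number and momentum, $A$ and $B$ shift these globally by canceling amounts.

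The technical heart is the rigidity $\alpha=\beta=0$. Here I would exploit the truncation: as $\lambda,\mu\in\mathcal{M}_{b,N}^{(A,B)}$, every irreducible segment meeting $I_{AB}$ has length $<l_{AB}=d_{AB}/3$. In particular the segments of $\lambda$ and of $\mu$ covering blocks $\max_q(A)$ and $\min_q(B)$ are short, so each of $\lambda,\mu$ carries a genuine renewal point with boundary just to the right of $\supp A$ and another just to the left of $\supp B$; for $d_{AB}$ large all four boundaries lie in $G$. Feeding these renewal points into the dominance inequalities of $\lambda$ and $\mu$ against the common root, and pairing $\lambda$'s \emph{left} renewal against $\mu$'s \emph{right} renewal (and conversely), the monotonicity of the root parts $R_s$ together with the ordering $\max\supp A<\min\supp B$ sandwiches $\alpha$ from both sides and forces $\alpha=0$; the momentum versions of the same two inequalities then collapse to $\beta=0$.

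With $\alpha=\beta=0$, the left particle number and left momentum of $\lambda$ and $\mu$ agree at every block boundary in $G$, so by the renewal criterion their renewal points strictly between blocks $\max_q(A)$ and $\min_q(B)$ coincide, which is the first assertion. For the dichotomy I would compare $L_1$ and $L_2$ from~\eqref{def L1 L2}. If $L_1<L_2$, the short-segment property near each support places a renewal point of \emph{every} admissible partition in $(\max_q(A),L_1]$ and another in $[L_2,\min_q(B))$, giving case~(1). Otherwise a single elementary block of orbital length $\ge d_{AB}/3$ spans the middle region; merging this block with any left-neighbour would produce a segment whose internalized voids already give $D_{b^{(j)},N_j}\ge l_{AB}$ while it meets $I_{AB}$, which is excluded by the truncation, so that block is preceded by a renewal point common to all partitions in $\mathcal{M}_{b,N}^{(A,B)}$, which is case~(2); the domination at that renewal point then forces the block's leading voids to be empty in every such partition.

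The main obstacle is exactly the rigidity step: a renewal point is a \emph{nonlocal}, cumulative-momentum condition, whereas $A$ and $B$ act only on disjoint local windows, so a priori $A$ could shift the entire left profile and displace renewal points relative to $\mu$. The point to get right is that the truncation is what manufactures honest renewal points on both sides of the gap, and only with these in hand do the two dominance inequalities — sharpened by the ordering of the supports and the monotonicity of the root — pin $\alpha$ and $\beta$ to zero. This is genuinely necessary: without the truncation the statement is false, since an irreducible long-segment $\lambda$ can pair with a reducible $\mu$ through a nonzero matrix element while their renewal points in the middle disagree. The remaining bookkeeping relating $\max_q(A),\min_q(B),L_1,L_2$, the gap width $d_{AB}$ and the threshold $l_{AB}=d_{AB}/3$ is routine once the rigidity is established.
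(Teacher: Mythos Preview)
Your proposal is essentially the same argument as the paper's. Both proofs hinge on the identical rigidity step: use the truncation to guarantee that $\lambda$ and $\mu$ each carry a renewal point just to the right of $\supp A$ and another just to the left of $\supp B$, then feed these into the dominance inequalities against the common root to sandwich the particle-number and momentum shift; the strict monotonicity of the root entries forces both shifts to vanish, after which the coincidence of renewal points in the gap is immediate from the criterion of Lemma~\ref{lm:splitting}. The paper phrases this in the partition-index picture and tracks the shift $(\Delta_B,p_B)$ coming from $B$, obtaining the explicit sandwich $\sum_{j=\tilde s+1}^{\tilde s+\Delta_B}(\lambda^{(q)}_b)_j\le p_B\le\sum_{j=s-\Delta_B+1}^{s}(\lambda^{(q)}_b)_j$ with $s<\tilde s$, while you work in the occupation picture and track the shift $(\alpha,\beta)$ from $A$; these are equivalent by global conservation. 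Your case analysis for the dichotomy (compare $L_1$ with $L_2$; long block otherwise) also matches the paper's. One small point: in case~(2) you only argue the long block is \emph{preceded} by a common renewal point, whereas the lemma requires both sides; the paper is equally terse here, simply asserting the block is ``separated from the rest by renewal points'' once its length exceeds $d_{AB}/3$.
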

\begin{proof}

	Let us first consider the case that the elementary blocks addressed by $ \max_q(A) $ and $ \min_q(B) $ are shorter than $ d_{AB}/3 $, i.e.
	$$
		b_{\max_q(A)}-b_{\max_q(A)-1}+q< d_{AB}/3, \qquad b_{\min_q(B)}-b_{\min_q(B)-1}+q< d_{AB}/3.
	$$
	This implies $\max_q(A)<L_1$ and $L_2<\min_q(B)$. In this situation, first suppose that
	 $\max_q(A)<L_1<L_2<\min_q(B)$. By construction, every element of $\mathcal{M}_{b,N}^{(A,B)}$ must then have one renewal point after the elementary block $\max_q(A) $ and before $L_1$ and one between $L_2 $ and $ \min_q(B)$. 
	 We define the number of particles, respectively the momentum generated by $ B = c_{L'}^* c_L $ by
	\begin{align*}
		\Delta_B \coloneqq \vert L'\vert - \vert L \vert, \qquad p_B \coloneqq \sum_{k\in L'} k - \sum_{k\in L} k, 
	\end{align*}
	and assume without loss of generality  that $\Delta_B\geq 0$, since otherwise we swap $\mu$ and $\lambda$ (effectively swapping $L$ and $L'$). 
	For any  pair of partitions $\lambda, \mu \in \mathcal{M}_{b,N}^{(A,B)}$, which we may identify with multisets through their occupation numbers, and which satisfy
	$
		\langle \Phi_\mu, c_{K'}^* c_K c_{L'}^* c_L \Phi_\lambda \rangle \neq 0 $, 
	we have 
	\begin{equation} \label{mu from lambda}
		\mu = (\lambda\setminus (L\cup K)) \cup (K'\cup L') . 
	\end{equation} 
	In particular, any entry $\mu_j $ on the elementary blocks from $ \max \mathrm{supp}(A)+1 $ to $ \min \mathrm{supp}(B)-1$ must correspond to some $\lambda_k$ (as those momenta are left unchanged by the action of $A$ and $B$). Let $s\leq L_1$ be a renewal point of $\mu$, i.e.
	\begin{equation} \label{renewal condition mu}
		\sum_{j=s+1}^N (\mu_j - (\lambda_b^{(q)})_j) = 0.
	\end{equation}   
	The relation \eqref{mu from lambda} implies
	\begin{equation} \label{momenta difference}
		\sum_{j=s+1}^N \mu_j - \sum_{j=s-\Delta_B+1}^N \lambda_j = -p_B.
	\end{equation}
	Indeed, \eqref{mu from lambda} tells us that we obtain $\mu$ from $\lambda$ by replacing the momenta in $L\cup K$ by the ones in $K'\cup L'$. Combining \eqref{renewal condition mu} and \eqref{momenta difference}, we obtain
	\begin{equation} \label{renewal equation mu}
		0 = \sum_{j=s+1}^N (\mu_j-(\lambda_b^{(q)})_j)
		= \sum_{j=s-\Delta_B+1}^N (\lambda_j-(\lambda_b^{(q)})_j) -p_B + \sum_{j=s-\Delta_B+1}^s (\lambda_b^{(q)})_j.
	\end{equation}
	As $\lambda \preceq \lambda_b^{(q)}(N)$, we hence conclude
	\begin{equation} \label{upper bound pB}
		p_B - \sum_{j=s-\Delta_B+1}^s (\lambda_b^{(q)})_j = \sum_{j=s-\Delta_B+1}^N (\lambda_j-(\lambda_b^{(q)})_j) \leq 0.
	\end{equation}
	Next, we pick a renewal point $\tilde{s}\geq L_2$ of $\lambda$. Using again \eqref{mu from lambda}, we obtain
	\begin{align*}
		\sum_{j=\tilde{s}+1}^N \lambda_j - \sum_{j=\tilde{s}+\Delta_B+1}^N \mu_j = p_B.
	\end{align*}
	As $\tilde{s}$ is a renewal point of $\lambda$, we get
	\begin{align*}
		0 = \sum_{j=\tilde{s}+1}^N (\lambda_j-(\lambda_b^{(q)})_j)
		= \sum_{j=\tilde{s}+\Delta_B+1} (\mu_j -(\lambda_b^{(q)})_j) + p_B -\sum_{j=\tilde{s}+1}^{\tilde{s}+\Delta_B} (\lambda_b^{(q)})_j
	\end{align*}
	and therefore, using $\mu \preceq \lambda_b^{(q)}(N)$, 
	\begin{equation} \label{lower bound pB}
		p-\sum_{j=\tilde{s}+1}^{\tilde{s}+\Delta_B} (\lambda_b^{(q)})_j = \sum_{j=\tilde{s}+\Delta_B+1}^N ((\lambda_b^{(q)})_j-\mu_j) \geq 0.
	\end{equation}
	Combining \eqref{upper bound pB} and \eqref{lower bound pB}, we arrive at
	\begin{equation} \label{bounds pB}
		\sum_{j=\tilde{s}+1}^{\tilde{s}+\Delta_B} (\lambda_b^{(q)})_j \leq p_B \leq \sum_{j=s-\Delta_B+1}^s (\lambda_b^{(q)})_j.
	\end{equation}
	However, $\tilde{s}\geq L_2>L_1\geq s$ and the entries of $\lambda_b^{(q)}(N)$ are strictly increasing. Thus, \eqref{bounds pB} is only possible if $p_B=0=\Delta_B$. However, in this case \eqref{momenta difference} implies that $s$ is also a renewal point of $\lambda$. Thus, $\lambda$ and $\mu$ share a common renewal point  on the elementry blocks $ \max_q(A)+1, \dots , \min_q(B)-1$. However, because the entries on those blocks must coincide, one obtains that all the renewal points on those blocks must coincide.
	
	If $L_1=L_2$, then $
		b_{L_1}-b_{L_1-1} +q  \geq d_{AB}/3 $, and hence the elementary block $ L_1 $ is separated from the rest by renewal points for all
	$\lambda\in \mathcal{M}_{b,N}^{(A,B)}$. Again, $\mu$ and $\lambda$ must share all their renewal points on the blocks between $ \max_q(A)$ and $  \min_q(B)$. 
	
	We now address the case that the elementary block $ \max_q(A) $ has length
	$$ b_{\max_q(A)}-b_{\max_q(A)-1}+q\geq d_{AB}/3. $$
	By definition of $\Psi_{b,N}^{(AB)}$, the elementary block $\max_q(A)$ is then separated by the rest through renewal points for all elements in $\mathcal{M}_{b,N}^{(A,B)}$. Since the entries of $\mu, \lambda$ in the on the orbitals $ \max \mathrm{supp}(A)+1, \dots ,  \min \mathrm{supp}(B)-1$ must coincide, this implies that all the renewal points on the blocks $\max_q(A), \dots , \min_q(B)-1$ must coincide.
	
	The case $ b_{\min_q(B)}-b_{\min_q(B)-1}+q\geq d_{AB}/3 $ works the same way.
\end{proof}

The dichotomy in Lemma~\ref{lm:partitions PsiAB} is reflected in two scenarios for the wavefunction $\Psi_{b, N}^{(AB)}$. In the second case of Lemma~\ref{lm:partitions PsiAB}, the function factorizes,
\begin{equation}\label{eq:longblockfactor}
\Psi_{b, N}^{(AB)} = \Psi_{b(1,M-1)}^{(AB)} \odot  \widehat \Psi_{ b(M,M)}  \odot \Psi_{b(M+1,N)}^{(AB)}
\end{equation}
with suitable $ M\in \mathbb{N} $, such that the $ M $th monomer belongs to the long elementary block and such that and $ b(M,M) \in \mathbb{N}_0 $ is the number of voids on the long elementary block, and accordingly $ b = \big(b(1,M-1), b(M,M) , b(M+1,N) )  $. Here and in the following, we 
use the following notation to address the void segmentation.
\begin{defn}
 For a root $ (N,b) $ and $1\leq N_1\leq N_2\leq N$ the \emph{void partition on the segment $ [N_1,N_2] $} is abbreviated by
\begin{equation}
	b(N_1, N_2) = (b_{N_1}-b_{N_1-1}, b_{N_1+1}-b_{N_1-1}, \dots, b_{N_2}-b_{N_1-1})
\end{equation}
with the convention $b_0=0$.\\
We define $\mathcal{M}_{b(N_1, N_2)}^{(A,B)}$ as the set of partitions $\lambda\preceq \lambda^{(q)}_{N_2-N_1+1}\big(b(N_1,N_2)\big) $ such that there exists $\mu^{(1)}\preceq \lambda^{(q)}_{N_1-1}(b(1,N_1-1)), \mu^{(2)}\preceq \lambda{(q)}_{N-N_2}(b(N_2+1,N)) $ with  $\mu^{(1)}\cup \lambda \cup \mu^{(2)} \in \mathcal{M}_{b,N}^{(A,B)}$. Then we set the block contributions
\begin{equation}
	\Psi_{b(N_1,N_2)}^{(AB)} \coloneqq \sum_{\lambda \in \mathcal{M}_{b(N_1, N_2)}^{(A,B)}} h_{b(N_1, N_2)}(\lambda) \ \Phi_{\lambda} . 
\end{equation}
\end{defn}

In the first case in Lemma~\ref{lm:partitions PsiAB}, it is natural to condition the expansion~\eqref{eq:fact} on the first renewal point between $ \max_q(A) $ and $ L_1 $, as well as the last renewal point between $ L_2 $ and $ \max_q(B) $. We may then write 
\begin{align}\label{eq:maincasecond}
	\Psi_{b, N}^{(AB)} = 
	& \mkern-5mu \sum_{\substack{ n \in (  \max_q(A) , L_1 ] \\  n \in [L_2 ,  \max_q(B) ) }}  \widetilde \Psi_{b(1,n-1)}^{(AB)} \odot   \Psi_{b(n,m)}^{(AB)}  \odot \widetilde \Psi_{b(m+1,N)}^{(AB)} \\
&\qquad \mbox{where} \qquad 
	 \widetilde \Psi_{b(1,n-1)}^{(AB)}  \coloneqq  \mkern-50mu \sum_{\substack{\lambda \in \mathcal{M}_{b(1,n-1)}^{(A,B)}\\ \text{no renewal point in } (\max_q(A), n-1)}}  \mkern-50mu  h_{b(1,n-1)}(\lambda)\  \Phi_\lambda , \notag
\end{align}
and likewise for $ \widetilde \Psi_{b(m+1,N)}^{(AB)} $. Furthermore, we define the irreducible parts of $\Psi_{b(N_1,N_2)}^{(AB)}$ as 
\begin{equation}
	\widehat\Psi_{b(N_1, N_2)}^{(AB)} \coloneqq \sum_{\substack{\lambda \in \mathcal{M}_{b(N_1, N_2)}^{(A,B)} \\ \lambda \, \irr}} h_{b(N_1, N_2)}(\lambda) \ \Phi_{\lambda}
\end{equation}
One readily sees that
\begin{equation} \label{norms truncated vs untruncated}
	\begin{split}
			\Vert \Psi_{b(N_1,N_2)}^{(AB)} \Vert &\leq \Vert \Psi_{b(N_1, N_2), N_2-N_1+1} \Vert \\
			\Vert \widetilde{\Psi}_{b(N_1,N_2)}^{(AB)} \Vert &\leq \Vert \Psi_{b(N_1, N_2), N_2-N_1+1} \Vert \\
			\Vert \widehat{\Psi}_{b(N_1,N_2)}^{(AB)} \Vert &\leq \Vert \widehat \Psi_{b(N_1, N_2), N_2-N_1+1} \Vert.
	\end{split}
	\end{equation}

As a last preparation for the proof of the exponential clustering of $\Psi_{b,N}^{(AB)}$, we need some a priori control on products of norms. 
We now show that for blocks containing the piece between $\supp A$ and $\supp B$, the product of norms are comparable. This has also been key in the proof of clustering in~\cite{jansen2009symmetry}. However, due to the presence of voids, one can no longer use standard tools from renewal theory. Instead, we employ a simple combinatorial switching argument. %
\begin{lemma}
	In the first case of Lemma~\ref{lm:partitions PsiAB}, for all $n,m$ with $\max_q (A)<n\leq L_1<L_2\leq m<\min_q(B)$, we have
	\begin{equation} \label{difference of products norms}
		\big\vert \Vert \Psi_{b(n,m)}^{(AB)} \Vert^2 \Vert \Psi_{b,N}^{(AB)}\Vert^2 - \Vert \Psi_{b(m+1,N)}^{(AB)} \Vert^2 \Vert \Psi_{b(1,n-1)}^{(AB)} \Vert^2 \big \vert
		\leq 2d_{AB}^2 e^{-C_q(\gamma)d_{AB}/3+2\pi \sqrt{\frac{2d_{AB}}{3}}} \Vert \Psi_{b, N}\Vert^2,
	\end{equation}
	with $C_q(\gamma)$ defined in \eqref{def:Cgamma}.
\end{lemma}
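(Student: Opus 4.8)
The plan is to prove \eqref{difference of products norms} by a weight\-/preserving combinatorial switching (tail\-/swap) argument between the renewal expansions of the two products, and to isolate the unmatched configurations as exactly those carrying a single irreducible block long enough to span the central stretch between $\supp A$ and $\supp B$. Because the void pattern $b$ makes the weights position\-/dependent, one cannot appeal to translation\-/invariant renewal theory as in \cite{jansen2009symmetry}; the bijection and its error term must be built by hand.

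First I would expand each of the four squared norms into its renewal series. By the orthogonality of the decomposition \eqref{eq:fact} (Theorem~\ref{thm:fact}) together with \eqref{norms truncated vs untruncated}, every $\Vert\Psi^{(AB)}_{b(\cdot,\cdot)}\Vert^2$ equals a sum over admissible segmentations of the relevant block of the product $\prod_j\Vert\widehat\Psi^{(AB)}_{b^{(j)},N_j}\Vert^2$ of irreducible weights, where the truncation forces every segment meeting $I_{AB}$ to be short, $D_{b^{(j)},N_j}<d_{AB}/3$. Writing $Z_{[i,j]}$ for these block partition functions and recalling (Lemma~\ref{lm:splitting}) that the concatenation $\cup$ produces a renewal exactly at each junction, both products $Z_{[n,m]}Z$ and its counterpart become sums over pairs of segmentations whose supports share the central stretch running through $[L_1,L_2]$, which by the hypothesis $n\le L_1<L_2\le m$ has orbital length at least $d_{AB}/3$.

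Next I would set up the bijection. For a pair of segmentations that share a renewal point $p$ in the central stretch, I cut both at $p$ and interchange the portions beyond $p$; since the irreducible weights are local and admissibility/truncation are preserved under this surgery, the map is weight\-/preserving and involutive, so the matched parts of the two ensembles cancel term by term. What survives are the pairs possessing no shared renewal point throughout $[L_1,L_2]$. Invoking the case\-/$1$ rigidity of Lemma~\ref{lm:partitions PsiAB} — the forced renewal points in $(\max_q(A),L_1]$ and $[L_2,\min_q(B))$ and the pinning of the entries on the blocks strictly between $\max_q(A)$ and $\min_q(B)$ — the failure to share a renewal forces one of the two segmentations to contain a single irreducible block covering the whole of $[L_1,L_2]$. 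Such a block has $D_{b^{(j)},N_j}\ge d_{AB}/3$ by \eqref{lower bound Delta irreducible}, hence weight at most $e^{-C_q(\gamma)d_{AB}/3}$ by \eqref{eq:normirrep}. Resumming the remaining segments and bounding them by $\Vert\Psi_{b,N}\Vert^2$ through supermultiplicativity \eqref{eq:supermult} and \eqref{norms truncated vs untruncated}, counting the admissible partitions on either side of the long block by the partition\-/function estimate \eqref{partition function pointwise bound} (which furnishes $e^{2\pi\sqrt{2d_{AB}/3}}$), and summing over the at most $d_{AB}$ choices for each of the two endpoints of the long block (giving $d_{AB}^2$), with the factor $2$ accounting for the long block occurring in either segmentation of the pair, yields the claimed bound $2d_{AB}^2\,e^{-C_q(\gamma)d_{AB}/3+2\pi\sqrt{2d_{AB}/3}}\Vert\Psi_{b,N}\Vert^2$.

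The main obstacle is the construction and accounting of the switch in the inhomogeneous setting: one must verify that the tail\-/swap respects both admissibility and the truncation constraint in the presence of voids, and — the genuinely delicate point — that the absence of a shared central renewal really forces one long irreducible block rather than a mere misalignment of short blocks. This is precisely the dichotomy cut out by Lemma~\ref{lm:partitions PsiAB}, and it is where the substantive work lies; once it is established, the norm and entropy estimates feeding \eqref{difference of products norms} are routine consequences of the key estimate \eqref{eq:normirrep} and the bound \eqref{partition function pointwise bound}.
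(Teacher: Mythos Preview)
Your switching/tail-swap bijection is exactly the mechanism the paper uses, and the conclusion that matched pairs cancel is correct. The gap is in your treatment of the \emph{unmatched} pairs.

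You claim that when the two segmentations share no renewal point in the central stretch, ``one of the two segmentations [must] contain a single irreducible block covering the whole of $[L_1,L_2]$''. This is false, and Lemma~\ref{lm:partitions PsiAB} does not help here: that lemma constrains partitions $\lambda,\mu$ with $\langle\Phi_\lambda,AB\,\Phi_\mu\rangle\neq 0$, whereas the quantity in \eqref{difference of products norms} involves only squared norms --- no operators $A,B$ act, so there is no pinning of entries and no coincidence of renewal points to invoke. Both segmentations in the unmatched pair obey the truncation $D_{b^{(j)},N_j}<d_{AB}/3$, so each consists of short blocks; they can simply be offset from one another (e.g.\ renewal points of one at $2,4,6,\dots$ and of the other at $3,5,7,\dots$) with no common renewal point and no long block anywhere.

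What actually produces the factor $e^{-C_q(\gamma)d_{AB}/3}$ is an \emph{interlacing} argument on the pair: if the renewal points of $\mathbf{N}$ and of the part of $\mathbf{M}$ in the gap never coincide, then at each renewal point of $\mathbf{N}$ there is a block of $\mathbf{M}$ strictly straddling it (and vice versa). Summing the decay contributions $D_{b^{(j)},N_j}$ from the blocks of \emph{both} segmentations over the gap, the interlacing forces these to cover the full length $d(n,m)\ge d_{AB}/3$ --- the ``leading elementary block'' losses $q+b^{(j)}_1$ from one segmentation are recouped by the straddling block of the other. This is the coloring argument of Figure~\ref{fig:interlace} in the paper; it yields the bound on the product $\prod\|\widehat\Psi^{(AB)}\|^2$ over both segmentations jointly, after which the resummation outside the gap, the endpoint count, and \eqref{partition function pointwise bound} proceed as you outline. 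Without this step your proof does not close.
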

\begin{proof}
Using the orthogonality of the expansion~\eqref{eq:fact} in irreducible segments, we write 
	\begin{align} \label{expansion products}
			&\Vert \Psi_{b(n,m)}^{(AB)} \Vert^2 \Vert \Psi_{b,N}^{(AB)}\Vert^2 - \Vert \Psi_{b(m+1,N)}^{(AB)} \Vert^2 \Vert \Psi_{b(1,n-1)}^{(AB)} \Vert^2  = \\
			& \quad \sum_{r=1}^{m-n+1} \sum_{\substack{N_1, \dots, N_r\in \mathbb{N}\\ \sum_{j=1}^r N_j = m-n+1}} \sum_{s=1}^{N} \sum_{\substack{M_1, \dots, M_s\in \mathbb{N}\\ \sum_{j=1}^s M_j = N}} \ \prod_{j=1}^r \Vert \widehat \Psi_{b(n+\sum_{k=1}^{j-1} N_k, n-1+\sum_{k=1}^j N_k)}^{(AB)} \Vert^2   \prod_{j=1}^s \Vert \widehat \Psi_{b(1+\sum_{k=1}^{j-1} M_k, \sum_{k=1}^j M_k)}^{(AB)} \Vert^2    \notag \\
			&\qquad-\sum_{\tilde{r}=1}^{N-m} \sum_{\substack{\tilde{N}_1, \dots, \tilde{N}_{\tilde{r}}\in \mathbb{N}\\ \sum_{j=1}^{\tilde{r}} \tilde{N}_j = N-m}} \sum_{\tilde{s}=1}^{n-1} \sum_{\substack{\tilde{M}_1, \dots, \tilde{M}_{\tilde{s}}\in \mathbb{N}\\ \sum_{j=1}^{\tilde{s}} \tilde{M}_j = n-1}} \ \prod_{j=1}^{\tilde{r}} \Vert \widehat \Psi_{b(m+1+\sum_{k=1}^{j-1} \tilde{N}_k, m+\sum_{k=1}^j \tilde{N}_k)}^{(AB)} \Vert^2\notag \prod_{j=1}^{\tilde{s}} \Vert \widehat \Psi_{b(1+\sum_{k=1}^{j-1} \tilde{M}_k, \sum_{k=1}^j \tilde{M}_k)}^{(AB)} \Vert^2 \notag
	\end{align}
	Let $\mathbf{N}=(N_1, \dots, N_r)$, $\mathbf{N}=(M_1, \dots, M_s)$ and similarly $\tilde{\mathbf{N}}=(\tilde{N}_1, \dots, \tilde{N}_{\tilde{r}})$, $\tilde{\mathbf{M}}=(\tilde{M}_1, \dots, \tilde{M}_{\tilde{s}})$  abbreviate the segmentation in the above sums. In the following, we distinguish two cases in these sums, e.g.\ for the first sums either 
	\begin{itemize}
	\item   $ \mathbf{N} $ and $ \mathbf{M} $ have a common renewal point in $\{n, n+1, \dots, m\}$, which we call the gap, 
	or
	\item   $ \mathbf{N} $ and $ \mathbf{M} $ have a common renewal point in the gap.
	\end{itemize}
	Here and in the following, we call  $a\in \{n, n+1, \dots, m\}$ 
	 is a common renewal point of  $ \mathbf{N} $ and $ \mathbf{M} $  in the gap if there exists $u\in \{0,1, \dots, r\}$ and $v\in \{1, \dots, s\}$ such that
	\begin{equation} \label{def L}
		n+\sum_{k=1}^u N_k = a = \sum_{k=1}^v M_k.
	\end{equation}
	We write $C(  \mathbf{N}, \mathbf{M} )=a$, where $a$ is the smallest common renewal point in the gap, and $C(  \mathbf{N} , \mathbf{M} )=\emptyset$ if there is no common renewal point in the gap. 
	Furthermore, we set $L(\mathbf{N},a)=u, L(\mathbf{M},a) = v$ as defined in \eqref{def L}. 
	\begin{figure}[h]
		\begin{center}
					\includegraphics[scale=0.35]{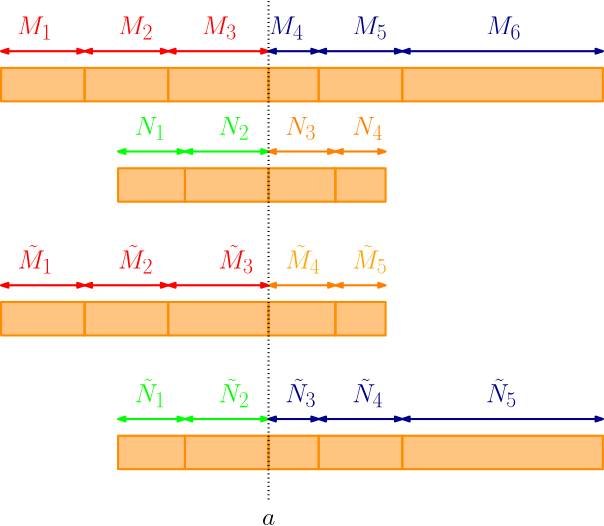}
		\end{center}
		\caption{Construction of $\tilde{\mathbf{M}}, \tilde{\mathbf{N}}$ from $\mathbf{M},\mathbf{N}$ and the first common renewal point $a$ in the gap.}\label{fig:swap}
	\end{figure}
In the first case, $C(  \mathbf{N} , \mathbf{M} )=a$, we identify a swapped segmentation
	\begin{align*}
		\tilde{\mathbf{N}}&=(N_1, \dots, N_{L(\mathbf{N},a)}, M_{L(\mathbf{M},a)+1}, \dots, M_s), \\
		\tilde{\mathbf{M}}&= (M_1, \dots, M_{L(\mathbf{M},a)}, N_{L(\mathbf{N},a)+1}, \dots, N_r).
	\end{align*}
	The $\tilde{\mathbf{N}}, \tilde{\mathbf{M}}$ are admissible segmentations for the second sum in~\eqref{expansion products} with common renewal point $C(\tilde{\mathbf{N}}, \tilde{\mathbf{M}})=a$ in the gap, cf.~Figure~\ref{fig:swap}. 
	Thus, conditioning both sums in~\eqref{expansion products} on the existence of common renewal points,  these contributions exactly cancel, and we are left with the terms without a common renewal point in the gap. \\

For an estimate of each of the sums in case $C(  \mathbf{N} , \mathbf{M} )=\emptyset$ and likewise for $C(\tilde{\mathbf{N}}, \tilde{\mathbf{M}})=\emptyset$, we recall from~\eqref{eq:normirrep} that $
		\Vert \widehat \Psi_{b,N}^{(AB)} \Vert^2 \leq \exp\left(-C_q(\gamma) D_{b,N} \right) $ with $ D_{b,N}  = q(N-1)+b_N-b_1 $.
	The true length, counting the number of orbitals of the corresponding segment, is $qN+b_N$. The difference is $q$ plus the number of voids $ b_1 $ at the beginning of the segment. Hence, we may bound
	\begin{align*}
		\prod_{j=1}^r \Vert \widehat \Psi_{b(n+\sum_{k=1}^{j-1} N_k, n-1+\sum_{k=1}^j N_k)}^{(AB)} \Vert^2 
		\leq \exp\left(-C_q(\gamma)(d(n,m)-qr-R(\mathbf{N}))\right)
	\end{align*} 
	where $R(\mathbf{N})$ is the sum of the number of voids at the renewal points of $\mathbf{N}$ and $d(n,m)=q(m-n+1)+b_m-b_{n-1}=D_{b(n,m),m-n+1}$ is the total length of the elementary blocks $n$ to $m$. By construction, $d(n,m)\geq d_{AB}/3$ as $n\leq L_1$ and $m\geq L_2$. 
	Let 
	\begin{align*}
		B(\mathbf{M}) &\coloneqq \max\{t \in \{1, \dots, s\} \ : \ 1+\sum_{k=1}^{t-1} M_k \leq n\}, \\
		E(\mathbf{M}) & \coloneqq  \min\{t \in \{1, \dots, s\} \ : \ \sum_{k=1}^t M_k \geq m+1\}.
	\end{align*}
		 As $\mathbf{N}$ and $\mathbf{M} $ have no common renewal point in the gap, for every renewal point given by $\mathbf{N}$ there is a longer block of $\mathbf{M}$ at that very location, which interlaces the segmentation $\mathbf{N}$, see Figure~\ref{fig:interlace}.
	 Hence, given the number $ r $ of segments in $\mathbf{N}$ and the number $ R(\mathbf{N}) $ of voids at its renewal points, we have
	 \begin{align*}
	 	\prod_{j=B(\mathbf{M})}^{E(\mathbf{M})} \Vert \widehat \Psi_{b(1+\sum_{k=1}^{j-1} M_k, \sum_{k=1}^j M_k)}^{(AB)} \Vert^2
	 	\leq \exp\left(-C_q(\gamma)(qr+R(\mathbf{N}))\right) . 
	 \end{align*}
	  \begin{figure}[h]
	 \begin{center}
	 	\includegraphics[scale=0.4]{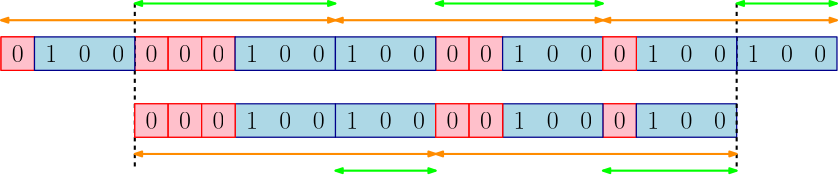}
	 \end{center}
	 \caption{In red and blue we see the elementary blocks, the orange lines on the bottom indicate some segmentations corresponding to $\mathbf{N}$, and the orange lines on top indicate the segmentation corresponding to the parts of $\mathbf{M}$ which overlap with the elementary blocks of $b(n,m)$. The exponential decay estimate \eqref{eq:normirrep} is telling us that the irreducible part of the wavefunction corresponding to such a segment is exponentially suppressed with the exponent given by $C_q(\gamma)$ times the length of the segment without the leading elementary block. The length of the segments without their leading elementary block is indicated by green lines. As the segmentations have no common renewal points, they are interlacing in such a way that the length of all the green pieces sums to at least the sum of the lengths of the elementary blocks of $b(n,m)$, which in this case is greater than or equal to $d_{AB}/3$. }\label{fig:interlace}
	 \end{figure}
	If  $C(  \mathbf{N} , \mathbf{M} )=\emptyset$, we thus have
	\begin{align*}
		&\prod_{j=1}^r \Vert \widehat \Psi_{b(n+\sum_{k=1}^{j-1} N_k, n-1+\sum_{k=1}^j N_k)}^{(AB)} \Vert^2 \prod_{j=1}^s \Vert \widehat \Psi_{b(1+\sum_{k=1}^{j-1} M_k, \sum_{k=1}^j M_k)}^{(AB)} \Vert^2 \\
		&\leq e^{-C_q(\gamma)d_{AB}/3} \left(\prod_{j=1}^{B(\mathbf{M})-1}  \Vert \widehat \Psi_{b(1+\sum_{k=1}^{j-1} M_k, \sum_{k=1}^j M_k)}^{(AB)} \Vert^2 \right)
		\left(\prod_{j=E(\mathbf{M})+1}^s \Vert \widehat \Psi_{b(1+\sum_{k=1}^{j-1} M_k, \sum_{k=1}^j M_k)}^{(AB)} \Vert^2\right)
	\end{align*}
	There are at most $d_{AB}$ possible choices for the last renewal point before $n$ and at most $d_{AB}$ possible choices for the first renewal point after $m$, such that
	\begin{align*}
		&\sum_{r=1}^{m-n+1} \sum_{\substack{N_1, \dots, N_r\in \mathbb{N}\\ \sum_{j=1}^r N_j = m-n+1}} \sum_{s=1}^{N} \sum_{\substack{M_1, \dots, M_s\in \mathbb{N}\\ \sum_{j=1}^s M_j = N}} 1[C(  \mathbf{N} , \mathbf{M} )=\emptyset] \\ 
		&\qquad \times \prod_{j=1}^r \Vert \widehat \Psi_{b(n+\sum_{k=1}^{j-1} N_k, n-1+\sum_{k=1}^j N_k)}^{(AB)} \Vert^2 \prod_{j=1}^s \Vert \widehat \Psi_{b(1+\sum_{k=1}^{j-1} M_k, \sum_{k=1}^j M_k)}^{(AB)} \Vert^2 \\
		&\leq \sum_{\ell=1}^{d_{AB}/3} \sum_{\tilde{\ell}=1}^{d_{AB}/3} \Vert \Psi_{b(1,n-\ell)}^{(AB)} \Vert^2 \left(\sum_{r=1}^{m-n+1} \sum_{s=1}^{m-n+1} \sum_{\substack{T_1, \dots, T_r\in \mathbb{N} \\ \sum_{j=1}^r T_j=m-n+1}} \sum_{\substack{U_1, \dots, U_s\in \mathbb{N}\\ \sum_{j=1}^s U_j=m-n+1}} e^{-C_q(\gamma) d_{AB}/3} \right) \Vert \Psi_{b(m+1+\tilde{\ell}, N)}^{(AB)}\Vert^2 \\
		&\leq d_{AB}^2  e^{-C_q(\gamma)d_{AB}/3+2\pi \sqrt{\frac{2d_{AB}}{3}}} \Vert \Psi_{b, N}\Vert^2,
	\end{align*}
	where we have used \eqref{norms truncated vs untruncated}, that the norms are supermultiplicative \eqref{eq:supermult} and the bound $m-n+1\leq d_{AB}/3$ combined with the estimate \eqref{partition function pointwise bound} on the number of partitions. One gets the same estimate for the remaining term, which completes the proof.
\end{proof}

\subsubsection{Proof of Theorem \ref{thm:clustering}}
We first prove the exponential clustering for the truncated wavefunction $\Psi_{b,N}^{(AB)}$.
\begin{prop} \label{prop:exponential clustering truncated}
	In the situation of Theorem \ref{thm:clustering}, there exists a constant $C>0$ depending only on the cardinality of $\supp(A)$ and $\supp(B)$ such that 
	\begin{equation}\label{eq:clustertrunc}
		\begin{split}
			&\big\vert \langle \Psi_{b,N}^{(AB)}, AB \Psi_{b,N}^{(AB)} \rangle\  \Vert \Psi_{b, N}\Vert^2 - \langle \Psi_{b,N}^{(AB)}, A \Psi_{b,N}^{(AB)} \rangle \langle \Psi_{b,N}^{(AB)}, B \Psi_{b,N}^{(AB)} \rangle\big\vert \\
			&\qquad\leq C d_{AB}^4 e^{-C_q(\gamma) d_{AB}/3+2\pi\sqrt{\frac{2d_{AB}}{3}}} \Vert \Psi_{b, N}\Vert^4.
		\end{split}
	\end{equation}
\end{prop}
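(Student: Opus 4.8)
The plan is to first establish the bound with the truncated norm $\|\Psi_{b,N}^{(AB)}\|^2$ in place of $\|\Psi_{b,N}\|^2$, and to trade one for the other at the very end. Since the expansion \eqref{eq:fact} is orthogonal, $\Psi_{b,N}^{(AB)}$ is an orthogonal projection of $\Psi_{b,N}$, whence $\|\Psi_{b,N}\|^2-\|\Psi_{b,N}^{(AB)}\|^2=\|\Psi_{b,N}-\Psi_{b,N}^{(AB)}\|^2\le C d_{AB}^3 e^{-C_q(\gamma)d_{AB}/3}\|\Psi_{b,N}\|^2$ by Corollary~\ref{cor:difference norm truncation}; together with $|\langle \Psi_{b,N}^{(AB)},AB\,\Psi_{b,N}^{(AB)}\rangle|\le C\|\Psi_{b,N}\|^2$ from Lemma~\ref{lm:norms observables}, this replacement costs only a term of the claimed order. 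Abbreviating $\langle AB\rangle\coloneqq\langle \Psi_{b,N}^{(AB)},AB\,\Psi_{b,N}^{(AB)}\rangle$ and likewise $\langle A\rangle,\langle B\rangle$, I organize the main estimate by the dichotomy of Lemma~\ref{lm:partitions PsiAB}. In its second case a single long elementary block separates $\supp A$ from $\supp B$ and is bounded by renewal points common to all partitions in $\mathcal{M}_{b,N}^{(A,B)}$, so $\Psi_{b,N}^{(AB)}$ factorizes as in \eqref{eq:longblockfactor} with $A$ acting only on the left factor and $B$ only on the right. Expanding all four quantities via the orthonormality of the $\odot$-product and cancelling, one finds $\langle AB\rangle\,\|\Psi_{b,N}^{(AB)}\|^2=\langle A\rangle\langle B\rangle$ exactly, so the connected correlation vanishes in this case.

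The real work is the first case of Lemma~\ref{lm:partitions PsiAB}. Using the decomposition \eqref{eq:maincasecond}, conditioned on the first renewal point $n\in(\max_q(A),L_1]$ and the last one $m\in[L_2,\min_q(B))$, the observable $A$ acts on the left segment and $B$ on the right, while the middle block $\Psi_{b(n,m)}^{(AB)}$ is common to $\Phi_\lambda,\Phi_\mu$ for every nonvanishing matrix element. Denoting by $A_n,b_m$ the diagonal expectations of $A,B$ on the left/right segments, by $\alpha_n,\beta_m$ their norms, and setting $p_{n,m}=\|\Psi_{b(n,m)}^{(AB)}\|^2$, each of the four quantities becomes a double sum of products, e.g.\ $\langle AB\rangle=\sum_{n,m}A_n p_{n,m} b_m$ and $\|\Psi_{b,N}^{(AB)}\|^2=\sum_{n,m}\alpha_n p_{n,m}\beta_m$. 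Relabelling $m\leftrightarrow m'$ in one double sum, the combination collapses to
\[
\langle AB\rangle\,\|\Psi_{b,N}^{(AB)}\|^2-\langle A\rangle\langle B\rangle=\sum_{n,m,n',m'}A_n\,\alpha_{n'}\,b_m\,\beta_{m'}\bigl(p_{n,m}p_{n',m'}-p_{n,m'}p_{n',m}\bigr),
\]
so that the entire connected correlation is governed by the failure of the middle-block norm $p_{n,m}$ to factorize into an $n$-part times an $m$-part.

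This non-factorization is precisely the content of \eqref{difference of products norms}, which yields $p_{n,m}\|\Psi_{b,N}^{(AB)}\|^2=\|\Psi_{b(1,n-1)}^{(AB)}\|^2\|\Psi_{b(m+1,N)}^{(AB)}\|^2+O(\varepsilon)$ with $\varepsilon=2 d_{AB}^2 e^{-C_q(\gamma)d_{AB}/3+2\pi\sqrt{2d_{AB}/3}}\|\Psi_{b,N}\|^2$. Substituting this into the $2\times2$ determinant above, the rank-one leading terms $\|\Psi_{b(1,n-1)}^{(AB)}\|^2\|\Psi_{b(m+1,N)}^{(AB)}\|^2\|\Psi_{b(1,n'-1)}^{(AB)}\|^2\|\Psi_{b(m'+1,N)}^{(AB)}\|^2$ cancel identically, and only cross terms of size $O(\varepsilon)$ times products of left/right block norms survive. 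The main obstacle is the ensuing bookkeeping: after dividing by $\|\Psi_{b,N}^{(AB)}\|^2$ and summing against $A_n,\alpha_{n'},b_m,\beta_{m'}$ over all admissible renewal pairs, one must keep the total bounded by $C d_{AB}^4 e^{-C_q(\gamma)d_{AB}/3+2\pi\sqrt{2d_{AB}/3}}\|\Psi_{b,N}\|^4$. For this I would dominate $|A_n|,|b_m|$ by constants times the corresponding segment norms via Lemma~\ref{lm:norms observables}, use the comparisons \eqref{norms truncated vs untruncated} and the supermultiplicativity \eqref{eq:supermult} to re-sum products of block norms back into powers of $\|\Psi_{b,N}\|^2$, and absorb the number of renewal pairs together with the partition-counting factor \eqref{partition function pointwise bound} into the polynomial prefactor $d_{AB}^4$ and the subexponential correction $e^{2\pi\sqrt{2d_{AB}/3}}$.
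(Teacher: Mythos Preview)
Your approach is essentially the same as the paper's: the same dichotomy from Lemma~\ref{lm:partitions PsiAB}, the same factorization in the long-block case, the same conditioning \eqref{eq:maincasecond} on the first and last renewal points, and the same inputs (\eqref{difference of products norms}, Lemma~\ref{lm:norms observables}, supermultiplicativity) in the main case. Your initial swap of $\|\Psi_{b,N}\|^2$ for $\|\Psi_{b,N}^{(AB)}\|^2$ is in fact slightly more careful than the paper, which tacitly works with $\|\Psi_{b,N}^{(AB)}\|^2$ throughout.

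The one organizational difference is that the paper resums before forming the product: using $\sum_m p_{n,m}\beta_m=\|\Psi_{b(n,N)}^{(AB)}\|^2$ and $\sum_n \alpha_n p_{n,m}=\|\Psi_{b(1,m)}^{(AB)}\|^2$, one has directly
\[
\langle AB\rangle\,\|\Psi_{b,N}^{(AB)}\|^2-\langle A\rangle\langle B\rangle=\sum_{n,m} f_A(n)\,f_B(m)\,\bigl(p_{n,m}\|\Psi_{b,N}^{(AB)}\|^2-\|\Psi_{b(1,n)}^{(AB)}\|^2\|\Psi_{b(m+1,N)}^{(AB)}\|^2\bigr),
\]
a single double sum in which \eqref{difference of products norms} applies once per term, giving the stated $d_{AB}^4$ prefactor immediately. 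Your quadruple-sum determinant is of course equivalent (it collapses to the above after the same resummation), but expanding each $p$ via \eqref{difference of products norms} and then bounding the four cross terms as you sketch tends to produce extra factors of $d_{AB}$ from the uncontracted sums over $n',m'$; to land on $d_{AB}^4$ rather than $d_{AB}^6$ you would need to perform precisely this resummation somewhere in your bookkeeping.
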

\begin{proof}
	According to Lemma \ref{lm:partitions PsiAB}, there are two cases. 
	
	If there is an elementary block of length greater or equal to $ d_{A_B}/ 3 $ with non-empty intersection with the orbitals in $ [ \max \supp A + d_{AB}/3 , \min \supp B - d_{AB}/3]$, the wavefunction factorizes over the long elementary block~\eqref{eq:longblockfactor} and hence the left side in~\eqref{eq:clustertrunc} is zero. 
	
	In the other case,  $\max_q(A) < L_1 < L_2 < \min_q(B) $, we expand and use the conditioning on the first and last renewal point in~\eqref{eq:maincasecond}. Since every $\lambda, \mu\in \mathcal{M}_{b,N}^{(A,B)}$ with $\langle \Phi_{\lambda}, AB \Phi_\mu\rangle\neq 0$ must share all renewal points in $(\max_q(A), \min_q(B))$, we write
		\begin{align*}
		&\langle \Psi_{b,N}^{(AB)}, AB \Psi_{b, N}^{(AB)} \rangle \Vert \Psi_{b,N}^{(AB)} \Vert^2 - \langle \Psi_{b,N}^{(AB)}, A \Psi_{b, N}^{(AB)} \rangle \langle \Psi_{b,N}^{(AB)}, B \Psi_{b, N}^{(AB)} \rangle \\
		&= \sum_{\substack{n\in (\max_q(A), L_1] \\ m\in [L_2, \min_q(B))}}
			f_A(n) f_B(m) \left( \Vert \Psi_{b(n+1,m)}^{(AB)} \Vert^2 \Vert \Psi_{b,N}^{(AB)}\Vert^2 - \Vert \Psi_{b(m+1,N)}^{(AB)}\Vert^2 \Vert \Psi_{b(1,n)}^{(AB)} \Vert^2\right).
	\end{align*}
	Here, we abbreviated  for $n\in (\max_q(A), \min_q(B))$:
\begin{align*}
		f_A(n) &\coloneqq \mkern-30mu \sum_{\substack{\lambda, \mu\in \mathcal{M}_{b(1,n)}(A,B)\\ \text{no renewal point in } (\max_q(A), n)}}  \mkern-30mu h_{b(1,n)}(\lambda) \   h_{b(1,n)}(\mu) \ \langle \Phi_\mu, A \Phi_\lambda \rangle, \\
		f_B(m) & \coloneqq  \mkern-30mu \sum_{\substack{\lambda, \mu\in \mathcal{M}_{b(m+1,N)}(A,B) \\ \text{no renewal point in } [1,\min_q(B)-m)}}  \mkern-30mu h_{b(m+1,N)}(\lambda) \ h_{b(m+1,N)}(\mu) \ \langle \Phi_{\lambda^{(q)}_{b(1,m)} \cup\mu}, B \Phi_{\lambda_{b(1,m)}^{(q)}\cup \lambda} \rangle. 
\end{align*}

	A Cauchy-Schwarz estimate and Lemma \ref{lm:norms observables} implies that there are  $C_A, C_B \in (0,\infty)$, which  only depend on the cardinality of $\supp (A)$ and $\supp(B)$, such that 
	\begin{align*}
		\vert f_A(n) \vert \leq C_A \Vert \Psi_{b(1,n), n}\Vert^2, \qquad 
		\vert f_B(m) \vert \leq C_B \Vert \Psi_{b(m+1,N), N-m}\Vert^2.
	\end{align*}
	Combining this with \eqref{difference of products norms} and the crude estimates, $L_1-\max_q(A)-1\leq d_{AB}, \min_q(B)-1-L_2\leq d_{AB}$, as well as the supermultiplicativity of the norm~\eqref{eq:supermult}, we arrive at the claim.
\end{proof}

We finally have all the ingredients to prove exponential clustering of the untruncated wavefunction.
\begin{proof}[Proof of Theorem \ref{thm:clustering}]
	The triangle  inequality combined with a Cauchy-Schwarz estimate yields 
	\begin{align*}
		\vert \langle \Psi_{b,N}, O \Psi_{b,N}\rangle -\langle \Psi_{b,N}^{(AB)}, O \Psi_{b,N}^{(AB)}\rangle \vert
		\leq \left(\Vert O \Psi_{b,N}\Vert + \Vert O\Psi_{b,N}^{(AB)}\Vert\right) \Vert \Psi_{b,N}- \Psi_{b,N}^{(AB)} \Vert
	\end{align*}
	for any $O\in \mathcal{O}_\mathrm{loc}$. Thus, the claim follows from Proposition \ref{prop:exponential clustering truncated} and Corollary \ref{cor:difference norm truncation}.
\end{proof}

		\appendix 
		\section{Renewal analysis and the norm of the Laughlin wavefunction}\label{app:new}

		In this appendix, we gather known results on the analysis of discrete renewal equations and prove some variations of such results. We refer to \cite{mitov2014renewal} and references therein for the general background on renewal theory. We then show how to combine these results with Theorem~\ref{thm:fact} for a quantitative estimate of the norm of the Laughlin wavefunction. This improves results in~\cite{jansen2009symmetry}.

		\subsection{Basics}
		In the following we consider solutions $ (C_n) $ of the discrete renewal equations
		\begin{equation} \label{renewal equation appendix}
			C_n = \sum_{j=1}^{n-1} \alpha_j C_{n-j} + \beta_n, \quad n \in \mathbb{N} ,  
		\end{equation}
		defined  in terms of two non-negative sequences $ (\alpha_n), (\beta_n)  \in [0,\infty)^\mathbb{N} $. 
		Associated to these sequences are the power series
		\begin{equation}\label{def:powerseries}
		 \alpha(z) = \sum_{n=1}^{\infty} \alpha_n z^n, \qquad \beta(z)=\sum_{j=1}^{\infty} \beta_n z^n , 
		 \end{equation}
		 whose radii of convergence we denote by $ R_\alpha $ and $ R_\beta $, respectively. By resumming terms, it is easy to see that the power series 
		 $$ C(z) = \sum_{n=1}^\infty C_n z^n 
		 $$ associated to any solution of~\eqref{renewal equation appendix} satisfies:
		\begin{equation} \label{renewal power series equation}
			C(z) \left(1-\alpha(z)\right) = \beta(z). 
		\end{equation}

		\begin{lemma}
			If $ R_\alpha > 0 $ and $ \alpha_1 > 0 $, then 
			the radius of convergence $ r $ of the power series $ C(z) $ satisfies:
			$$
			r = \min\left\{ R_\beta , \min\{ t \in [0,R_\alpha]    \ : \ \alpha(t) = 1 \} \right\}  \leq  \min\left\{ R_\beta , \alpha_1^{-1} \right\} .
			$$
		\end{lemma}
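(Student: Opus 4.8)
The plan is to exploit that all coefficients $C_n$ are non-negative, so that the radius of convergence $r$ is controlled by the behaviour of $C(z)$ along the positive real axis, and to identify $C(z)$ with the meromorphic function $\beta(z)/(1-\alpha(z))$ wherever the latter is regular. The identity \eqref{renewal power series equation} will be the main bridge between the two descriptions.

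First I would record the elementary consequences of non-negativity. Since $\alpha_n,\beta_n\ge 0$, an induction on the recursion \eqref{renewal equation appendix} gives $C_n\ge 0$ for all $n$; in particular $C_n\ge \beta_n$, so the coefficients of $C$ dominate those of $\beta$ termwise and hence $r\le R_\beta$. Moreover $\alpha_1>0$ together with $\alpha_n\ge0$ in \eqref{def:powerseries} yields $\alpha(t)\ge \alpha_1 t$ for $t\ge0$, and $\alpha'(t)\ge\alpha_1>0$ makes $\alpha$ strictly increasing, so $t^\ast:=\min\{t\in[0,R_\alpha]:\alpha(t)=1\}$ is the \emph{unique} point with $\alpha(t^\ast)=1$ (when it exists). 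At that point $1=\alpha(t^\ast)\ge \alpha_1 t^\ast$, i.e.\ $t^\ast\le \alpha_1^{-1}$. Once the main identity $r=\min\{R_\beta,t^\ast\}$ is established, this immediately furnishes the asserted estimate $r\le\min\{R_\beta,\alpha_1^{-1}\}$.

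For the lower bound $r\ge \rho:=\min\{R_\beta,t^\ast\}$ I would argue by holomorphy. On the open disc $|z|<\rho$ one has $|z|<t^\ast\le R_\alpha$, so $\alpha(z)$ converges and $|\alpha(z)|\le \alpha(|z|)<\alpha(t^\ast)=1$, whence $1-\alpha(z)\ne0$; since also $|z|<R_\beta$, the function $\widetilde C(z):=\beta(z)/(1-\alpha(z))$ is holomorphic there. Because $1-\alpha(0)=1\ne0$, the equation \eqref{renewal power series equation} has a unique formal power-series solution, so the Taylor coefficients of $\widetilde C$ at $0$ coincide with the $C_n$; holomorphy of $\widetilde C$ on $|z|<\rho$ then forces $\sum_n C_n z^n$ to converge there, giving $r\ge\rho$.

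It remains to prove $r\le \rho$. If $R_\beta\le t^\ast$ this is already contained in $r\le R_\beta$. In the complementary case $t^\ast<R_\beta$ I would use a real-axis blow-up: the previous step gives $r\ge t^\ast$, so the identity $C(t)=\beta(t)/(1-\alpha(t))$ is valid for all $t\in[0,t^\ast)$; as $t\uparrow t^\ast$ one has $\alpha(t)\uparrow1$ while $\beta(t)\to\beta(t^\ast)\in(0,\infty)$ (finite since $t^\ast<R_\beta$, positive since $\beta\not\equiv0$ and $t^\ast>0$), so $C(t)\to+\infty$. By monotone convergence $\sum_n C_n (t^\ast)^n=\lim_{t\uparrow t^\ast}C(t)=\infty$, hence $r\le t^\ast=\rho$. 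Combining the two bounds yields $r=\rho$. The only point that genuinely needs care—and the main obstacle—is the boundary behaviour exactly at $z=t^\ast$ in the subcase $t^\ast=R_\alpha$, where $\alpha$ may fail to converge at the endpoint; there Abel's monotone limit still gives $\alpha(t)\uparrow1$ along the reals, so the blow-up argument goes through verbatim. The degenerate case $\beta\equiv0$, in which $C\equiv0$ and $r=\infty$, is excluded as it lies outside the renewal setting (indeed in the application $C_n=\|\Psi_{0,n}\|^2\ge1$).
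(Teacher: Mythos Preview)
Your proof is correct and follows essentially the same route as the paper's: exploit that $C_n\ge 0$ so that the radius of convergence is governed by the positive real axis, combine this with the functional equation \eqref{renewal power series equation}, and bound $t^\ast\le\alpha_1^{-1}$ via $\alpha(t)\ge\alpha_1 t$. The paper's own argument is little more than a sketch (it asserts that $r$ is ``restricted by'' $R_\beta$ and $t^\ast$ without separating the two inequalities), whereas you have carefully supplied both directions: the lower bound $r\ge\rho$ via holomorphy of $\beta/(1-\alpha)$ on $|z|<\rho$, and the upper bound via coefficient domination $C_n\ge\beta_n$ together with the real-axis blow-up at $t^\ast$. Your explicit identification of the degenerate case $\beta\equiv 0$ and of the boundary subtlety at $t^\ast=R_\alpha$ are points the paper glosses over.
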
 
		\begin{proof} Since $ (C_n) \in [0,\infty)^\mathbb{N}  $, the radius of convergence of $ C(z) $ equals
		$$
		r = \sup\left\{  t \in [0,\infty) \ : \ C(t) = \sum_{n=1}^\infty C_n t^n < \infty \right\} .
		$$
		By~\eqref{renewal power series equation}, this radius is restricted by: 1.~the radius of convergence of $ \beta(z) $, and 2.~the first solution on $[0,\infty) $ of the equation $ \alpha(t) = 1 $. Since  
		$ \alpha(0) = 0 $ and, for any $ 0\leq t < R_\alpha $ also  $ \alpha'(t) \geq \alpha'(0)  = \alpha_1 > 0 $, the first  solution on $[0,\infty) $ of $ \alpha(t) = 1 $ is in the interval $ (0, \alpha_1^{-1} ] $. This finishes the proof. 
		\end{proof}
		
		\subsection{Criteria for exponential convergence}
		By the above lemma, in case $ 0 < r < \min\{ R_\alpha, R_\beta \}  $, then $ r = \min\{ t \in [0,R_\alpha]    \ \big| \ \alpha(t) = 1 \} $, which means that $p_n \coloneqq \alpha_n r^n $ is a probability distribution on $ \mathbb{N} $ whose mean
		\begin{equation}
		\mu \coloneqq \sum_{n=1}^{\infty} n \alpha_n r^n
		\end{equation}
		is finite. In this situation, it is known \cite[XIII, Theorem 1]{feller1968introduction} that 
		\begin{equation}\label{eq:renewalFeller}
		\lim_{n\to \infty}  C_n r^n = \frac{\beta(r)}{\mu} . 
		\end{equation} 
		The following criterion ensures that the rate of this convergence is exponential with an explicit rate.  The idea of the proof is very similar to the one in \cite{baxendale2005renewal}.
	\begin{theorem} \label{thm:renewal appendix}
			Assume that $0<r< 1 \leq R_\beta$ and that there exist constants $c,C>0$ such that
			\begin{equation}\label{assumption renewal theorem}
				\alpha_n \leq C e^{-c(n-1)} \qquad \text{ for all } n\in \mathbb{N}.
			\end{equation}
					Then, we have
			\begin{equation} \label{renewal decay appendix}
				\vert C_n r^n - \beta(r)/\mu \vert = \mathcal{O}(R^{-n}), \qquad \text{ for } n\rightarrow \infty.
			\end{equation}
			for any 
			$$
				0<R< \min\left\{\frac{R_\beta}{r}, \frac{e^c}{r} \left(\frac{1}{1+\frac{C}{1-e^{-c}}}\right) \right\} .
			$$
		\end{theorem}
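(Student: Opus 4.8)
The plan is to pass to rescaled generating functions and reduce the claim to locating the zeros of $1-\alpha(rw)$. Write $\hat C(w) \coloneqq C(rw) = \sum_{n\ge1} C_n r^n w^n$, $\hat p(w) \coloneqq \alpha(rw) = \sum_{n\ge1} p_n w^n$ with $p_n = \alpha_n r^n$, and $\hat\beta(w)\coloneqq \beta(rw)$. Substituting $z = rw$ in~\eqref{renewal power series equation} gives $\hat C(w)\,(1-\hat p(w)) = \hat\beta(w)$. Since $r$ is the first positive root of $\alpha(t)=1$ and $\alpha_1>0$, we have $\hat p(1)=\alpha(r)=1$ and $\sum_n p_n = 1$, so $(p_n)$ is a probability distribution; the exponential bound forces $\mu = \sum_n n\alpha_n r^n<\infty$, and $\hat p'(1)=\mu\in(0,\infty)$. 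Using $1=\sum_n p_n$ I factor out the zero at $w=1$,
\[
1-\hat p(w) = \sum_{n\ge1} p_n(1-w^n) = (1-w)\,g(w), \qquad g(w)\coloneqq \sum_{k\ge0}\bar p_k\,w^k,\quad \bar p_k\coloneqq\sum_{n>k}p_n,
\]
so that $\bar p_0=1$ and $g(1)=\sum_k\bar p_k=\mu$.

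With $L\coloneqq \beta(r)/\mu = \hat\beta(1)/g(1)$ the target is to show that
\[
G(w)\coloneqq \hat C(w)-\frac{Lw}{1-w} = \frac{1}{1-w}\Big(\frac{\hat\beta(w)}{g(w)}-Lw\Big)
\]
is analytic on a disk of radius strictly larger than $R$. Indeed $[w^n]G = C_nr^n-L$ for $n\ge1$, so once $G$ is shown to be analytic on $\{|w|\le R''\}$ for some $R''>R$, the Cauchy estimate on the circle $|w|=R''$ yields $|C_nr^n-\beta(r)/\mu|\le \mathrm{const}\cdot (R'')^{-n}=O(R^{-n})$, which is~\eqref{renewal decay appendix}. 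The apparent pole of $G$ at $w=1$ is removable: the bracket vanishes there because $\hat\beta(1)/g(1)-L=0$, and $g(1)=\mu\neq0$ makes $\hat\beta/g$ analytic near $w=1$. Hence the only genuine obstructions to analyticity of $G$ are the singularity of $\hat\beta$ at $|w|=R_\beta/r$ and the zeros of $g$; note also that $g(w)=(1-\hat p(w))/(1-w)\neq0$ throughout $\{|w|<1\}$ since $|\hat p(w)|\le\sum_n p_n|w|^n<1$ there.

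The heart of the argument is therefore a quantitative zero-free region for $g$ beyond the unit disk, and this is exactly where the hypothesis~\eqref{assumption renewal theorem} enters. Writing $q\coloneqq re^{-c}<1$ and using $\alpha_n\le Ce^{-c(n-1)}$ together with $r<1$, I bound the tail coefficients by a single geometric sequence,
\[
\bar p_k=\sum_{n>k}\alpha_n r^n \le C\,r^{k+1}\sum_{n>k}e^{-c(n-1)} = \frac{C\,r}{1-e^{-c}}\,q^{k},\qquad k\ge0,
\]
which in particular shows $g$ is analytic on $\{|w|<1/q=e^c/r\}$. For $|w|=\rho$ with $q\rho<1$ the triangle inequality then gives
\[
|g(w)|\ge \bar p_0-\sum_{k\ge1}\bar p_k\rho^k \ge 1-\frac{Cr}{1-e^{-c}}\,\frac{q\rho}{1-q\rho},
\]
whose right-hand side is strictly positive precisely when $q\rho<\big(1+\tfrac{Cr}{1-e^{-c}}\big)^{-1}$, i.e.\ for $\rho<\frac{e^c}{r}\big(1+\tfrac{Cr}{1-e^{-c}}\big)^{-1}$. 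Since $r<1$, this zero-free radius dominates the asserted $\frac{e^c}{r}\big(1+\tfrac{C}{1-e^{-c}}\big)^{-1}$.

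It remains to assemble the pieces. Set $R_0=\min\{R_\beta/r,\ \frac{e^c}{r}(1+\tfrac{Cr}{1-e^{-c}})^{-1}\}$; by the above, $G$ is analytic on $\{|w|<R_0\}$ (with the removable singularity at $w=1$ absorbed), and the claimed threshold is $\le R_0$. Thus for any $R$ below the claimed threshold one may pick $R''\in(R,R_0)$, apply the Cauchy estimate on $|w|=R''$, and conclude $|C_nr^n-\beta(r)/\mu|=O(R^{-n})$. The hard part is the explicit zero-free region for $g$: everything else is bookkeeping with the factorization $1-\hat p=(1-w)g$ and with the cancellation $L=\hat\beta(1)/g(1)$ at $w=1$, but pushing the nearest root of $1-\hat p$ strictly past $|w|=1$ with a computable rate is genuinely where the exponential decay of $\alpha_n$ must be used, exactly as in~\cite{baxendale2005renewal}.
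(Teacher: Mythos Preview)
Your proof is correct and follows essentially the same route as the paper's: both rescale to $w=z/r$, factor $1-\alpha(rw)=(1-w)g(w)$ with $g(w)=\sum_{k\ge0}\bar p_k w^k$, identify the generating function of $C_nr^n-\beta(r)/\mu$, observe that the singularity at $w=1$ is removable via $g(1)=\mu$, and then use the exponential tail bound on $\alpha_n$ to produce an explicit zero-free disk for $g$ beyond $|w|=1$, concluding by Cauchy's estimates. Your bookkeeping is slightly tighter (you retain the extra factor $r$ in $\bar p_k\le \frac{Cr}{1-e^{-c}}q^k$, yielding a marginally larger zero-free radius than the paper's), but the argument is the same.
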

		\begin{proof}
			Note that $R_\alpha \geq  e^{c } > 1   $ by the exponential bound~\eqref{assumption renewal theorem}, and hence  $\mu<\infty$.
			We want to estimate 
			\begin{align*}
				x_n= C_n r^n -\frac{\beta(r)}{\mu}.
			\end{align*}
			For $\vert z \vert<1 $ all the power series below converge due to our assumptions and we have, using \eqref{renewal power series equation},
			\begin{equation} \label{def X(z)}
				X(z) :=\sum_{n=1}^\infty x_n z^n 
				= C(rz) - \frac{\beta(r)}{\mu} \frac{z}{1-z}
				= \frac{\beta(rz)}{1-\alpha(rz)} - \frac{\beta(r)}{\mu} \frac{z}{1-z}.
			\end{equation}
			As $\alpha(r)=1$, we get $\sum_{n=1}^{\infty} p_n =1$ and therefore,
			\begin{align*}
				\alpha(rz) - 1 & = \sum_{n=1}^{\infty} p_n z^n -1 
				= \sum_{n=1}^\infty p_n(z^n -1)
				= \sum_{n=1}^{\infty} p_n (z-1) \sum_{j=0}^{n-1} z^j \\
				& = (z-1) \left(1+\sum_{k=1}^\infty \left(\sum_{n=k+1}^\infty p_n\right) z^k\right)
			\end{align*}
			where the last step is by interchanging the sums. Inserting this into \eqref{def X(z)} yields
			\begin{align*}
				X(z) = \frac{1}{1-z} \left(\frac{\beta(rz)}{1+\sum_{k=1}^{\infty} \left(\sum_{n=k+1}^{\infty} p_n\right) z^k} -\frac{\beta(r)}{\mu} z \right).
			\end{align*}
			The second factor vanishes for $z=1$, since $ \mu =  \sum_{k=0}^{\infty} \left(\sum_{n=k+1}^{\infty} p_n\right)  $. Moreover, it has radius of convergence $R_X$ equal to the minimum of $R_\beta/r >R_\beta \geq 1 $ and
			\begin{align*}
				R_0\coloneqq\min \left\{ |z| \ :  \ \sum_{k=1}^\infty \left(\sum_{n=k+1}^\infty p_n\right) z^k =-1 \right\} .
			\end{align*}
			Thus, $X(z)$ also has radius of convergence $R_X= \min\{R_\beta/r, R_0\}$.
			
			We now need to establish a lower bound for $R_0$. For this we define
			\begin{align*}
				G(z) \coloneqq \sum_{k=1}^{\infty} \left(\sum_{n=k+1}^{\infty} p_n\right) z^k.
			\end{align*}
			As $p_n\geq 0$, we get
			$	\vert G(z) \vert \leq G(\vert z\vert) $, and hence
			$
				R_0\geq \min \left\{ t\in [0,\infty) \ : \ G(t) =1 \right\} $. 
			Using $r\leq 1$ and \eqref{assumption renewal theorem}, we obtain
			\begin{align*}
				G(\vert z \vert) 
				&= \sum_{k=1}^{\infty} \left(\sum_{n=k+1}^{\infty} \alpha_n r^n \right) \vert z \vert^k
				\leq  C \sum_{k=1}^{\infty} \left(\sum_{n=k+1}^{\infty} e^{-c(n-1)}\right) \vert rz \vert^k \\
				&=C \sum_{k=1}^\infty \frac{e^{-ck}}{1-e^{-c}} \vert rz \vert^k
				= \frac{C}{1-e^{-c}} \frac{\vert rz\vert e^{-c}}{1-\vert rz\vert e^{-c}}.
			\end{align*}
			The solution of
		$
			\frac{C}{1-e^{-c}} \frac{\vert rz\vert e^{-c}}{1-\vert rz\vert e^{-c}} = 1 $ 
			is equal to
			$
				\vert z \vert = \frac{e^c}{r} \left(\frac{1}{1+\frac{C}{1-e^{-c}}}\right) $. 
			Thus, the radius of convergence of $X(z)$ is bounded from below by
			\begin{align*}
				R_X \geq \min\left\{\frac{R_\beta}{r}, \frac{e^c}{r} \left(\frac{1}{1+\frac{C}{1-e^{-c}}}\right) \right\}.
			\end{align*}
		\end{proof}
		
		\subsection{Implication for the norm of the Laughlin wavefunction on cylinder}
		In case of the Laughlin wavefunction, $ b = 0 $, on the cylinder, the norm-squared $ \| \Psi_{0,N}\|^2 $ satisfies the discrete renewal equation~\eqref{renewal equation appendix} with $ \alpha_n = \beta_n $
		given by the irreducible contribution~\eqref{eq:normirrep}  to the norm (with $ b = 0 $). 
		More generally, if $ M \in \mathbb{N}_0 $ and $ b^{(2)} \in \mathbb{N}_0^M $ is a partition, then $ b = \big(0 , \dots , 0 , b^{(2)} \big) \in \mathbb{N}_0^N $ with $ N > M $ models a fractional quantum Hall state 
		which is a Laughlin state in the bulk, but a thinned-out version at the right boundary. Renewal theory then implies
		\begin{cor}
		Assume that $C_q(\gamma)>0$ and let $ M \in \mathbb{N}_0 $ be fixed and $ b^{(2)} \in \mathbb{N}_0^M $  be a partition. Then the norm-squared 
		$$ C_N \coloneqq \big\| \Psi_{b,N} \big\|^2 $$ 
		of the fractional quantum Hall wavefunction associated with $ b = \big(0 , \dots , 0 , b^{(2)} \big) \in \mathbb{N}_0^N $ and $ N > M $ grows exponentially  with a rate
		\begin{equation}\label{eq:pressure}
		r^{-1} \coloneqq \lim_{N\to \infty} \frac{\ln C_N}{N} = \sup_{N \geq M} \frac{\ln C_N}{N} > 1,
		\end{equation}
		which does not depend on $b^{(2)}$. 
		The power series~\eqref{def:powerseries} with $ \alpha_n = \| \widehat \Psi_{0,n} \|^2 $ and $ \beta_n = \| \widehat \Psi_{b,n} \|^2 $
		converge for $ | z | < e^{C_q(\gamma) q} \leq R_\beta $ and we have $ r > 0 $. 
		Moreover, the finite-$ N $ estimate~\eqref{renewal decay appendix} holds with
		\begin{equation}
			R< \frac{e^{C_q(\gamma)q}}{ r} \left(\frac{1}{1+\frac{e^{-b^{(2)}_N}}{1-e^{-C_q(\gamma)q}}}\right).
		\end{equation}
		\end{cor}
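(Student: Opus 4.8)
The plan is to recognize $C_N=\|\Psi_{b,N}\|^2$ as the solution of a discrete renewal equation of the type~\eqref{renewal equation appendix} and then to feed it into the machinery already developed in this appendix, namely the Lemma preceding Theorem~\ref{thm:renewal appendix} (radius of convergence), Feller's theorem~\eqref{eq:renewalFeller}, and Theorem~\ref{thm:renewal appendix} (exponential rate). The first and only genuinely new step is to produce the renewal equation; everything afterwards is an application of the general results together with bookkeeping of the explicit constants $C_q(\gamma)$ and $b^{(2)}_N$.

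First I would derive the renewal equation from the orthogonal factorization~\eqref{eq:normPsi}, which is a consequence of Theorem~\ref{thm:fact}. The key structural input is that $b=(0,\dots,0,b^{(2)})$ carries all its voids in the last $M$ sites, so every irreducible segment that stays in the bulk is a pure Laughlin segment contributing $\alpha_n=\|\widehat\Psi_{0,n}\|^2$, while only the segments touching the boundary block carry voids. Conditioning on the leftmost renewal point (Lemma~\ref{lm:splitting}) and peeling bulk Laughlin segments from the left via the concatenation structure of Definition~\ref{def:composition} gives $C_N=\sum_{j=1}^{N-1}\alpha_j\,C_{N-j}+\beta_N$, equivalently the generating-function identity~\eqref{renewal power series equation}, $C(z)\bigl(1-\alpha(z)\bigr)=\beta(z)$, where $\beta(z)$ gathers the configurations whose first segment already reaches the boundary. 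This boundary bookkeeping is the main obstacle: since renewal points may fall \emph{inside} the boundary block (the root partition itself has a renewal point at every site), the forcing term is really the sum of all boundary-touching irreducible patterns rather than the single term $\|\widehat\Psi_{b,n}\|^2$, and one must check that these patterns reassemble consistently, that $\beta_n\ge 0$, and that $\beta_n$ still obeys the same exponential bound as $\|\widehat\Psi_{b,n}\|^2$.

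Next I would pin down the radii of convergence and the growth rate. The irreducible bound~\eqref{eq:normirrep} yields $\alpha_n\le e^{-C_q(\gamma)q(n-1)}$ and $\beta_n\le e^{-C_q(\gamma)(q(n-1)+b^{(2)}_N)}$, so both power series~\eqref{def:powerseries} converge for $|z|<e^{C_q(\gamma)q}$, giving $R_\alpha,R_\beta\ge e^{C_q(\gamma)q}>1$, and $\alpha_1=\|\widehat\Psi_{0,1}\|^2=1>0$. The Lemma preceding Theorem~\ref{thm:renewal appendix} then identifies $r=\min\{R_\beta,t^*\}$ with $t^*$ the smallest root of $\alpha(t)=1$ in $(0,R_\alpha]$. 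For $q\ge 2$ one has $\alpha(1)=\sum_n\|\widehat\Psi_{0,n}\|^2>1$ (as $\alpha_n>0$ for some $n\ge 2$, whereas for $q=1$ the Laughlin function is a single Slater determinant and $C_N\equiv 1$), hence $t^*<1\le R_\beta$, so $r=t^*\in(0,1)$ and $r^{-1}>1$; crucially $r$ is determined by $\alpha$ alone and is therefore independent of $b^{(2)}$. Feller's theorem~\eqref{eq:renewalFeller} gives $C_N r^N\to\beta(r)/\mu\in(0,\infty)$, so $\lim_N N^{-1}\ln C_N$ exists, equals $\ln r^{-1}$, and is independent of $b^{(2)}$; its identification with $\sup_{N\ge M}N^{-1}\ln C_N$ follows from supermultiplicativity~\eqref{eq:supermult} and Fekete's lemma, exactly as invoked for the Laughlin pressure in the main text.

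Finally, for the quantitative finite-$N$ correction I would apply Theorem~\ref{thm:renewal appendix}. Its hypotheses $0<r<1\le R_\beta$ and the exponential kernel bound~\eqref{assumption renewal theorem} have just been verified, with $c=C_q(\gamma)q$ and the constant $C$ read off from~\eqref{eq:normirrep} (the boundary factor supplies $e^{-b^{(2)}_N}$ via~\eqref{def:Cgamma}). The theorem then delivers the decay~\eqref{renewal decay appendix} for every $R$ below $\min\bigl\{R_\beta/r,\ \tfrac{e^{c}}{r}\bigl(1+\tfrac{C}{1-e^{-c}}\bigr)^{-1}\bigr\}$, which is precisely the displayed threshold. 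I expect the bulk of the writing effort to concentrate on the first paragraph; the remaining claims reduce to checking the already-proven general lemmas apply and carrying the constants through.
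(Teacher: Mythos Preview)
Your approach is essentially the paper's: set up the renewal equation from the factorization identity~\eqref{eq:renewal}/\eqref{eq:normPsi}, read off the radii of convergence from the irreducible bound~\eqref{eq:normirrep}, invoke Fekete for the existence of the pressure, and feed everything into Theorem~\ref{thm:renewal appendix}. The paper's proof is terser: it asserts directly that~\eqref{eq:renewal} yields $C_n=\sum_{j=1}^{n-1}\alpha_j C_{n-j}+\beta_n$ with $\beta_n=\|\widehat\Psi_{b,n}\|^2$, proves $r^{-1}>1$ from $C_{N+2}\ge(1+\alpha_2)C_N$ rather than from $\alpha(1)>1$, and deduces independence of the rate from $b^{(2)}$ via Theorem~\ref{thm:EGap} rather than from the observation that $r$ is fixed by $\alpha$ alone. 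Your variants for these last two points are equally valid and arguably more direct.

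Your caution about the forcing term is well placed and is precisely the point the paper glosses over. Peeling the first irreducible segment from the left gives $\alpha_{N_1}C_{N-N_1}$ only when $N_1\le N-M$; once the first segment reaches into the boundary block (which can happen for $M\ge 2$ with $b^{(2)}_1>0$) one gets $\|\widehat\Psi_{(0^{N-M},b^{(2)}_1,\dots,b^{(2)}_k),\,N-M+k}\|^2$ times a norm that is \emph{not} $C_{M-k}$, so the identity with $\beta_n=\|\widehat\Psi_{b,n}\|^2$ does not follow verbatim from~\eqref{eq:renewal}. Your proposed fix---absorb all boundary-touching configurations into a modified forcing sequence and verify it is nonnegative with the same exponential decay---is the right way to close this; note that the limiting constant in~\eqref{renewal decay appendix} then becomes $\tilde\beta(r)/\mu$ rather than $\beta(r)/\mu$, and the displayed bound on $R$ requires tracking the constant in~\eqref{assumption renewal theorem} for this modified $\tilde\beta$ (or rather for $\alpha$, since Theorem~\ref{thm:renewal appendix} only uses the bound on $\alpha_n$; the $e^{-b^{(2)}_N}$ in the displayed threshold is not explained by the paper either).
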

		\begin{proof}
		Keeping $ M $ fixed,  the norm-squared $ C_n \coloneqq \| \Psi_{b,n} \|^2 $ satisfies~\eqref{renewal equation appendix} 
		with $ \alpha_n = \| \widehat \Psi_{0,n} \|^2 $ and $ \beta_n = \| \widehat \Psi_{b,n} \|^2 $. This is the content of~\eqref{eq:renewal}. 
		
		The existence of the limit and its representation as a supremum in~\eqref{eq:pressure} follows from  Fekete's lemma by the supermultiplicativity~\eqref{eq:supermult}. 
		Together with \eqref{renewal equation appendix},  this implies $C_{N+2} \geq (1+\alpha_2)C_N$, and therefore $r^{-1}\geq 1+\alpha_2>1.$ The rate does not depend on $b^{(2)}$, as we get from Theorem~\ref{thm:EGap} that there exists a constant $K>0$ such that for all $N>M$ we have $K^{-1}C_M \Vert \Psi_N\Vert^2 \leq C_N \leq K C_M \Vert \Psi_N\Vert^2$. The claim about the radii of convergence follows from the exponential bound~\eqref{eq:normirrep}. 
		By general renewal theory  \cite[XIII, Theorem 1]{feller1968introduction}, one then has $ r > 0 $ and~\eqref{eq:renewalFeller} holds. The proof is finished by an application of Theorem~\ref{thm:renewal appendix}. 
		\end{proof}

		\section{Comparison with previous constructions} \label{app:B}
		
		In this appendix, we review the construction of the iMPS representation of \cite{estienne2013fractional} and explain how to derive it from our representation rigorously. In particular, we rigorously link the mode operators of~\cite{estienne2013fractional} with the operators defined and studied in Section~\ref{sec:iMPS}. 
		
		\subsection{Formal construction of chiral CFT fields}
		The work~\cite{estienne2013fractional} formally starts with a (pre-)Hilbert space generated by operators $(a_n)_{n\in \mathbb{Z}}$ and $\mathrm{exp}(\pm i \varphi_0)$ acting on some abstract vacuum vector $\vert 0 \rangle$, which satisfy the following commutation relations
		\begin{equation} \label{commutation relation 2}
			[a_n, a_m] = n \delta_{n+m,0}, \qquad [\varphi_0, a_n] = i \delta_{n,0}, \qquad \text{for all } n,m \in \mathbb{Z} ,
		\end{equation}
		and which annihilate the vacuum for positive integers, 
		\begin{equation} \label{cond on annihilation operators}
			a_n \vert 0 \rangle = 0, \qquad \text{for all } n\in \mathbb{N}.
		\end{equation}
		One needs to be cautious with the interpretation of these requirements. 
			As is well known  \cite{vonNeumann1931eindeutigkeit}, all irreducible representations of $[\varphi_0, a_0] =i$, when realized as hermitian operators as is assumed here,  are unitarily equivalent to the position and momentum operator on $L^2(\mathbb{R})$, which do not admit a vacuum as in~\eqref{cond on annihilation operators}. 
			
			This is bypassed by not working with irreducible representations and $ \varphi_0 $ directly. We rather give meaning to a suitable version of $\exp(\pm i \varphi_0)$ only. To do so, 
one takes copies of the Hilbert space $ \mathcal{H} $  defined in Section~\ref{sec:iMPS}, and sets
		\begin{equation}\label{eq:directsumG}
			 \mathcal{G} = \bigoplus_{N\in \mathbb{Z}} \mathcal{H}. 
		\end{equation}
		If one labels the vacuum in the $ N $th copy of $ \mathcal{H} $ by $ | N \rangle $ with $ N \in \mathbb{Z} $, the copies are linked by the action of a unitary group, 
		$$
			\vert N\rangle = \exp\left(i \frac{N}{\sqrt{q}}\varphi_0\right) \vert 0 \rangle ,  
		$$
		and the fact that these vacua are the highest weight states with respect to the operators $(a_n)_{n\in \mathbb{Z}}$:
		$$
		a_0 \vert N \rangle = \frac{N}{\sqrt{q}} \vert N \rangle, \qquad a_n \vert N \rangle = 0 \qquad \text{for all } n\in \mathbb{N} . 
		$$	
		Moreover, the scalar product is chosen in such a way that
		\begin{align*}
			a_n^*= a_{-n}, \qquad \text{for all } n\in \mathbb{Z} . 
		\end{align*}

		Central operators in the construction of~\cite{estienne2013fractional} are the vertex operators $V(z)$ labeled by $ z \in \mathbb{C} $ and acting in $\mathcal{G}$. They are formal Laurent series with coefficients in the space of linear operators $\mathrm{End}(\mathcal{G})$. This means $V(z)\in \mathrm{End}(\mathcal{G})[[z, z^{-1}]]$ can be represented formally as 
		\begin{align*}
			V(z) = \sum_{n\in \mathbb{Z}} b_n z^n, \qquad b_n \in \mathrm{End}(\mathcal{G}).
		\end{align*} 
		Such operators are called a field if for each $v\in \mathcal{G} $ there exists $K_v\in \mathbb{Z}$ such that we have $b_n v=0$ for all $n<K_v$. For a product $a_{n_1} \dots a_{n_m}$,  its normal order to be given by 
		$$:a_{n_1} \dots a_{n_m}:=a_{\tau(n_1)} \dots a_{\tau(n_m)}$$
		where $\tau$ is a permutation such that $\tau(n_j)\geq \tau(n_{k})$ for all $1\leq j\leq k \leq m$.
		In \cite{estienne2013fractional} it is proposed to use the normal-ordered exponential $:\exp(i\sqrt{q} \varphi(z)) :$ as the vertex operator, where $\varphi$ is the free field
		\begin{align*}
			\varphi(z) = \varphi_0 -i a_0 \log(z) + i \sum_{n\in \mathbb{Z}\setminus \{0\}} \frac{1}{n} a_n z^{-n}.
		\end{align*}
		Despite its name, the free field is not a field in the above sense, and the use of $\log(z)$ does not make sense: in order to expand the logarithm into a Laurent series centred around zero it would need to be analytic in an annulus around the origin, which is not the case. One should think of $: \exp(\sqrt{q} a_0 \log(z)):$ as some formal object which one needs to attach meaning to a posteriori.

\subsection{Definition of vertex and mode operators and iMPS representation}
By the definition of the normal ordering, the vertex operators in the above construction are equal to 
\begin{align} \label{vertex operator Estienne et al}
	V(z) = \exp\left(-\sqrt{q} S_-(z)\right) \exp\left(-\sqrt{q} S_+(z)\right) \exp(i \sqrt{q} \varphi_0) \exp(\sqrt{q} a_0 \log(z)),
\end{align}
with $S_-(z), S_+(z)$ as in \eqref{def S+ S-}. To give an analytical meaning to these operators, one proceeds as in Section~\ref{sec:iMPS} and defines the subspaces 
\begin{align*}
		\mathcal{G}_{0,N} \coloneqq \mathrm{span}\left\{ \left(\prod_{k=1}^L (a_{k}^*)^{n_k}\right) \vert N \rangle \ : \ L\in \mathbb{N}_0, n_k \in \mathbb{N}_0 \right\} 
	\end{align*}
	whose closures $ \mathcal{G}_{N} \coloneqq \overline{ \mathcal{G}_{0,N}} $ then yield the full Hilbert space $ \mathcal{G} =  \bigoplus_{N\in \mathbb{Z}} \mathcal{G}_{N} $.
	We denote by $P_N$  the orthogonal projection onto $  \mathcal{G}_{N}  $, and set $ T $ on the dense domain $\mathcal{G}_0 \coloneqq  \bigoplus_{N\in \mathbb{Z}} \mathcal{G}_{0,N} $
	\begin{align*}
		T :  \mathcal{G}_0  \rightarrow  \mathcal{G}_0  \quad T \vert N \rangle \coloneqq \vert N+q \rangle , \; \mbox{and extended through the requirement} \quad [ T, a_n^* ] = 0 \; \mbox{for all $ n \in \mathbb{N} $.}
	\end{align*}
	The operator $T$ then extends to a unitary operator on $\mathcal{G}$. 
	Also extending the operators $W_m$ defined in \eqref{definition Wm} to the direct sum  $\mathcal{G}_0 $, the mode operator for $ m \in \mathbb{Z} $  is then defined on the dense domain $\mathcal{G}_0 $ by
	\begin{equation} \label{definition Vm}
		V_{-m-h} \coloneqq \sum_{N\in \mathbb{Z}} W_{m-N} T P_N .
	\end{equation}
	The integer offset $h$ is referred to as the conformal dimension of the vertex operator
	\begin{equation} \label{mode operators Estienne et al}
	V(z) = \sum_{m\in \mathbb{Z}} V_{-m-h} z^m .
\end{equation}
	when interpreted as a formal Laurent series. Note that
	the operators $T$ and $P_N$ are bounded and both commute with all $W_m$'s. Since $a_0 v_N = (N/\sqrt{q})  v_N$ for all $v_N \in \mathcal{G}_N$, the action of the last factor in~\eqref{vertex operator Estienne et al} is hence
	\begin{align*}
		\exp(\sqrt{q} a_0 \log(z)) v_N = \exp(\sqrt{q} (m/\sqrt{q}) \log(z)) v_N = z^N v_N
		= z^N P_N v_N
	\end{align*}
	for all $z\neq 0$. 
	The second factor in~\eqref{vertex operator Estienne et al} equals $ T $. The first factor agrees with $W(z) $ in~\eqref{eq:informalW}. Consequently, the definition~\eqref{definition Vm} indeed gives an analytical meaning to 
	the normal ordered vertex operator~\eqref{vertex operator Estienne et al}. \\

The iMPS representation of the fractional quantum Hall wavefunctions derived in Section~\ref{sec:iMPS} may be reformulated in terms of the mode operators. To do so, we note that for every $w_N\in \mathcal{G}_N$, we have
	$
		V_{-m-h} w_N = W_{m-N} T w_N $, and hence for any $ \mathbf{k} \in \mathbb{Z}^N $: 
\begin{align}
	\langle qN \vert V_{-k_1-h} \dots V_{-k_N-h} \vert 0 \rangle 
	&= \langle qN \vert W_{k_1-q(N-1)} W_{k_2-q(N-2)} \dots W_{k_{N-1}-q} W_{k_N} \vert qN \rangle \notag \\
	&= \langle 0 \vert W_{k_1-q(N-1)} W_{m_2-q(N-2)} \dots W_{k_{N-1}-q} W_{k_N} \vert 0 \rangle,
\end{align}
where we have used that the $W_m$'s commute with $T$ and that $T$ is unitary.  
Using the permutation property~\eqref{eq:permutations}  of the $ W_m $'s, we also conclude
that for all $ \sigma \in \mathcal{S}_N $:
\begin{equation} \label{permutation V}
	\langle qN \vert V_{-k_{\sigma(1)}-h} \dots V_{-k_{\sigma(N)}-h} \vert 0 \rangle
	= \mathrm{sign}(\sigma)^q  \ \langle qN \vert V_{k_1-h} \dots V_{k_N-h} \vert 0 \rangle . 
\end{equation}

\begin{rem} Despite being repeated throughout the literature,  the mode operators $V_{-m-h}$ do not (anti)commute with each other in general. Indeed, one easily computes 
$
	V_{-m-h} V_{-k-h} \vert \ell \rangle = W_{m-q-\ell} W_k \vert 2q + \ell \rangle $ 
and hence for $m=0, k=-q$ and $\ell=-2q$, we arrive at
\begin{align*}
	V_{0-h} V_{-(-q)-h} \vert 0 \rangle = W_q W_{-q} \vert 0 \rangle = 0 \neq \vert 0 \rangle = W_0 W_0 \vert 0 \rangle = V_{-(-q)-h} V_{0-h} \vert 0 \rangle.
\end{align*}
\end{rem}

\paragraph{Acknowledgements.} This work was supported by the DFG under grant TRR 352--Project-ID 470903074. SW was also supported under EXC-2111 -- 390814868.  S.S. would like to thank Dmytro Bondarenko for helpful discussions.

		\bibliography{biblio}

@article{BHM10,
	abstract = {We study zero-temperature stability of topological phases of matter under weak time-independent perturbations. Our results apply to quantum spin Hamiltonians that can be written as a sum of geometrically local commuting projectors on a D-dimensional lattice with certain topological order conditions. Given such a Hamiltonian H0, we prove that there exists a constant threshold ϵ\&gt;0 such that for any perturbation V representable as a sum of short-range bounded-norm interactions, the perturbed Hamiltonian H=H0+ϵV has well-defined spectral bands originating from low-lying eigenvalues of H0. These bands are separated from the rest of the spectra and from each other by a constant gap. The band originating from the smallest eigenvalue of H0 has exponentially small width (as a function of the lattice size). Our proof exploits a discrete version of Hamiltonian flow equations, the theory of relatively bounded operators, and the Lieb--Robinson bound.},
	author = {Bravyi, S. and Hastings, M. B. and Michalakis, S.},
	date-added = {2026-01-03 17:41:01 +0100},
	date-modified = {2026-01-03 17:41:33 +0100},
	doi = {10.1063/1.3490195},
	eprint = {https://pubs.aip.org/aip/jmp/article-pdf/doi/10.1063/1.3490195/14759707/093512_1_online.pdf},
	issn = {0022-2488},
	journal = {Journal of Mathematical Physics},
	number = {9},
	pages = {093512},
	title = {Topological quantum order: Stability under local perturbations},
	url = {https://doi.org/10.1063/1.3490195},
	volume = {51},
	year = {2010},
	bdsk-url-1 = {https://doi.org/10.1063/1.3490195}}

@article{10.1063/1.5046122,
	abstract = {Holomorphic functions that characterize states in a two-dimensional Landau level have been central to key developments such as the Laughlin state. Their origin has historically been attributed to a special property of ``Schr{\"o}dinger wavefunctions'' of states in the ``lowest Landau level.'' It is shown here that they instead arise in any Landau level as a generic mathematical property of the Heisenberg description of the non-commutative geometry of guiding centers. When quasiperiodic boundary conditions are applied to compactify the system on a torus, a new formula for the overlap between holomorphic states, in the form of a discrete sum rather than an integral, is obtained. The new formula is unexpected from the previous ``lowest-Landau level Schr{\"o}dinger wavefunction'' interpretation.},
	author = {Haldane, F. D. M.},
	date-added = {2025-12-27 12:51:00 +0100},
	date-modified = {2025-12-27 12:51:24 +0100},
	doi = {10.1063/1.5046122},
	eprint = {https://pubs.aip.org/aip/jmp/article-pdf/doi/10.1063/1.5046122/14801597/081901_1_online.pdf},
	issn = {0022-2488},
	journal = {Journal of Mathematical Physics},
	month = {08},
	number = {8},
	pages = {081901},
	title = {The origin of holomorphic states in {L}andau levels from non-commutative geometry and a new formula for their overlaps on the torus},
	url = {https://doi.org/10.1063/1.5046122},
	volume = {59},
	year = {2018},
	bdsk-url-1 = {https://doi.org/10.1063/1.5046122}}

@inbook{Simon21,
	author = {Simon, S. H.},
	booktitle = {Fractional Quantum Hall Effects},
	chapter = {8},
	date-modified = {2026-01-02 22:19:02 +0100},
	doi = {doi:10.1142/9789811217494{\_}0008},
	editor = {Halperin, B. I. and Jain, J. K.},
	pages = {377--434},
	publisher = {World Scientific},
	title = {Wavefunctionology: The Special Structure of Certain Fractional Quantum Hall Wavefunctions},
	url = {https://doi.org/10.1142/9789811217494_0008},
	year = {2020},
	bdsk-url-1 = {https://doi.org/10.1142/9789811217494_0008},
	bdsk-url-2 = {https://doi.org/10.1142/9789811217494%7B%5C_%7D0008}}

@incollection{Young25,
	author = {Young, A.},
	booktitle = {Encyclopedia of Mathematical Physics},
	date-added = {2025-12-22 22:12:16 +0100},
	date-modified = {2025-12-22 22:15:15 +0100},
	edition = {2nd ed.},
	editor = {Szabo, R. and Bojowald, M.},
	pages = {111-124},
	publisher = {Academic Press},
	title = {Quantum Spin Systems},
	year = {2025}}

@article{LNWY25,
	author = {Lemm, M. and Nachtergaele, B. and Young, A. and Warzel, S.},
	date-added = {2025-12-22 22:03:46 +0100},
	date-modified = {2026-01-11 15:37:51 +0100},
	journal = {Journal of Mathematical Physics},
	title = {Recursive spectral relations and the charge versus neutral gap in fractional quantum {H}all systems},
	volume = {67},
	year = {2026}}

@article{Mazaheri:2015aa,
	author = {Mazaheri, T. and Ortiz, G. and Nussinov, Z. and Seidel, A.},
	date = {2015/02/20/},
	date-added = {2025-12-20 22:38:26 +0100},
	date-modified = {2025-12-24 11:16:09 +0100},
	day = {20},
	doi = {10.1103/PhysRevB.91.085115},
	id = {10.1103/PhysRevB.91.085115},
	j1 = {PRB},
	journal = {Physical Review B},
	journal1 = {Phys. Rev. B},
	month = {02},
	number = {8},
	pages = {085115--},
	publisher = {American Physical Society},
	title = {Zero modes, bosonization, and topological quantum order: The {L}aughlin state in second quantization},
	url = {https://link.aps.org/doi/10.1103/PhysRevB.91.085115},
	volume = {91},
	year = {2015},
	bdsk-url-1 = {https://link.aps.org/doi/10.1103/PhysRevB.91.085115},
	bdsk-url-2 = {https://doi.org/10.1103/PhysRevB.91.085115}}

@article{Rezayi:1994aa,
	author = {Rezayi, E. H. and Haldane, F. D. M.},
	date = {1994/12/15/},
	date-added = {2025-12-20 22:35:57 +0100},
	date-modified = {2025-12-20 22:35:57 +0100},
	day = {15},
	doi = {10.1103/PhysRevB.50.17199},
	id = {10.1103/PhysRevB.50.17199},
	j1 = {PRB},
	journal = {Physical Review B},
	journal1 = {Phys. Rev. B},
	month = {12},
	number = {23},
	pages = {17199--17207},
	publisher = {American Physical Society},
	title = {Laughlin state on stretched and squeezed cylinders and edge excitations in the quantum {H}all effect},
	url = {https://link.aps.org/doi/10.1103/PhysRevB.50.17199},
	volume = {50},
	year = {1994},
	bdsk-url-1 = {https://link.aps.org/doi/10.1103/PhysRevB.50.17199},
	bdsk-url-2 = {https://doi.org/10.1103/PhysRevB.50.17199}}

@article{Aizenman:2010aa,
	abstract = {Quasi 1D systems are systems of particles in domains which are of infinite extent in one direction and of uniformly bounded size in all other directions, e.g., a cylinder of infinite length. The main result proven here is that for such particle systems with Coulomb interactions and neutralizing background, the so-called ``jellium'', at any temperature and at any finite-strip width, there is translation symmetry breaking. This extends the previous result on Laughlin states in thin, 2D strips by Jansen et al. (Commun Math Phys 285:503--535, 2009). The structural argument which is used here bypasses the question of whether the translation symmetry breaking is manifest already at the level of the one particle density function. It is akin to that employed by Aizenman and Martin (Commun Math Phys 78:99--116, 1980) for a similar statement concerning symmetry breaking at all temperatures in strictly 1D Coulomb systems. The extension is enabled through bounds which establish tightness of finite-volume charge fluctuations.},
	author = {Aizenman, M. and Jansen, S. and Jung, P.},
	date = {2010/12/01},
	date-added = {2025-12-13 09:48:16 +0100},
	date-modified = {2025-12-24 11:10:06 +0100},
	doi = {10.1007/s00023-010-0067-y},
	id = {Aizenman2010},
	isbn = {1424-0661},
	journal = {Annales Henri Poincar{\'e}},
	number = {8},
	pages = {1453--1485},
	title = {Symmetry Breaking in Quasi-1D {C}oulomb Systems},
	url = {https://doi.org/10.1007/s00023-010-0067-y},
	volume = {11},
	year = {2010},
	bdsk-url-1 = {https://doi.org/10.1007/s00023-010-0067-y}}

@article{Uhlmann:1970aa,
	abstract = {We give the definition of functionals r(K,x) and r(K,S,x) defined on convex sets K without or with respect to locally convex topology with the help of a strongly convex function r(p) on a unit interval. If r = ---p ln p we refer r(K,x) to be the Shannon entropy of x relative to the convex set K. In the case of the convex set Zn of density matrices this definition gives the usual Shannon--Gibbs entropy and yields a new defining inequality for the entropy which is independent of the representation of the algebra of n×n-matrices.},
	author = {Uhlmann, A.},
	date = {1970/10/01/},
	date-added = {2025-12-12 20:06:46 +0100},
	date-modified = {2025-12-12 20:06:46 +0100},
	doi = {https://doi.org/10.1016/0034-4877(70)90009-1},
	isbn = {0034-4877},
	journal = {Reports on Mathematical Physics},
	number = {2},
	pages = {147--159},
	title = {On the {S}hannon entropy and related functionals on convex sets},
	url = {https://www.sciencedirect.com/science/article/pii/0034487770900091},
	volume = {1},
	year = {1970},
	bdsk-url-1 = {https://www.sciencedirect.com/science/article/pii/0034487770900091},
	bdsk-url-2 = {https://doi.org/10.1016/0034-4877(70)90009-1}}

@article{Aizenman:2025aa,
	abstract = {For pure states of multi-dimensional quantum lattice systems, which in a convenient computational basis have amplitude and phase structure of sufficiently rapid decorrelation, we construct high fidelity approximations of relatively low complexity. These are used for a conditional proof of area-law bounds for the states'entanglement entropy. The condition is also shown to imply exponential decay of the state's mutual information between disjoint regions, and hence exponential clustering of local observables. The applicability of the general results is demonstrated on the quantum Ising model in transverse field. Combined with available model-specific information on spin-spin correlations, we establish an area-law type bound on the entanglement in the model's subcritical ground states, valid in all dimensions and up to the model's quantum phase transition.},
	author = {Aizenman, M. and Warzel, S.},
	date = {2025/06/23},
	date-added = {2025-12-12 20:05:50 +0100},
	date-modified = {2026-01-11 21:21:18 +0100},
	doi = {10.1007/s00220-025-05324-3},
	id = {Aizenman2025},
	isbn = {1432-0916},
	journal = {Communications in Mathematical Physics},
	number = {7},
	pages = {171},
	title = {Entanglement entropy bounds for pure states of rapid decorrelation},
	url = {https://doi.org/10.1007/s00220-025-05324-3},
	volume = {406},
	year = {2025},
	bdsk-url-1 = {https://doi.org/10.1007/s00220-025-05324-3}}

@article{Bergholtz:2005pl,
	author = {Bergholtz, E. J. and Karlhede, A.},
	date-added = {2025-12-12 19:39:23 +0100},
	date-modified = {2026-01-11 21:21:57 +0100},
	doi = {10.1103/PhysRevLett.94.026802},
	id = {10.1103/PhysRevLett.94.026802},
	j1 = {PRL},
	journal = {Physical Review Letters},
	journal1 = {Phys. Rev. Lett.},
	number = {2},
	pages = {026802},
	publisher = {American Physical Society},
	title = {Half-filled lowest {L}andau level on a thin torus},
	ty = {JOUR},
	url = {https://link.aps.org/doi/10.1103/PhysRevLett.94.026802},
	volume = {94},
	year = {2005},
	bdsk-url-1 = {https://link.aps.org/doi/10.1103/PhysRevLett.94.026802},
	bdsk-url-2 = {https://doi.org/10.1103/PhysRevLett.94.026802}}

@article{Bernevig:2008aa,
	author = {Bernevig, B. A. and Haldane, F. D. M.},
	date = {2008/06/19/},
	date-added = {2025-12-12 19:14:58 +0100},
	date-modified = {2025-12-24 11:25:33 +0100},
	day = {19},
	doi = {10.1103/PhysRevLett.100.246802},
	id = {10.1103/PhysRevLett.100.246802},
	j1 = {PRL},
	journal = {Physical Review Letters},
	journal1 = {Phys. Rev. Lett.},
	month = {06},
	number = {24},
	pages = {246802--},
	publisher = {American Physical Society},
	title = {Model Fractional Quantum {H}all States and {J}ack Polynomials},
	url = {https://link.aps.org/doi/10.1103/PhysRevLett.100.246802},
	volume = {100},
	year = {2008},
	bdsk-url-1 = {https://link.aps.org/doi/10.1103/PhysRevLett.100.246802},
	bdsk-url-2 = {https://doi.org/10.1103/PhysRevLett.100.246802}}

@article{Bernevig:2009aa,
	author = {Bernevig, B. A. and Regnault, N.},
	date = {2009/11/10/},
	date-added = {2025-12-12 19:12:29 +0100},
	date-modified = {2025-12-12 19:12:43 +0100},
	day = {10},
	doi = {10.1103/PhysRevLett.103.206801},
	id = {10.1103/PhysRevLett.103.206801},
	j1 = {PRL},
	journal = {Physical Review Letters},
	journal1 = {Phys. Rev. Lett.},
	month = {11},
	number = {20},
	pages = {206801--},
	publisher = {American Physical Society},
	title = {Anatomy of Abelian and Non-Abelian Fractional Quantum {H}all States},
	url = {https://link.aps.org/doi/10.1103/PhysRevLett.103.206801},
	volume = {103},
	year = {2009},
	bdsk-url-1 = {https://link.aps.org/doi/10.1103/PhysRevLett.103.206801},
	bdsk-url-2 = {https://doi.org/10.1103/PhysRevLett.103.206801}}

@article{Bandyopadhyay:2020aa,
	author = {Bandyopadhyay, S. and Ortiz, G. and Nussinov, Z. and Seidel, A.},
	date = {2020/05/15/},
	date-added = {2025-12-12 19:08:17 +0100},
	date-modified = {2025-12-24 11:09:33 +0100},
	day = {15},
	doi = {10.1103/PhysRevLett.124.196803},
	id = {10.1103/PhysRevLett.124.196803},
	j1 = {PRL},
	journal = {Physical Review Letters},
	journal1 = {Phys. Rev. Lett.},
	month = {05},
	number = {19},
	pages = {196803--},
	publisher = {American Physical Society},
	title = {Local Two-Body Parent {H}amiltonians for the Entire {J}ain Sequence},
	url = {https://link.aps.org/doi/10.1103/PhysRevLett.124.196803},
	volume = {124},
	year = {2020},
	bdsk-url-1 = {https://link.aps.org/doi/10.1103/PhysRevLett.124.196803},
	bdsk-url-2 = {https://doi.org/10.1103/PhysRevLett.124.196803}}

@article{Lauchli:2010aa,
	author = {L{\"a}uchli, A. M. and Bergholtz, E. J. and Suorsa, J. and Haque, M.},
	date = {2010/04/16/},
	date-added = {2025-12-12 18:55:02 +0100},
	date-modified = {2025-12-24 11:14:53 +0100},
	day = {16},
	doi = {10.1103/PhysRevLett.104.156404},
	id = {10.1103/PhysRevLett.104.156404},
	j1 = {PRL},
	journal = {Physical Review Letters},
	journal1 = {Physical Review Letters},
	month = {04},
	number = {15},
	pages = {156404--},
	publisher = {American Physical Society},
	title = {Disentangling Entanglement Spectra of Fractional Quantum {H}all States on Torus Geometries},
	url = {https://link.aps.org/doi/10.1103/PhysRevLett.104.156404},
	volume = {104},
	year = {2010},
	bdsk-url-1 = {https://link.aps.org/doi/10.1103/PhysRevLett.104.156404},
	bdsk-url-2 = {https://doi.org/10.1103/PhysRevLett.104.156404}}

@article{Thomale:2010aa,
	author = {Thomale, R. and Sterdyniak, A. and Regnault, N. and Bernevig, B. Andrei},
	date = {2010/05/03/},
	date-added = {2025-12-12 18:52:51 +0100},
	date-modified = {2025-12-12 18:52:51 +0100},
	day = {03},
	doi = {10.1103/PhysRevLett.104.180502},
	id = {10.1103/PhysRevLett.104.180502},
	j1 = {PRL},
	journal = {Physical Review Letters},
	journal1 = {Physical Review Letters},
	month = {05},
	number = {18},
	pages = {180502--},
	publisher = {American Physical Society},
	title = {Entanglement Gap and a New Principle of Adiabatic Continuity},
	url = {https://link.aps.org/doi/10.1103/PhysRevLett.104.180502},
	volume = {104},
	year = {2010},
	bdsk-url-1 = {https://link.aps.org/doi/10.1103/PhysRevLett.104.180502},
	bdsk-url-2 = {https://doi.org/10.1103/PhysRevLett.104.180502}}

@article{Cirac:2010aa,
	author = {Cirac, J. I. and Sierra, G.},
	date = {2010/03/31/},
	date-added = {2025-12-12 18:26:28 +0100},
	date-modified = {2025-12-24 11:12:23 +0100},
	day = {31},
	doi = {10.1103/PhysRevB.81.104431},
	id = {10.1103/PhysRevB.81.104431},
	j1 = {PRB},
	journal = {Physical Review B},
	journal1 = {Physical Review B},
	month = {03},
	number = {10},
	pages = {104431--},
	publisher = {American Physical Society},
	title = {Infinite matrix product states, conformal field theory, and the {H}aldane-{S}hastry model},
	url = {https://link.aps.org/doi/10.1103/PhysRevB.81.104431},
	volume = {81},
	year = {2010},
	bdsk-url-1 = {https://link.aps.org/doi/10.1103/PhysRevB.81.104431},
	bdsk-url-2 = {https://doi.org/10.1103/PhysRevB.81.104431}}

@article{Ortiz:2013aa,
	author = {Ortiz, G. and Nussinov, Z. and Dukelsky, J. and Seidel, A.},
	date = {2013/10/08/},
	date-added = {2025-12-12 17:38:52 +0100},
	date-modified = {2025-12-12 17:38:52 +0100},
	day = {08},
	doi = {10.1103/PhysRevB.88.165303},
	id = {10.1103/PhysRevB.88.165303},
	j1 = {PRB},
	journal = {Physical Review B},
	journal1 = {Physical Review B},
	month = {10},
	number = {16},
	pages = {165303--},
	publisher = {American Physical Society},
	title = {Repulsive interactions in quantum {H}all systems as a pairing problem},
	url = {https://link.aps.org/doi/10.1103/PhysRevB.88.165303},
	volume = {88},
	year = {2013},
	bdsk-url-1 = {https://link.aps.org/doi/10.1103/PhysRevB.88.165303},
	bdsk-url-2 = {https://doi.org/10.1103/PhysRevB.88.165303}}

@book{QHEOxford2003,
	author = {Jacak, L. and Sitko, P. and Wieczorek, K. and W{\'o}js, A.},
	date-added = {2025-12-12 16:04:36 +0100},
	date-modified = {2025-12-24 11:14:15 +0100},
	publisher = {Oxford University Press},
	title = {Quantum {H}all systems: {B}raid groups, composite fermions and fractional charge},
	year = {2003}}

@article{PhysRevLett.51.605,
	author = {Haldane, F. D. M.},
	date-added = {2025-12-12 15:47:49 +0100},
	date-modified = {2025-12-12 15:47:49 +0100},
	doi = {10.1103/PhysRevLett.51.605},
	issue = {7},
	journal = {Physical Review Letters},
	numpages = {0},
	pages = {605--608},
	publisher = {American Physical Society},
	title = {Fractional Quantization of the {H}all Effect: A Hierarchy of Incompressible Quantum Fluid States},
	url = {https://link.aps.org/doi/10.1103/PhysRevLett.51.605},
	volume = {51},
	year = {1983},
	bdsk-url-1 = {https://link.aps.org/doi/10.1103/PhysRevLett.51.605},
	bdsk-url-2 = {https://doi.org/10.1103/PhysRevLett.51.605}}

@article{seiringer:2020,
	abstract = {In the setting of the fractional quantum Hall effect we study the effects of strong, repulsive two-body interaction potentials of short range. We prove that Haldane's pseudo-potential operators, including their pre-factors, emerge as mathematically rigorous limits of such interactions when the range of the potential tends to zero while its strength tends to infinity. In a common approach the interaction potential is expanded in angular momentum eigenstates in the lowest Landau level, which amounts to taking the pre-factors to be the moments of the potential. Such a procedure is not appropriate for very strong interactions, however, in particular not in the case of hard spheres. We derive the formulas valid in the short-range case, which involve the scattering lengths of the interaction potential in different angular momentum channels rather than its moments. Our results hold for bosons and fermions alike and generalize previous results in {$[$}6{$]$}, which apply to bosons in the lowest angular momentum channel. Our main theorem asserts the convergence in a norm-resolvent sense of the Hamiltonian on the whole Hilbert space, after appropriate energy scalings, to Hamiltonians with contact interactions in the lowest Landau level.},
	author = {Seiringer, R. and Yngvason, J.},
	da = {2020/10/01},
	date-added = {2025-12-12 15:39:44 +0100},
	date-modified = {2025-12-24 11:19:00 +0100},
	doi = {10.1007/s10955-020-02586-0},
	id = {Seiringer2020},
	isbn = {1572-9613},
	journal = {Journal of Statistical Physics},
	number = {2},
	pages = {448--464},
	title = {Emergence of {H}aldane Pseudo-Potentials in Systems with Short-Range Interactions},
	ty = {JOUR},
	url = {https://doi.org/10.1007/s10955-020-02586-0},
	volume = {181},
	year = {2020},
	bdsk-url-1 = {https://doi.org/10.1007/s10955-020-02586-0}}

@article{Lieb:2019vl,
	abstract = {We prove sharp density upper bounds on optimal length-scales for the ground states of classical 2D Coulomb systems and generalizations thereof. Our method is new, based on an auxiliary Thomas--Fermi-like variational model. Moreover, we deduce density upper bounds for the related low-temperature Gibbs states. Our motivation comes from fractional quantum Hall physics, more precisely, the perturbation of the Laughlin state by external potentials or impurities. These give rise to a class of many-body wave-functions that have the form of a product of the Laughlin state and an analytic function of many variables. This class is related via Laughlin's plasma analogy to Gibbs states of the generalized classical Coulomb systems we consider. Our main result shows that the perturbation of the Laughlin state cannot increase the particle density anywhere, with implications for the response of FQHE systems to external perturbations.},
	author = {Lieb, E. H. and Rougerie, N. and Yngvason, J.},
	da = {2019/01/01},
	date-added = {2025-12-12 15:35:42 +0100},
	date-modified = {2025-12-24 11:15:31 +0100},
	doi = {10.1007/s00220-018-3181-1},
	id = {Lieb2019},
	isbn = {1432-0916},
	journal = {Communications in Mathematical Physics},
	number = {2},
	pages = {431--470},
	title = {Local incompressibility estimates for the {L}aughlin phase},
	ty = {JOUR},
	url = {https://doi.org/10.1007/s00220-018-3181-1},
	volume = {365},
	year = {2019},
	bdsk-url-1 = {https://doi.org/10.1007/s00220-018-3181-1}}

@article{Roug19,
	author = {Rougerie, N.},
	journal = {S\'eminaire Laurent Schwartz {\textemdash} EDP et applications},
	note = {Expos\'e no. 2, 17 p.},
	title = {On the {L}aughlin function and its perturbations},
	year = {2019}}

@article{A7_hastings:2015,
	author = {Hastings, M. B. and Michalakis, S.},
	date-added = {2022-01-03 19:22:19 +0100},
	date-modified = {2022-01-03 19:22:19 +0100},
	journal = {Communications in Mathematical Physics},
	keywords = {above},
	pages = {433--471},
	title = {Quantization of {H}all Conductance For Interacting Electrons on a Torus},
	volume = {334},
	year = {2015}}

@article{A7_Bachmann:2018lb,
	abstract = {We provide a short proof of the quantization of the Hall conductance for gapped interacting quantum lattice systems on the two-dimensional torus. This is not new and should be seen as an adaptation of the proof of Hastings and Michalakis (Commun Math Phys 334:433--471, 2015), simplified by making the stronger assumption that the Hamiltonian remains gapped when threading the torus with fluxes. We argue why this assumption is very plausible. The conductance is given by Berry's curvature and our key auxiliary result is that the curvature is asymptotically constant across the torus of fluxes.},
	author = {Bachmann, S. and Bols, A. and De Roeck, W. and Fraas, M.},
	da = {2018/03/01},
	date-added = {2022-01-03 19:15:38 +0100},
	date-modified = {2022-01-03 19:15:38 +0100},
	doi = {10.1007/s00023-018-0651-0},
	journal = {Annales Henri Poincar{\'e}},
	keywords = {above},
	number = {3},
	pages = {695--708},
	title = {Quantization of Conductance in Gapped Interacting Systems},
	ty = {JOUR},
	url = {https://doi.org/10.1007/s00023-018-0651-0},
	volume = {19},
	year = {2018},
	bdsk-url-1 = {https://doi.org/10.1007/s00023-018-0651-0}}

@article{A7_Bachmann:2021dp,
	annote = {doi: 10.1063/5.0021511},
	author = {Bachmann, S. and Bols, A. and De Roeck, W. and Fraas, M.},
	booktitle = {Journal of Mathematical Physics},
	doi = {10.1063/5.0021511},
	journal = {Journal of Mathematical Physics},
	keywords = {above},
	m3 = {doi: 10.1063/5.0021511},
	number = {1},
	pages = {011901},
	publisher = {American Institute of Physics},
	title = {Rational indices for quantum ground state sectors},
	ty = {JOUR},
	url = {https://doi.org/10.1063/5.0021511},
	volume = {62},
	year = {2021},
	year1 = {2021},
	bdsk-url-1 = {https://doi.org/10.1063/5.0021511}}

@article{A7_NSY20,
	author = {Nachtergaele, B. and Sims, R. and Young, A.},
	date-added = {2022-02-11 10:48:01 +0100},
	date-modified = {2022-02-11 10:48:12 +0100},
	journal = {Annales Henri Poincar{\'{e}}},
	organization = {Springer},
	pages = {393--511},
	title = {Quasi-Locality Bounds for Quantum Lattice Systems. {Part II}. {P}erturbations of Frustration-Free Spin Models with Gapped Ground States},
	volume = {23},
	year = {2022}}

@article{RSY:2014,
	author = {Rougerie, N. and Serfaty, S. and Yngvason, J.},
	journal = {Journal of Statistical Physics},
	pages = {2-50},
	title = {Quantum {H}all phases and plasma analogy in rotating trapped {B}ose gases},
	volume = {154},
	year = {2014}}

@article{PhysRevB.86.245305,
	author = {Zaletel, M. P. and Mong, R. S. K.},
	date-added = {2025-12-12 15:24:04 +0100},
	date-modified = {2026-01-11 21:23:41 +0100},
	doi = {10.1103/PhysRevB.86.245305},
	issue = {24},
	journal = {Physical Review B},
	numpages = {14},
	pages = {245305},
	publisher = {American Physical Society},
	title = {Exact matrix product states for quantum {H}all wave functions},
	url = {https://link.aps.org/doi/10.1103/PhysRevB.86.245305},
	volume = {86},
	year = {2012},
	bdsk-url-1 = {https://link.aps.org/doi/10.1103/PhysRevB.86.245305},
	bdsk-url-2 = {https://doi.org/10.1103/PhysRevB.86.245305}}

@article{Cai:2014cz,
	abstract = {We give an iterative method to realize general Jack functions using vertex operators. We first prove some cases of Stanley's conjecture on positivity of the Littlewood--Richardson coefficients, and then use this method to give a new realization of Jack functions. We also show in general that the images of coefficients of products of Jack vertex operators form a basis of symmetric functions. In particular, this gives a new proof of linear independence for the rectangular and marked rectangular Jack vertex operators. Finally, a generalized Frobenius formula for Jack functions is given and used for evaluation of Dyson integrals and even powers of Vandermonde determinants.},
	author = {Cai, T. W. and Jing, N.},
	date = {2014/02/01},
	date-added = {2025-12-12 15:20:32 +0100},
	date-modified = {2025-12-24 11:11:36 +0100},
	doi = {10.1007/s10801-013-0438-9},
	id = {Cai2014},
	isbn = {1572-9192},
	journal = {Journal of Algebraic Combinatorics},
	number = {1},
	pages = {53--74},
	title = {Jack vertex operators and realization of {J}ack functions},
	url = {https://doi.org/10.1007/s10801-013-0438-9},
	volume = {39},
	year = {2014},
	bdsk-url-1 = {https://doi.org/10.1007/s10801-013-0438-9}}

@article{ReadMoore92,
	abstract = {It is argued that fractional quantum Hall effect wavefunctions can be interpreted as conformal blocks of two-dimensional conformal field theory. Fractional statistics can be extended to nonabelian statistics and examples can be constructed from conformal field theory. The Pfaffian state is related to the 2D Ising model and possesses fractionally charged excitations which are predicted to obey nonabelian statistics.},
	author = {Read, N. and Moore, G.},
	date-added = {2025-12-12 15:09:50 +0100},
	date-modified = {2025-12-12 15:10:07 +0100},
	doi = {10.1143/PTPS.107.157},
	eprint = {https://academic.oup.com/ptps/article-pdf/doi/10.1143/PTPS.107.157/5289214/107-157.pdf},
	issn = {0375-9687},
	journal = {Progress of Theoretical Physics Supplement},
	month = {01},
	pages = {157-166},
	title = {Fractional Quantum {H}all Effect and Nonabelian Statistics},
	url = {https://doi.org/10.1143/PTPS.107.157},
	volume = {107},
	year = {1992},
	bdsk-url-1 = {https://doi.org/10.1143/PTPS.107.157}}

@article{PhysRevLett.50.1395,
	author = {Laughlin, R. B.},
	date-added = {2025-12-12 14:59:51 +0100},
	date-modified = {2025-12-12 14:59:51 +0100},
	doi = {10.1103/PhysRevLett.50.1395},
	issue = {18},
	journal = {Physical Review Letters},
	numpages = {0},
	pages = {1395--1398},
	publisher = {American Physical Society},
	title = {Anomalous Quantum {H}all Effect: An Incompressible Quantum Fluid with Fractionally Charged Excitations},
	url = {https://link.aps.org/doi/10.1103/PhysRevLett.50.1395},
	volume = {50},
	year = {1983},
	bdsk-file-1 = {YnBsaXN0MDDSAQIDBFxyZWxhdGl2ZVBhdGhYYm9va21hcmtfEEAuLi8uLi8uLi8uLi8uLi9Eb3dubG9hZHMvMTAuMTAzOF9zNDE0NjctMDIxLTI0NzI2LTAtY2l0YXRpb24ucmlzTxEEFGJvb2sUBAAAAAAEEDAAAAAAAAAAAAAAAAAAAAAAAAAAAAAAAAAAAAAAAAAAAAAAAAQDAAAFAAAAAQEAAFVzZXJzAAAADAAAAAEBAABzaW1vbmV3YXJ6ZWwJAAAAAQEAAERvd25sb2FkcwAAACcAAAABAQAAMTAuMTAzOF9zNDE0NjctMDIxLTI0NzI2LTAtY2l0YXRpb24ucmlzABAAAAABBgAABAAAABQAAAAoAAAAPAAAAAgAAAAEAwAAFV0AAAAAAAAIAAAABAMAAEzyAwAAAAAACAAAAAQDAAAslugCAAAAAAgAAAAEAwAA3BFHBAAAAAAQAAAAAQYAAIQAAACUAAAApAAAALQAAAAIAAAAAAQAAEHGT6Nm/LYLGAAAAAECAAABAAAAAAAAAA8AAAAAAAAAAAAAAAAAAAAIAAAABAMAAAIAAAAAAAAABAAAAAMDAAD1AQAACAAAAAEJAABmaWxlOi8vLwwAAAABAQAATWFjaW50b3NoIEhECAAAAAQDAAAAAIcROQAAAAgAAAAABAAAQccohIaAAAAkAAAAAQEAADNFNEUwNjI1LThFNDMtNEUwQy1CNTM4LUM4ODk2RUVBRjgzOBgAAAABAgAAgQAAAAEAAADvEwAAAQAAAAAAAAAAAAAAAQAAAAEBAAAvAAAAAAAAAAEFAAD3AAAAAQIAAGMxMzY3ZTBlZDRhNTAyNjFlNDBkOThmYmZhYjlhM2MwN2FkMTY5YmMxMWQwM2VhNzc5ZDA1Y2E5ZWUxMWNhNTM7MDA7MDAwMDAwMDA7MDAwMDAwMDA7MDAwMDAwMDA7MDAwMDAwMDAwMDAwMDAyMDtjb20uYXBwbGUuYXBwLXNhbmRib3gucmVhZC13cml0ZTswMTswMTAwMDAxMDswMDAwMDAwMDA0NDcxMWRjOzY2Oy91c2Vycy9zaW1vbmV3YXJ6ZWwvZG93bmxvYWRzLzEwLjEwMzhfczQxNDY3LTAyMS0yNDcyNi0wLWNpdGF0aW9uLnJpcwAALwAAAAEBAABOU1VSTEJvb2ttYXJrUXVhcmFudGluZU1vdW50ZWROZXR3b3JrVm9sdW1lc0tleQDYAAAA/v///wEAAAAAAAAAEQAAAAQQAABsAAAAAAAAAAUQAADEAAAAAAAAABAQAADsAAAAAAAAAEAQAADcAAAAAAAAAAIgAAC4AQAAAAAAAAUgAAAoAQAAAAAAABAgAAA4AQAAAAAAABEgAABsAQAAAAAAABIgAABMAQAAAAAAABMgAABcAQAAAAAAACAgAACYAQAAAAAAADAgAADEAQAAAAAAAAHAAAAMAQAAAAAAABHAAAAUAAAAAAAAABLAAAAcAQAAAAAAAIDwAADMAQAAAAAAAMwCAIDEAQAAAAAAAAAIAA0AGgAjAGYAAAAAAAACAQAAAAAAAAAFAAAAAAAAAAAAAAAAAAAEfg==},
	bdsk-url-1 = {https://link.aps.org/doi/10.1103/PhysRevLett.50.1395},
	bdsk-url-2 = {https://doi.org/10.1103/PhysRevLett.50.1395}}

@article{hardy1918asymptotic,
	author = {Hardy, G. H. and Ramanujan, S.},
	date-modified = {2025-12-24 11:13:22 +0100},
	journal = {Proceedings of the London Mathematical Society},
	number = {1},
	pages = {75--115},
	publisher = {Wiley Online Library},
	title = {Asymptotic formul{\ae} in combinatory analysis},
	volume = {2},
	year = {1918}}

@mastersthesis{DiGioacchinoMasterthesis,
	author = {Di Gioacchino, A.},
	date-modified = {2025-12-24 11:23:16 +0100},
	pages = {70},
	school = {Universit`a degli Studi di Milano},
	title = {Fock-space operator construction of {L}aughlin fractional quantum {H}all effect states},
	url = {https://pcteserver.mi.infn.it/~caraccio/Lauree/DiGioacchino_m.pdf},
	year = {2016},
	bdsk-url-1 = {https://pcteserver.mi.infn.it/~caraccio/Lauree/DiGioacchino_m.pdf}}

@article{vonNeumann1931eindeutigkeit,
	author = {von Neumann, J.},
	journal = {Mathematische Annalen},
	number = {1},
	pages = {570--578},
	publisher = {Springer},
	title = {Die {E}indeutigkeit der {S}chr{\"o}dingerschen {O}peratoren},
	volume = {104},
	year = {1931}}

@article{baxendale2005renewal,
	author = {Baxendale, P. H.},
	journal = {Annals of Applied Probability},
	number = {1A},
	pages = {700--738},
	title = {Renewal theory and computable convergence rates for geometrically ergodic {M}arkov chains},
	volume = {15},
	year = {2005}}

@book{feller1968introduction,
	author = {Feller, W.},
	publisher = {New York: J. Wiley and Sons, Inc},
	title = {Introduction to Probability Theory Third Edition, Vol. 1},
	year = {1968}}

@article{warzel2023bulk,
	author = {Warzel, S. and Young, A.},
	date-modified = {2025-12-12 19:41:24 +0100},
	journal = {Annales Henri Poincar{\'e}},
	number = {1},
	pages = {133--178},
	title = {A bulk spectral gap in the presence of edge states for a truncated pseudopotential},
	volume = {24},
	year = {2023}}

@incollection{mitov2014renewal,
	author = {Mitov, K. V. and Omey, E.},
	booktitle = {Renewal Processes},
	pages = {1--51},
	publisher = {Springer},
	title = {Renewal processes},
	year = {2014}}

@article{nachtergaele2020low,
	author = {Nachtergaele, B. and Warzel, S. and Young, A.},
	date-modified = {2025-12-24 11:17:18 +0100},
	journal = {Journal of Physics A},
	number = {1},
	pages = {01LT01},
	publisher = {IOP Publishing},
	title = {Low-complexity eigenstates of a $\nu$= 1/3 fractional quantum {H}all system},
	volume = {54},
	year = {2020}}

@article{erdos1942elementary,
	author = {Erd{\"o}s, P.},
	journal = {Annals of Mathematics},
	number = {3},
	pages = {437--450},
	publisher = {JSTOR},
	title = {On an elementary proof of some asymptotic formulas in the theory of partitions},
	volume = {43},
	year = {1942}}

@book{stanley2001enumerative,
	author = {Stanley, R. P.},
	publisher = {Cambridge University Press},
	title = {Enumerative Combinatorics Volume 2},
	year = {2001}}

@article{jansen2012fermionic,
	author = {Jansen, S.},
	journal = {Journal of Mathematical Physics},
	number = {12},
	publisher = {AIP Publishing},
	title = {Fermionic and bosonic {L}aughlin state on thick cylinders},
	volume = {53},
	year = {2012}}

@article{jansen2009symmetry,
	author = {Jansen, S. and Lieb, E. H. and Seiler, R.},
	journal = {Communications in Mathematical Physics},
	number = {2},
	pages = {503--535},
	publisher = {Springer},
	title = {Symmetry breaking in {L}aughlin's state on a cylinder},
	volume = {285},
	year = {2009}}

@article{li2008entanglement,
	author = {Li, H. and Haldane, F. D. M.},
	date-modified = {2026-01-11 21:23:16 +0100},
	journal = {Physical Review Letters},
	number = {1},
	pages = {010504},
	publisher = {APS},
	title = {{Entanglement spectrum as a generalization of entanglement entropy: Identification of topological order in non-abelian fractional quantum {H}all effect states}},
	volume = {101},
	year = {2008}}

@article{moore1991nonabelions,
	author = {Moore, G. and Read, N.},
	journal = {Nuclear Physics B},
	number = {2-3},
	pages = {362--396},
	publisher = {Elsevier},
	title = {Nonabelions in the fractional quantum {H}all effect},
	volume = {360},
	year = {1991}}

@article{konig2017matrix,
	author = {K{\"o}nig, R. and Scholz, V. B.},
	journal = {Nuclear Physics B},
	pages = {32--121},
	publisher = {Elsevier},
	title = {Matrix product approximations to conformal field theories},
	volume = {920},
	year = {2017}}

@book{macdonald1995symmetric,
	author = {Macdonald, I. G.},
	publisher = {Oxford University Press},
	title = {Symmetric functions and {H}all polynomials},
	year = {1995}}

@article{di1994laughlin,
	author = {Di Francesco, P. and Gaudin, M. and Itzykson, C. and Lesage, F.},
	journal = {International Journal of Modern Physics A},
	number = {24},
	pages = {4257--4351},
	publisher = {World Scientific},
	title = {Laughlin's wave functions, {C}oulomb gases and expansions of the discriminant},
	volume = {9},
	year = {1994}}

@article{ballantine2012powers,
	author = {Ballantine, C.},
	journal = {Journal of Physics A},
	number = {31},
	pages = {315201},
	publisher = {IOP Publishing},
	title = {Powers of the {V}andermonde determinant, {S}chur functions and recursive formulas},
	volume = {45},
	year = {2012}}

@article{dunne1993slater,
	author = {Dunne, G. V.},
	journal = {International Journal of Modern Physics B},
	number = {28},
	pages = {4783--4813},
	publisher = {World Scientific},
	title = {Slater decomposition of {L}aughlin states},
	volume = {7},
	year = {1993}}

@article{stanley1989some,
	author = {Stanley, R. P.},
	journal = {Advances in Mathematics},
	number = {1},
	pages = {76--115},
	publisher = {Academic Press},
	title = {Some combinatorial properties of {J}ack symmetric functions},
	volume = {77},
	year = {1989}}

@article{thomale2011decomposition,
	author = {Thomale, R. and Estienne, B. and Regnault, N. and Bernevig, B. A.},
	journal = {Physical Review B},
	number = {4},
	pages = {045127},
	publisher = {APS},
	title = {Decomposition of fractional quantum {H}all model states: {P}roduct rule symmetries and approximations},
	volume = {84},
	year = {2011}}

@article{schossler2022inner,
	author = {Schossler, M. and Bandyopadhyay, S. and Seidel, A.},
	journal = {Physical Review B},
	number = {15},
	pages = {155124},
	publisher = {APS},
	title = {Inner workings of fractional quantum {H}all parent {H}amiltonians: {A} matrix product state point of view},
	volume = {105},
	year = {2022}}

@article{warze2022spectral,
	author = {Warzel, S. and Young, A.},
	journal = {Journal of Mathematical Physics},
	number = {4},
	publisher = {AIP Publishing},
	title = {The spectral gap of a fractional quantum {H}all system on a thin torus},
	volume = {63},
	year = {2022}}

@article{nachtergaele2021spectral,
	author = {Nachtergaele, B. and Warzel, S. and Young, A.},
	journal = {Communications in Mathematical Physics},
	number = {2},
	pages = {1093--1149},
	publisher = {Springer},
	title = {Spectral gaps and incompressibility in a $\nu$= 1/3 fractional quantum {H}all system},
	volume = {383},
	year = {2021}}

@article{estienne2013fractional,
	author = {Estienne, B. and Regnault, N. and Bernevig, B. A.},
	date-modified = {2025-12-25 20:46:32 +0100},
	journal = {arXiv preprint arXiv:1311.2936},
	title = {Fractional quantum {H}all matrix product states for interacting conformal field theories},
	year = {2013}}

@book{bonfiglioli2011topics,
	author = {Bonfiglioli, A. and Fulci, R.},
	publisher = {Springer Science \& Business Media},
	title = {Topics in noncommutative algebra: the theorem of {C}ampbell, {B}aker, {H}ausdorff and {D}ynkin},
	volume = {2034},
	year = {2011}}

@article{A7_Y24,
	author = {Young, A.},
	journal = {Reviews in Mathematical Physics},
	pages = {2460007},
	title = {On a bulk gap strategy for quantum lattice models},
	volume = {36},
	year = {2024}}
		\bibliographystyle{plain}

	\end{document}